\theoremstyle{plain}
\newtheorem{theorem}{Theorem}[section]
\newtheorem{proposition}[theorem]{Proposition}
\newtheorem{corollary}[theorem]{Corollary}
\newtheorem{Problem}[theorem]{Problem}
\theoremstyle{definition}
\theoremstyle{remark}
\newtheorem{remark}[theorem]{Remark}
\newtheorem{problem}[theorem]{Problem}
\newtheoremstyle{plainfoot}%
  {\item[\hskip\labelsep \theorem@headerfont ##1\ ##2\,\footnotemark\theorem@separator]}%
  {\item[\hskip\labelsep \theorem@headerfont ##1\ ##2\ (##3)\, \footnotemark\theorem@separator]}
\theoremstyle{plainfoot}
\newtheorem{theorem-foot}[theorem]{Theorem}
\newtheorem{lemma-foot}[theorem]{Lemma}
\newtheorem{proposition-foot}[theorem]{Proposition}
\newtheorem{corollary-foot}[theorem]{Corollary}
\newtheorem{conjecture-foot}[theorem]{Conjecture}
\newtheorem{condition-foot}[theorem]{Condition}
\theoremstyle{plainfoot}
\newtheorem{definition-foot}[theorem]{Definition}
\newtheorem{Problem-foot}[theorem]{Problem}
\theoremstyle{plainfoot}
\newtheorem{remark-foot}[theorem]{Remark}         
\newtheorem{example-foot}[theorem]{Example}
\newtheorem{problem-foot}[theorem]{Problem}
\numberwithin{equation}{section}
\newenvironment{phantomequation}[1][]{\refstepcounter{equation}}{}
\newenvironment{claim}[1][{\textup{(\theequation)}}]{\refstepcounter{equation}\vglue10pt
\begin{trivlist}
\item[{\hskip\labelsep#1}]}{\vglue10pt\end{trivlist}}
\newcommand{\cJ}{\mathcal{J}}
\newcommand{\cH}{\mathcal{H}}
\newcommand{\cI}{\mathcal{I}}
\newcommand{\const}{\mathsf{const}}
\newcommand{\TF}{{\mathsf{TF}}}
\newcommand{\W}{{\mathsf{W}}}
\newcommand{\E}{{\mathsf{E}}}
\newcommand{\N}{{\mathsf{N}}}
\newcommand{\Scott}{\mathsf{Scott}}
\newcommand{\Schwinger}{\mathsf{Schwinger}}
\newcommand{\Dirac}{\mathsf{Dirac}}
\newcommand{\T}{\mathsf{T}}
\newcommand{\I}{\mathsf{I}}
\newcommand{\Q}{\mathsf{Q}}
\newcommand{\x}{{\mathsf{x}}}
\newcommand{\y}{{\mathsf{y}}}
\newcommand{\bR}{\mathbb{R}}
\newcommand{\bC}{\mathbb{C}}
\newcommand{\bZ}{\mathbb{Z}}
\newcommand{\fH}{\mathfrak{H}}
\newcommand{\old}{{{\mathsf{old}}}}
\newcommand{\blangle}{{\boldsymbol{\langle}}}
\newcommand{\brangle}{{\boldsymbol{\rangle}}}
\newcommand{\cE}{\mathcal{E}}
\newcommand{\cN}{\mathcal{N}}
\newcommand{\cX}{\mathcal{X}}
\newcommand{\cY}{\mathcal{Y}}
\newcommand{\cZ}{\mathcal{Z}}
\newcommand{\cS}{\mathcal{S}}
\newcommand{\sC}{\mathscr{C}}
\newcommand{\cQ}{\mathcal{Q}}
\newcommand{\sL}{\mathscr{L}}
\newcommand{\D}{\mathsf{D}}
\newcommand{\sfH}{{\mathsf{H}}}
\newcommand{\dist}{\operatorname{dist}}
\newcommand{\Hess}{\operatorname{Hess}}
\newcommand{\supp}{\operatorname{supp}}
\newcommand{\mes}{\operatorname{mes}}
\newcommand{\Spec}{\operatorname{Spec}}
\newcommand{\Tr}{\operatorname{Tr}}
\newcommand{\tr}{\operatorname{tr}}
\newcommand{\vrai}{\operatorname{vrai}}
\newcommand{\boldupsigma}{{\boldsymbol{\upsigma}}}
\renewcommand{\Re}{\operatorname{Re}}
\title{Asymptotics of the ground state energy of heavy molecules and related topics. II. External magnetic field}
\author{Victor Ivrii}
\begin{document}

\maketitle
{\abstract%
We consider asymptotics of the ground state energy of heavy atoms and molecules in the strong external magnetic field and derive it including Schwinger and Dirac corrections (if magnetic field is not too strong). We also consider related topics:  an excessive negative charge, ionization energy and excessive positive charge when atoms can still bind into molecules.
\endabstract}

\chapter{Introduction}
\label{sect-26-1}

In this paper we repeat analysis of the previous paper~\ref{book_new-sect-25} of \cite{futurebook} but in the case of the constant external magnetic field\footnote{\label{foot-26-1} Actually we need a magnetic field either sufficiently weak or close to a constant on the very small scale.}.

\section{Framework}
\label{sect-26-1-1}

Let us consider the following operator (quantum Hamiltonian)
\begin{gather}
\mathbf{H}=\mathbf{H}_N\coloneqq \sum_{1\le j\le N} H _{A,V,x_j}+
\sum_{1\le j<k\le N}|x_j-x_k| ^{-1}
\label{26-1-1}\\
\shortintertext{on}
\fH= \bigwedge_{1\le n\le N} \sfH, \qquad \sfH=\sL^2 (\bR^d, \bC^q) \label{26-1-2}\\
\shortintertext{with}
H_{V,A}=\bigl((i\nabla -A)\cdot \boldupsigma \bigr) ^2-V(x)
\label{26-1-3}
\end{gather}
describing $N$ same type particles in the external field with the scalar potential $-V$ and vector potential $A(x)$, and repulsing one another according to the Coulomb law.

Here $x_j\in \bR ^d$ and $(x_1,\ldots ,x_N)\in\bR ^{dN}$, potentials $V(x)$ and $A(x)$ are assumed to be real-valued. Except when specifically mentioned we assume that
\begin{equation}
V(x)=\sum_{1\le k\le M} \frac{Z_m }{|x-\y_m|}
\label{26-1-4}
\end{equation}
where $Z_m>0$ and $\y_m$ are charges and locations of nuclei. Here $\boldsymbol{\upsigma} =(\upsigma _1,\upsigma _2,\ldots,\upsigma _d)$,
$\upsigma _k$ are $q\times q$-Pauli matrices.

So far in comparison with the previous Chapter~\ref{book_new-sect-25} we only changed (\ref{book_new-25-1-3}) to (\ref{26-1-3}) introducing magnetic field. Now spin enters not only in the definition of the space but also into operator through matrices $\upsigma_k$. Since we need $d=3$ Pauli matrices it is sufficient to consider $q=2$ but we will consider more general case as well (but $q$ should be even).

\begin{remark}\label{rem-26-1-1}
In the case of the the constant magnetic field $\nabla \times A$
\begin{gather}
H_{A,V}=\bigl(-i\nabla -A(x)\bigr) ^2+ \boldupsigma \cdot (\nabla\times A) - V(x)
\label{26-1-5}\\
\intertext{In the case $d=2$ this operator downgrades to}
H_{A,V}=
\bigl(-i\nabla -A(x)\bigr) ^2+\upsigma _3 (\nabla \times A)-V(x)
\label{26-1-6}
\end{gather}
\end{remark}

Again, let us assume that
\begin{claim}\label{26-1-7}
Operator $\mathbf{H}$ is self-adjoint on ${\fH}$.
\end{claim}

As usual we will never discuss this assumption.

\section{Problems to consider}
\label{sect-26-1-2}

As in the previous Chapter we are interested in the \emph{ground state energy\/}\index{ground state energy} $\E=\E_N$ of our system i.e. in the lowest eigenvalue of the operator $\mathbf{H}=\mathbf{H}_N$ on $\fH$:
\begin{equation}
\E\coloneqq \inf \Spec \mathbf{H}\qquad \text{on\ \ }\fH;
\label{26-1-8}
\end{equation}
more precisely, we are interested in the asymptotics of
$\E_N=\E (\underline{\y};\underline{Z};N)$ as $V$ is defined by (\ref{26-1-4}) and $N\asymp Z\coloneqq Z_1+Z_2+\ldots+Z_M\to \infty$ and we are going to prove that\footnote{\label{foot-26-2} Under reasonable assumption
to the minimal distance between nuclei.} $\E$ is equal to \emph{Magnetic Thomas-Fermi energy\/}\index{Thomas-Fermi energy!magnetic} $\cE_B^\TF$, possibly with the Scott and Dirac-Schwinger corrections and with an appropriate error.

We are also interested in the asymptotics for the \emph{ionization energy\/}\index{ionization energy}
\begin{equation}
\I_N \coloneqq \E_{N-1}- \E_N
\label{26-1-9}
\end{equation}
and we also would like to estimate \emph{maximal excessive negative charge\/}\index{maximal excessive negative charge}
\begin{equation}
\max_{N\colon \I_N>0} (N-Z).
\label{26-1-10}
\end{equation}

All these questions so far were considered in the framework of the fixed positions $\y_1,\ldots, \y_M$ but we can also consider
\begin{gather}
\widehat{\E}\coloneqq \widehat{\E}_N= \widehat{\E}(\underline{\y};\underline{Z}; N)=
\E + U(\underline{\y};\underline{Z})\label{26-1-11}\\
\shortintertext{with}
U(\underline{\y};\underline{Z})\coloneqq \sum_{1\le m< m'\le M}\frac {Z_mZ_{m'}}{|\y_m-\y_{m'}|}\label{26-1-12}\\
\shortintertext{and}
\widehat{\E} (\underline{Z}; N)=\inf_{\y_1,\ldots,\y_M} \widehat{\E}(\underline{\y};\underline{Z}; N) \label{26-1-13}
\end{gather}
and replace $\I_N$ by $\widehat{I}_N = \widehat{\E}_{N-1}-\widehat{\E}_N$ and modify all our questions accordingly. We call these frameworks \emph{fixed nuclei model\/}\index{fixed nuclei model} and \emph{free nuclei model\/}\index{free nuclei model} respectively.

In the free nuclei model we can consider two other problems:
\begin{enumerate}[label=(\alph*), wide, labelindent=0pt]
\item\label{sect-26-1-2-a}
Estimate from below \emph{minimal distance between nuclei} i.e.
\begin{equation*}
\min_{1\le m < m'\le M} |\y_m-\y_{m'}|
\end{equation*}
for which such minimum is achieved.

\item\label{sect-26-1-2-b}
Estimate \emph{maximal excessive positive charge\/}\index{maximal excessive positive charge}
\begin{equation}
\max_N \bigl\{(Z-N)\colon \widehat{\E} <
\min_{\substack{\\[3pt]N_1,\ldots, N_M: \\[3pt] N_1+\ldots N_M=N}}\;
\sum_{1\le m\le M} \ \E (Z_m; N_m)\bigr\}
\label{26-1-14}
\end{equation}
for which molecule does not disintegrates into atoms.
\end{enumerate}

\section{Magnetic Thomas-Fermi theory}
\label{sect-26-1-3}

As in the previous Chapter~\ref{book_new-sect-25} the first approximation is the Hartree-Fock (or Thomas-Fermi) theory. Let us introduce the \emph{spacial density\/}\index{spacial density} of the particle with the state
$\Psi \in {\fH}$:
\begin{equation}
\rho (x)=\rho _\Psi (x)=N
\int |\Psi (x,x_2,\ldots ,x_N)| ^2 \,dx_2\cdots dx_N.
\label{26-1-15}
\end{equation} Let us write the Hamiltonian, describing the corresponding ``quantum liquid'':
\begin{gather}
\cE_B (\rho )=\int \tau_B (\rho (x)) \,dx- \int V(x)\rho (x)\, dx+\frac{1}{2}
\D(\rho ,\rho ),\label{26-1-16}\\
\shortintertext{with}
\D(\rho ,\rho )=\iint |x-y| ^{-1}\rho (x)\rho (y)\,dxdy\label{26-1-17}
\end{gather}
where $\tau _B$ is the energy density of a gas of noninteracting electrons:
\begin{equation}
\tau_B (\rho )=\sup _{w\ge 0} \bigl(\rho w-P _B(w)\bigr)
\label{26-1-18}
\end{equation}
is the Legendre transform of the \emph{pressure\/}\index{pressure} $P_B(w)$ given by the formula
\begin{equation}
P_B(w)=\varkappa _1 B\Bigl(\frac{1}{2} w_+ ^{\frac{d}{2}} +
\sum _{j\ge 1} (w-2jB)_+ ^{\frac{d}{2}}\Bigr)
\label{26-1-19}
\end{equation}
with $\varkappa _1=(2\pi)^{-1}q, (3\pi ^2)^{-1}q$ for $d=2,3$ respectively.

The classical sense of the second and the third terms in the right-hand
expression of (\ref{26-1-16}) is clear and the density of the kinetic energy is given by $\tau_B (\rho )$ in the semiclassical approximation (see remark \ref{rem-26-1-2}). So, the problem is
\begin{claim}\label{26-1-20}
Minimize functional $\cE_B(\rho)$ defined by (\ref{26-1-16}) under restrictions:
\begin{phantomequation}\label{26-1-21}\end{phantomequation}
\vglue-25pt
\begin{equation}
\rho \ge 0,\qquad \int \rho \,dx\le N.
\tag*{$\textup{(\ref*{26-1-21})}_{1,2}$}
\end{equation}
\end{claim}
\vglue-10pt
The solution if exists is unique because functional $\cE_B(\rho )$ is strictly convex (see below). The existence and the property of this solution denoted further by $\rho_B ^\TF$ is known in the series of physically important cases.

\begin{remark}\label{rem-26-1-2}
If $w$ is the negative potential then
\begin{equation}
\tr e(x,x,0)\approx P'_B(w)
\label{26-1-22}
\end{equation}
defines the density of all non-interacting particles with negative energies at point $x$ and
\begin{equation}
\int ^0_{-\infty} \tau \,d_\tau \tr e(x,x,\tau) dx \approx -\int P_B(w)\,dx
\label{26-1-23}
\end{equation}
is the total energy of these particles; here $\approx$ means ``in the semiclassical approximation''.
\end{remark}

We consider in the case of $d=3$ a large (heavy) molecule with potential (\ref{book_new-25-1-4}). It is well-known\footnote{\label{foot-26-3} Section IV of E.~H.~Lieb, J.~P.~Solovej and J.~Yngvarsson~\cite{LSY2}.} that

\begin{proposition}\label{prop-26-1-3}
\begin{enumerate}[label=(\roman*), wide, labelindent=0pt]
\item\label{prop-26-1-3-i}
For $V(x)$ given by \textup{(\ref{26-1-4})} minimization problem \textup{(\ref{26-1-20})} has a unique solution $\rho =\rho_B^\TF$; then denote $\cE^\TF_B \coloneqq \cE_B (\rho_B^\TF)$.

\item\label{prop-26-1-3-ii}
Equality in $\textup{(\ref{26-1-21})}_2$ holds if and only if
$N\le Z\coloneqq \sum_m Z_m$.
\item\label{prop-26-1-3-iii}
Further, $\rho^\TF$ does not depend on $N$ as $N\ge Z$.

\item\label{prop-26-1-3-iv}
Thus
\begin{equation}
\int \rho_B^\TF \,dx =\min (N,Z),\qquad Z\coloneqq \sum_{1\le m\le M} Z_m.
\label{26-1-24}
\end{equation}
\end{enumerate}
\end{proposition}

\section{Main results sketched and plan of the chapter}
\label{sect-26-1-4}

In the first half of this Chapter we derive asymptotics for ground state energy and justify Thomas-Fermi theory. As construction of Section~\ref{book_new-sect-25-2} works with minimal modifications (see Section~\ref{sect-26-6}) in the magnetic case as well we start immediately from magnetic Thomas-Fermi theory in Section~\ref{sect-26-2}.

We discover that there are three different cases: \emph{a moderate magnetic field case\/}\index{moderate magnetic field case} $B\ll Z^{\frac{4}{3}}$ when
$\cE^\TF _B \asymp Z^{\frac{5}{3}}$ and $\cE^\TF_B=\cE^\TF_0(1+o(1))$, \emph{a strong magnetic field case\/}\index{strong magnetic field case} $B\gg Z^{\frac{4}{3}}$ when $\cE^\TF _B \asymp B^{\frac{2}{5}} Z^{\frac{9}{5}}$ and $\cE^\TF_B=\bar{\cE}^\TF_B(1+o(1))$ where $\bar{\cE}^\TF_B$ is Thomas-Fermi potential derived as $P_B(w)= \frac{1}{2}\varkappa_1 w^{\frac{d}{2}}$ (cf. (\ref{26-1-19})), and \emph{an intermediate case}\index{intermediate case} $B\sim Z^{\frac{4}{3}}$.

Then we apply semiclassical methods (like in Section~\ref{book_new-sect-25-4}) albeit now our analysis is way more complicated due to two factors: the semiclassical theory of the magnetic Schr\"odinger operator is more difficult than the corresponding theory for the non-magnetic Schr\"odinger operator and also Thomas-Fermi potential $W^\TF$ is not very smooth in the magnetic case, so we need to approximate it by a smooth one (on a microscale).

We discover that both semiclassical methods and Thomas-fermi theory are relevant only if $B\ll Z^3$. The case of the superstrong magnetic field $B\gg Z^3$ was considered in E.~H.~Lieb, J.~P.~Solovej and J.~Yngvarsson~\cite{LSY1}.

First of all, in Section~\ref{sect-26-3} we consider the case $M=1$; then the Thomas-Fermi potential $W^\TF_B$ is non-degenerate and in this case we derive sharp spectral asymptotics.

Next, in Section~\ref{sect-26-4} we consider the case $M\ge 2$ but we analyze only zone $\{W^\TF_B+\nu \gtrsim B\}$ where $\nu$ is a chemical potential and $B$ is an intensity of the magnetic field. A certain weaker non-degeneracy condition is satisfied due to the Thomas-Fermi equation and we derive almost sharp spectral asymptotics.

Furthermore, in Section~\ref{sect-26-5} we analyze in the case $M\ge 2$ the \emph{boundary strip\/}
$\{W^\TF +\nu \lesssim B\}$ containing the boundary of $\supp (\rho^\TF_B)$; this is the most difficult case to analyze and our remainder estimates are not sharp unless $N\ge Z- CZ^{\frac{2}{3}}$.

Finally, in Section~\ref{sect-26-6} we derive asymptotics of the ground state energy. Their precision (or lack of it) follows from the precision of the corresponding semiclassical results; so our results in the case $M=1$ are sharp, but our results in the case $M\ge 2$ (especially if $N\le Z- CZ^{\frac{2}{3}}$) are not.

\smallskip
In the second half of this Chapter we consider related problems. In Section~\ref{sect-26-7} (cf. Section~\ref{book_new-sect-25-5}) we consider negatively charged systems ($N\ge Z$) and estimate both ionization energy $\I_N$ and excessive negative charge $(N-Z)_+$\,\footnote{\label{foot-26-4} In the (magnetic) Thomas-Fermi theory both answers are $0$.}.

In Section~\ref{sect-26-8} (cf. Section~\ref{book_new-sect-25-6}) we consider positively charged systems ($N\le Z$) and estimate the remainder $|\I_N+\nu|$ in the formula $\I_N \approx -\nu$; for $M\ge 2$ we also consider a free nuclei model and estimate from below the distance between nuclei and an excessive positive charge $(Z-N)_+$ when atoms can be bound into molecule\footref{foot-26-4}.

Appendices contain some auxiliary material, most notably, electrostatic inequalities in Appendix~\ref{sect-26-A-1} and also Zhislin's theorem (that system can bind at least $Z$ electrons) in Appendix~\ref{sect-26-A-4}--all in the case of magnetic field.

\chapter{Magnetic Thomas-Fermi theory}
\label{sect-26-2}

\section{Framework and existence}
\label{sect-26-2-1}

The Thomas-Fermi theory is well developed in the magnetic case as well albeit in the lesser degree than in the non-magnetic one. The most important source now is
Section IV of E.~H.~Lieb, J.~P.~Solovej and J.~Yngvarsson~\cite{LSY2}.

Again as in the previous Chapter~\ref{book_new-sect-25} to get the best lower estimate for the ground state energy (neglecting semiclassical errors) one needs to maximize functional $\Phi_{B,*}(W+\nu)$ defined by (\ref{book_new-25-3-1}) albeit with the pressure $P_B(w)$ given for $d=2,3$ by (\ref{26-1-19}). Formulae (\ref{book_new-25-3-2}) and (\ref{book_new-25-3-3}) also remain valid.

Further, to get the best upper estimate (neglecting semiclassical errors) one needs to minimize functional $\Phi_{B}^*(\rho',\nu)$ defined by (\ref{book_new-25-3-4}) where (\ref{book_new-25-3-4}) remains valid with $P$ replaced by $P_B$ and respectively $\tau(\rho')$ replaced by $\tau_B (\rho')$ which is Legendre transformation of $P_B$ (see~(\ref{26-1-18})).

Since $P_B$ is given by much more complicated expression (\ref{26-1-19}) rather than $\textup{(25.3.6)}_1$, and respectively
\begin{equation}
P'_B(w)=\frac{d}{2}\varkappa _1 B\Bigl(\frac{1}{2} w_+ ^{\frac{d}{2}-1} +
\sum _{j\ge 1} (w-2jB)_+ ^{\frac{d}{2}-1}\Bigr)
\label{26-2-1}
\end{equation}
(cf. $\textup{(25.3.6)}_2$), there is no explicit expression for $\tau_B$ similar to (\ref{book_new-25-3-7}).

\begin{remark}\label{rem-26-2-1}
\begin{enumerate}[label=(\roman*), wide, labelindent=0pt]
\item\label{rem-26-2-1-i}
$B(x)=|\nabla \times A(x)|$.

\item\label{rem-26-2-1-ii}
From now on we will assume that $d=3$.

\item\label{rem-26-2-1-iii}
$P_B$ is a strictly convex function and therefore $\tau_B$ is also a strictly convex function\footnote{\label{foot-26-5} As $d=2$, $P_B$ is a convex and piecewise linear function and therefore $\tau_B$ is also a convex function.}.

\item\label{rem-26-2-1-iv}
$P_B(w)\to P_0(w)$, $P'_B(w)\to P'_0(w)$ and
$\tau_B(\rho)\to \tau_0(\rho)$ as $B\to 0$ where (without subscript ``$0$'') the limit functions have been defined by $\textup{(25.3.6)}_{1,2}$ and (\ref{book_new-25-3-7}) respectively.
\end{enumerate}
\end{remark}

\begin{remark}\label{rem-26-2-2}
\begin{enumerate}[label=(\roman*), wide, labelindent=0pt]
\item\label{rem-26-2-2-i}
Alternatively we minimize $\cE_B(\rho)=\Phi ^*_B(\rho, 0)$ under assumptions
\begin{phantomequation}\label{26-2-2}\end{phantomequation}
\begin{equation}
\rho \ge 0,\qquad \int \rho \,dx\le N.
\tag*{$\textup{(\ref*{26-2-2})}_{1,2}$}\label{26-2-2-*}
\end{equation}

\item\label{rem-26-2-2-ii}
So far in comparison with the previous Chapter~\ref{book_new-sect-25} we changed only definition of $P_B(w)$ and $\tau_B(\rho)$ respectively. Note that $P_B(w)$ belongs to $\sC^{\frac{d}{2}+1}$ (as $d=2,3$) as function of $w$; this statement will be quantified later.

\item\label{rem-26-2-2-iii}
While not affecting existence (with equality in $\textup{(\ref{26-2-2})}_1$ iff $N\le Z$) and uniqueness of solution, it affects other properties, especially as $B\ge Z^{\frac{4}{3}}$.
\end{enumerate}
\end{remark}

\begin{proposition}\label{prop-26-2-3}
In our assumptions for any fixed $\nu \le 0$ Statements \ref{book_new-prop-25-3-1-i}--\ref{book_new-prop-25-3-1-viii} of Proposition~\ref{book_new-prop-25-3-1} hold.
\end{proposition}

\begin{proof} The proof is the same as of Proposition~\ref{book_new-prop-25-3-1}. The proof that threshold $\nu=0$ matches to $N=Z$ are theorems 4.9 and 4.10 of
Section IV of E.~H.~Lieb, J.~P.~Solovej and J.~Yngvarsson~\cite{LSY2}. \end{proof}

Note that (\ref{book_new-25-3-8})--(\ref{book_new-25-3-9}) and (\ref{book_new-25-3-10}) become
\begin{gather}
\rho =\frac{1}{4\pi}\Delta (W-V)=P'_B (W+\nu ), \label{26-2-3}\\
W=o(1)\qquad \text{as\ \ }|x|\to \infty \label{26-2-4}\\
\shortintertext{and}
{\cN}(\nu )=\int P'_B (W+\nu )\, dx
\label{26-2-5}
\end{gather}
respectively.

Similarly, Proposition~\ref{book_new-prop-25-3-2} remains true:

\begin{proposition}\label{prop-26-2-4}
For arbitrary $W$ the following estimates hold with absolute constants
$\epsilon _0>0$ and $C_0$:
\begin{multline}
\epsilon _0 \D(\rho -\rho ^\TF,\rho -\rho ^\TF)\le
\Phi_{B,*} (W ^\TF+\nu )-\Phi_{B,*} (W+\nu )\le \\
C_0 \D(\rho -\rho ',\rho -\rho ')\label{26-2-6}
\end{multline}
and
\begin{multline}
\epsilon _0 \D(\rho '-\rho ^\TF,\rho '-\rho ^\TF)\le
\Phi _B^*(\rho ,\nu )-\Phi_B ^*(\rho ^\TF,\nu )\le\\
C_0 \D(\rho -\rho ',\rho -\rho ')\label{26-2-7}
\end{multline}
with $\rho =\frac{1}{4\pi}\Delta (W-V)$, $\rho '=P'_B(W+\nu )$.
\end{proposition}

\begin{proof} This proof is rather obvious as well. \end{proof}

\section{Properties}
\label{sect-26-2-2}

\begin{proposition}\label{prop-26-2-5}
The solution of the magnetic Thomas-Fermi problem has the following scaling properties
\begin{multline}
W^\TF (x;\, \underline{Z};\, \underline{y};\, B;\,N;\, q)=\\
q^{\frac{2}{3}} N^{\frac{4}{3}} W^\TF (q^{\frac{2}{3}} N^{\frac{1}{3}}x;\,
N^{-1}\underline{Z};\, q^{\frac{2}{3}} N^{\frac{1}{3}}\underline{\y}; \,q^{-\frac{2}{3}}N^{-\frac{4}{3}}B;\,1; 1),
\label{26-2-8}
\end{multline}
\vglue-25pt
\begin{multline}
\rho^\TF (x;\, \underline{Z};\, \underline{y};\, B;\,N;\, q)=\\
q^2 N^2 \rho^\TF (q^{\frac{2}{3}} N^{\frac{1}{3}}x;\,
N^{-1}\underline{Z};\, q^{\frac{2}{3}} N^{\frac{1}{3}}\underline{\y}; \,q^{-\frac{2}{3}}N^{-\frac{4}{3}}B;\,1; 1),
\label{26-2-9}
\end{multline}
\vglue-25pt
\begin{align}
\cE^\TF (\underline{Z};\, \underline{y};\, B;\, N;\, q)&=
q^{\frac{2}{3}} N^{\frac{7}{3}} \cE^\TF
(N^{-1}\underline{Z};\, q^{\frac{2}{3}}N^{\frac{1}{3}}\underline{\y};\, q^{-\frac{2}{3}}N^{-\frac{4}{3}}B;\, 1;\, 1),
\label{26-2-10}\\[6pt]
\nu^\TF (\underline{Z};\, \underline{y};\,B;\, N;\, q) &=
q^{\frac{2}{3}} N^{\frac{4}{3}} \nu^\TF (N^{-1}\underline{Z};
\, q^{\frac{2}{3}} N^{\frac{1}{3}}\underline{\y};\, q^{-\frac{2}{3}}N^{-\frac{4}{3}}B;\, 1;\, 1)
\label{26-2-11}
\end{align}
where $\nu^\TF=\nu$ is the chemical potential; recall that $\underline{Z}=(Z_1,\ldots,Z_M)$ and $\underline{\y}=(\y_1,\ldots,\y_M)$ are arrays and parameter $q$ also enters into Thomas-Fermi theory.

In particular, $\nu^\TF$ and $B$ scale the same way.
\end{proposition}

\begin{proof}
Proof is trivial by scaling. \end{proof}

Now one can guess that there are two cases $B \ll Z^{\frac{4}{3}}$ and
$B \gg Z^{\frac{4}{3}}$ (recall that $N\asymp Z$) in which magnetic Thomas-Fermi theory looks very different (and also an intermediate case $B\sim Z^{\frac{4}{3}}$). To explain this difference let us consider one atom case:

First of all recall that if $B=0$ and $N=Z$ theory (as $M=1$) has just one parameter and we can get rid off it by rescaling; $W^\TF \asymp Z \ell^{-1}$ as
$\ell\lesssim Z^{-\frac{1}{3}}$ \underline{and} $W^\TF \asymp \ell^{-4}$ as
$\ell\gtrsim Z^{-\frac{1}{3}}$.
Then
\begin{gather*}
\W^{\TF\, \frac{3}{2}}\ell^3 \asymp Z^{\frac{3}{2}} \ell^{\frac{3}{2}},\quad
\W^{\TF\, \frac{5}{2}}\ell^3 \asymp Z^{\frac{5}{2}} \ell^{\frac{1}{2}}\\ \intertext{\underline{and}}
\W^{\TF\, \frac{3}{2}}\ell^3 \asymp \ell^{-3},\quad
\W^{\TF\, \frac{5}{2}}\ell^3 \asymp \ell^{-7}
\end{gather*}
respectively where the first factors are spacial densities of the charge and (negative) Thomas-Fermi energy respectively and therefore zone $\ell \asymp Z^{-\frac{1}{3}}$ provides the main contributions into both.

Therefore, if in this \emph{main zone} $B\ll W^\TF \asymp Z^{\frac{4}{3}}$ we guess that the magnetic theory is similar to non-magnetic one, and actually it is true.

However, let us study an atomic case rigorously. Let $M=1$, $\y_m=0$ and $N\le Z$. Then
\begin{claim}\label{26-2-12}
$W_B^\TF $ is a spherically symmetric, and it is monotone non-increasing function of $|x|$; $W_B^\TF\to +0$ as $|x|\to \infty$;
\end{claim}
\vglue-25pt
\begin{equation}
W^\TF_B (x)\le -\nu \implies W^\TF_B = |x|^{-1}(Z-N).
\label{26-2-13}
\end{equation}
Indeed, (\ref{26-2-12}) is obvious and (\ref{26-2-13}) follows from it and Newton screening theorem.

Two propositions below treat cases $B\lesssim Z^{\frac{4}{3}}$ and
$B\gtrsim Z^{\frac{4}{3}}$ respectively; in the former case there is another fork: $B\lesssim (Z-N)_+^{\frac{4}{3}}$ and
$B\gtrsim (Z-N)_+^{\frac{4}{3}}$.

\begin{proposition}\label{prop-26-2-6}
Let $M=1$, $\y_m=0$, $N\asymp Z_m$ and $B\le Z^{\frac{4}{3}}$.

\begin{enumerate}[label=(\roman*), wide, labelindent=0pt]
\item\label{prop-26-2-6-i}
Then
\begin{gather}
W^\TF_B \le \min(Z|x|^{-1}, C|x|^{-4}) \label{26-2-14}\\
\shortintertext{and}
\rho^\TF_B \le
C\min(Z^{\frac{3}{2}}|x|^{-\frac{3}{2}}+ BZ^{\frac{1}{2}}|x|^{-\frac{1}{2}}, |x|^{-6}+ B |x|^{-2}).
\label{26-2-15}
\end{gather}
\item\label{prop-26-2-6-ii}
There exists
\begin{equation}
\bar{r}_m\asymp \min \bigl(B^{-\frac{1}{4}}, (Z-N)_+^{-\frac{1}{3}}\bigr)
\label{26-2-16}
\end{equation}
such that $W^\TF _B \gtrless -\nu$ as $|x|\lessgtr \bar{r}_m$ and then
$\rho^\TF_B= 0$ iff $x\ge \bar{r}_m$.

\item\label{prop-26-2-6-iii}
\textup{(\ref{26-2-14})} and \textup{(\ref{26-2-15})} become equivalencies ($\asymp$) as $|x|\le (1-\epsilon) \bar{r}_m$.

\item\label{prop-26-2-6-iv}
$B\le (Z-N)_+^{\frac{4}{3}}$ implies $\bar{r}_m \asymp (Z-N)_+^{-\frac{1}{3}}$, $\nu \asymp (Z-N)_+^{\frac{4}{3}}$ and
\begin{multline}
W^\TF +\nu \asymp (Z-N)_+^{\frac{5}{3}}(\bar{r}_m-|x|),\\
-\partial_{|x|} W^\TF \asymp (Z-N)_+^{\frac{5}{3}} \qquad\qquad\qquad
\text{if\ \ } (1-\epsilon)\bar{r}_m\le |x|\le \bar{r}_m.
\label{26-2-17}
\end{multline}
\item\label{prop-26-2-6-v}
$B\ge (Z-N)_+^{\frac{4}{3}}$ implies $\bar{r}_m \asymp B^{-\frac{1}{4}}$,
$\nu \asymp (Z-N)_+ B^{\frac{1}{4}}\lesssim B$ and
\begin{multline}
W^\TF +\nu \asymp
B^2 (\bar{r}_m-|x|)^4 + B^{\frac{1}{2}}(Z-N)_+(\bar{r}_m-|x|)\\
\shoveleft{-\partial_{|x|} W^\TF \asymp
B^2 (\bar{r}_m-|x|)^3 + B^{\frac{1}{2}}(Z-N)_+} \\
\text{as \ \ } (1-\epsilon)\bar{r}_m\le |x|\le \bar{r}_m.
\label{26-2-18}
\end{multline}
\end{enumerate}
\end{proposition}

\begin{proposition}\label{prop-26-2-7}
Let $M=1$, $\y_m=0$, $N\asymp Z_m$ and $B\ge Z^{\frac{4}{3}}$.
\enlargethispage{2\baselineskip}

\begin{enumerate}[label=(\roman*), wide, labelindent=0pt]
\item\label{prop-26-2-7-i}
Then
\begin{gather}
W^\TF_B \le Z|x|^{-1} \label{26-2-19}\\
\shortintertext{and}
\rho^\TF_B \le
C Z^{\frac{3}{2}}|x|^{-\frac{3}{2}}+ CB Z^{\frac{1}{2}}|x|^{-\frac{1}{2}}.
\label{26-2-20}
\end{gather}
\item\label{prop-26-2-7-ii}
There exist $\bar{r}_m $ and $\bar{r}'_m$,
\begin{equation}
\bar{r}_m\asymp B^{-\frac{2}{5}}Z^{\frac{1}{5}}, \qquad \bar{r}'_m\asymp B^{-1}Z_m,
\label{26-2-21}
\end{equation}
such that $W^\TF _B \gtrless B$ as $|x|\lessgtr \bar{r}_m$,
$W^\TF _B \gtrless -\nu$ as $|x|\lessgtr \bar{r}'_m$ and then $\rho^\TF_B= 0$ iff $x\ge \bar{r}_m$.
\item\label{prop-26-2-7-iii}
\textup{(\ref{26-2-19})}--\textup{(\ref{26-2-20})} become equivalencies ($\asymp$) as $|x|\le (1-\epsilon)\bar{r}_m$.

\item\label{prop-26-2-7-iv}
$\nu \asymp (Z-N) _+ B^{\frac{2}{5}}Z^{-\frac{1}{5}} \lesssim B$ and
\begin{gather}
W^\TF +\nu \asymp B^2 (\bar{r}_m-|x|)^4 + \bar{r}_m^{-2}(Z-N)_+(\bar{r}_m-|x|)
\label{26-2-22}\\
\shortintertext{and}
-\partial_{|x|} W^\TF \asymp B^2 (\bar{r}_m-|x|)^3 + \bar{r}_m^{-2}(Z-N)_+
\label{26-2-23}
\end{gather}
as $(1-\epsilon)\bar{r}_m\le |x|\le \bar{r}_m$.
\end{enumerate}
\end{proposition}

\begin{proof}[Proofs of Propositions~\ref{prop-26-2-6} and~\ref{prop-26-2-7}]
Proofs easily follow from equation and ``boundary conditions'' satisfied by $w(r)$ where $r=|x|$:
\begin{gather}
w'' + 2r^{-1}w =P'_B(w+\nu),\label{26-2-24}\\[2pt]
w= r^{-1}Z_m + O(1)\qquad \text{as\ \ } r\to 0,\label{26-2-25}\\[2pt]
w(\bar{r}_m)=-\nu,\qquad w'(\bar{r}_m)= \nu \bar{r}_m^{-1}\label{26-2-26}
\end{gather}
where $\nu = -(Z_m-N)_+\bar{r}_m^{-1}$. \end{proof}

\begin{corollary}\label{cor-26-2-8}
Let $M=1$, $\y_m=0$ and $N\asymp Z_m$. Then
\begin{enumerate}[label=(\roman*),wide, labelindent=0pt]
\item\label{cor-26-2-8-i}
$W^\TF_B\lesssim B$ if $|x|\ge \bar{r}'_m $ where $\bar{r}'_m\asymp B^{-1}Z_m$ as $B\ge Z_m^{\frac{4}{3}}$ and $\bar{r}'_m \asymp B^{-\frac{1}{4}}$ as
$B\le c Z_m^{\frac{4}{3}}$.

\item\label{cor-26-2-8-ii}
As $B\lesssim Z_m^{\frac{4}{3}}$ the main contribution to both the charge and the Thomas-Fermi energy is delivered by zone $\{x\colon |x|\asymp r^*_m \}$ with $r^*_m=Z_m^{-\frac{1}{3}}$; in particular, then
$\cE_B^\TF \asymp \cE^\TF \asymp Z_m^{\frac{7}{3}}$; further, in this case $W_B^\TF \asymp W^\TF$ in the zone $\{x\colon |x|\lesssim \epsilon \bar{r}_m\}$.

\item\label{cor-26-2-8-iii}
Further, $\cE_B^\TF \sim \cE^\TF$ as $B\ll Z_m^{\frac{4}{3}}$; furthermore, in this case $W_B^\TF \sim W^\TF$ in the zone $\{x\colon |x|\ll \bar{r}_m\}$.

\item\label{cor-26-2-8-iv}
On the other hand, as $B\ge Z^{\frac{4}{3}}$, the main contributions to the total charge and energy are delivered by $\{x\colon |x|\asymp \bar{r}_m\}$ and in particular $\rho _m\asymp B Z_m^{\frac{1}{2}}\bar{r}_m^{\frac{5}{2}}$ and
\begin{equation}
\cE^\TF_B \asymp B Z_m^{\frac{3}{2}}\bar{r}_m^{\frac{3}{2}}\asymp B^{\frac{2}{5}}Z_m^{\frac{9}{5}}.
\label{26-2-27}
\end{equation}
\end{enumerate}
\end{corollary}

Recall that $\bar{r}_m\asymp B^{-\frac{1}{4}}$ as $B\le Z_m^{\frac{4}{3}}$ and $\bar{r}_m\asymp B^{-\frac{2}{5}}Z_M^{\frac{1}{5}}$ as $B\ge Z_M^{\frac{4}{3}}$. Note that Proposition~\ref{book_new-prop-25-3-5} (comparing $W^\TF$ for molecule with the sum of those for single atoms) still holds. Therefore we conclude that

\begin{corollary}\label{cor-26-2-9}
\begin{enumerate}[label=(\roman*),wide, labelindent=0pt]
\item\label{cor-26-2-9-i}
Assume that
\begin{equation}
Z_m\asymp Z
\label{26-2-28}
\end{equation}
for all $m=1,\ldots, M$. Then all statements of corollary~\ref{cor-26-2-8} remain true for $M\ge 2$ with $|x|$ and $Z_m$ replaced by $\ell(x)$ and $Z$ and $\bar{r}_m$, $\bar{r}'_m$, $r^*_m$ by $\bar{r}$, $\bar{r}'$, $r^*$ respectively.

\item\label{cor-26-2-9-ii}
In the general case global statements remain true, pointwise statements remain true without modification only as $\ell(x)=\ell_m(x)\coloneqq |x-\y_m|$ with $Z_m\asymp Z$.
\end{enumerate}
\end{corollary}

\begin{remark}\label{rem-26-2-10}
\begin{enumerate}[label=(\roman*),wide, labelindent=0pt]
\item\label{rem-26-2-10-i}
Also holds Proposition~\ref{book_new-prop-25-3-13} as it uses only super-additivity of $\tau(\rho)$ and $\tau _B(\rho)$ is also super-additive (this follows from convexity of $\tau_B(\rho)$ and equality $\tau_B(0)=0$).

\item\label{rem-26-2-10-ii}
However there is a significant difference: if there is no magnetic field atoms really repulse one another on any distances and we can attribute it to either excessive positive charge as $N<Z$ or their infinite spatial size as $N=Z$. However with magnetic field atoms have a finite size even as $N=Z$ and they do not repulse one another on the large distances. In particular, Proposition~\ref{prop-26-2-11} below holds.
\end{enumerate}
\end{remark}

\begin{proposition}\label{prop-26-2-11}
Let $N=Z$ and
\begin{gather}
|\y_{m}-\y_{m'}|\ge \bar{r}_m +\bar{r}_{m'}\qquad \forall m\colon \ 1\le m<m'\le M.
\label{26-2-29}\\
\shortintertext{Then}
\cE^\TF_B (\underline{Z}, \underline{\y}, B,Z)=
\sum _{1\le m\le M} \cE^\TF_B (Z_m, \y_m, B,Z_m)\label{26-2-30}\\
\shortintertext{and}
\rho^\TF_B (x, \underline{Z}, \underline{\y}, B,Z)=
\sum _{1\le m\le M} \rho^\TF_B (x,Z_m, \y_m, B,Z_m).\label{26-2-31}
\end{gather}
\end{proposition}

\begin{proposition}\label{prop-26-2-12}
\begin{enumerate}[label=(\roman*), wide, labelindent=0pt]
\item\label{prop-26-2-12-i}
$\nu$ is monotone increasing function of $N$.

\item\label{prop-26-2-12-ii}
$W_B(x)$ is monotone non-increasing function of $N$.

\item\label{prop-26-2-12-iii}
$W_B(x)+\nu$ is monotone non-decreasing function of $N$ ; in particular
$\rho_B$ can only increase as $N$ increases.

\item\label{prop-26-2-12-iv}
$\nu$ is monotone non-increasing function of $Z_m$.

\item\label{prop-26-2-12-v}
$W_B(x)$ is monotone non-decreasing function of $Z_m$.
\end{enumerate}
\end{proposition}

\begin{proof}
\begin{enumerate}[label=(\roman*), wide, labelindent=0pt]
\item\label{pf-26-2-12-i}
Statement~\ref{prop-26-2-12-i} follows from the strict convexity of $\cE(\rho)$: consider two solutions with corresponding subscripts. Then $\cE(\rho)-\cE(\rho_j)> \nu_j(N-N_j)$ for any non-negative $\rho\ne \rho_j$ and $N=\int \rho\,dx$.

In particular,
$\cE(\rho_1)-\cE(\rho_2)> \nu_2(N_1-N_2)$ and
$\cE(\rho_2)-\cE(\rho_1)> \nu_1(N_2-N_1)$ and then $(\nu_1-\nu_2)(N_1-N_2)>0$.

\item\label{pf-26-2-12-ii}
Indeed, consider $N_1<N_2$ and in the definition of $W_2$ slightly decrease $Z_1,\ldots, Z_M$ thus replacing them by $Z'_1,\ldots, Z'_M$. Then $W_1> W_2$ for large $|x|$, $W_1-W_2\to +\infty$ as $x\to \y_m$ and therefore if Statement~\ref{prop-26-2-12-ii} fails, then $W_1-W_2$ reaches non-positive minimum at some regular point $\bar{x}$; at this point $W_1\le W_2$ and
\begin{equation*}
0\le \frac{1}{4\pi}\Delta (W_1-W_2) =P'(W_1+\nu_1) -P'(W_2+\nu_2).
\end{equation*}
This is possible only if at this point $W_2+\nu_2\le 0$ and $W_1+\nu_1<0$. Then in the small vicinity $\Delta (W_1-W_2)\le 0$ and $\bar{x}$ cannot be a point of minimum unless $W_1-W_2=\const$ there. Then any point of this vicinity is also a point of minimum and then due to standard analytic arguments $W_1-W_2=\const$ everywhere which is impossible.

So, $W_1(x;Z_1,\ldots,Z_M)> W_2(x;Z'_1,\ldots,Z'_M)$. Taking limit as
$Z'_m\to Z_m$ we arrive to $W_1(x;Z_1,\ldots,Z_M)\ge W_2(x;Z_1,\ldots,Z_M)$.

\item\label{pf-26-2-12-iii}
Proof of Statement~\ref{prop-26-2-12-iii} is similar but roles of $W_1$ and $W_2$ are played by $W_2+\nu_2$ and $W_1+\nu_1$ respectively.

\item\label{pf-26-2-12-iv}
Let $Z_{m,2}>Z_{m,1}$ for all $m$. Assume that $\nu_2>\nu_1$. Then similar arguments prove that $W_2+\nu_2\ge W_1+\nu_1$ and thus $\rho_2\ge \rho_1$ everywhere which is impossible unless there are just identical equalities as $W_2+\nu_2>0$, which is impossible.
\enlargethispage{\baselineskip}

\item\label{pf-26-2-12-v}
Finally, after Statement~\ref{prop-26-2-12-iv} was established, the same arguments prove Statement~\ref{prop-26-2-12-v}.
\end{enumerate} \end{proof}

As far as we know Theorem~1 of~R.~Benguria~\cite{benguria} (see Theorem~\ref{book_new-thm-25-3-8}) has not been proven in the case of magnetic field; however one can see easily that arguments of of~R.~Benguria's proof remain valid and we arrive to

\begin{theorem}\label{thm-26-2-13}
All Statements~\ref{book_new-thm-25-3-8-i}--\ref{book_new-thm-25-3-8-iii} of Theorem~\ref{book_new-thm-25-3-8} hold in the case of the constant magnetic field.
\end{theorem}

\begin{Problem}\label{Problem-26-2-14}
\begin{enumerate}[label=(\roman*), wide, labelindent=0pt]
\item\label{Problem-26-2-14-i}
Investigate how $\supp (\rho^\TF_B)$ depends on $B$ and on $Z$ in the atomic case $M=1$.

\item\label{Problem-26-2-14-ii}
More generally, investigate how $\supp (\rho^\TF _B)$ depends on $B$ and on $Z$ in the case $M\ge 2$.
\end{enumerate}
\end{Problem}

\section{Positive ions}
\label{sect-26-2-3}

In view of Remark~\ref{rem-26-2-10} we need to consider repulsion of positive ions in more details. Our purpose is to prove

\begin{theorem}\label{thm-26-2-15}
Let condition \textup{(\ref{26-2-28})} be fulfilled. Then the energy excess is estimated from below
\begin{equation}
\cQ\coloneqq \widehat{\cE}^\TF_B - \sum_{1\le m \le M} \cE^\TF_{B,m} \ge
\epsilon (Z-N)_+^2 a^{-1}.
\label{26-2-32}
\end{equation}
\end{theorem}

Note first that
\begin{multline}
\D\bigl(\rho^\TF_{B(\nu)}-\rho^\TF_{B,0},\,
\rho^\TF_{B(\nu)}-\rho^\TF_{B(0)}\bigl) +\\
\int \bigl( P'_B(W ^\TF_{B(\nu)}+\nu)-P'_B(W ^\TF_{B(0)})\bigr)
\bigl( W ^\TF_{B(\nu)}+\nu-W ^\TF_{B(0)})\bigr)\,dx=\\
\nu \int \bigl( P'_B(W ^\TF_{B(\nu)}+\nu)-P'_B(W ^\TF_{B(0)}+0)\bigr)\,dx
\label{26-2-33}
\end{multline}
with the right-hand expression equal $\nu (N-Z)\asymp (Z-N)^2\bar{r}^{-1}$ and due to monotonicity $P'_B(w)$ we conclude that

\begin{proposition}\label{prop-26-2-16}
Let condition \textup{(\ref{26-2-28})} be fulfilled. Then
\begin{equation}
\D\bigl(\rho^\TF_{B(\nu)}-\rho^\TF_{B(0)},\,\rho^\TF_{B,\nu}-\rho^\TF_{B(0)}\bigl) \le C (Z-N)^2\bar{r}^{-1}.
\label{26-2-34}
\end{equation}
\end{proposition}

\begin{proof}[Proof of Theorem~\ref{thm-26-2-15}]
\begin{enumerate}[label={\emph{Step \arabic*.\/}}, wide, labelindent=0pt]

\item\label{pf-26-2-15-1}
 Note first that due to non-negativity of the expression
\begin{equation}
\widehat{\cE}^\TF_B (\underline{Z},\underline{\y},N)-
\min _{N_1+N'=N} \bigl(\cE^\TF_B (Z_1,N_1) -
\widehat{\cE}^\TF_B (\underline{Z}',\underline{\y}',N')\bigr)
\label{26-2-35}
\end{equation}
(see proof of Proposition~\ref{book_new-prop-25-3-13} which persists even if there is constant magnetic field, see Remark~\ref{rem-26-2-10}) it is sufficient to prove theorem only for $M=2$. From now on we assume that $M=2$.

\item\label{pf-26-2-15-2}
According to Proposition~\ref{book_new-prop-25-3-13}
\begin{equation}
\D (\rho_B^\TF - \rho_{B,1}^\TF - \rho_{B,2}^\TF)\le C\cQ.
\label{26-2-36}
\end{equation}
Therefore due to superadditivity $\tau_B$
\pagebreak
\begin{multline}
\cQ\ge -\int V_1 \rho_{B,2}^\TF \,dx - \int V_2 \rho_{B,1}^\TF \,dx +\\
\D(\rho_{B,2}^\TF, \rho_{B,1}^\TF)+ Z_1Z_2a^{-1}- C\cQ
\label{26-2-37}
\end{multline}
and it is sufficient to prove the same estimate from below for the right-hand expression without the last term. However this is easy if
$a\ge \bar{r}_1+\bar{r}_2$ since $V_m= |x-\y_m|^{-1}Z_m$ and $\rho_{B,m}^\TF=\rho_{B,m}^\TF (|x-\y_m|)$ are spherically symmetric functions\footnote{\label{foot-26-6} However this is not true in general as $a<\bar{r}_1+\bar{r}_2$. Really, consider $N_m=Z_m$ and uniformly charged spheres. Then the right-hand expression of (\ref{26-2-37}) is $0$ as
$a\ge \bar{r}_1+\bar{r}_2$ and is negative and decays as $a$ decays from $\bar{r}_1+\bar{r}_2$ to $\max (\bar{r}_1,\bar{r}_2)$ and it increases again as $a$ decays from $\max (\bar{r}_1,\bar{r}_2)$ to $0$.}.

Therefore for $a\ge \bar{r}_1+\bar{r}_2$ inequality (\ref{26-2-32}) has been proven and in what follows we can assume that $a\le \bar{r}_1+\bar{r}_2$. Further, applying Theorem~\ref{thm-26-2-13} we conclude then that

\begin{claim}\label{26-2-38}
Inequality (\ref{26-2-32}) holds for $a\ge \epsilon\bar{r}$.
\end{claim}
\enlargethispage{2\baselineskip}

\item\label{pf-26-2-15-3}
 Recall that the bulk of electrons are in the zone
$\{\ell(x)\asymp r^*\}$\,\footnote{\label{foot-26-7} I.e. zone $\{c(\epsilon)^{-1}r^*\le \ell(x) \le c(\epsilon)r^*\}$ contains at least $(1-\epsilon)N$ electrons.}. Based on this one can prove easily that as
$a\le \epsilon\bar{r}$ the right-hand hand expression of (\ref{26-2-37}) is greater than $(1-\epsilon_1) a^{-1}Z_1Z_2$ and therefore

\begin{claim}\label{26-2-39}
As $B\ge Z^{\frac{4}{3}}$ and $a\le \epsilon r^*$ we have
$\cQ\ge (1-\epsilon_1) a^{-1}Z_1Z_2$
\end{claim}
and combining with (\ref{26-2-38}) we conclude that (\ref{26-2-32}) holds for $B\gtrsim Z^{\frac{4}{3}}$ and for $B\lesssim Z^{\frac{4}{3}}$ we need to consider the case $\epsilon_0 r^*\le a\le \epsilon \bar{r}$ with arbitrarily small constant $\epsilon$.

Replacing then $P_B$ by $P_0$ and noting that an error will not exceed
$C_0 \bar{r} B^2\le C_1\epsilon a^{-7}$ while $\cQ\ge \epsilon_0 a^{-7}$ for $B=0$ we conclude that (\ref{26-2-32}) holds as $\epsilon_0 r^*\le a \le c\bar{r}$ and $(Z-N)\le C_2 a^{-3}$.

Finally, as $(Z-N)\ge C_2a^{-3}$ we see that
$\bar{r}\le C_0(Z-N)^{-\frac{1}{3}}\le \epsilon a$ and (\ref{26-2-32}) holds again.
\end{enumerate}
\end{proof}

Even if we do not need it for our purposes we want to consider the repulsion of too close neutral atoms:

\begin{theorem}\label{thm-26-2-17}
Let condition \textup{(\ref{26-2-28})} be fulfilled and $N=Z$. Then as
$a\ge \epsilon\bar{r}$ the energy excess is estimated from below
\begin{gather}
\cQ \ge
\epsilon G^2\bar{r} \sum_{1\le m<m'\le M} (\bar{r}_m+\bar{r}_n-|\y_m-\y_{m'}|)_+^{12}\bar{r}^{-12}
\label{26-2-40}\\
\shortintertext{where}
G\coloneqq \left\{\begin{aligned}
&B\qquad&&\text{if\ \ } B\le Z^{\frac{4}{3}},\\
&Z^{\frac{4}{5}}B^{\frac{2}{5}} &&\text{if\ \ } B\ge Z^{\frac{4}{3}}.
\end{aligned}\right.
\label{26-2-41}
\end{gather}
and correspondingly
$G^2\bar{r}=\left\{\begin{aligned}
&B^{\frac{7}{4}}\qquad&&\text{if\ \ } B\le Z^{\frac{4}{3}},\\
&Z^{\frac{9}{5}}B^{\frac{2}{5}} &&\text{if\ \ } B\ge Z^{\frac{4}{3}}.
\end{aligned}\right.$
\end{theorem}

\begin{proof}
Again we need to consider case $M=2$. Since
\begin{gather}
\frac{1}{4\pi}\Delta W_B=\rho_B -\sum_{m=1,2} Z_m\updelta (x-\y_m)
\label{26-2-42}
\intertext{and $W_{B,1},\, W_{B,2}$ satisfy similar equations, (\ref{26-2-36}) implies that}
\|\nabla (W_B-W_{B,1}-W_{B,2})\| \le c\cQ^{\frac{1}{2}}.
\label{26-2-43}
\end{gather}
This inequality and the fact that $W_B=0$ as $\ell(x)\ge c\bar{r}$, and $W_{B,m}=0$ as $|x-\y_m|\ge \bar{r}_m$ imply that
\begin{equation}
\|(W_B-W_{B,1}-W_{B,2})\| \le c\bar{r}\cQ^{\frac{1}{2}}.
\label{26-2-44}
\end{equation}
Note that $\int (-\rho_B + \rho_{B,1}+\rho_{B,2})\,dx=0$ implies that
\begin{multline}
|\int \bigl((W_{B,1}+W_{B,2})^{\frac{1}{2}} -W_{B,1}^{\frac{1}{2}}-W_{B,1}^{\frac{1}{2}}\bigr)\,dx\le\\
\int |W_B^{\frac{1}{2}}-(W_{B,1}+W_{B,2})^{\frac{1}{2}}|\,dx.
\label{26-2-45}
\end{multline}
One can calculate easily that the left-hand expression has a magnitude
$(G\eta^4)^{\frac{1}{2}} \cdot\eta \bar{r}\cdot
(\eta ^{\frac{1}{2}}\bar{r})^2\asymp G^{\frac{1}{2}}\bar{r}^3 \eta^4$ where the first factor is a magnitude of an integrand as
$W_{B,1}\asymp W_{B,2}\asymp G\eta^4$, $\eta \bar{r}$ is a depth, and
$\eta^{\frac{1}{2}} \bar{r}$ the width of this zone.

On the other hand, consider the right hand expression. It consists of contributions of several zones:
\begin{enumerate}[label=(\alph*), wide, labelindent=0pt]
\item\label{pf-26-2-17-a}
Zone $\cY_t$ where $W_{B,1}+W_{B,2}\le G t^4$, $W_B\le 2Gt^4$. This contribution does not exceed
$CG^{\frac{1}{2}} t^2 \mes (\cY_t)\asymp CG^{\frac{1}{2}} \bar{r}^3 t^3$\,\footnote{\label{foot-26-8} Obviously,
$\mes (\cY_t\cup \cZ_t)\asymp \bar{r}^3 t$ and similarly
$\mes (\cX_\tau)\asymp \bar{r}^3 \tau$.}.

\item\label{pf-26-2-17-b}
Zone $\cZ_t$ where $W_{B,1}+W_{B,2}\le G t^4$, $W_B\ge 2Gt^4$. Its contribution does not exceed
\begin{equation*}
C\int_{\cZ_t} W_B^{\frac{1}{2}}\,dx \le
C\|W_B\|_\cZ ^{\frac{1}{2}} (\mes (\cZ_t))^{\frac{3}{4}} \le C\bar{r}^{\frac{11}{4}}\cQ^{\frac{1}{4}} t^{\frac{3}{4}}
\end{equation*}
since due to (\ref{26-2-44}) $\|W_B\|_{\cZ_t}\le c\bar{r}\cQ^{\frac{1}{2}}$.

\item\label{pf-26-2-17-c}
Zone where $W_{B,1}+W_{B,2}\asymp G\tau^4$. This contribution does not exceed
\begin{multline}
C G^{-\frac{1}{2}} \tau^{-2} \int |W_B-W_{B,1}-W_{B,2}|\,dx \le \\
CG^{-\frac{1}{2}} \tau^{-2} \times \|W_B-W_{B,1}-W_{B,2}\| \times (\mes (\cX_\tau))^{\frac{1}{2}}\asymp CG^{-\frac{1}{2}}\cQ^{\frac{1}{2}}\bar{r}^{\frac{5}{2}}\tau^{-\frac{3}{2}}.
\label{26-2-46}
\end{multline}
Integrating by $\tau^{-1}\,d\tau$ from $t$ we get (\ref{26-2-46}) calculated as $\tau=t$ (and capped by the same expression as $\tau=\eta$.
\enlargethispage{1.5\baselineskip}

So, the right-hand expression of (\ref{26-2-45}) does not exceed
\begin{equation*}
CG^{\frac{1}{2}} \bar{r}^3 t^3 +
C\cQ^{\frac{1}{4}}\bar{r}^{\frac{11}{4}} t^{\frac{3}{4}}+
CG^{-\frac{1}{2}}\cQ^{\frac{1}{2}}\bar{r}^{\frac{5}{2}}t^{-\frac{3}{2}};
\end{equation*}
optimizing with respect to $t=G^{-\frac{2}{9}}\cQ^{\frac{1}{9}}\bar{r}^{-\frac{1}{9}}$ we get all three terms equal to $CG ^{-\frac{1}{6}}Q^{\frac{1}{3}}\bar{r}^{\frac{8}{3}}$
comparing with $CG^{\frac{1}{2}}\bar{r}^3\eta^4$ we arrive to (\ref{26-2-40}).
\end{enumerate} \end{proof}

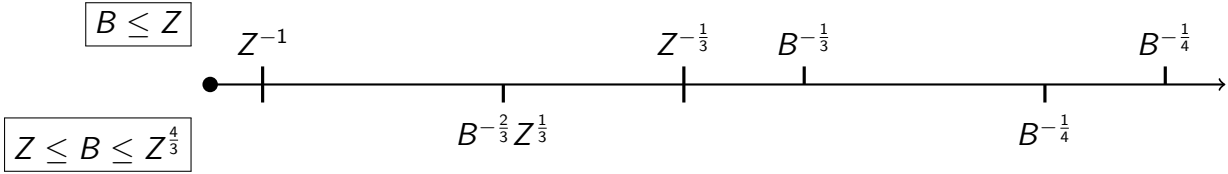
\begin{figure}[h]
\hskip-1in
\begin{tikzpicture}[scale=.8]
\draw [thick, *->] (0,0)--(17,0);
\draw [very thick] (1,-.3)--(1,.3) node[above]{$Z^{-1}$};
\draw [very thick] (8,-.3)--(8,.3) node[above]{$Z^{-\frac{1}{3}}$};
\draw [very thick] (10,-0)--(10,.3) node[above]{$B^{-\frac{1}{3}}$};
\draw [very thick] (16,0)--(16,.3) node[above]{$B^{-\frac{1}{4}}$};
\draw [very thick] (14,0)--(14,-.3) node[below]{$B^{-\frac{1}{4}}$};
\draw [very thick] (5,0)--(5,-.3) node[below]{$B^{-\frac{2}{3}}Z^{\frac{1}{3}}$};
\node[left] at (0,1) {$\boxed{B\le Z}$};
\node[left] at (0,-1) {$\boxed{Z\le B\le Z^{\frac{4}{3}}}$};
\end{tikzpicture}

\caption{\label{fig-26-1} Let $B\le Z^{\frac{4}{3}}$. Then at
$\{\ell \asymp Z^{-\frac{1}{3}}\}$ are contained both the bulk of charge and the bulk of energy, $\ell \asymp \min ((Z-N)_+^{-\frac{1}{3}}, B^{-\frac{1}{4}}\}$ is the border of $\supp (\rho^\TF_B)$; $\ell\asymp Z^{-1}$ is the Scott distance; here $h\asymp 1$.   Further, $\ell\asymp B^{-\frac{1}{3}}$ if $B\le Z$ and $\ell\asymp B^{-\frac{2}{3}}Z^{\frac{1}{3}}$ if $Z\le B\le Z^{\frac{4}{3}}$ separates $\{\mu \lesssim 1\}$ (on the left) and $\{\mu \gtrsim 1\}$ (on the right).}
\end{figure}

\begin{figure}[h]
\hskip-1in
\begin{tikzpicture}[scale=.8]
\draw [thick, *->] (0,0)--(17,0);
\draw [very thick] (1,-.3)--(1,.3) node[above]{$Z^{-1}$};
\draw [very thick] (16,0)--(16,.3) node[above]{$B^{-\frac{2}{5}}Z^{\frac{1}{5}}$};
\draw [very thick] (14,0)--(14,-.3) node[below]{$B^{-\frac{2}{5}}Z^{\frac{1}{5}}$};
\draw [very thick] (12,0)--(12,.3) node[above]{$B^{-1}Z$};
\draw [very thick] (10,0)--(10,-.3) node[below]{$B^{-1}Z$};
\draw [very thick] (5,0)--(5,.3) node[above]{$B^{-\frac{2}{3}}Z^{\frac{1}{3}}$};
\node[left] at (0,1) {$\boxed{Z^{\frac{4}{3}}\le B \le Z^2}$};
\node[left] at (0,-1) {$\boxed{Z^2\le B\le Z^{3}}$};
\end{tikzpicture}

\caption{\label{fig-26-2} Let $Z^{\frac{4}{3}}\le B\le Z^3$. Then at
$\{\ell \asymp Z^{-\frac{2}{5}}B^{\frac{1}{5}}\}$ are contained both the bulk of charge and the bulk of energy and it is also  the border of $\supp (\rho^\TF_B)$; $\ell\asymp Z^{-1}$ is the Scott distance; here $h\asymp 1$.  Further, if $Z^{\frac{4}{3}}\le B\le Z^2$
$\ell\asymp B^{-\frac{2}{3}}Z^{\frac{1}{3}}$ separates $\{\mu \lesssim 1\}$ (on the left) and $\{\mu \gtrsim 1\}$ (on the right) and $\ell\asymp B^{-1}Z$ separates  $\{\mu h\lesssim 1\}$ (on the left) and $\{\mu h\gtrsim 1\}$ (on the right).}
\end{figure}
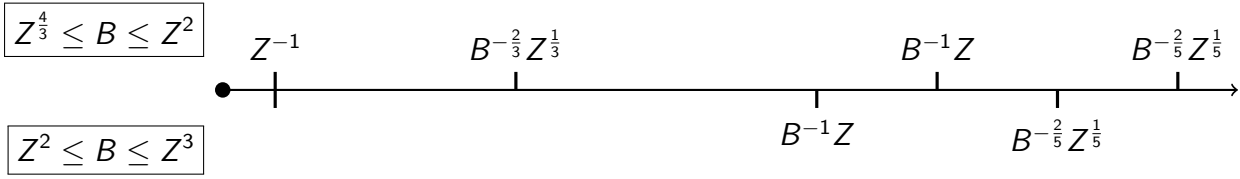
\enlargethispage{2\baselineskip}

\chapter{Applying semiclassical methods: $M=1$}
\label{sect-26-3}

\section{Heuristics}
\label{sect-26-3-1}

Let us consider first a mock proof of our main results; we deal here \emph{as if $W^\TF_B$ was very smooth\/} which it is not the case; however later we will show that its smoothness is sufficient to employ arguments of Chapter~\ref{book_new-sect-18} rather than those of Chapter~\ref{book_new-sect-13}. We also will deal as if non-degeneracy conditions were satisfied leaving them also to more rigorous arguments below.

It will allow us to establish our target remainder estimates which we will be able to prove rigorously for $M=1$ (in this section) while for $M\ge 2$ (in the next two sections) our results will be not that good.

\subsection{Total charge}
\label{sect-26-3-1-1}
Consider
\begin{equation}
\int e(x,x,\nu)\psi (x)\,dx,
\label{26-3-1}
\end{equation}
first with $\gamma$-admissible $\psi(x)$, where $\gamma \le \epsilon \ell$. Recall that $\ell(x)=\min_m|x-y_m|$ is the distance to the nearest nucleus.

\subsubsection{General arguments.}
\label{sect-26-3-1-1-1}
The main part of the semiclassical expression for (\ref{26-3-1}) is of magnitude
$h^{\prime\,-3} + \mu 'h ^{\prime\,-2}\asymp
\zeta^3\gamma^3+ B\zeta\gamma^3 $ with $h'= 1/(\zeta\gamma)$ and
$\mu'= B\gamma/\zeta$.

Indeed, let us rescale $x\mapsto x/\gamma$ and $\tau\mapsto \tau/\zeta^2$ which leads to $h=1\mapsto h'$ and $B\mapsto \mu'$. In particular, for $\gamma\asymp \ell$ we get
\begin{equation}
\zeta^3\ell^3+ B\zeta\ell^3 .
\label{26-3-2}
\end{equation}

Meanwhile, the remainder in the semiclassical expression for (\ref{26-3-1}) does not exceed $Ch^{\prime\,-2} + C\mu 'h^{\prime\,-1}\asymp
\zeta^2\gamma^2 + B\gamma^2 $ (gaining factor $h'$ in comparison to the main part; here we need the smoothness and if $\mu '\ge h^{\prime\,\delta-1}$ we also need the non-degeneracy); for $\gamma\asymp \ell$ we get
\begin{equation}
\zeta^2\ell^2+ B\ell^2 .
\label{26-3-3}
\end{equation}

Sure, we ignored the fact that $h'\le 1$ does not necessarily hold even if $\gamma \asymp \ell$ but we believe that the contributions to the main part and the remainder of these zones will be less than of zone where this inequality holds, provided $B\ll Z^3$.

Finally, let us sum expressions (\ref{26-3-2}) and (\ref{26-3-3}) with respect to $\ell$-partition.

\subsubsection{Moderate magnetic field.}
\label{sect-26-3-1-1-2}

Consider the case $B\le Z^{\frac{4}{3}}$ first. Then for $\ell \le Z^{-\frac{1}{3}}$ we plug $\zeta= Z^{\frac{1}{2}}\ell^{-\frac{1}{2}}$ into (\ref{26-3-2}) and (\ref{26-3-3}) resulting in
\begin{phantomequation}\label{26-3-4}\end{phantomequation}
\begin{equation}
Z^{\frac{3}{2}}\ell^{\frac{3}{2}}+ BZ^{\frac{1}{2}}\ell^{\frac{5}{2}}
\qquad\text{and}\qquad
Z\ell+ B\ell^2
\tag*{$\textup{(\ref*{26-3-4})}_{0,1}$}\label{26-3-4-*}
\end{equation}
in the main part and in the remainder respectively and the summation over zone
$\{x\colon \ell(x) \le Z^{-\frac{4}{3}}\}$ results in the same expressions with $\ell=Z^{-\frac{1}{3}}$, i. e. in $Z+BZ^{-\frac{1}{3}}\asymp Z$ and
$Z^{\frac{2}{3}}+BZ^{-\frac{2}{3}}\asymp Z^{\frac{2}{3}}$ respectively.

On the other hand, for $\ell \ge Z^{-\frac{1}{3}}$ we plug $\zeta=\ell^{-2}$ into
(\ref{26-3-2}) and (\ref{26-3-3}) resulting in
\begin{phantomequation}\label{26-3-5}\end{phantomequation}
\begin{equation}
\ell^{-3}+ B\ell
\qquad\text{and}\qquad
\ell^{-2}+ B\ell^2 ;
\tag*{$\textup{(\ref*{26-3-5})}_{0,1}$}\label{26-3-5-*}
\end{equation}
then summation over zone $\{x\colon Z^{-\frac{1}{3}}\le \ell(x) \le \bar{r}=B^{-\frac{1}{4}}\}$ results in $Z + B^{\frac{3}{4}}\asymp Z$ and
$Z^{\frac{2}{3}}+B^{\frac{1}{2}}\asymp Z^{\frac{2}{3}}$ respectively.

\subsubsection{Strong magnetic field.}
\label{sect-26-3-1-1-3}

Consider the case $B\ge Z^{\frac{4}{3}}$ now. Then the threshold $Z^{-\frac{1}{3}}$ disappears and we sum expressions \ref{26-3-4-*} over zone
$\{x\colon \ell (x) \le \bar{r}\coloneqq Z^{\frac{1}{5}}B^{-\frac{2}{5}}\}$, resulting in
$Z^{\frac{9}{5}}B^{-\frac{3}{5}}+ Z\asymp Z$ and
$Z ^{\frac{6}{5}}B^{-\frac{2}{5}}+Z^{\frac{2}{5}}B^{\frac{1}{5}}
\asymp Z^{\frac{2}{5}}B^{\frac{1}{5}}$ respectively.

Therefore, for both cases $B\lessgtr Z^{\frac{4}{3}}$ we arrive to
\begin{claim}\label{26-3-6}
The total charge is $\min(N,Z)$ (due to the choice of $\nu$) with the remainder estimate $O\bigl(\max (Z^{\frac{2}{3}},Z^{\frac{2}{5}}B^{\frac{1}{5}}))$ which is $O(Z^{\frac{2}{3}})$ if $B\le Z^{\frac{4}{3}}$ and $O(Z^{\frac{2}{5}}B^{\frac{1}{5}})$ if $Z^{\frac{4}{3}}\le B\le Z^3$.
\end{claim}

\begin{remark}\label{rem-26-3-1}
Remainder is less than the main part if
$Z^{\frac{2}{5}}B^{\frac{1}{5}}\lesssim Z$ i.e. $B\le Z^3$. It means exactly that $\zeta \ell \ge 1$ if $\ell=\bar{r}$ (in the case $B\ge Z^{\frac{4}{3}}$), or, in other words that $h \lesssim 1$. The same is true for all other semiclassical asymptotics below.

If $B\ll Z^3$ we arrive to asymptotics, ig $B\lesssim Z^3$ we have estimates and in the case of superstrong magnetic field $B\gg Z^3$ Thomas-Fermi theory is not valid for our main model.
\end{remark}
\enlargethispage{\baselineskip}

\subsection{Semiclassical $\D$-term}
\label{sect-26-3-1-2}
Consider now the semiclassical $\D$-term
\begin{equation}
\D \bigl( e(x,x, \nu) - \rho^\TF_B (x),\ e(x,x, \nu) - \rho^\TF_B(x)\bigr).
\label{26-3-7}
\end{equation}

\subsubsection{General arguments.}
\label{sect-26-3-1-2-1}
We do not have appropriate asymptotics for $e(x,x,\nu)$ in the case of the magnetic field\footnote{\label{foot-26-9} Unless we really assume that $W$ is smooth and apply results sections~\ref{book_new-sect-16-5}--\ref{book_new-sect-16-8}.} but we apply Fefferman--de Llave decomposition (\ref{book_new-16-3-1}):
\begin{multline}
|x-y|^{-1}_\gamma (x,y)\coloneqq |x-y|\varphi (\gamma^{-1}|x-y|)=\\
\gamma^{-4}\int \psi_{1,\gamma}(x,z)\psi_{2,\gamma}(y,z)\,dz
\label{26-3-8}
\end{multline}
where $\varphi\in \sC^\infty ([1,2])$.

Therefore contribution of $B(z,\gamma) \times B(z',\gamma)$ with
$3\gamma \le |z-z'|\le 4\gamma $, $\gamma \le \epsilon \ell(z)$ to such term does not exceed $C \bigl(\zeta^2 \gamma^2 +B\gamma^2)^2\gamma^{-1}$. There are
$\asymp\ell^3\gamma^{-3}$ of such pairs with $\ell(x)\asymp \ell$ and their total contribution does not exceed $C\bigl(\zeta^2 +B)^2\ell^3$.

Now we need to sum over $\gamma^{-1}\,d\gamma$ which does not look good because it leads to the logarithmic divergency but there is a simple remedy: we treat this way only pairs $t\ell\le |z-z'|\le \ell$ and apply for pairs with
$|z-z'|\le t\ell $ pointwise asymptotics; then we get
\begin{equation}
C(\zeta^2 +B)^2\ell^3 \bigl(1+ (\log B\ell/\zeta)_+\bigr);
\label{26-3-9}
\end{equation}
to get rid off this logarithmic factor we apply more delicate arguments similar to those of Subsection~\ref{book_new-sect-16-9-3}.

Thus, ignoring this logarithmic factor we conclude that the contribution of all pairs $(z,z')$ with $\ell(z)\asymp \ell(z') \asymp \ell$ does not exceed
$C(\zeta^2 +B)^2\ell^3$ while contribution of all pairs $(z,z')$ with $\ell(z)\asymp \ell_1 \not\asymp \ell(z') \asymp \ell_2$ does not exceed
$C(\zeta_1^2 +B)(\zeta_2^2+B)\ell_1^2\ell_2^2(\ell_1+\ell_2)^{-1}$.

Finally let us sum these expressions over partitions of unity.

\subsubsection{Moderate magnetic field.}
\label{sect-26-3-1-2-2}
Consider the case $B\le Z^{\frac{4}{3}}$. Then summation over zone
$\{\ell_1\le Z^{-\frac{1}{3}},\, \ell_2\le Z^{-\frac{1}{3}}\}$ results in $CZ^{\frac{5}{3}}$ and the same is also true for summation
over zone $\{ Z^{-\frac{1}{3}}\le \ell_1\le B^{-\frac{1}{4}},\,
Z^{-\frac{1}{3}}\le \ell_2\le B^{-\frac{1}{4}}\}$.

Obviously, in such estimates, if there is a fixed number of zones, we do not need to sum over ``mixed'' pairs when $z$ and $z'$ belong to different zones.

\subsubsection{Strong magnetic field.}
\label{sect-26-3-1-2-3}
Consider the case $B\ge Z^{\frac{4}{3}}$. Then summation over zone
$\{\ell_1\le Z^{\frac{1}{5}}B^{-\frac{2}{5}},\,
\ell_2\le Z^{\frac{1}{5}}B^{-\frac{2}{5}}\}$ results in
$CZ^{\frac{3}{5}}B^{\frac{4}{5}}$.\enlargethispage{2\baselineskip}

Therefore, for both cases $B\lessgtr Z^{\frac{4}{3}}$ we arrive to
\begin{claim}\label{26-3-10}
Term (\ref{26-3-7}) does not exceed $C\max(Z^{\frac{5}{3}},Z^{\frac{3}{5}}B^{\frac{4}{5}})$ which is $CZ^{\frac{5}{3}}$ if $B\le Z^{\frac{4}{3}}$ and $CZ^{\frac{3}{5}}B^{\frac{4}{5}}$ if $Z^{\frac{4}{3}}\le B\le Z^3$.
\end{claim}

\subsection{\texorpdfstring{$|\lambda_N-\nu|$}{|\textlambda\_N-\textmu|} and another $\D$-term}
\label{sect-26-3-1-3}

Consider two other non-trace terms in the upper estimate.

\subsubsection{Moderate magnetic field.}
\label{sect-26-3-1-3-1}

In the case $B\le Z^{\frac{4}{3}}$ we established the remainder in the total charge $O(Z^{\frac{2}{3}})$. Then using our standard arguments we conclude easily that $|\lambda_N-\nu|=O(Z)$ and then
\begin{equation}
|\lambda_N-\nu|\cdot |\N(\nu)-N|\le CZ^{\frac{5}{3}}
\label{26-3-11}
\end{equation}
and
\begin{multline}
\D\bigl( P_B'(W^\TF_B(x)+\lambda_N)-P_B'(W^\TF_B(x)+\nu),\\
P_B'(W^\TF_B(x)+\lambda_N)-P_B'(W^\TF_B(x)+\nu)\bigr)
\le CZ^{\frac{5}{3}};
\label{26-3-12}
\end{multline}
combining with the estimate of the previous subsubsection we conclude that
\begin{equation}
\D(\rho_\Psi -\rho_B^\TF, \rho_\Psi -\rho_B^\TF)\le CQ= O(Z^{\frac{5}{3}}),
\label{26-3-13}
\end{equation}
exactly as in (\ref{book_new-25-4-55}).

\subsubsection{Strong magnetic field.}
\label{sect-26-3-1-3-2}
Let now $Z^{\frac{4}{3}}\le B\le Z^3$. Then we established the remainder in the total charge $O(Z^{\frac{2}{5}}B^{\frac{1}{5}})$ and for the semiclassical $\D$-term we established estimate $O(Z^{\frac{3}{5}}B^{\frac{4}{5}})$. Therefore to estimate
\begin{gather}
|\lambda_N-\nu|\cdot |\N(\nu)-N|\le C Z^{\frac{3}{5}}B^{\frac{4}{5}}
\label{26-3-14}\\
\intertext{as well we want to prove that}
|\lambda_N-\nu|=O(Z^{\frac{1}{5}}B^{\frac{3}{5}}).
\label{26-3-15}
\end{gather}

Observe that
$|\nu|\lesssim Z\bar{r}^{-1}\asymp Z^{\frac{4}{5}}B^{\frac{2}{5}}\le CB$. Therefore if $|\lambda_N-\nu|\le \frac{1}{2}|\nu|$ we conclude that
\begin{multline}
|\int \bigl(P_B'(W^\TF_B(x)+\lambda_N)-P_B'(W^\TF_B(x)+\nu)\bigr)\,dx |\ge\\
\epsilon |\lambda_N-\nu| B \int (W+\nu)_+^{-\frac{1}{2}}\,dx
\label{26-3-16}
\end{multline}
with the integral taken over zone $\{x\colon W(x)+\nu \ge |\lambda_N-\nu|\}$.

One can see easily that as $|\lambda_N-\nu|\le \epsilon |\nu|$ the right-hand expression of (\ref{26-3-16}) is larger than
$\epsilon |\lambda_N-\nu| \cdot Z^{\frac{1}{5}}B^{-\frac{2}{5}}$
and it must be less than $CZ^{\frac{2}{5}}B^{\frac{1}{5}}$:
$|\lambda_N-\nu|Z^{\frac{1}{5}}B^{-\frac{2}{5}}\le CZ^{\frac{2}{5}}B^{\frac{1}{5}}$ which implies (\ref{26-3-15}).

Let us estimate the left-hand expression of (\ref{26-3-12}). For this, however, estimate (\ref{26-3-15}) is insufficient. We consider here only the atomic case. Then using (\ref{26-2-22})--(\ref{26-2-23}) one can prove easily that the right-hand expression of (\ref{26-3-16}) is of magnitude
\begin{equation*}
|\lambda_N-\nu|\cdot B\bar{r}^2 \times \underbracket{(|\nu|\bar{r}^{-1})^{-\frac{1}{3}}B^{-\frac{1}{3}}} \asymp
|\lambda_N-\nu| \cdot |\nu|^{-\frac{1}{3}} Z^{\frac{7}{15}}B^{-\frac{4}{15}}
\end{equation*}
provided $|\lambda-N|\le \epsilon \nu$, where the selected factor is just
$\int (B^2z^4+|\nu| \bar{r}^{-1})_+^{-\frac{1}{2}}\,dz$ (appearing due to
(\ref{26-2-22})--(\ref{26-2-23})). Comparing with $Z^{\frac{2}{5}}B^{\frac{1}{5}}$ we conclude that

\begin{enumerate}[label=(\alph*), wide, labelindent=0pt]

\item\label{sect-26-3-1-3-2-a}
If $|\nu|\ge C_1 Z^{-\frac{1}{10}}B^{\frac{7}{10}}$
($= C_1Z^{\frac{2}{5}}B^{\frac{3}{5}}\times Z^{-\frac{3}{10}}B^{\frac{1}{10}}$) then
\begin{equation}
|\lambda_N-\nu|\le C|\nu|^{\frac{1}{3}}Z^{-\frac{1}{15}}B^{\frac{7}{15}}
\label{26-3-17}
\end{equation}
which is less than $\epsilon |\nu|$ and coincides with (\ref{26-3-15}) as $(Z-N)_+\asymp Z$.

\item\label{sect-26-3-1-3-2-b}
If $|\nu|\ge C_1 Z^{-\frac{1}{10}}B^{\frac{7}{10}}$ then
$|\lambda_N-\nu|\le C_2 Z^{-\frac{1}{10}}B^{\frac{7}{10}}$.
\end{enumerate}
In the former case one can prove easily that the left-hand expression of (\ref{26-3-12}) does not exceed $CZ^{\frac{3}{5}}B^{\frac{4}{5}}$.

In the latter case (exactly as in Subsection~\ref{book_new-sect-25-4-2}) we consider Thomas-Fermi theory with $\nu=0$ i.e. $N=Z$ and also prove that that

\begin{claim}\label{26-3-18}
The left-hand expression of (\ref{26-3-15}) does not exceed $Q=CZ^{\frac{3}{5}}B^{\frac{4}{5}}$.
\end{claim}

In particular, we slightly improve estimate (\ref{26-3-14}) to
$|\nu|^{\frac{1}{3}}Z^{\frac{3}{5}}B^{\frac{2}{3}}$ as well (if $(Z-N)\ll Z$).

Therefore in our framework we estimated all non-trace terms in the upper estimate by $CZ^{\frac{3}{5}}B^{\frac{4}{5}}$ and therefore ``proved'' estimate
\begin{equation}
\D(\rho_\Psi -\rho_B^\TF, \rho_\Psi -\rho_B^\TF)\le CQ= O(Z^{\frac{3}{5}}B^{\frac{4}{5}}).
\label{26-3-19}
\end{equation}

\subsection{Trace}
\label{sect-26-3-1-4}
Consider now $\Tr((H_{A,W}-\nu)^-)$. This term is of magnitude
$\int (\zeta^5 + B\zeta^3)\,dx$ and one can see easily that it is
$\asymp Z^{\frac{7}{3}}$ for $B\le Z^{\frac{4}{3}}$ and
$\asymp B^{\frac{2}{5}}Z^{\frac{9}{5}}$ for $ Z^{\frac{4}{3}}\le B\le Z^3$.
\enlargethispage{2\baselineskip}

Meanwhile, consider the remainder. Again for simplicity consider only the atomic case. If $B\le Z$ the contribution of the zone
$\{x\colon \ell(x)\le Z^{-\frac{1}{3}}\}$ is $O(Z^{\frac{5}{3}})$ (we need to include Scott correction term in the main part) while the contribution of the zone $\{x\colon \ell(x)\ge Z^{-\frac{1}{3}}\}$ does not exceed
\begin{equation}
C\int (\zeta^3 + B\zeta)\ell^{-2}\,dx
\label{26-3-20}
\end{equation}
taken over this zone and it is $\asymp Z^{\frac{5}{3}}$ as well.

If $Z\le B\le B^2$ the contribution of the zone
$\{x\colon \ell(x)\le b\coloneqq B^{-\frac{2}{3}} Z^{\frac{1}{3}}\}$~is $O(b^{-\frac{1}{2}}Z^{\frac{3}{2}})= O(Z^{\frac{4}{3}}B^{\frac{1}{3}})$ and
we need to include Scott correction term. Meanwhile, the contribution of the zone $\{x\colon \ell(x)\ge b\}$ does not exceed integral (\ref{26-3-20}) taken over this zone which is
$\asymp Z^{\frac{4}{3}}B^{\frac{1}{3}} + Z^{\frac{3}{5}}B^{\frac{4}{5}}$ where the last term coincides with estimate for (\ref{26-3-7}) if
$B\ge Z^{\frac{4}{3}}$ and does not exceed $CZ^{\frac{5}{3}}$ if
$B\le Z^{\frac{4}{3}}$.

Finally, if $Z^2\le B\le B^3$ we need to reset $b=Z^{-1}$ because
$h =1/(\zeta\ell)$ becomes $\gtrsim 1$ inside. Then we do not need Scott correction term and the contributions of the zone $\{x\colon \ell(x)\le b\}$ to both the main part and the remainder do not exceed
$C\int (\zeta^5+B\zeta^3)\, dx \asymp Z^2+ B\asymp B$.

Further, the contribution of the zone $\{x\colon \ell(x)\ge b\}$ to the remainder does not exceed integral (\ref{26-3-20}) taken over this zone which results in
$CB + CZ^{\frac{3}{5}}B^{\frac{4}{5}}$ and the second term dominates due to assumption $B\ll Z^3$. Thus we arrive to

\begin{claim}\label{26-3-21}
The main therm in $\Tr ((H_{A,W}-\nu)^-)$ is of magnitude $Z^{\frac{7}{3}}$ for $B\le Z^{\frac{4}{3}}$ and $B^{\frac{2}{5}}Z^{\frac{9}{5}}$ for $Z^{\frac{4}{3}}\le B\le Z^3$, while the remainder estimate is $O(Z^{\frac{5}{3}})$ for $B\le Z$, $O(Z^{\frac{4}{3}}B^{\frac{1}{3}})$ for
 $Z\le B\le Z^{\frac{4}{3}}$, and
$O(Z^{\frac{4}{3}}B^{\frac{1}{3}} + Z^{\frac{3}{5}}B^{\frac{4}{5}})$ for
$Z^{\frac{4}{3}}\le B\le Z^3$.

If $B\le Z^{\frac{7}{4}}$ we need to include into main part Scott correction term.
\end{claim}

\subsection{Discussion}
\label{sect-26-3-1-5}
Now let us formulate our expectations:

\begin{remark}\label{rem-26-3-2}
We expect
\begin{enumerate}[label=(\roman*), wide, labelindent=0pt]
\item\label{rem-26-3-2-i}
Estimate (\ref{26-3-13}) for $B\le Z^{\frac{4}{3}}$ and estimate (\ref{26-3-19}) for $Z^{\frac{4}{3}}\le B\le Z^3$.

\item\label{rem-26-3-2-ii}
Furthermore, since for $B\le Z^{\frac{4}{3}}$ the main contribution to all terms needed to derive this estimate is delivered by the zone
$\{x\colon \ell(x)\approx Z^{-\frac{1}{3}}\}$ and the effective magnetic field is $\mu = B\ell/\zeta \approx BZ^{-1}$ we expect improved to ``$o$'' (or better) estimate (\ref{26-3-13}) if $B\ll Z$ and
$a\gg Z^{-\frac{1}{3}}$\,\footnote{\label{foot-26-10} Recall that
$a=\min_{1\le m< m'\le M}|\y_m-\y_{m'}|$ is the minimal distance between nuclei.}.\enlargethispage{\baselineskip}

\item\label{rem-26-3-2-iii}
 Statement, similar to~\ref{rem-26-3-2-ii} should be also true for the trace term; however then we need to include the Schwinger term.

\item\label{rem-26-3-2-iv}
The remainder estimate for the ground state energy is maximum of the remainder estimate for the non-trace and trace terms; therefore we expect the same remainder estimate as in (\ref{26-3-21}); Statement, similar to~\ref{rem-26-3-2-ii} should be also correct for the ground state energy. However then we need to include both Schwinger and Dirac terms.

\item\label{rem-26-3-2-v}
We expect the described remainder estimate of the trace term and the ground state energy if $a$ is large enough; otherwise it should contain term
$O(a^{-\frac{1}{2}}Z^{\frac{3}{2}})$ ifs $B\le Z^{\frac{7}{4}}$ and
$a\ge Z^{-1}$ (and in this case we include Scott correction term).
\end{enumerate}
\end{remark}

\begin{remark}\label{rem-26-3-3}
The other difference between cases $B\le Z^{\frac{4}{3}}$ and
$B\ge Z^{\frac{4}{3}}$ is that $\mu h = B\zeta^{-2} \lesssim 1$ in the former case if $\ell(x)\le \bar{r}$; however in the latter case it happens only if $\ell(x) \le B^{-1}Z$ but in the zone
$\{x\colon B^{-1}Z \le \ell(x)\le B^{-\frac{2}{5}}Z^{\frac{1}{5}}\}$ an opposite inequality holds.
\end{remark}

\section{Smooth approximation}
\label{sect-26-3-2}

An approach described in Subsection~\ref{sect-26-3-1} hits two obstacles: the non-smoothness of $W^\TF_B$ and its possible degeneration i.e. $\nabla W^\TF_B$ is not disjoint from $0$. However non-smoothness of $W^\TF_B$ is due to the non-smoothness of $P_B$. So we want to consider first the zone where we can just replace $P_B(W+\nu)$ by $P(W+\nu)$ and therefore $W^\TF_B$ by some smooth function $W$ which does not necessary coincides with $W^\TF$.

\subsection{Trivial arguments}
\label{sect-26-3-2-1}

Obviously we can do this as an effective magnetic field
$\mu=B\ell/\zeta \lesssim 1$. In this case we do not need assumption
$W+\nu \asymp \zeta^2$ and therefore we can take $\zeta= \ell^{-4}$ as $B\lesssim Z^{\frac{4}{3}}$ and $\ell\gtrsim Z^{-\frac{1}{3}}$ \underline{and}
$\zeta= Z^{\frac{1}{2}}\ell^{-\frac{1}{2}}$ in all other cases. Therefore zone in question is
\begin{multline}
\cX_1\coloneqq \{x\colon \ell(x)\le r_1\}\\
\text{with\ \ }
r_1= \left\{\begin{aligned}
&B^{-\frac{1}{3}} \qquad&&\text{if\ \ } 1\le B\lesssim Z,\\
&B^{-\frac{2}{3}}Z^{\frac{1}{3}} \qquad&&\text{if\ \ } Z\lesssim B\lesssim Z^2.
\end{aligned}\right.
\label{26-3-22}
\end{multline}

In this zone $\cX_1$ for such modified $W$ we can unleash the full power of the same smooth theory as in Section~\ref{book_new-sect-25-4} and prove easily the following

\begin{proposition}\label{prop-26-3-4}
Let $1\le B\le Z^2$. Then
\begin{enumerate}[label=(\roman*), wide, labelindent=0pt]
\item\label{prop-26-3-4-i}
A contribution of zone $\cX_1$ defined by \textup{(\ref{26-3-22})} to
\begin{gather}
\int \Bigl(e(x,x,\nu)-P'(W(x)+\nu)\Bigr)\, dx\label{26-3-23}\\
\intertext{does not exceed $CZ^{\frac{2}{3}}$ while its contribution to}
\D\Bigl( e(x,x,\nu)-P'(W(x)+\nu), e(x,x,\nu)-
P' (W(x)+\nu)\Bigr)\label{26-3-24}\\
\intertext{does not exceed $CZ^{\frac{5}{3}}$, and its contribution to}
\int \Bigl(e_1(x,x,\nu)+P (W(x)+\nu)\Bigr)\, dx-\Scott \label{26-3-25}
\end{gather}
does not exceed $CZ^{\frac{5}{3}}+Ca^{-\frac{1}{2}}Z^{\frac{3}{2}}+
CZ^{\frac{4}{3}}B^{\frac{1}{3}}$\,\footref{foot-26-10}, \footnote{\label{foot-26-11} If $a\lesssim Z^{-1}$ we skip $\Scott$ and reset $a=Z^{-1}$ in the remainder estimate which become $CZ^2$.}.

\item\label{prop-26-3-4-ii}
Further, if $B\ll Z$ and $a\gg Z^{-\frac{1}{3}}$ we can recover for these contributions estimates $CZ^{\frac{2}{3}}\upsilon$, $CZ^{\frac{5}{3}}\upsilon$ and $CZ^{\frac{5}{3}}\upsilon$ respectively with
\begin{gather}
\upsilon \coloneqq Z^{-\delta}+ (aZ^{\frac{1}{3}})^{-\delta} + (BZ^{-1})^\delta
\label{26-3-26}\\
\intertext{where expression \textup{(\ref{26-3-25})} should be modified to}
\int \bigl(e_1(x,x,\nu)-P(W(x)+\nu)\bigr)\, dx-\Scott -\Schwinger.
\label{26-3-27}\\
\intertext{Furthermore in this case contribution of $\cX_1$ to}
\frac{1}{2} \int \tr \bigl( e^\dag (x,y,\nu)e(x,y,\nu)\bigr)\, dxdy-\Dirac
\label{26-3-28}
\end{gather}
does not exceed $CZ^{\frac{5}{3}-\delta}$.
\end{enumerate}
\end{proposition}

\begin{remark}\label{rem-26-3-5}
\begin{enumerate}[label=(\roman*), wide, labelindent=0pt]
\item\label{rem-26-3-5-i}
So far we should use $P(.)$ instead of $P_B(.)$ but we will prove that the same results would hold for $P_B$ as well.

\item\label{rem-26-3-5-ii}
In the next subsubsections we expand this zone to one defined by
$\mu \le h^{-\frac{1}{3}}$\,\footnote{\label{foot-26-12} Or even to
$\mu \le h^{-\frac{3}{5}}$ under non-degeneracy assumption (\ref{26-3-34}) with $\gamma=\ell$, in particular, in the atomic case.} but for trace term we still need a separate analysis as $\mu \lesssim 1$.

\item\label{rem-26-3-5-iii}
The same estimates hold if we replace in all expressions (\ref{26-3-23})--(\ref{26-3-27}) $P$ by $P_B$.

\item\label{rem-26-3-5-iv}
We assumed that $B\le Z^2$ since otherwise $h\gtrsim 1$ not only in $\cX_1$ but even in $\{ x\colon W^\TF (x) \ge B\}$.

\item\label{rem-26-3-5-v}
Note that if $r_1 \gtrsim (Z-N)_+^{-\frac{1}{3}}$ this zone (and the whole analysis) could be cut short since outside zone in question $W+\nu \ge 0$. From Chapter~\ref{book_new-sect-25} we already know how to deal with such irregularities.

\item\label{rem-26-3-5-vi}
We need to assume that $a\ge Z^{-\frac{1}{3}}$ and to include the second term $(aZ^{\frac{1}{3}})^{-\delta}$ in the definition of $\upsilon$ only as we estimate the trace term (\ref{26-3-25}).
\end{enumerate}
\end{remark}

\begin{remark}\label{rem-26-3-6}
\begin{enumerate}[label=(\roman*), wide, labelindent=0pt]
\item\label{rem-26-3-6-i}
If either $a\ll Z^{-\frac{1}{3}}$ or $B\gg Z$ we estimated (\ref{26-3-5}) by $Ca^{-\frac{1}{2}}Z^{\frac{3}{2}} + CZ^{\frac{4}{3}}B^{\frac{1}{3}}$. While the first term does not bother us since assumption $a\ll \min(Z^{-\frac{1}{3}},B^{-\frac{1}{4}})$ is unrealistic, the second term is troublesome. Let us assume that
$a\ge Z^{-\frac{1}{3}}$.

We can marginally improve this estimate of expression (\ref{26-3-25}) to $CZ^{\frac{5}{3}}+ o\bigl(Z^{\frac{4}{3}}B^{\frac{1}{3}}\bigr)$.

First, observe, that this term $CZ^{\frac{3}{2}}B^{\frac{1}{3}}$ appears as $b^{-\frac{1}{2}} Z^{\frac{3}{2}}$ with
$b= B^{-\frac{2}{3}}Z^{\frac{2}{3}}\ll 1$. Therefore we need to estimate this way only contribution of the zone
$\cY \coloneqq \{x\colon b^{\delta}\le \ell (x)Z^{\frac{1}{3}}\le b^{-\delta}\}$ and it is sufficient to investigate the corresponding classical dynamics in the zone $\cY_1 \coloneqq \{x\colon b^{2\delta}\le \ell (x)Z^{\frac{1}{3}}\le b^{-2\delta}\}$.

Indeed, to recover estimate we have now, we used a classical dynamics on $\Sigma\coloneqq \{(x,\xi)\colon H(x,\xi)=0\}$ for time
$T (x) =Z^{-1}\ell (x)^{\frac{3}{2}}$.

Further, one can see easily that along classical trajectories, starting in $\Sigma |_\cY$, $\ell(x)\le b^{-2\delta}$ for time $T= b^{-\sigma}T_0$ with $\sigma=\sigma(\delta)>0$.

On the other hand, the invariant measure of
$\Sigma_r=\{(x,\xi)\in \Sigma, \ell(x) \asymp r \}$ is $\asymp r^{2}Z$ and since the spatial speed there is $O(Z^{\frac{1}{2}}r^{-\frac{1}{2}})$ we conclude that

\begin{claim}\label{26-3-29}
The invariant measure of the points in
$\Sigma|_\cY$, such that the classical trajectories starting from them do not remain in $\cY_1$ for time $T= b^{-\sigma}T_0$, does not exceed $b^{2+\sigma}Z^{\frac{1}{3}}$.
\end{claim}

\item\label{rem-26-3-6-ii}
Now it is sufficient to explore the classical dynamics with the Hamiltonian, corresponding to the Coulomb potential and constant magnetic field, and to prove that

\begin{claim}\label{26-3-30}
The invariant measure of the periodic points $\Sigma$ is $0$.
\end{claim}
\enlargethispage{2\baselineskip}

To do so, we need to prove that there are non-periodic trajectories, which do not hit an origin. It is sufficient to consider trajectories belonging to the plane $\{z=0\}$; we assume that magnetic intensity is $(0,0,B)$. See Part~\ref{rem-26-3-6-iii}.

\item\label{rem-26-3-6-iii}
To improve this estimate further we need to investigate the classical dynamics in more details, and it seems to be a daunting, if not impossible task. Indeed, while in $2\D$ the system is completely integrable\footnote{\label{foot-26-13} And easily solvable in the polar coordinates since $E=\frac{1}{2}\dot{r}^2 + V^*(r)$ with effective potential
$V^*(r)=\frac{1}{2}r^2(M_z r^{-2}-B/2)^2-r^{-1}$, and the corrected angular momentum $M_z=r^2 \dot{\theta} + \frac{1}{2}Br^2$. One can see easily that $V^*(r)\to +\infty$ as $r\to +0$ or $r\to +\infty$ and has a single nondegenerate minimum. Therefore along each trajectory $r$ oscillates between $r_{\min }$ and $r_{\max }$. If all trajectories on the energy level $E$ were periodic, then the number of oscillations was constant for increment $\theta$ equal to $2\pi n$ with some $n\in \bZ^+$. But this is definitely not the case since the number of oscillations tends to $\infty$ as
$M_z/B\to +\infty$.}, it does not seem so in $3\D$ as we know only two first integrals, energy $E$ and $M_z = (x\dot{y}-y\dot{x}) + \frac{1}{2}B(x^2+y^2)$.
\end{enumerate}
\end{remark}

\subsection{Formal expansion}
\label{sect-26-3-2-2}

Now we want to expand zone $\cX_1$. Note first that
\begin{gather}
P'_B(W+\nu) -P'(W+\nu)=O( B^{\frac{3}{2}})
\label{26-3-31}\\
\shortintertext{and}
P_B(W+\nu)-P (W+\nu)-\kappa_1 B^2 (W+\nu)^{\frac{1}{2}}=
O(B^{\frac{5}{2}}).
\label{26-3-32}
\end{gather}
Really, one can consider $P'_B(w)$ and $P_B(w)$ as Riemann sums for integrals $P'(w)$ and $P(w)$ respectively; see Appendix~\ref{sect-26-A-3} for details.

However under non-degeneracy assumption $|\nabla W|\asymp \zeta^2 \ell^{-1}$ we can do better with the integrated expressions.

\begin{proposition}\label{prop-26-3-7}
Assume that in $B(z,\gamma)$
\begin{gather}
|\nabla^\alpha W|\le C_\alpha \zeta^2\gamma^{-|\alpha|}\qquad
\forall \alpha:|\alpha|\le n,
\label{26-3-33}\\[3pt]
|\nabla W|\ge \epsilon \zeta^2 \gamma^{-1},
\label{26-3-34}\\
\shortintertext{and}
B\le \zeta^2.
\label{26-3-35}
\end{gather}
Then
\begin{gather}
\int \phi (x)\Bigl(P'_B(W(x)+\nu) -\tilde{P}'_B(W(x)+\nu)\Bigr)\,dx = O(B^2\zeta^{-1}\gamma^3),
\label{26-3-36}\\
\int \phi (x)\Bigl(P_B(W(x)+\nu) -\tilde{P}'_B(W(x)+\nu)\Bigr)\,dx = O(B^5\zeta^{-5}\gamma^3)
\label{26-3-37}
\end{gather}
and
\begin{multline}
\D\Bigl(\phi (x)\bigl(P'_B(W(x)+\nu) -\tilde{P}'_B (W(x)+\nu)\bigr),\\
\phi (x)\bigl(P'_B(W(x)+\nu) -\tilde{P}'_B (W(x)+\nu)\bigr)\Bigr)=
O(B^5\zeta^{-4}\gamma^5)\label{26-3-38}
\end{multline}
with
\begin{equation}
\tilde{P}_B (w)\coloneqq P(w)+ \bigl(\kappa_1 P'' (w) B^2 + \kappa_2 P^{IV} B^4\bigr) \cdot \bigl(1-\varphi (w/B)\bigr)
\label{26-3-39}
\end{equation}
where $\varphi\in \sC^\infty ([-2,2])$, $\varphi=1$ on $[-1,1]$.
\end{proposition}

\begin{proof}
Rescaling $x\mapsto x\gamma^{-1}$, $w\mapsto w\zeta^{-2}$ and therefore
$B \mapsto\beta =B\zeta^{-2}$ one can reduce the case to $\gamma=\zeta=1$, $\beta\le 1$\,\footnote{\label{foot-26-14} Recall that $\beta =\mu h $ with $\mu =B\gamma/\zeta$ and $h=1/\zeta\gamma$.}. Then estimates (\ref{26-3-36}) and (\ref{26-3-37}) are trivially proven by (multiple) integration by parts which integrates $P_\beta$ on each step increasing its smoothness\footnote{\label{foot-26-15} In fact one can prove then estimates $O(\beta^s)$ but adding correction terms
$\sum \kappa_k \beta^{2k}$. However this improvement is not carried on to (\ref{26-3-38}) in full.}.

To prove estimate (\ref{26-3-38}) we apply decomposition (\ref{26-3-8}). Integration by parts shows that (\ref{26-3-36}) with $t$-admissible function $\phi$ is $O(\beta^3 t^{\frac{3}{2}})$ if $t\ge \beta$ and therefore the contribution of the zone $\{(x,y)\colon |x-y|\asymp t\}$ is
$O(\beta^6 t^{3}\times t^{-4})$. Then the total contribution of the zone $\{(x,y)\colon |x-y|\ge \beta \}$ is $O(\beta^5)$. Meanwhile a total contribution of the zone $\{(x,y)\colon |x-y|\le \beta \}$ is
$O(\beta^3 \times \beta^2)$. \enlargethispage{\baselineskip}
\end{proof}

Therefore we expect that the zone $\cX_1$ defined by $\mu \lesssim 1$ could be expanded to the zone $\cX_1 '$ defined by
$\mu \lesssim h^{-\frac{1}{3}}$\,\footref{foot-26-12} or even larger\footnote{\label{foot-26-16} We do not need for each $\ell$ have a sharp remainder estimates but need only them to sum to a sharp estimate.}; furthermore, under assumption $|\nabla W|\asymp \zeta^2\ell^{-1}$ we can define $\cX_1 '$ by $\mu \lesssim h^{-\frac{3}{5}}$ or even larger\footref{foot-26-16}.

\subsection{Expansion: justification}
\label{sect-26-3-2-3}

Now however we need to deal with $e(x,x,\nu)$ rather than $P'_B(W(x)+\nu)$ (etc).

\begin{proposition}\label{prop-26-3-8}
Assume that in $B(z,\gamma)$ conditions \textup{(\ref{26-3-33})},
$\zeta \gamma \ge 1$ and
\begin{equation}
B\le c\zeta^2 (\zeta \gamma)^{-\delta}
\label{26-3-40}
\end{equation}
are fulfilled. Then for $\gamma$-admissible $\phi$
\begin{gather}
\int \phi (x)\Bigl(e(x,x,\nu) -\tilde{P}'_B(W(x)+\nu)\Bigr)\,dx = O(\zeta^2\gamma^2),
\label{26-3-41}\\
\int \phi (x)\Bigl(e_1(x,x,\tau) -\tilde{P}_B (W(x)+\nu)\Bigr)\,dx = O(\zeta^{3}\gamma)
\label{26-3-42}
\end{gather}
and
\begin{multline}
\D\Bigl(\phi (x)\bigl(e(x,x,\nu) -\tilde{P}'_B (W(x)+\nu)\bigr),\\
\phi (x)\bigl(e(x,x,\nu) -\tilde{P}'_B (W(x)+\nu)\bigr)\Bigr)=
O(\zeta^{4}\gamma^3).\label{26-3-43}
\end{multline}
\end{proposition}

\begin{proof}
Estimates (\ref{26-3-41}) and (\ref{26-3-42}) are due to Chapter~\ref{book_new-sect-13}. Really, rescale $x\mapsto x\gamma^{-1}$, $\tau\mapsto \tau\zeta^{-2}$ and $h=1\mapsto h= \gamma^{-1}\zeta^{-1}$, $B\mapsto \mu = B\gamma \zeta^{-1}$.

To prove (\ref{26-3-43}) let us apply decomposition (\ref{26-3-8}); then according to (\ref{26-3-41}) $J(t)$ (defined as expression (\ref{26-3-41}) with $t\gamma$-admissible $\phi_t$) does not exceed $C\zeta^2\gamma^2 t^2$ as long as $t\zeta\gamma \ge 1$; therefore contribution of zone
$\{(x,y)\colon |x-y|\asymp t\}$ into the left-hand expression of (\ref{26-3-43}) does not exceed $C(\zeta^2\gamma^2 t^2)^2 \times t^{-4}\gamma^{-1}\asymp
C\zeta^4 \gamma^3$.

Then summation over $t\ge \mu^{-1}= B^{-1}\gamma^{-1}\zeta$ returns
$C\zeta^4 \gamma^3\int t^{-1}\,dt \asymp C\zeta^4 \gamma^3 \log \mu$ (we assume that $\mu\ge 2$; case $\mu \lesssim 2$ has been covered already). So, the total contribution of zone $\{(x,y)\colon |x-y|\ge \mu^{-1}\}$ does not exceed
$C\zeta^4 \gamma^3\log \mu$.

Let us get rid off the logarithmic factor. Returning back to $B(z,t)$ stretched to $B(0,1)$ one can see easily that conditions of Proposition~\ref{book_new-prop-13-6-25} are fulfilled as well with
$T=\min \bigl(t ^{-\delta },h ^{-\delta } t ^\delta \bigr)$ and thus
$|J(t)|\le C(h t^{-1}) ^{-2}T ^{-1}\le
C h ^{-2}t ^2\bigl(t ^\delta +h ^\delta t ^{-\delta }\bigr)$. Plugging
into (\ref{26-3-8}) we get
\begin{equation*}
Ch ^{-4}\gamma^{-1}
\int _{\mu^{-1}} ^1 t ^{-1}\bigl(t ^\delta +h ^\delta t ^{-\delta }\bigr)\,dt \asymp Ch ^{-4}\gamma^{-1}= C\zeta^4 \gamma^3.
\end{equation*}

On the other hand, in zone $t\le \mu^{-1}$ we use the trivial estimate \begin{equation*}
e(x,x,\nu)-P'(W(x)+\nu)= O(\mu \zeta^2\gamma^2 )
\end{equation*}
(due to simple rescaling $x\mapsto \mu x$) and its contribution to the left-hand expression of (\ref{26-3-43}) does not exceed $C(\mu \zeta^2\gamma^2 )^2 \times \mu^{-2}\gamma^{-1}\asymp C\zeta^4 \gamma^3$. \end{proof}

Combining with estimates (\ref{26-3-31}) and (\ref{26-3-32}) we arrive to Statement~\ref{cor-26-3-9-i} below; combining with Proposition~\ref{26-3-7} to Statement~\ref{cor-26-3-9-ii}:

\begin{corollary}\label{cor-26-3-9}
Assume that in $B(z,\gamma)$ conditions \textup{(\ref{26-3-33})} and
$\zeta \gamma \ge 1$ are fulfilled. Let $\phi$ be $\gamma$-admissible function.
\begin{enumerate}[label=(\roman*), wide, labelindent=0pt]
\item\label{cor-26-3-9-i}
Let
\begin{gather}
B\le c\zeta ^{\frac{4}{3}}\gamma^{-\frac{2}{3}};
\label{26-3-44}
\shortintertext{then}
\int \phi (x)\Bigl(e(x,x,\nu) -\tilde{P}'_B(W(x)+\nu)\Bigr)\,dx = O(\zeta^2\gamma^2),
\label{26-3-45}\\
\int \phi (x)\Bigl(e_1(x,x,\tau) -\tilde{P}_B (W(x)+\nu)\Bigr)\,dx = O(\zeta^{3}\gamma)
\label{26-3-46}
\end{gather}
and
\begin{multline}
\D\Bigl(\phi (x)\bigl(e(x,x,\nu) -\tilde{P}_B' (W(x)+\nu)\bigr),\\
\phi (x)\bigl(e(x,x,\nu) -\tilde{P}_B' (W(x)+\nu)\bigr)\Bigr)=
O(\zeta^{4}\gamma^3).
\label{26-3-47}
\end{multline}
\item\label{cor-26-3-9-ii}
Let assumption \textup{(\ref{26-3-34})} be fulfilled and
\begin{equation}
B\le c\zeta ^{\frac{8}{5}}\gamma^{-\frac{2}{5}}.
\label{26-3-48}
\end{equation}
Then \textup{(\ref{26-3-45})}--\textup{(\ref{26-3-47})} hold.
\end{enumerate}
\end{corollary}

\section{Rough approximation}
\label{sect-26-3-3}

Unless our analysis has been cut short with $r_1\gtrsim (Z-N)_+^{-\frac{1}{3}}$, we need to consider the zone $\{x\colon \ell(x)\ge r_1\}$ with redefined $r_1$, so that this zone is described by $\mu \gtrsim h^{-\frac{1}{3}}$ or
$\mu \gtrsim h^{-\frac{3}{5}}$ in the general or non-degenerate (i.e. satisfying assumption (\ref{26-3-34})) cases respectively.

In this zone the replacement of $P_B$ by $P$ and thus $W_B^\TF$ by some smooth function leads to the error which is too large. Therefore instead in this zone we consider $\varepsilon\ell$-mollification of $W^\TF_B$ with
$\varepsilon\ll 1$ (after rescaling $x\mapsto x/\ell$). In contrast to potentials considered in Chapter~\ref{book_new-sect-18} function $W^\TF_B$ is more regular.

\subsection{Properties of mollification}
\label{sect-26-3-3-1}

First, recall regularity properties of $W^\TF_B$:

\begin{proposition}\label{prop-26-3-10}
$W_B^\TF$ have the following properties:
\begin{equation}
|\nabla ^\alpha W_B ^\TF(x)|\le c_\alpha \zeta (x) ^2\ell (x) ^{-|\alpha |}
\qquad \forall \alpha : |\alpha |\le 2,
\label{26-3-49}
\end{equation}
\begin{multline}
|\nabla ^\alpha \bigl(W_B ^\TF(x)-W_B ^\TF (y)\bigr)|\le\\
\shoveright{c_0 B\ell (x) ^{-\frac{5}{2}} |x-y| ^\frac{1}{2}+
c_0\zeta (x) ^2\ell (x) ^{-3}|x-y|}\\[3pt]
\qquad \forall |\alpha |=2\quad
\forall x,y\colon |x-y|\le \epsilon \ell (x)
\label{26-3-50}
\end{multline}
where we recall
\begin{align}
&\zeta (x)= \min \bigl(Z^{\frac{1}{2}}\ell(x)^{-\frac{1}{2}}, \ell(x)^{-2}\bigr)
\qquad&&\text{if\ \ } B\le Z^{\frac{4}{3}},\label{26-3-51}\\[3pt]
&\zeta (x)= Z^{\frac{1}{2}}\ell(x)^{-\frac{1}{2}}
\qquad&&\text{if\ \ } B\ge Z^{\frac{4}{3}};\label{26-3-52}
\end{align}
\end{proposition}

\begin{proof} This proof is rather obvious corollary of the Thomas-Fermi equation (\ref{26-2-3}). See also arguments below. \end{proof}

Let us consider $B(z, \ell(z))$ with $\zeta^2 \gtrsim B$ and rescale
$x\mapsto x\ell^{-1}$, $W\mapsto w=\zeta^{-2}(W+\nu)$ (where we included $\nu$ for a convenience). After such rescaling $w\in \sC^{\frac{5}{2}}$ uniformly, but there is more: Thomas-Fermi equation (\ref{26-2-3}) translates into
\begin{equation}
\frac{1}{4\pi}\Delta w = \ell^2 P'_\beta (w)= \ell^2 \zeta P' (w)+
\ell^2 \zeta \bigl(P'_\beta (w)-P'(w)\bigr)
\label{26-3-53}
\end{equation}
with $\beta = B\zeta^{-2}$; observe that $P'_B(W)$ is positively homogeneous of degree $3$ with respect to $(W,B)$.

Note that parameter $\eta \coloneqq \zeta \ell^2\lesssim 1$ and $\eta \asymp 1$ if and only if $B\lesssim Z^{\frac{4}{3}}$ and $\ell \gtrsim Z^{-\frac{4}{3}}$ (in which case $\zeta \asymp \ell^{-2}$).

Also note that the first term and the second terms in the right-hand expression of (\ref{26-3-53}) belong to $\sC^{\frac{5}{2}}$ and
$\beta\eta \sC^{\frac{1}{2}}$ respectively uniformly\footnote{\label{foot-26-17} I.e. norms do not depend on any parameters.} and
\begin{equation}
\beta\eta = \beta \zeta \ell^2= B \zeta^{-1}\ell^2,\qquad \eta \coloneqq \zeta \ell^2\qquad \text{if\ \ }\beta\lesssim 1.
\label{26-3-54}
\end{equation}
Because of this $w \in \sC^{\frac{9}{2}}\oplus \beta\eta \sC^{\frac{5}{2}}$ again uniformly. Iterating, we conclude that
$w \in \sC^n \oplus \beta\eta \sC^{\frac{5}{2}}$ with arbitrarily large exponent $n$.

On the other hand, if $B\gtrsim \zeta^2$ (i.e. $\beta\gtrsim 1$) without invoking $P'_B$ one can prove easily that $w \in \eta \sC^{\frac{5}{2}}$ with
\begin{equation}
\eta\coloneqq \beta \zeta \ell^2= B \zeta^{-1}\ell^2\qquad
\text{if\ \ }\beta\gtrsim 1.
\tag*{$\textup{(\ref*{26-3-54})}'$}\label{26-3-54-'}
\end{equation}
Therefore we have proven

\begin{claim}\label{26-3-55}
$w \in \sC^n \oplus \beta\eta \sC^{\frac{5}{2}}$ with arbitrarily large exponent $n$ as $\beta \lesssim 1$ and
$w \in \eta \sC^{\frac{5}{2}}$ as $\beta \gtrsim 1$
\end{claim}
and one can see easily that
\begin{claim}\label{26-3-56}
Parameter $\eta=B \zeta^{-1}\ell^2$ is $O(1)$ and $\eta \asymp 1$ iff
\underline{either} $B\le Z^{\frac{4}{3}}$ and $\ell\asymp B^{-\frac{1}{4}}$ \underline{or} $B\ge Z^{\frac{4}{3}}$ and $\ell\asymp B^{-\frac{2}{5}}Z^{\frac{1}{5}}$ (i.e. near border of $\supp (\rho^\TF_B)$, uncut by $\nu$).
\end{claim}

\begin{remark}\label{rem-26-3-11}
It may seem strange to define $\eta$ differently as $\beta \lesssim 1$ and $\beta \gtrsim 1$ but there is a good reason for this when we consider the case of $M\ge 2$. Anyway, $\eta$ is the magnitude of the right-hand expression of (\ref{26-3-53}).
\end{remark}

\begin{proposition}\label{prop-26-3-12}
\begin{enumerate}[label=(\roman*), wide, labelindent=0pt]
\item\label{prop-26-3-12-i}
Let $w_\varepsilon$ be a $\varepsilon$-mollification of $w$ with $\varepsilon \lesssim \min(\beta, h^\delta)$ (recall that $h=1/(\zeta\ell)$). Then if $\beta \lesssim 1$ the following estimates hold:
\begin{align}
&|\nabla ^\alpha (w-w_\varepsilon)|\le
c_\alpha \beta\eta \varepsilon^{\frac{5}{2}-|\alpha|}
\qquad &&\forall \alpha:|\alpha|\le 2,
\label{26-3-57}\\[3pt]
&|P_\beta (w)- P_\beta(w_\varepsilon)|\le
c\beta\eta \varepsilon^{\frac{5}{2}}
\label{26-3-58}\\
\shortintertext{and}
&|P'_\beta (w)- P'_\beta(w_\varepsilon)|\le
c \beta^{\frac{3}{2}}\eta^{\frac{1}{2}} \varepsilon^{\frac{5}{4}}+
c \beta \eta \varepsilon^{\frac{5}{2}};
\label{26-3-59}
\end{align}

\item\label{prop-26-3-12-ii}
On the other hand, if $\beta \gtrsim 1$ the right-hand expressions of
\textup{(\ref{26-3-57})}--\textup{(\ref{26-3-59})} should be replaced by the similar expressions albeit without $\beta$:
\begin{align}
&|\nabla ^\alpha (w-w_\varepsilon)|\le
c_\alpha \eta \varepsilon^{\frac{5}{2}-|\alpha|}
\qquad &&\forall \alpha:|\alpha|\le 2,
\tag*{$\textup{(\ref*{26-3-57})}'$}\label{26-3-57-'}\\[3pt]
&|P_\beta (w)- P_\beta(w_\varepsilon)|\le
c\eta \varepsilon^{\frac{5}{2}}
\tag*{$\textup{(\ref*{26-3-58})}'$}\label{26-3-58-'}\\
\shortintertext{and}
&|P'_\beta (w)- P'_\beta(w_\varepsilon)|\le
c \eta^{\frac{1}{2}} \varepsilon^{\frac{5}{4}}.
\tag*{$\textup{(\ref*{26-3-59})}'$}\label{26-3-59-'}
\end{align}

\item\label{prop-26-3-12-iii}
Further, under assumption $|\nabla w|\asymp 1$ in both cases
\begin{gather}
|\int \phi (x)\bigl(P_\beta (w)- P_\beta(w_\varepsilon)\bigr)\, dx |\le
c\eta \varepsilon^{\frac{7}{2}},
\label{26-3-60}\\
|\int \phi (x)\bigl(P'_\beta (w)- P'_\beta(w_\varepsilon)\bigr)\, dx |\le
c\eta \varepsilon^{\frac{9}{4}}
\label{26-3-61}\\
\shortintertext{and}
\D\bigl( \phi (P'_\beta (w)-P'_\beta (w_\varepsilon)),\,
\phi (P'_\beta (w)-P'_\beta (w_\varepsilon))\bigr)\le
c\eta^2 \varepsilon^{\frac{9}{2}}.
\label{26-3-62}
\end{gather}
\end{enumerate}
\end{proposition}

\begin{proof}
Proof of Statement~\ref{prop-26-3-12-i} is trivial; in particular, we observe that
$\eta \varepsilon^{\frac{5}{2}}\lesssim \beta $.

Proof of Statement~\ref{prop-26-3-12-iii} is also easy since then $w_\varepsilon$ is different from $w$ on the set of measure
$\asymp \beta^{-1}\varepsilon$ if $\beta\le C_0$ and on the set of measure $\asymp \varepsilon$ if
$\beta \ge C_0$. Actually $w$ is uniformly smooth if $\beta \gtrsim 1$ and $\ell(x)\le \epsilon \bar{r}$ and we do not need any mollification here.

One definitely can improve estimates (\ref{26-3-60})--(\ref{26-3-62}) but we do not need it. \end{proof}

Consider now the analytical expressions and estimate the semiclassical errors.

\begin{remark}\label{rem-26-3-13}
\begin{enumerate}[label=(\roman*), wide, labelindent=0pt]
\item\label{rem-26-3-13-i}
From now on until the end of this Section we assume that $M=1$ to avoid possible degenerations.

\item\label{rem-26-3-13-ii}
Recall that we can reduce operator with mollified potential to a canonical form provided
$\varepsilon \ge C(\mu ^{-1}h )^{\frac{1}{2}}|\log \mu|$ (see Section~\ref{book_new-sect-18-7}). However here we will have a much better estimate since we will take $\varepsilon \ge h^{\frac{2}{3}-\delta}$.
\end{enumerate}
\end{remark}

\subsection{Charge term}
\label{sect-26-3-3-2}

Let us consider the \index{charge term}{\emph{charge term\/} i.e. expression $\int e(x,x,\nu)\,dx= (\Tr \uptheta (\nu-H))$.

\subsubsection{Regular zone.}
\label{sect-26-3-3-2-1}
Then the results of Section~\ref{book_new-sect-18-9} implies that as
\begin{gather}
W+\nu \asymp \zeta^2
\label{26-3-63}\\
\shortintertext{and}
|\nabla W| \asymp \zeta^2\ell^{-1}
\label{26-3-64}
\end{gather}
contribution of the ball $B(x,\ell(x))$ to expression (\ref{26-3-23}) does not exceed $C(1+\mu h) h^{-2}\asymp C\zeta^2 \ell^2+ CB \ell^2$ exactly as in the mock proof.

Then summation with respect to $\ell$-partition in this zone results in
$CB^{\frac{2}{3}}$ as $B\le Z$,
$CZ^{\frac{2}{3}}$ as $Z\le B\le Z^{\frac{4}{3}}$ and $CB^{\frac{1}{5}}Z^{\frac{2}{5}}$ as $Z^{\frac{4}{3}}\le B\le Z^3$.

\begin{remark}\label{rem-26-3-14}
\begin{enumerate}[label=(\roman*), wide, labelindent=0pt]
\item\label{rem-26-3-14-i}
Condition (\ref{26-3-63}) is fulfilled as $\ell(x)\le \epsilon \bar{r}$.
\item\label{rem-26-3-14-ii}
Further, since $M=1$ both conditions (\ref{26-3-63}) and (\ref{26-3-64}) are fulfilled if $|x| \le (1-\epsilon)\bar{r}_m$ (we pick up $\y_m=0$ and $\bar{r}_m$ exact radius of $\supp(\rho^\TF_B)$).
\end{enumerate}
\end{remark}

\subsubsection{Border strip.}
\label{sect-26-3-3-2-2}
Now we need to consider the contribution of the border strip
$\cY\coloneqq \{x\colon \gamma(x) \le \epsilon \}$ with
$\gamma (x) = \epsilon (\bar{r} -|x|)\bar{r}^{-1}$ and
$\bar{r}\coloneqq \bar{r}_m$. Here
$\ell \asymp \bar{r}$, $\zeta \asymp \bar{\zeta}$ with
\begin{align}
&\bar{r}\asymp \left\{\begin{aligned}
&(Z-N)_+^{-\frac{1}{3}}\qquad &&\text{if\ \ } B\le (Z-N)_+^{\frac{4}{3}},\\
&B^{-\frac{1}{4}}\qquad
&&\text{if\ \ } (Z-N)_+^{\frac{4}{3}}\le B\le Z^{\frac{4}{3}},\\
&Z^{\frac{1}{5}}B^{-\frac{2}{5}} \qquad
&&\text{if\ \ } Z^{\frac{4}{3}}\le B\le C Z^3
\end{aligned}\right.\label{26-3-65}\\
\shortintertext{and}
&\bar{\zeta}\asymp \left\{\begin{aligned}
&(Z-N)_+^{\frac{2}{3}}\qquad
&&\text{if\ \ } B\le (Z-N)_+^{\frac{4}{3}},\\
&B^{\frac{1}{2}}\qquad
&&\text{if\ \ } (Z-N)_+^{\frac{4}{3}}\le B\le Z^{\frac{4}{3}},\\
&Z^{\frac{2}{5}}B^{\frac{1}{5}} \qquad
&&\text{if\ \ } Z^{\frac{4}{3}}\le B\le C Z^3
\end{aligned}\right.\label{26-3-66}
\end{align}
and scaling we get $\mu = B\bar{r}\bar{\zeta}^{-1}$ and $h=\bar{\zeta}^{-1}\bar{r}^{-1}$ here.

Let us consider first the case $\nu=0$. Then both conditions (\ref{26-3-63}) and (\ref{26-3-64}) are fulfilled albeit with $\ell_1=\gamma(x)\bar{r}$ and
$\varsigma (x) = \bar{\zeta} \gamma(x)^2$ instead of $\ell$ and $\zeta$.

Thus if $\varsigma\ell_1\ge 1$
(i.e. $\gamma\ge \bar{\gamma}\coloneqq h^{\frac{1}{3}}$), the contribution of the ball $B(x, \gamma(x)\bar{r})$ to the remainder does not exceed
$C\mu h^{-1} \gamma^2$\,\footnote{\label{foot-26-18} Really, after additional rescaling $x\mapsto x\gamma^{-1}$, $w\mapsto w\gamma^{-2}$ we have $\mu_1=\mu \gamma^{-1}$, $h_1=h\gamma^{-3}$ and $\mu_1h_1^{-1}=\mu h^{-1}\gamma^2$.} and therefore the total contribution of zone
$\cY_1\coloneqq \{x\colon \bar{\gamma} \le \gamma(x) \le \epsilon \bar{r} \}$ to the remainder does not exceed
\begin{equation}
C\mu h^{-1}\int \gamma(x)^{-1}\,dx \asymp C\mu h^{-1} |\log h|=
CB\bar{r}^2 |\log h|
\label{26-3-67}
\end{equation}
which is $O(Z^{\frac{2}{3}})$ as long as $B\le Z^{\frac{4}{3}}(\log Z)^{-2}$.

Further, the same approach works if
$|\nu | \lesssim \bar{\zeta}^2\bar{\gamma}^3\asymp \bar{\zeta}^2h \asymp \bar{\zeta}\bar{r}^{-1}$ which is equivalent to $(Z-N)_+\le \bar{\zeta}$ (then $|\nabla W|\asymp \varsigma^2 \ell_1^{-1}$ if $\gamma(x)\ge \bar{\gamma}$) \underline{and} also if this condition is violated but $|\nu|\le \bar{\zeta}^2$; in the latter case we need to pick up
$\bar{\gamma}= \bar{\gamma}_1\coloneqq |\nu|^{\frac{1}{3}}\bar{\zeta}^{-\frac{2}{3}}$.

To get rid off the logarithmic factor let us consider propagation. Recall that it goes along magnetic lines i.e. that $(x_1,x_2)$ remains constant. Let us consider propagation in the direction in which $|x_3|$ increases
(i.e. $\gamma(x)$ decays); we do not need to consider zone
$\cY _1\cap \{|x_3|\le Z^{-\delta}\bar{r}\}$ since contribution of this zone (\ref{26-3-67}) is $o(B\bar{r}^2)$.

One can see easily that we can follow dynamics which does not return for a time
$T_1^*(x) \coloneqq T_1(x) (\gamma(x)/\bar{\gamma})^\delta$ where
$T(x)\asymp \ell_1 ^{-1}\varsigma^{-1}\asymp \bar{r}\bar{\zeta}^{-1}\gamma^{-1}$ is a time required for this dynamics to pass though $B(x,\ell_1(x))$. Therefore one can replace (\ref{26-3-67}) by
\begin{equation}
C\mu h^{-1}\int_{\{ x\colon \gamma(x)\ge \bar{\gamma}\}}
\bar{\gamma}^\delta \gamma(x) ^{-1-\delta}\,dx \asymp
C\mu h^{-1} = CB\bar{r}^2.
\label{26-3-68}
\end{equation}

Further, as $|\nu |\ge B\bar{r}$ we need also to consider zone
$\cY_0\coloneqq \{x\colon \gamma(x)\le \bar{\gamma}\}$. In this zone we take $\ell_1=\bar{\ell}_1=\bar{\gamma}\bar{r}$ and
$\varsigma= \bar{\varsigma}=(|\nu|\bar{\gamma})^{\frac{1}{2}}$ with $\ell_1\varsigma\ge 1$ and since
$|\nabla W|\asymp \varsigma^2\ell_1^{-1}$, contribution of $B(x,\ell_1(x))$
to the remainder does not exceed $CB\bar{\ell}_1^2$ and the total contribution of $\cY_2$ does not exceed $CB \bar{r}^2$ which what exactly we achieved for zone $\cY_1$ after we got rid off logarithm. We take mollification parameter
$\varepsilon =\bar{\varsigma}^{-1}Z^\delta$\,\footnote{\label{foot-26-19} One can see easily that the resulting errors in the expressions(\ref{26-3-23}) and (\ref{26-3-24})--(\ref{26-3-25}) will not violate our claims.}.

Furthermore, zone $\cY_3=\{x\colon |x|\ge \bar{r}+\bar{\ell}_1\}$ is classically forbidden. So we can take here
\begin{equation}
\ell_1(x)=\epsilon (|x|-\bar{r}),\qquad \varsigma(x)= \min\bigl(\bar{\varsigma}\bar{\ell}_1^{-\frac{1}{2}}\ell_1(x)^{\frac{1}{2}},
|\nu|^{\frac{1}{2}}\bigr)
\label{26-3-69}
\end{equation}
and prove easily that its contribution also does not exceed $CB\bar{r}^2$.

Returning to the case $|\nu|\lesssim \bar{\zeta}$ we see that the contribution of zone $\cY_2 $ to the remainder does not exceed $CB\bar{r}^2$ because effective semiclassical parameter here is $h_1= 1$ and non-degeneracy condition is of no concern for us. We take mollification parameter
$\varepsilon =\bar{\varsigma}^{-1}Z^\delta$\,\footref{foot-26-19}.

Moreover, we can modify $W$ in $\cY_2$ (make it negative there) so that this zone would be classically forbidden with $\ell_1$, $\varsigma$ defined by (\ref{26-3-69}) with $|\nu|$ replaced by $\bar{\zeta}$.

Finally in the case $B\le |\nu|$ (i. e. $B\le C(Z-N)_+^{\frac{4}{3}}$ we can apply the above arguments with $\bar{\gamma}=1$ and arrive to the same result.
Therefore we proved in all cases

\begin{claim}\label{26-3-70}
If $M=1$ the total contribution of the border strip $\cY$ to the remainder in the charge term is $O(B\bar{r}^2)$ which does not exceed $CB^{\frac{1}{2}}$ as $B\le Z^{\frac{4}{3}}$ and $CB^{\frac{1}{5}}Z^{\frac{2}{5}}$ if $Z^{\frac{4}{3}}\le B\le Z^3$.
\end{claim}

\subsubsection{Conclusion.}
\label{sect-26-3-3-2-3}
If $Z^2\le B\le Z^3$ we need to estimate also contribution of the \index{inner core}\emph{inner core\/} $\cX_0\coloneqq \{x\colon \ell(x)\le CZ^{-1}\}$. By means of variational methods we will prove (see Corollary~\ref{cor-26-A-5})

\begin{claim}\label{26-3-71}
If $Z^2\le B\le Z^3$ the contributions of $\cX_0$ to both $\int e(x,x,\nu)\,dx$ and $\int P'_B(W(x)+\nu)\,dx$ do not exceed $CBZ^{-2}$.
\end{claim}

Then we arrive to the following

\begin{proposition}\label{prop-26-3-15}
Let $M=1$. Then

\begin{enumerate}[label=(\roman*), wide, labelindent=0pt]
\item\label{prop-26-3-15-i}
For constructed above potential $W$ expression \textup{(\ref{26-3-23})} does not exceed $CZ^{\frac{2}{3}}+ CB^{\frac{1}{5}}Z^{\frac{2}{5}}$.

\item\label{prop-26-3-15-ii}
If $B\le Z$ expression \textup{(\ref{26-3-23})} does not exceed
$C(B+1)^{\delta} Z^{\frac{2}{3}-\delta}$.
\end{enumerate}
\end{proposition}

\subsection{Trace term}
\label{sect-26-3-3-3}

Let us consider the \index{trace term}\emph{trace term\/} i.e. expression
$\int e_1 (x,x,\nu)\,dx=\Tr ((H-\nu)^-)$.

\subsubsection{Regular zone.}
\label{sect-26-3-3-3-1}

Here again let us consider first zone where $|x|\le (1-\epsilon)\bar{r}$. Then the contribution of $B(x,\ell(x))$ to the Tauberian remainder\footnote{\label{foot-26-20} We will consider a bit later transition from the Tauberian expression to the magnetic Weyl expression.} does not exceed
$C\zeta^2 (h^{-1}+\mu) \asymp C\zeta^3\ell + CB \zeta \ell$ as in the mock proof and the summation over zone results in $CZ^{\frac{5}{3}}+ CZ^{\frac{4}{3}}B^{\frac{1}{3}}+ CZ^{\frac{3}{5}}B^{\frac{4}{5}}$.

\subsubsection{Border strip.}
\label{sect-26-3-3-3-2}
Again in zone $\cY_1$ contribution of $B(x,\gamma(x))$ does not exceed $CB\varsigma \ell_1$ and the summation over this zone returns
\begin{equation}
CB\int \varsigma \ell_1^{-2}\,dx
\label{26-3-72}
\end{equation}
and plugging $\ell_1=\bar{r}\gamma$ and $\varsigma =\bar{\zeta}\gamma^2$ results in $CB^{\frac{5}{4}}$ as $B\lesssim Z^{\frac{4}{3}}$ and $CB^{\frac{4}{5}}Z^{\frac{3}{5}}$ otherwise. The analysis of zone $\cY_0$ if there $\cY_2=\emptyset$ is also easy.

Consider zones $\cY_2$ and $\cY_0$. The same arguments as before imply that their contributions to the remainder do not exceed
$CB\bar{r}^2\bar{\varsigma} \bar{\ell}_1^{-1}$ which is what we got before.

\subsubsection{Justification: from Tauberian to magnetic Weyl expression.}
\label{sect-26-3-3-3-3}

\paragraph{Case $\mu h \le C_0$.}
\label{sect-26-3-3-3-3-1}
We need to prove that with the announced error we can replace the Tauberian expression by magnetic Weyl one. Note that the canonical form of $\zeta^{-2} H_{A,W}$ as described in Sections~\ref{book_new-sect-13-2} and \ref{book_new-sect-18-7} is
\begin{multline}
\cH=\cH_0 +\\
\shoveright{\mu^{-2} \omega_1 (x_1, \mu^{-1}h D_1,x_3) +
\mu^{-2} \omega_2 (x_1, \mu^{-1}h D_1,x_3)
(x_2^2+\mu^2h^2 D_2^2 )}\\[3pt]
+\mu^{-1}h \omega_3 (x_1, \mu^{-1}h D_1,x_3) + O\bigl( \mu^{-3} h (\gamma+\mu^{-1})^{-\frac{3}{2}}+ \mu^{-4}\bigr)
\label{26-3-73}
\end{multline}
with
\begin{gather}
\cH_0= h^2 D_3^2 - (x_2^2+\mu^2h^2 D_2^2\pm \mu h) +
w(x_1, \mu^{-1}h D_1,x_3)
\label{26-3-74}\\
\shortintertext{and}
\gamma = \epsilon \min_j |w-2j\mu h|
\label{26-3-75}
\end{gather}
where we used the fact that $w\in \mu h \sC^{\frac{5}{2}}+\sC^n$, $\mu^{-3}h = \mu^{-4}\cdot \mu h$. Here we have signs ``$+$'' and ``$-$'' on $q/2$ of the diagonal elements equally.

Then the Tauberian expression is
\begin{multline}
\const \cdot \mu h^{-2}
\int \sum_{j\ge 0}
\bigl( w-2j \mu h -
\mu^{-2}\omega_1 - 2j\mu^{-1}h \omega_2 \bigr)_+^{\frac{3}{2}} \times \\
\bigl(\psi + \mu^{-2}\psi_1 + 2j \mu^{-1}h\psi_2\bigr)\, dx
\label{26-3-76}
\end{multline}
where term with $j=0$ enters with the weight $\frac{1}{2}$ and an error does not exceed
\begin{equation*}
Ch^{-3}
\Bigl(\mu ^{-4} + \mu^{-3}h \int (\gamma+\mu^{-1})^{-\frac{3}{2}}\, dx\Bigr) \asymp
C\mu^{-\frac{7}{2}}h^{-3}
\end{equation*}
because an integral does not exceed $C\mu ^{\frac{1}{2}} (\mu h)^{-1}$; since $\mu \ge h^{-\frac{3}{5}}$ this error does not exceed $Ch^{-\frac{9}{10}}$ which is better than $O(h^{-1})$.

On the other hand, if we consider the difference between (\ref{26-3-76}) and the same expression with $\omega_1=\omega_2=\psi_1=\psi_2=0$ and consider it as a Riemannian sum and replace it by an integral we get $G\mu^{-2}h^{-3}$ with an error not exceeding $C\mu^{-4}(\mu h)^{\frac{1}{2}}h^{-3}$ which is even less. Therefore (\ref{26-3-76}) becomes
\begin{equation*}
\int P_{\mu h} (w)\psi\,dx + G\mu^{-2}h^{-3}
\end{equation*}
and comparing with the result if $\mu \asymp h^{-\frac{3}{5}}$ when we get the same answer albeit with $G=0$ we conclude that $G$ \emph{must\/} be $0$. This concludes the justification in $\cX_2$.

\paragraph{Case $\mu h \ge C_0$.}
\label{sect-26-3-3-3-3-2}

In this case we need a simplified version of (\ref{26-3-73})
$\cH=\cH_0 + O(\mu^{-1}h )$ and we need to consider only $j=0$ and replacing $\cH$ by $\cH_0$ brings and error
$C\mu h^{-2}\times \mu^{-1}h =O(h^{-1}) $. This takes care of $\cX_2$ and after scaling of $\cY$.

\subsubsection{Conclusion.}
\label{sect-26-3-3-3-4}
As $Z^2\le B\le Z^3$ we need to estimate also contribution of
$\cX_0=\{x\colon \ell(x)\le CZ^{-1}\}$. By means of variational methods we will prove (see Corollary~\ref{cor-26-A-5})

\begin{claim}\label{26-3-77}
For $Z^2\le B\le Z^3$ the contributions of $\cX_0$ to both
$\int e_1(x,x,\nu)\,dx$ and $\int P'_B(W(x)+\nu)\,dx$ do not exceed $CB$.
\end{claim}

Then we arrive to the following

\begin{proposition}\label{prop-26-3-16}
Let $M=1$. Then
\begin{enumerate}[label=(\roman*), wide, labelindent=0pt]
\item\label{prop-26-3-16-i}
For constructed above potential $W$ expression \textup{(\ref{26-3-25})} does not exceed $CZ^{\frac{5}{3}}+ CZ^{\frac{4}{3}}B^{\frac{1}{3}}+ CZ^{\frac{3}{5}}B^{\frac{4}{5}}$.

\item\label{prop-26-3-16-ii}
If $B\le Z$ expression \textup{(\ref{26-3-25})} does not exceed $C(B+1)^{\delta} Z^{\frac{5}{3}-\delta}$ (but one should subtract a Schwinger term from the trace).
\end{enumerate}
\end{proposition}

\subsection{Semiclassical $\D$-term: local theory}
\label{sect-26-3-3-4}
Unfortunately, we do not have any non-smooth theory (cf. Section~\ref{book_new-sect-16-7}) here so far but actually we almost do not need it since singularities are rather rare. Let us introduce a scaling function (\ref{26-3-75}) and consider
\begin{equation}
J_\lambda (z)=\int \phi_{z,\lambda} (x)
\bigl(e(x,x,\tau) - P_\beta (w(x)+\tau)\bigr)\,dx
\label{26-3-78}
\end{equation}
with $\phi_{z,\lambda} (x)= \phi (\lambda^{-1}(x-z))$ and
$\lambda \le \gamma(z)$. Scaling $x\mapsto \lambda^{-1}(x-z)$ we have
$\mu \mapsto \mu' = \lambda \mu$ and $h\mapsto h'= \lambda^{-1}h$.

Then, according to Section~\ref{book_new-sect-13-4}
\begin{equation}
|J_\lambda (z)|\le Ch^{\prime\, -2}(1+\mu'h')\asymp C\lambda^2 h^{-2} (1+\mu h)
\label{26-3-79}
\end{equation}
as long as $\lambda \ge h$.

Really, a transition from the Tauberian decomposition to magnetic Weyl one in this case is easy: skipping all perturbation terms $O(\mu^{-2}+\mu^{-1}h )$ in (\ref{26-3-73}) and also setting $\psi_1=\psi_2=0$ results in an error
$O\bigl( \mu^{-2}h ^{-3} + h ^{-1} \bigr)$ in
(\ref{26-3-76})-like expression albeit with the power $\frac{1}{2}$ rather than $\frac{3}{2}$ and without integration:
\begin{multline}
\const \cdot \mu h ^{-2}
\sum_{j\ge 0}
\bigl( w-2j \mu h -
\mu^{-2}\omega_1 - 2j\mu^{-1}h \omega_2 \bigr)_+^{\frac{1}{2}} \times \\
\bigl(\psi + \mu^{-2}\psi_1 + 2j \mu^{-1}h \psi_2\bigr);
\label{26-3-80}
\end{multline}
scaling produces expression smaller than (\ref{26-3-79}).

Let us apply this estimate (\ref{26-3-79}) to the Fefferman--de Llave decomposition (\ref{26-3-8}).

\paragraph{Case $\mu \le C_0h ^{-1}$.}
\label{sect-26-3-3-4-0-1}
\begin{enumerate}[label=(\roman*), wide, labelindent=0pt]
\item\label{sect-26-3-3-4-0-1-i}
Consider first a pair $(z,z')$ such that $|z'-z''|\le \epsilon_0 \gamma(z')$; then also $|z'-z''|\le \epsilon_0 \gamma(z'')$ and we take
$\lambda =\epsilon |z'-z''|$.

Then in the virtue of (\ref{26-3-79}) the total contribution to $\D$-term of all such pairs belonging to $B (z, \gamma(z))$, and with $|z'-z''|\asymp \lambda$ does not exceed
\begin{equation}
C\gamma^3\lambda^{-3} \times \lambda^{-1} \times
\lambda^2 h ^{-4} (1+\mu h )^2 \asymp C\gamma^3 h ^{-4} (1+\mu h )^2
\label{26-3-81}
\end{equation}
where $C\gamma^3\lambda^{-3}$ estimate the number of such pairs, $\lambda^{-1}$ the inverse distance between them, and $C\lambda^2 h ^{-2} (1+\mu h )$ is the right-hand expression of (\ref{26-3-79}).

\smallskip
Then summation over $\lambda \in(\mu^{-1},\gamma)$ results in
$C\gamma^3(1+\mu h )^2h ^{-4}|\log (\mu \gamma)|$.

\smallskip
Further, summation over all balls
$B(z, \gamma)\subset B(0,1)$ with $\gamma(z) \asymp \gamma$ results in
$C(\mu h )^{-1}\gamma h ^{-4} |\log (\mu \gamma)|$ since there are
$\asymp (\mu h )^{-1}\gamma^{-2}$ such balls due to non-degeneracy assumption
$|\nabla w|\asymp 1$. Summation over $\gamma\in ( \mu^{-1},\mu h ) $ results in $Ch ^{-4} |\log (\mu ^2h )|$.

As $\lambda \le \mu^{-1}$ we can apply standard non-magnetic methods without Fefferman--de Llave decomposition (\ref{26-3-8}). Coefficients are smooth after scaling as long as $\varepsilon \ge \mu^{-1}$.

\item\label{sect-26-3-3-4-0-1-ii}
Consider \emph{disjoint\/} pairs $(z',z'')$ with
$|z'-z''|\ge \max\bigl(\gamma(z'),\gamma(z'')\bigr)$. Here estimate (\ref{26-3-79}) is not sufficient and it should be replaced by
\begin{gather}
|J_\gamma (z)|\le
C\lambda^3 h ^{-2} (1+\mu h )\label{26-3-82}\\
\shortintertext{as long as}
\gamma \ge h ^{\frac{2}{3}-\delta}.
\label{26-3-83}
\end{gather}
Really, the shift for time $T$ with respect to $\xi_3$ is $\asymp T $ provided $|\nabla_{x_3}w|\asymp 1$ and this shift is observable if
$T \times \gamma \gtrsim h ^{1-\delta}$. Similarly, in the canonical form the shift for time $T$ with respect to $\mu^{-1}\xi_i$ is $\asymp \mu^{-1}T$ provided $|\nabla_{x_i}w|\asymp 1$ and this shift is observable if
$\mu^{-1}T \times \gamma \gtrsim \mu^{-1}h ^{1-\delta}$. In both cases shift with $T\in ( \gamma^{\frac{1}{2}},\epsilon_0)$ is observable under assumption (\ref{26-3-83}) and therefore we can extend $T\asymp \gamma^{\frac{1}{2}}$ to $T\asymp 1$.

Note that for $\varepsilon \ge h ^{\frac{2}{3}-\delta}$ assumption (\ref{26-3-83}) is fulfilled automatically. Then contribution of each disjoint pair to $\D$-term does not exceed
\begin{gather*}
Ch ^{-4}(1+\mu h )^2 \gamma(z')^3\gamma(z'')^3|z'-z''|^{-1}\\
\intertext{and the total contribution does not exceed}
Ch ^{-4}(1+\mu h )^2 \iint |z'-z''|^{-1}\,dz'dz''\asymp
Ch ^{-4}(1+\mu h )^2.
\end{gather*}

\item\label{sect-26-3-3-4-0-1-iii}
To shed off logarithm in \ref{sect-26-3-3-4-0-1-i}
 we need a slightly better estimate than (\ref{26-3-79}). The same arguments as in Part~\ref{sect-26-3-3-4-0-1-ii} result in
\begin{equation}
|J_\lambda (z)|\le
C\lambda^2 h ^{-2} (1+\mu h )\cdot(1+\lambda \gamma/h )^{-\delta}.
\label{26-3-84}
\end{equation}
Really, we just advance from time $T\asymp \lambda$ to
$T\asymp \lambda (1+\lambda \gamma/h )^{\delta}$.

Then the same factor is acquired by the right-hand expression of (\ref{26-3-81}) and the summation with respect to $\lambda \in (h \gamma^{-1}, \gamma)$ results in $C\gamma^3(1+\mu h )^2h ^{-4}$ but summation with respect to
$\lambda \in (\mu^{-1},h \gamma^{-1})$ results
\begin{equation*}
C\gamma^3(1+\mu h )^2h ^{-4}\bigl(1+|\log (\mu h \gamma^{-1})|\bigr).
\end{equation*}
Further, summation over all balls $B(z, \gamma)\subset B(0,1)$ with
$\gamma(z) \asymp \gamma$ results in
$C(\mu h )^{-1}\gamma (1+\mu h )^2h ^{-4}
\bigl(1+|\log (\mu h \gamma^{-1})|\bigr)$ and, finally, summation over
$\gamma \lesssim \mu h $ results in $C h ^{-4}(1+\mu h )^2$.
\end{enumerate}

Note that in all cases perturbation terms in (\ref{26-3-73}) and (\ref{26-3-80}) result in the error not exceeding the announced one.

\paragraph{Case $\mu \ge C_0h ^{-1}$.}
\label{sect-26-3-3-4-0-2}

So far factor $(1+\mu h)$ was for a compatibility only. Now it is important.

Exactly the same arguments work as $\mu \ge C_0h ^{-1}$ with a minor modifications:
\begin{enumerate}[label=(\alph*),wide, labelindent=0pt]
\item\label{sect-26-3-3-4-0-2-a}
$\gamma (x)$ now is defined by (\ref{26-3-75}) with $j=0$ and its upper bound is $1$ rather than $\mu h $.

\item\label{sect-26-3-3-4-0-2-b}
Also the number of $\gamma$-balls is $\asymp \gamma^{-2}$ rather than $\asymp (\mu h)^{-1}\gamma^{-2}$;

\item\label{sect-26-3-3-4-0-2-c}
$\lambda$ now runs from $h $ to $\gamma$ in~\ref{sect-26-3-3-4-0-1-i} and~\ref{sect-26-3-3-4-0-1-iii}.

\item\label{sect-26-3-3-4-0-2-d}
We need to estimate contribution of pairs $(z',z'')$ with
$|z'-z''|\le h $. One can see easily that $e(x,x,\tau)\le \mu h ^{-2}$ and therefore the total contribution of these pairs does not exceed
$C\mu^2h ^{-4} \iint |z'-z''|^{-1}\,dz'dz''\asymp C\mu^2h ^{-4} \times h ^2 \asymp C\mu^2h ^{-2} $.
\end{enumerate}

Therefore we have the following

\begin{proposition}\label{prop-26-3-17}
As $|\nabla w|\asymp 1$ and $\varepsilon \ge h ^{\frac{2}{3}-\delta}$ in $B(0,1)$ and $\phi \in \sC^\infty (B(0,\frac{1}{2})$
\begin{multline}
\D\Bigl(\phi \bigl(e(x,x,\tau)-P'_\beta (w(x)+\tau)\bigr),
\phi \bigl(e(x,x,\nu)-P'_\beta (w(x)+\tau)\Bigr)\le \\
C(1+\mu h )^2h ^{-4}.
\label{26-3-85}
\end{multline}
\end{proposition}

\begin{remark}\label{rem-26-3-18}
One can see easily that one can select
$\varepsilon \ge h ^{\frac{2}{3}-\delta}$ such that expressions (\ref{26-3-60}), (\ref{26-3-61}) and (\ref{26-3-61}) will be respectively $O(h ^{2+\delta})$, $O(h ^{1+\delta})$ and $O(h ^{2+\delta})$.
\end{remark}

\subsection{Semiclassical $\D$-term: global theory}
\label{sect-26-3-3-5}

\subsubsection{Regular zone.}
\label{sect-26-3-3-5-1}

The above results allow us to consider a total contribution of zone $\cX_2$ into semiclassical $\D$-term. As before let us consider $\ell$-admissible partition of unity there and apply it to Fefferman--de Llave decomposition (\ref{26-3-8}). Then the total contribution of the elements which are not disjoint does not exceed

\begin{equation}
\sum_n \ell_n^{-1} \bigl(1+B\zeta_n^{-2}\bigr)^2 \ell_n^4 \zeta_n^4 \asymp
\int \underbracket{ \bigl(\zeta^4+B^2\bigr)\ell^3} \,\ell^{-1}d\ell
\label{26-3-86}
\end{equation}
where $\bigl(1+B\zeta_n^{-2}\bigr)^2$ and $\ell_n^4 \zeta_n^4$ are
$(1+\mu h )$ and $h ^{-4}$ respectively and $\ell_n^{-1}$ is a scaling factor.

Then if $\zeta^2 = Z\ell^{-1}$, an integral equals to the value of the selected expression as $\ell$ reaches its maximum, i.e. for $\ell=Z^{-\frac{1}{3}}$ for $B\le Z^{\frac{4}{3}}$ and $\ell= Z^{\frac{1}{5}}B^{-\frac{2}{5}}$ for
$ Z^{\frac{4}{3}}\le B\le Z^3$ and we arrive to $CZ^{\frac{5}{3}}$ and $CZ^{\frac{3}{5}}B^{\frac{4}{5}}$ respectively.

On the other hand, if $\zeta^2=\ell^{-4}$ an integral equals to the value of the selected expression as $\ell$ reaches its minimum, i.e. for $\ell=Z^{-\frac{1}{3}}$ and only in the case $B\le Z^{\frac{4}{3}}$ and we arrive to $CZ^{\frac{5}{3}}$ again.

Furthermore, the total contribution of the disjoint elements does not exceed
\begin{multline}
\sum_{n,p} |z_n-z_p|^{-1} \bigl(1+B\zeta_n^{-2}\bigr) \bigl(1+B\zeta_p^{-2}\bigr) \ell_n^2 \zeta_n^2 \ell_p^2 \zeta_p^2 \asymp\\
\iint \underbracket{ (\ell+\ell')^{-1} \bigl(\zeta^2+B\bigr)\ell^2
\bigl(\zeta^{\prime\, 2}+B\bigr)\ell^{\prime\,2}}\, \ell^{-1}d\ell \,\ell^{\prime\,-1}d\ell'.
\label{26-3-87}
\end{multline}
Then if $\zeta^2 = Z\ell^{-1}$ and $\zeta^{\prime\,2} = Z\ell^{\prime\,-1}$ an integral equals to the value of the selected expression as both $\ell$ and $\ell'$ reach their maxima, and we arrive to $CZ^{\frac{5}{3}}$ and $CZ^{\frac{3}{5}}B^{\frac{4}{5}}$ respectively.

On the other hand, if $\zeta^2=\ell^{-4}$ and $\zeta^{\prime\,2}=\ell^{\prime\,-4}$ (we do not need to consider mixed pair) an integral equals to the value of the selected expression as both $\ell$ and $\ell'$ reach their minima, and we arrive to $CZ^{\frac{5}{3}}$.

Therefore (combining with Proposition~\ref{cor-26-A-5} as $Z^2\le B\le Z^3$) we arrive to

\begin{proposition}\label{prop-26-3-19}
Let $M=1$. Then
\begin{enumerate}[label=(\roman*), wide, labelindent=0pt]
\item\label{prop-26-3-19-i}
The total contribution of the zone
$\{x\colon \ell (x)\le (1-\epsilon)\bar{r}\}$ to the semiclassical $\D$-term does not exceed $CZ^{\frac{5}{3}}$ and $CZ^{\frac{3}{5}}B^{\frac{4}{5}}$ for
$B\le Z^{\frac{4}{3}}$ and for $Z^{\frac{4}{3}}\le B\le Z^3$ respectively.

\item\label{prop-26-3-19-ii}
If $B\le Z$ this contribution does not exceed
$C(B+1)^{\delta} Z^{\frac{5}{3}-\delta}$.
\end{enumerate}
\end{proposition}

\subsubsection{Border strip.}
\label{sect-26-3-3-5-2}

Border strip
$\cY=\{x\colon (1-\epsilon)\bar{r}\le \ell(x) \le (1+\epsilon)\bar{r}\}$ is more subtle. Here we need to use the same $\ell_1(x)= \epsilon_0(\bar{r}-|x|)$ partition as before.

\begin{remark}\label{rem-26-3-20}
$\cY$ is already covered by our arguments if
$\bar{r}\asymp (Z-N)_+^{-\frac{1}{3}}$.
\end{remark}

\paragraph{Close elements.}
\label{sect-26-3-3-5-2-1}
Consider first contribution of elements which are not disjoint. It is given by the left-hand expression of (\ref{26-3-86}) with $\ell$, $\zeta$ replaced by
$\ell_1(x)=\bar{r}\gamma (x)$ and $\varsigma(x)=\bar{\zeta}\gamma(x)^2$ respectively. However since the layer $\{x\colon \gamma(x)\asymp \gamma\}$ contains $\asymp \gamma^{-2}$ elements the right-hand expression should be replaced by
\begin{equation*}
\int \underbracket{ B^2\bar{r}^3 \gamma } \,\gamma^{-1}d\gamma \asymp B^2\bar{r}^3
\end{equation*}
since $\varsigma^2\le B$; so we arrive to $O\bigl(\max(B^{\frac{5}{4}},Z^{\frac{3}{5}}B^{\frac{4}{5}}\bigr)$.

Meanwhile for $\cY_2$ we have $\gamma (x)=\bar{\gamma}\le 1$ and
$\varsigma(x)= \bar{\varsigma}= \bar{\zeta}\bar{\gamma}^2$ and its contribution does not exceed what we got for $\cY_1$.

\paragraph{Disjoint elements.}
\label{sect-26-3-3-5-2-2}
Consider contribution of the disjoint elements. It is given by the left-hand expression of (\ref{26-3-87}) with $\ell$, $\zeta$ replaced by $\gamma$ and $\varsigma$ respectively. Note that
$\sum_{n,p} |z_n-z_p|^{-1} \asymp \bar{r}^{-1} \gamma^{-2} \gamma^{\prime\,-2}$ where we sum with respect to all pairs with $\gamma_n\asymp \gamma$ and $\gamma_p\asymp \gamma'$. Therefore the right-hand expression should be replaced by
\begin{equation}
\int \underbracket{ \bar{r}^3 B^2 } \,\gamma^{-1}d\gamma\, \gamma^{\prime\,-1}d\gamma'
\label{26-3-88}
\end{equation}
which leads to $C\bar{r}^3B^2 |\log (\bar{r}^{-1}\bar{\gamma})|^2$ which differs from what we got before by a logarithmic factor. To get rid off it we will use exactly the same trick as in Paragraph~\emph{\ref{sect-26-3-3-2-2}.2.2. \nameref{sect-26-3-3-2-2}\/} proving Proposition~\ref{prop-26-3-15} because considering disjoint pairs we consider the same objects as there. Then instead of (\ref{26-3-88}) we arrive to
\begin{equation*}
\int \underbracket{ \bar{r}^3 B^2 \gamma^{-\delta}\gamma^{\prime\,-\delta}\bar{\gamma}^{2\delta}} \,\gamma^{-1}d\gamma\, \gamma^{\prime\,-1}d\gamma'
\end{equation*}
which results in $C\bar{r}^3B^2$.

Meanwhile for $\cY_2$ we have $\gamma (x)=\bar{\gamma}\le \bar{r}$ and
$\varsigma(x)= \bar{\varsigma}= \bar{\zeta}\bar{\gamma}^2$ and its contribution does not exceed what we got for $\cY_1$.
\enlargethispage{2\baselineskip}

\paragraph{Conclusion.}
\label{sect-26-3-3-5-2-3}
Finally, analysis in the outer zone is trivial. Therefore we arrive

\begin{proposition}\label{prop-26-3-21}
Let $M=1$. Then for constructed above potential $W$
\begin{enumerate}[label=(\roman*), wide, labelindent=0pt]

\item\label{prop-26-3-21-i}
Expression \textup{(\ref{26-3-24})} does not exceed
$CZ^{\frac{5}{3}}+ CZ^{\frac{3}{5}}B^{\frac{4}{5}}$.

\item\label{prop-26-3-21-ii}
If $B\le Z$ expression \textup{(\ref{26-3-24})} does not exceed
$C(B+1)^{\delta} Z^{\frac{5}{3}-\delta}$.
\end{enumerate}
\end{proposition}

\chapter{Applying semiclassical methods: \texorpdfstring{$M\ge 2$}{M\textge 2}}
\label{sect-26-4}

Let us consider now the molecular case ($M\ge 2$). The major problem is that the non-degeneracy condition $|\nabla w|\asymp 1$ is not necessarily fulfilled. Therefore we need to find an alternative approach to the zone
where $\mu \ge h^{-\frac{1}{3}}$) (with $\mu = B\ell \zeta^{-1}$ and $h=1/\ell\zeta$). Recall that it consists of three smaller zones:
zone $\cX_2\coloneqq \{\mu h \le C_0,\ W^\TF_B\ge \epsilon_0 \zeta^2\}$\,\footnote{\label{foot-26-21} Only if $B\le C_1 Z^2$; this zone disappears for $C_1Z^2\le B\lesssim Z^3$.}, zone
$\cX_3 \coloneqq \{\mu h \ge C_0,\ W^\TF_B\ge \epsilon_0 \zeta^2\}$\,\footnote{\label{foot-26-22} Only if
$Z^{\frac{4}{3}}\lesssim B \lesssim Z^3$; this zone disappears for
$B\lesssim Z^{\frac{4}{3}}$.}, and the (most difficult) \emph{boundary strip\/}
$\cY=\{ W^\TF_B \le \epsilon_0\zeta^2\}$, which we leave for the next Section~\ref{sect-26-5}.
\enlargethispage{1.5\baselineskip}

\section{Scaling functions in zone \texorpdfstring{$\cX_2$}{X\texttwoinferior}}
\label{sect-26-4-1-}

\subsubsection{Step 1.}\label{sect-26-4-1-1-1}
We will use the scaling method in this zone; the good news is that $W^\TF_B$ is sufficiently regular after a proper rescaling and also sufficiently non-degenerate. Recall that after we rescale
$x\mapsto \bar{\ell}^{-1}(x-\bar{x})$, $\tau\mapsto \bar{\zeta}^{-2}\tau$ in the ball $B(\bar{x}, \frac{1}{2} \bar{\ell})$ with $\bar{\ell}= \ell(\bar{x})$,
$\bar{\zeta}= Z^{\frac{1}{2}}\bar{\ell}^{-1}$, the rescaled potential $w=\bar{\zeta}^{-2}W^\TF_B$ satisfies in $B(0,2)$ equation
\begin{gather}
\frac{1}{4\pi}\Delta w = \eta P'_\beta (w)\qquad \text{with\ \ }
\eta= \bar{\zeta}\bar{\ell}^2\lesssim 1,\quad \beta=\mu h = B\bar{\zeta}^{-2}\le 1
\label{26-4-1}\\
\intertext{and therefore in $B(0,1)$}
w = -\frac{1}{4\pi} \int |x-z|^{-1}\eta P'_\beta (w(z))\phi(z)\, dz + w'
\label{26-4-2}
\end{gather}
with $\phi\in \sC_0^\infty (B(0,\frac{5}{6}))$ and
$w'\in \sC_0^\infty (B(0,\frac{3}{4}))$.

\begin{remark}\label{rem-26-4-1}
\begin{enumerate}[label=(\roman*), wide, labelindent=0pt]
\item\label{rem-26-4-1-i}
If $\zeta^2 \asymp Z\ell^{-1}$ then
$\eta =Z ^{\frac{1}{2}}\ell^{\frac{3}{2}}\le 1$; this happens for
$B\le Z^{\frac{4}{3}}$, $\ell\lesssim Z^{-\frac{1}{3}}$ and for
$B\ge Z^{\frac{4}{3}}$, $\ell\lesssim B^{-\frac{1}{3}}$.

\item\label{rem-26-4-1-ii}
If $\zeta^2 \asymp \ell^{-2}$ then $\eta \asymp 1$; this happens only for
$B\le Z^{\frac{4}{3}}$, $\ell\gtrsim Z^{-\frac{1}{3}}$.
\end{enumerate}
\end{remark}
\enlargethispage{\baselineskip}

Let us introduce a function
\begin{equation}
\gamma_0(x)= \Bigl(\min_j | w-2j\beta|^{3} +|\nabla w| ^{4}
+|\nabla^2 w|^{6} + |\nabla^3 w'|^{12} \Bigr)^{\frac{1}{12}}.
\label{26-4-3}
\end{equation}

\begin{remark}\label{rem-26-4-2}
We cannot replace $w'$ by $w$ in the last term because
$w\in \sC^{\frac{5}{2}}$ only rather than $\sC^3$.
\end{remark}

\begin{proposition}\label{prop-26-4-3}
$\gamma_0(x)$ is a scaling function i.e. $|x-y|\le \epsilon \gamma_0(x)\implies \gamma_0(y)\asymp \gamma_0(x)$.
\end{proposition}

\begin{proof}
\begin{enumerate}[label=(\alph*), wide, labelindent=0pt]
\item\label{pf-26-4-3-a}
If $w$ belonged to $\sC^4$ (and we would put $w$ instead of $w'$ in the last term of (\ref{26-4-3})) then we would just prove that
$|\nabla \gamma_0|\le c$. Here we should be more subtle. We need to prove that if \begin{phantomequation}\label{26-4-4}\end{phantomequation}
\pagebreak
\begin{align}
&\min_j | w-2j\beta|\le \gamma_0^4, &&|\nabla w|\le \gamma_0^3, \tag*{$\textup{(\ref*{26-4-4})}_{1,2}$}\label{26-4-4-1}\\
&|\nabla ^2 w|\le \gamma_0^2,&& |\nabla^3 w'|\le \gamma_0
\tag*{$\textup{(\ref*{26-4-4})}_{3,4}$}\label{26-4-4-3}
\end{align}
at point $x$, then at point $y$ the same inequalities hold with $\gamma_0$ replaced by $c\gamma_0$\,\footnote{\label{foot-26-23} We need to prove a bit of converse as well; see Part~\ref{pf-26-4-3-c}.}. Definitely this is true for $\textup{(\ref{26-4-4})}_4$ since $w'$ is smooth.

\item\label{pf-26-4-3-b}
Consider $|\nabla^2w|$. Consider $|\Lambda_\alpha w(y)-\Lambda_\alpha w(x)|$ with $\Lambda_\alpha = \nabla^\alpha -\frac{1}{3} \delta_{ij} \Delta$, $\alpha=(i,j)$. Then due to (\ref{26-4-2})
\begin{multline}
|\Lambda_{\alpha,y} w(y)-\Lambda_{\alpha,x} w(x)|\le \\
|\eta \int \bigl(\Lambda_{\alpha,x}|x-z|^{-1}-\Lambda_{\alpha,y}|y-z|^{-1}\bigr) P'_\beta( w(z)) \phi(z)\,dz| +\epsilon_1\gamma_0^2
\label{26-4-5}
\end{multline}
where the last term estimates
$|\Lambda_{\alpha,y} w'(y)-\Lambda_{\alpha,x} w'(x)|$ and we used $\textup{(\ref{26-4-4})}_{4}$. Integrals here are understood in the sense of the principal value ($\vrai$) and $\epsilon_1=\epsilon_1(\epsilon)\to +0$ as $\epsilon\to +0$.

Note that the integral in expression (\ref{26-4-5}) does not change if we add to
$P'_\beta (w)$ any constant with respect to $z$.

Let us consider first this integral over $\{z\colon |x-z|\ge 2\gamma_0 \}$,
provided $\gamma_0 \ge |x-y|$, and note that this integral does not exceed
\begin{multline*}
\eta |\int |x-z|^{-4}|x-y|
\times \\
\bigl(|\nabla w(x)|\cdot |x-z| +
|\nabla^2 w(x)|\cdot |x-z|^2 +|x-z|^{\frac{5}{2}}\bigr)^{\frac{1}{2}}\,dz|\le
C\epsilon_1\eta.
\end{multline*}
Consider now integral over zone $\{z\colon |x-z|\le 2\gamma_0\}$:
\begin{equation*}
|\eta \int \Lambda_{\alpha,x}|x-z|^{-1}\cdot
\bigl( P'_\beta( w(z))-P'_\beta( w(x))\bigr) \,dz|
\end{equation*}
and note that it also does not exceed $\epsilon_1\eta$. Further, t same arguments work for this integral with $x$ replaced by $y$ but still integrated over zone
$\{z\colon |x-z|\le 2\gamma_0\}$ (one needs to remember that
$|\nabla w(x)-\nabla w(y)|= O(\gamma_0)$ and
$|\nabla^2 w(x)-\nabla^2 w(y)|= O(\gamma_0^{\frac{1}{2}})$).

Furthermore, the same arguments work also for these expressions integrated over $\{z\colon |y-z|\le 2\gamma_0\}$ and we are left with
\pagebreak
\begin{equation*}
|\eta \int \omega (x,y,z)
\bigl( P'_\beta( w(x))-P'_\beta( w(y))\bigr) \,dz|
\end{equation*}
integrated over zone $\{z\colon |x-z|\asymp \gamma_0\}$ and $\omega \asymp \gamma_0^{-3}$ and one can estimate it by $\epsilon_1\eta$ easily in the same way. Therefore since $|\Delta w(x)-\Delta w(y)|$ does not exceed $\epsilon_1\eta$ we conclude that $|\nabla ^2w(x)-\nabla ^2w(y)|$ does not exceed $\epsilon_1(\eta + \gamma_0^2)$.

However (\ref{26-4-2}) implies that $\eta \le c\gamma_0^2$ since
$P'_\beta(w)\asymp 1$ in $\cX_2$ and therefore
$|\nabla ^2w(x)-\nabla ^2w(y)|\le \epsilon_1\gamma_0^2$ in $B(x,\gamma_0)$.

Finally, combining this inequality with $\textup{(\ref{26-4-4})}_{2}$ we conclude that $|\nabla w(x)-\nabla w(y)|\le \epsilon_1\gamma_0^3$ in $B(x,\gamma_0)$; finally, combining with $\textup{(\ref{26-4-4})}_{1}$ we conclude that $| w(x)- w(y)|\le \epsilon_1\gamma_0^4$ in $B(x,\gamma_0)$.

\item\label{pf-26-4-3-c}
Therefore $\textup{(\ref{26-4-4})}_{1-4}$ are fulfilled in
$y\in B(x, \epsilon\gamma_0)$ with $\gamma_0$ replaced by
$\gamma_0 (1+C\epsilon_1)$. Further, if we redefine $\gamma_0$ as the minimal scale such that inequalities $\textup{(\ref{26-4-4})}_{1-4}$ are fulfilled in $x$, then $\textup{(\ref{26-4-4})}_{1-4}$ fail in $y\in B(x, \epsilon \gamma_0)$ with $\gamma_0$ replaced by $\gamma_0 (1-C\epsilon_1)$. Therefore with appropriate $\epsilon>0$ we conclude that
$\frac{1}{2}\le \gamma_0(x)/\gamma_0(y)\le 2$.

Obviously, $\gamma_0 \asymp \gamma_{0\old}$ where $\gamma_{0\old}$ was defined by (\ref{26-4-3}) and therefore the same conclusion also holds for $\gamma_{0\old}$.
\end{enumerate}
\end{proof}

Now let us reintroduce the scaling function
\begin{multline}
\gamma_0(x)= \epsilon \Bigl(\min_j | w-2j\beta|^{3} +|\nabla w| ^{4}
+|\nabla^2 w|^{6} + |\nabla^3 w'|^{12} \Bigr)^{\frac{1}{12}}+ \\C_0h ^{\frac{1}{3}}+C_0\eta^{\frac{1}{2}}.
\tag*{$\textup{(\ref*{26-4-3})}^*$}\label{26-4-3-*}
\end{multline}
Then
\begin{claim}\label{26-4-6}
$|x-y|\le 2\gamma_0 (x)\implies \gamma_0(y)\asymp \gamma_0(x)$ and
$\textup{(\ref{26-4-4})}_{1-4}$ hold (with some constant factor in the right-hand expression).
\end{claim}

Consider $B(\bar{x},\bar{\gamma}_0)$, $\bar{\gamma}_0=\gamma_0(\bar{x})$, and scale again $x\mapsto \bar{\gamma}_0^{-1}(x-\bar{x})$,
$\tau \mapsto \bar{\gamma}_0^{-4}\tau$ and respectively
$w\mapsto w_1=\bar{\gamma}_0^{-4}(w-2\bar{\jmath}\beta))$,
$h \mapsto h _1=h \bar{\gamma}_0^{-3}$. Further, since after rescaling
$|\Delta w_1|=O(\eta \bar{\gamma}_0^{-2})$ we set $\eta\mapsto \eta_1= \eta  \bar{\gamma}_0^{-2}$.

Due to cut-off in the end of \ref{26-4-3-*} $h_1\lesssim 1$ and
$\eta_1\lesssim 1$. If $\bar{\gamma}_0\asymp h ^{\frac{1}{3}}$ then
$h _1\asymp 1$ and we are done. If $\bar{\gamma}_0\asymp \eta^{\frac{1}{2}}$ then $\eta_1\asymp 1$ and we proceed to \emph{\nameref{sect-26-4-1-1-3}}.

\subsubsection{Step 2.}
\label{sect-26-4-1-1-2}

So, let $\eta_1\ll 1$. Let us introduce a scaling function in $B(0,1)$ obtained after the previous rescaling:\pagebreak\enlargethispage{\baselineskip}
\begin{equation}
\gamma_1(x)= \epsilon \Bigl(\min_j | w_1+2(\bar{\jmath}-j)\beta \bar{\gamma}_0^{-4}|^{2} +|\nabla w_1| ^{3} +|\nabla^2 w_1|^{6} \Bigr)^{\frac{1}{6}}.
\label{26-4-7}
\end{equation}
Then
\begin{phantomequation}\label{26-4-8}\end{phantomequation}
\begin{align}
&\min_j | w_1+2(\bar{\jmath}-j)\beta \bar{\gamma}_0^{-4}|\le C_0\gamma_1^3,
&&|\nabla w_1|\le C_0\gamma_1^2, \tag*{$\textup{(\ref*{26-4-8})}_{1-2}$}\label{26-4-8-1}\\
&|\nabla ^2 w_1|\le C_0\gamma_1
\tag*{$\textup{(\ref*{26-4-8})}_{3}$}\label{26-4-8-3}.
\end{align}

\begin{remark-foot}\footnotetext{\label{foot-26-24} Cf. Remark~\ref{rem-26-4-2}.}\label{rem-26-4-4}
Since now we do not have the third derivative in (\ref{26-4-7}), we do not need in $w'_1$ in the definition of $\gamma_1$, only in the proof of Proposition~\ref{prop-26-4-5} below.
\end{remark-foot}

\begin{proposition}\label{prop-26-4-5}
\begin{enumerate}[label=(\roman*), wide, labelindent=0pt]
\item\label{prop-26-4-5-i}
$\gamma_1(x)$ is a scaling function:
$|x-y|\le 2\gamma_1(x)\implies \gamma_1(y)\asymp \gamma_1(x)$.

\item\label{prop-26-4-5-ii}
If $\eta_1\le \epsilon_0$ then
\begin{equation}
|\nabla^2 w_1(x)- \nabla^2 w'_1(x)|\le \epsilon_2 \gamma_1,\qquad w'_1=(w'-\bar{\jmath}\beta)\gamma_0^{-4}.
\label{26-4-9}
\end{equation}
\end{enumerate}
\end{proposition}

\begin{proof}
Proof is similar but simpler than one of Proposition~\ref{prop-26-4-3}; it is based on the rescaled version of (\ref{26-4-1})--(\ref{26-4-2}):
\begin{gather}
\frac{1}{4\pi}\Delta w_1 = \eta_1 P_\beta (w_1 \bar{\gamma}_0^4 + 2\bar{\jmath}\beta ),\qquad \eta_1= \eta \bar{\gamma}_0^{-2}\le 1
\label{26-4-10}\\
\intertext{and therefore in $B(0,1)$}
w_1 = -\frac{1}{4\pi} \int |x-z|^{-1}\eta_1
P'_\beta (w_1 (z)\bar{\gamma}_0^4 + 2\bar{\jmath}\beta )\phi(z)\, dz + w'_1.
\label{26-4-11}
\end{gather}
\ \end{proof}

Now let us reintroduce the scaling function
\begin{equation}
\gamma_1(x)= \epsilon
\Bigl(\min_j | w_1+2(\bar{\jmath}-j)\beta\bar{\gamma}_0^{-4} |^{2} +
|\nabla w_1| ^{3} +|\nabla^2 w_1|^{6} \Bigr)^{\frac{1}{6}}+
C_0h _1^{\frac{2}{5}}.
\tag*{$\textup{(\ref*{26-4-7})}^*$}\label{26-4-7-*}
\end{equation}
Then
\begin{claim}\label{26-4-12}
$|x-y|\le 2\gamma_0 (x)\implies \gamma_0(y)\asymp \gamma_0(x)$ and
$\textup{(\ref{26-4-8})}_{1-3}$ hold (with some constant factor in the right-hand expression).
\end{claim}

Let us consider $\bar{x}\in B(0,1)$ (it is a new point), $B(\bar{x},\bar{\gamma}_1)$, $\bar{\gamma}_1=\gamma_1(\bar{x})$, and scale again
$x\mapsto \bar{\gamma}_1^{-1}(x-\bar{x})$,
$\tau \mapsto \bar{\gamma}_1^{-3}\tau$ and respectively
$w_1\mapsto w_2=\bar{\gamma}_1^{-3}w_1$,
$h _1\mapsto h _2=h _1\bar{\gamma}_1^{-\frac{5}{2}}$.

If $\bar{\gamma}_1\asymp h _1^{\frac{2}{5}}$ then $h _2\asymp 1$ and we are done. If $|\nabla w_2|\asymp 1$ we are done as well.

\subsubsection{Step 3.}
\label{sect-26-4-1-1-3}
So, consider the remaining case $|\nabla^2 w_2|\asymp 1$. Then we introduce the scaling function (now, we have no doubt that this is a scaling function):
\begin{equation}
\gamma_2(x)= \epsilon \Bigl(\min_j | w_2+2(\bar{j}-j)\beta \bar{\gamma}_0^{-4}\bar{\gamma}_1^{-3} |+|\nabla w_2| ^{2}\Bigr)^{\frac{1}{2}}+ C_0h _2^{\frac{1}{2}}.
\label{26-4-13}
\end{equation}
Let us consider $\bar{x}\in B(0,1)$ (it is a new point), $B(\bar{x},\bar{\gamma}_2)$, $\bar{\gamma}_2=\gamma_2(\bar{x})$, and scale again
$x\mapsto \bar{\gamma}_2^{-1}(x-\bar{x})$,
$\tau \mapsto \bar{\gamma}_2^{-2}\tau$ and respectively
$w_2\mapsto w_3=\bar{\gamma}_2^{-2}w_2$,
$h _2\mapsto h _3=h _2\bar{\gamma}_2^{-2}$.

If $\bar{\gamma}_2\asymp h _1^{\frac{1}{2}}$ then $h _3\asymp 1$ and we are done. If $|\nabla w_3|\asymp 1$ we are done as well.
\enlargethispage{\baselineskip}

\subsubsection{Step 4.}
\label{sect-26-4-1-1-4}
Finally, introduce
\begin{equation}
\gamma_3(x)= \epsilon
\min_j | w_3+2(\bar{j}-j)\beta \bar{\gamma}_0^{-4}\bar{\gamma}_1^{-3}\bar{\gamma}_2^{-2} | +
Ch _3^{\frac{2}{3}}.
\label{26-4-14}
\end{equation}

\section{Zone \texorpdfstring{$\cX_2$}{X\texttwoinferior}: Semiclassical $\N$-term}
\label{sect-26-4-2}

Now we apply scaling the arguments using scaling functions $\gamma_{1-3}$ constructed above.

We revert our steps. While we call $\gamma_{1-3}$
\emph{relative scaling functions\/}\index{scaling function!relative} let us introduce \emph{absolute scaling functions\/}\index{scaling function!absolute} $\alpha_0(x)=\gamma_0(x)$, $\alpha_1(x)=\gamma_0(x)\gamma_1(x)$, $\alpha_2(x)=\gamma_0(x)\gamma_1(x)\gamma_2(x)$, and $\alpha_3(x)=\gamma_0(x)\gamma_1(x)\gamma_2(x)\gamma_3(x)$\,\footnote{\label{foot-26-25} So far we ignore the very first scaling
$x\mapsto (x-\bar{x})\ell^{-1}$. Therefore really absolute scaling functions would be $\alpha_j\ell$.}.

We need first

\begin{proposition}\label{prop-26-4-6}
Consider $B(0,1)$ and assume that in it
\begin{gather}
|\nabla w|\asymp \theta,\label{26-4-15}\\
\shortintertext{and}
|\nabla^2 w|\le c\theta.
\label{26-4-16}\\
\shortintertext{Let}
\gamma(x)\coloneqq \epsilon \min_j |w-2\mu h j|\theta^{-1}+\hbar ^{\frac{2}{3}}
\label{26-4-17}\\
\shortintertext{and}
\varepsilon \ge \hbar^{\frac{2}{3}-\delta}, \qquad
\hbar\coloneqq h \theta^{-\frac{1}{2}}.
\label{26-4-18}
\end{gather}
Let $\varphi \in \sC^\infty_0([-\epsilon_0,\epsilon_0])$. Then for
$\alpha\le \bar{\gamma}\coloneqq \gamma(\bar{x})$,
\begin{multline}
|\int \phi_\alpha (x)
\Bigl( e_\varphi (x,x,\tau) -P'_{\mu h ,\varphi} \bigl(w(x)+\tau\bigr)\Bigr)\, dx|\le\\
C\mu h ^{-1}\alpha^3+ C\mu h ^{-1}\alpha^3 \bar{\gamma}^{-1}
(\hbar \bar{\gamma}^{-\frac{1}{2}}\alpha^{-1})^s
\label{26-4-19}
\end{multline}
with
\begin{gather}
e_\varphi (x,y,\tau)\coloneqq
\varphi \bigl(h ^2 D^2_{x_3} (\mu h)^{-1}\bigr) e(x,y,\tau)
\label{26-4-20}\\
\intertext{and Weyl expression}
P'_{\beta,\varphi} (w+\tau)=
\const \sum_j \beta (w+\tau-2j\mu h ) ^{\frac{1}{2}}
\varphi (w+\tau-2j\mu h )
\label{26-4-21}
\end{gather}
with the standard constant where for each $x$ only one term is present in this sum. Here we take large $s>0$ as $\hbar \le \bar{\gamma}\alpha$ and $s=0$ otherwise.
\end{proposition}

\begin{proof}
The proof is standard and based on the standard reduction to the canonical form, standard estimates for $U(x,y,t)$ a Schwartz kernel of propagator
$e^{i\hbar^{-1}\theta ^{-1}t H}$:
\begin{equation}
|F_{t\to \hbar^{-1}\tau}
\int \bar{\chi}_T(t) \phi_\alpha(x) U_\varphi (x,x,t) \,dx| \le
C\mu h ^{-1}\alpha^3
\label{26-4-22}
\end{equation}
for $T\asymp 1$ and
\begin{multline}
|F_{t\to \hbar ^{-1}\tau}
\int \phi_\alpha (x) \bigl(\bar{\chi}_T(t) - \bar{\chi}_{\bar{T}}(t)\bigr) U_\varphi (x,x,t) \,dx|\le \\
C\mu h ^{-1} \alpha^3 (\hbar \bar{\gamma}^{-\frac{1}{2}}\alpha^{-1})^s
\label{26-4-23}
\end{multline}
with $\bar{T}=\epsilon\gamma^{\frac{1}{2}}$ where $U_\varphi$ is defined similarly to (\ref{26-4-20}).

Here obviously we can skip in (\ref{26-4-19}) all perturbation terms in the argument and in $\phi_{\alpha}$ transformed. \end{proof}

Then plugging into (\ref{26-4-19}) $\alpha=\gamma$ ($=\bar{\gamma}$), we have factor $(h \gamma ^{-\frac{3}{2}})^s$ in the second term.

There are two cases: $\theta \le \mu h $ and $\theta \ge \mu h $.

In the former case $\theta \le \mu h $, taking the sum over $\gamma$-partition of $1$-element we estimate the same expressions with $\phi_1$ instead of $\phi_{\gamma}$ by their right-hand expressions integrated over
$\gamma ^{-3}\,d\gamma $ which returns $C\mu h ^{-1}$.

On the other hand, in the latter case $\theta \ge \mu h $, let
$\lambda=\mu h\theta^{-1}$. Taking the sum over $\gamma_3$-partition of $\lambda$-element $\phi_\lambda$ by the right-hand expressions which returns $C\mu h _2^{-1}\lambda^2$. In this case summation over $\lambda$-partition return $C\theta h ^{-2}$.

In both cases we arrive to the following estimate:
\begin{equation}
|\int \phi (x)
\Bigl( e_\varphi (x,x,\tau) -P'_{\mu h ,\varphi} \bigl(w(x)+\tau\bigr)\Bigr)\,dx| \le
C\theta h ^{-2}+C\mu h ^{-1}.
\label{26-4-24}
\end{equation}

Applying this estimate after $\alpha_2$-scaling we conclude that the left-hand expression with $\phi=\phi_{\alpha_2}$ (in the non-scaled settings) does not exceed $C\theta h ^{-2}\alpha_2^2 + C\mu h ^{-1}\alpha_2^2$. Here the first term is $O(h ^{-2}\alpha_2^3)$ and the summation over $1$-element returns
$O(h ^{-2})$.

Consider the second term $C\mu h ^{-1}\alpha_2^2=
C\mu h ^{-1}\gamma_0^2\gamma_1^2\gamma_2^2$. Then summation over $\alpha_2$-partition of $\alpha_1$-element returns
\begin{equation*}
C\mu h ^{-1}\gamma_0^2\gamma_1^2
\int \gamma_2^2\times \gamma_2^{-3}\,dx \asymp
C\mu h ^{-1}\gamma_0^2\gamma_1^2(1+|\log \hat{\gamma}_2|)
\end{equation*}
where $\hat{\gamma}_k$ is a minimal value of $\gamma_k$ over $\gamma_{k-1}$-element. However in fact there will be no logarithmic factor because in virtue of equation (\ref{26-4-1}) there is a positive eigenvalue of $\Hess w_2$ of the maximal size (cf. Section \ref{book_new-sect-5-2-1}). Therefore, in fact, we have $C\mu h ^{-1}\gamma_0^2\gamma_1^2$.

Now summation over $\alpha_1$-partition of $\alpha_0$-element returns
\begin{equation*}
C\mu h ^{-1}\gamma_0^2 \int \gamma_1^2\times \gamma_1^{-3}\,dx \asymp
C\mu h ^{-1}\gamma_0^2(1+|\log \hat{\gamma}_1|).
\end{equation*}

Finally, summation over $\alpha_0$-partition of $1$-element returns
\begin{equation*}
C\mu h ^{-1}\int \gamma_0^2 (1+|\log \hat{\gamma}_1|)
\times \gamma_0^{-3}\, dx \asymp
C\mu h ^{-1}\hat{\gamma}_0^{-1} (1+|\log \check{\gamma}_1|),
\end{equation*}
where $\check{\gamma}_k$ is an absolute minimum of $\gamma_k$. However $\gamma_0^2\ge \eta$ and $\gamma_1\ge \eta$ and therefore expression above does not exceed $C\mu h ^{-1} \eta^{-\frac{1}{2}}(1+|\log \eta|)$.

\begin{remark}\label{rem-26-4-7}
Recall that we estimated only the cut-off expression. To calculate the full expression we need to calculate also the contribution of the zone
$\{\xi_3^3\ge \mu h\}$. However this is easy.

Really, instead of $\varphi (h ^2D_3^2 /(\mu h ))$ consider
$\varphi' (h ^2D_3^2 /\theta)$ with $\varphi'\in \sC^\infty ([1,4])$ and $\epsilon \mu h \le \theta \le 1$. Without any scaling one can prove easily that such modified expression \textup{(\ref{26-4-24})} does not exceed
$C\theta h ^{-2}$. We leave easy details to the reader.
\end{remark}

Therefore plugging $\theta =2^n\mu h $ and taking a sum over
$n=0,\ldots, \lfloor |\log_2 \mu h |\rfloor$ we get the required expressions. Also note that in such expressions we need to consider perturbed argument
$w+\mu^{-2}\omega_1+ j \mu^{-1}h \omega_2$ (all other terms which are $O(\mu^{-4}+\mu^{-\frac{3}{2}}h )$ could be skipped and also a perturbed function transformed).

\begin{remark}\label{rem-26-4-8}
\begin{enumerate}[label=(\roman*), wide, labelindent=0pt]
\item\label{rem-26-4-8-i}
However we need to get rid off these perturbations for
$\theta \le \mu h ^{1-\delta}$ only. Indeed, for
$\theta \ge \mu h ^{1-\delta}$ we need canonical form \emph{only\/} to study propagation and calculations could be performed without it. But then getting rid off the perturbation is trivial provided this perturbation does not exceed
$C\mu h ^{1+\delta}$ which is the case if
\begin{equation}
\mu \ge h ^{-\frac{1}{3}-\delta}.
\label{26-4-25}
\end{equation}

\item\label{rem-26-4-8-ii}
Note that in the smooth approximation contributions of $\cX_1$ is always less than $CZ^{\frac{2}{3}-\delta_1}$,
$C\max (Z^{\frac{5}{3}}, Z^{\frac{3}{5}}B^{\frac{4}{5}})Z^{-\delta_1}$ or
$C\max (Z^{\frac{5}{3}}, Z^{\frac{3}{5}}B^{\frac{4}{5}})Z^{-\delta_1}+ CB^{\frac{1}{2}}Z^{\frac{7}{6}-\delta_1}$ respectively with an exception of the first two and only in the case of the threshold value
$\ge Z^{-\frac{1}{3}-\delta_2}$. However in this case $\eta \ge Z^{-\delta_3}$ and the errors of the smooth approximation approach in fact are less than
$CZ^{\frac{2}{3}-\delta_4}$, $C Z^{\frac{5}{3}-\delta_4}$ as well. Therefore there are in fact no exception.

\item\label{rem-26-4-8-iii}
It is important to have $\varepsilon \le \mu h$ and with $\varepsilon=\hbar^{\frac{2}{3}-\delta}$, $\hbar=h\theta^{-\frac{1}{2}}$ it means $\mu \ge h ^{-\frac{1}{3}-\delta}\theta^{\frac{1}{3}-\frac{1}{2}\delta}$ which is due to (\ref{26-4-25}).
\end{enumerate}
\end{remark}

Therefore we conclude that in the completely non-scaled settings with
$\phi =\phi_\ell (x)$
\begin{multline}
|\int \phi (x)
\Bigl( e (x,x,\tau) -P'_B \bigl(W(x)+\tau\bigr)\Bigr)|\le \\
C\zeta^2\ell^2 + CB\ell \zeta^{-\frac{1}{2}} (1+|\log \ell^2\zeta|)
\label{26-4-26}
\end{multline}
where the first term is $Ch ^{-2}$ and the second term is
$C\mu h ^{-1}\eta^{-\frac{1}{2}}(1+|\log \eta|)$; recall that
$h ^{-1}\asymp \ell\zeta $, $\mu \asymp B\ell\zeta^{-1}$ and
$\eta \asymp \ell^2\zeta$. In comparison with the non-degenerate case $|\nabla W^\TF_B|\asymp \zeta^2\ell^{-1}$ we acquired the last term.

Assume first that condition (\ref{26-2-28}) is fulfilled. Then
\begin{enumerate}[label=(\roman*), wide, labelindent=0pt]
\item\label{sect-26-4-2-i}
If $B\le Z^{\frac{4}{3}}$, $\ell\le Z^{-\frac{1}{3}}$ we have
$\zeta=Z^{\frac{1}{2}}\ell^{-\frac{1}{2}}$ and the right-hand expression of (\ref{26-4-26}) returns $CZ\ell + CB\ell^{\frac{5}{4}}Z^{-\frac{1}{4}}$ and the summation with respect to $\ell$ results in its value as $\ell=Z^{-\frac{1}{3}}$ i.e. $CZ^{\frac{2}{3}}+CBZ^{-\frac{2}{3}}$ with the dominating first term.

\item\label{sect-26-4-2-ii}
If $B\le Z^{\frac{4}{3}}$, $\ell\ge Z^{-\frac{1}{3}}$ we have $\zeta=\ell^{-2}$ and the right-hand expression of (\ref{26-4-26}) returns
$C\ell^{-2}+CB\ell^2$. We need to sum as long as $\mu h\le 1$ i.e.
$Z^{-\frac{1}{3}}\le \ell \le B^{-\frac{1}{4}}$ and the summation returns
$CZ^{\frac{2}{3}} + CB^{\frac{1}{2}}$ with the dominating first term.

\item\label{sect-26-4-2-iii}
If $ Z^{\frac{4}{3}}\le B \le Z^2$,
$\ell\le B^{-1}Z$ we have $\zeta=Z^{\frac{1}{2}}\ell^{-\frac{1}{2}}$ and the right-hand expression of (\ref{26-4-26}) returns
$CZ\ell + CBZ^{-\frac{1}{4}}\ell^{\frac{5}{4}}$. Then summation results in
$CZ^2B^{-1} + CZ^2B^{-\frac{1}{4}}Z\lesssim Z^{\frac{2}{3}}$.
\end{enumerate}

If assumption (\ref{26-2-28}) is not fulfilled we can estimate in the first term of the right-hand expression of (\ref{26-4-26}) parameter $\zeta$ from above by
$\min (Z^{\frac{1}{2}}\ell^{-\frac{1}{2}},\ell^{-2})$ and in the second term from below by
$\zeta_m=\min (Z_m^{\frac{1}{2}}\ell_m^{-\frac{1}{2}},\ell_m^{-2})$ if
$\ell = \ell_m\coloneqq |x-\y_m|$ and repeat all above arguments.

Therefore we arrive to the Statement~\ref{prop-26-4-9-i}   of Proposition~\ref{prop-26-4-9} below. Furthermore, note that for  $B\le Z$ the zone $\cX_2$ is contained in the zone
$\{x\colon \ell(x)\ge B^{-\frac{3}{10}}\ge Z^{-\frac{3}{10}}\}$ (really,
$\mu \ge h^{-\frac{1}{3}}$ in $\cX_2$) and we arrive to the Statement~\ref{prop-26-4-9-ii}  below.\enlargethispage{\baselineskip}

\begin{proposition}\label{prop-26-4-9}
\begin{enumerate}[label=(\roman*), wide, labelindent=0pt]
\item\label{prop-26-4-9-i}
For $B \le Z^2$ the contribution of  zone $\cX_2$ to the expression
\begin{equation}
\int \bigl( e(x,x,\nu)-P'_B(W(x)+\nu)\bigr)\,dx
\label{26-4-27}
\end{equation}
does not exceed $CZ^{\frac{2}{3}}$.
\item\label{prop-26-4-9-ii}
For $B \le Z$ the contribution of zone $\cX_2$ to the expression \textup{(\ref{26-4-27})} does not exceed $CZ^{\frac{2}{3}-\delta}$.
\end{enumerate}
\end{proposition}

\section{Zone \texorpdfstring{$\cX_2$}{X\texttwoinferior}: Semiclassical $\D$-term}
\label{sect-26-4-3}

Further, we need to estimate the \emph{semiclassical $\D$-term}
\begin{equation}
\D \Bigl(\phi_\alpha [ e(x,x,\nu) - P'_B(W(x)+\nu)],
\phi_\alpha [ e(x,x,0) - P'_B(W(x)]\Bigr)
\label{26-4-28}
\end{equation}
where $\phi_\alpha (x)$ is an $\alpha$-admissible function. Again we revert our steps.

Consider $B(x,\bar{\alpha}_3)$ and apply Fefferman--de Llave decomposition (\ref{26-3-8}). Then in the framework of Proposition~\ref{prop-26-4-6} contribution of pairs
$B(x,\alpha)$ and $B(y,\alpha)$ with $3\alpha \le |x-y|\le 4 \alpha$ does not exceed the right-hand expression of (\ref{26-4-19}) squared and multiplied by $\alpha^{-4}$, where $C\alpha^{-3}$ estimates the number of the pairs and $\alpha^{-1}$ is the inverse distance. At this moment we discuss a cut-off version of (\ref{26-4-28}) i.e. with $e_\varphi(.,.,.)$ and $P'_{\mu h ,\varphi}(.)$.
So, we have
\begin{equation*}
C\mu^2h ^{-2}
\bigl(1+ \gamma_3^{-1}(h _2\gamma_3^{-\frac{1}{2}}\alpha^{-1})^s\bigr)^2 \alpha^4.
\end{equation*}
Then integrating this expression with respect to $\alpha^{-1}\,d\alpha$ with
$\alpha\le \gamma_3$ we arrive to
$C\mu^2h ^{-2} \bigl(\gamma_3^4+ h _2^2 \gamma_3\bigr)$.

Therefore we conclude that
\begin{claim}\label{26-4-29}
A cut-off version of expression (\ref{26-4-28}) with $\alpha=\alpha_3$ does not exceed
$C\mu^2h ^{-2} \bigl(\gamma_3^4+ h _2^2 \gamma_3\bigr) \alpha_2^3$
\end{claim}
The first term here $C\mu^2h ^{-2} \gamma_3^4\alpha_2^3$ does not exceed $C\mu^2h ^{-2}\alpha_3^3$ (recall that $\alpha_j=\alpha_{j-1}\gamma_j$) and the summation over $\alpha_3$-partition of $1$-element returns $C\mu^2h ^{-2}$.

Consider the second term $C\mu^2h ^{-2} h _2^2 \gamma_3 \alpha_2^3$; its summation with respect to $\alpha_3$-partition of $\alpha_2$-element returns
$C\mu^2h ^{-2}\alpha_2^3h _2^2\int \gamma_3^{-2}\,d\gamma_3\lesssim
C\mu^2h ^{-2}\alpha_2^3$ (really, recall that according to (\ref{26-4-14}) $\gamma_3\ge h _3^{\frac{2}{3}}$) and then the summation over $\alpha_2$-partition of $1$-element returns $C\mu^2h ^{-2}$.

Consider $B(x,\bar{\alpha}_2)$ and apply Fefferman--de Llave decomposition (\ref{26-3-8}). There are two kinds of pairs:
\begin{enumerate}[label=(\alph*), wide, labelindent=0pt]
\item\label{sect-26-4-3-a}
those with $|x-y|\ge \epsilon (\alpha_3(x)+\alpha_3(y))$ for all $(x,y)$ and
\item\label{sect-26-4-3-b}
those with $|x-y|\le \min(\alpha_3(x),\alpha_3(y))$ for all $(x,y)$.
\end{enumerate}

The total contribution of the pairs of the second type (i.e. summation is taken over \emph{all\/} pairs of $\alpha_3$-elements in $B(0,1)$) as we already know is $O(\mu^2h ^{-2})$. Meanwhile according to the analysis in the previous Subsection~\ref{sect-26-4-2} a contribution of one pair of kind \ref{sect-26-4-3-a} does not exceed
\begin{multline*}
C \underbrace{\bigl(h ^{-2} +
\mu h ^{-1}\gamma_0^{-1}\gamma_1^{-1}\gamma_2^{-1}
\gamma_3^{-1-\delta}\bar{\gamma}_3^\delta\bigr)\alpha_3^3}_{\text{at } x} \times\\
\underbrace{\bigl(h ^{-2} +
\mu h ^{-1}\gamma_0^{-1}\gamma_1^{-1}\gamma_2^{-1}\gamma_3^{-1-\delta}
\bar{\gamma}_3^\delta\bigr)\alpha_3^3}_{\text{at } y} \times |x-y|^{-1}
\end{multline*}
where each of two first factors is just an estimate of the integral (\ref{26-4-24}) calculated over corresponding domain. If we take the first term in the first factor and sum over $\alpha_3$-partition of $1$-element we get only the second factor multiplied by $\mu h ^{-1}$ and then summation was done in the previous subsection. Similarly we can deal with the first term in the second factor. On the other hand, if we take only second factors and sum over pairs of $\alpha_3$-subelements of the same $\alpha_2$-element we get
\begin{equation*}
C\mu^2h ^{-2} \gamma_0^{-2}\gamma_1^{-2}\gamma_2^{-2} \alpha_2^5\asymp
C\mu^2h ^{-2}\alpha_2^3.
\end{equation*}
Then summation with respect to $\alpha_2$-partition of $1$-element returns $C\mu^2h ^{-2}$.

Consider now $B(\bar{x},\bar{\alpha}_1)$ and apply here Fefferman-de Llave decomposition (\ref{26-3-8}). There are two kinds of pairs:
\begin{enumerate}[label=(\alph*), wide, labelindent=0pt]
\item\label{sect-26-4-3-a'}
those with $|x-y|\ge \epsilon (\alpha_2(x)+\alpha_2(y))$ for all $(x,y)$ and
\item\label{sect-26-4-3-b'}
those with $|x-y|\le \min(\alpha_2(x),\alpha_2(y))$ for all $(x,y)$.
\end{enumerate}
According to the above analysis the total contribution of the pairs of the second type (i.e. the summation is taken over \emph{all\/} pairs of $\alpha_2$-elements in $B(0,1)$) as we already know is $O(\mu^2h ^{-2})$. Meanwhile according to the analysis in the previous Subsection~\ref{sect-26-4-2} a contribution of one pair of kind \ref{sect-26-4-3-a'} does not exceed
\begin{equation*}
C \underbrace{\bigl(h ^{-2} +
\mu h ^{-1}\gamma_0^{-1}\gamma_1^{-1}\gamma_2^{-1}\bigr)\alpha_3^3}
_{\text{at } x} \times
\underbrace{\bigl(h ^{-2} +
\mu h ^{-1}\gamma_0^{-1}\gamma_1^{-1}\gamma_2^{-1}\bigr)\alpha_3^3}
_{\text{at } y} \times |x-y|^{-1}
\end{equation*}
and here again we can ``forget'' about the first terms in each factor. Then the summation with respect to pairs of $\alpha_2$-subelements of the same $\alpha_1$-element results in
$C\mu^2h ^{-2} \gamma_0^{-2}\gamma_1^{-2}\alpha_1^5\asymp
C\mu^2h ^{-2} \alpha_1^3$ where we avoid logarithmic factor in virtue of the same positive eigenvalue of $\Hess w$. Summation with respect to $\alpha_1$-admissible partition of $1$-element returns $C\mu^2h ^{-2}$.

Consider now $B(\bar{x},\bar{\alpha}_0)$ and apply here Fefferman--de Llave decomposition. Again there are two kinds of pairs and the total contributions of the pairs of the second kind we already calculated and contribution of the pairs of $\alpha_1$-subelements of the same $1$-element does not exceed
\begin{equation*}
C\mu ^2h ^{-2}\gamma_0^{-2}(1+|\log \hat{\gamma}_1|)^2\alpha_0^5\lesssim
C\mu ^2h ^{-2}(1+|\log \eta|)^2\alpha_0^3
\end{equation*}
and the summation with respect to $\alpha_0$-partition of $1$-element returns\newline
$C\mu ^2h ^{-2}(1+|\log \eta|)^2$.

Finally, consider $B(\bar{x},1)$ and apply here Fefferman-de Llave decomposition. Again there are two kinds of pairs and the total contributions of the pairs of kind (b) we already estimated while the total contribution of the pairs of kind (a) does not exceed
$Ch ^{-4}+ C\mu ^2h ^{-2}\eta^{-1}(1+|\log \eta|)^2$ where we recalled the forgotten terms.

Again, this is estimate for cut-off expression. Going to uncut expression we repeat the same trick as before but as we deal with $\D$-term we need to consider ``mixed'' pairs when one ``factor'' comes with $\theta$ and another with $\theta'$ but then contribution of such pair does not exceed
$C(\nu h ^{-4})^{\frac{1}{2}} (\nu' h ^{-4})^{\frac{1}{2}} $. Easy details are left to the reader.

Therefore returning to the original scale we conclude that the contribution of $\ell$-layer to (\ref{26-4-28}) does not exceed
\begin{equation}
C\zeta^4\ell^3 + CB^2\ell \zeta ^{-1}(1+|\log \ell^2 \zeta|)^2
\label{26-4-30}
\end{equation}
which is exactly the right-hand expression of (\ref{26-4-26}) squared and multiplied by $\ell^{-1}$ due to scaling.

\begin{remark}\label{rem-26-4-10}
In comparison with the non-degenerate case $|\nabla W^\TF_B|\asymp \zeta^2\ell^{-1}$ we acquired the last term.
\end{remark}

Assume first that condition (\ref{26-2-28}) is fulfilled. Then

\begin{enumerate}[label=(\roman*), wide, labelindent=0pt]
\item\label{sect-26-4-3-i}
For $B\le Z^{\frac{4}{3}}$, $\ell\le Z^{-\frac{1}{3}}$ we have
$\zeta=Z^{\frac{1}{2}}\ell^{-\frac{1}{2}}$ and expression (\ref{26-3-41}) returns $CZ^2\ell + CB^2\ell^{\frac{3}{2}}Z^{-\frac{1}{2}}$ and the summation with respect to $\ell$ results in its value as $\ell=Z^{-\frac{1}{3}}$ i.e. $CZ^{\frac{5}{3}}+CB^2Z^{-1}$ with the dominating first term.

\item\label{sect-26-4-3-ii}
For $B\le Z^{\frac{4}{3}}$, $\ell\ge Z^{-\frac{1}{3}}$ we have $\zeta=\ell^{-2}$ and expression (\ref{26-3-41}) returns
$C\ell^{-5}+CB^2\ell^3$. We need to sum as long as $\mu h\le 1$ i.e.
$Z^{-\frac{1}{3}}\le \ell \le B^{-\frac{1}{4}}$ and the summation returns
$CZ^{\frac{5}{3}} + CB^{\frac{5}{4}}$ with the dominating first term.

\item\label{sect-26-4-3-iii}
For $ Z^{\frac{4}{3}}\le B \le Z^2$,
$\ell\le B^{-1}Z$ we have $\zeta=Z^{\frac{1}{2}}\ell^{-\frac{1}{2}}$ and expression (\ref{26-3-41}) returns
$CZ^2\ell + CB^2Z^{-\frac{1}{2}}\ell^{\frac{3}{2}}$. Then the summation results in $CZ^3B^{-1}+ CB^{\frac{1}{2}} Z \lesssim Z^{\frac{3}{5}}B^{\frac{4}{5}}$.
\end{enumerate}

Sure, we need to consider also mixed pairs of the layers and their contributions are
\begin{equation*}
C\bigl (\zeta^2\ell^2 +C B\ell \zeta^{-\frac{1}{2}}(1+|\log \ell^2\zeta|)\bigr)
\times
\bigl (\zeta^{\prime\, 2}\ell^{\prime\,2} +
C B\ell' \zeta^{\prime\,-\frac{1}{2}}(1+|\log \ell^{\prime\,2}\zeta'|)\bigr) \times(\ell +\ell')^{-1}
\end{equation*}
and the summation with respect to $\ell$ and $\ell'$ returns the same expression as above.

If assumption (\ref{26-2-28}) is not fulfilled we use the same trick as in the previous Subsection~\ref{sect-26-4-2}. Therefore we arrive to the Statement~\ref{prop-26-4-11-i} of Proposition~\ref{prop-26-4-11} below. Applying the same arguments as in the proof of Proposition~\ref{prop-26-4-9} we arrive to the Statement~\ref{prop-26-4-11-ii}:

\begin{proposition}\label{prop-26-4-11}
\begin{enumerate}[label=(\roman*), wide, labelindent=0pt]
\item\label{prop-26-4-11-i}
For $B \le Z^2$ the contribution of the zone $\cX_2\times \cX_2$ to expression \textup{(\ref{26-4-28})} does not exceed
$C\max(Z^{\frac{5}{3}}, Z^{\frac{3}{5}}B^{\frac{4}{5}})$.
\enlargethispage{2\baselineskip}

\item\label{prop-26-4-11-ii}
For $B \le Z$ contribution of the zone $\cX_2\times \cX_2$ to expression \textup{(\ref{26-4-28})} does not exceed $CZ^{\frac{5}{3}-\delta}$.
\end{enumerate}
\end{proposition}

\section{Semiclassical $\T$-term}
\label{sect-26-4-4}

\subsection{Semiclassical $\T$-term: zone \texorpdfstring{$\cX_1$}{X\textoneinferior} extended}
\label{sect-26-4-4-1}
First let us cover zone $\cX_1$ extended.

\subsubsection{What is \texorpdfstring{$\cX_1$}{X\textoneinferior} extended?}
\label{sect-26-4-4-1-1}
To define this zone $\cX_1'\coloneqq\{x\colon \ell (x)\le r\}$, where we define
$W $ using $P$ rather than $P_B$ let first us analyze the precise extension in the framework of $\N$- and $\D$-terms. For $\N$-term we have approximation
error and corresponding $\D$-term not exceeding respectively
$C(\mu h )^2 \eta^{-\frac{1}{2}}h ^{-3} = CB^2\zeta^{-\frac{3}{2}}\ell^2$
and this expression squared and multiplied by $\ell^{-1}$ i.e. $CB^4\zeta^{-3}\ell^3$. Finally, both expressions are summed to their values as $\ell=r$. Recall that either $\zeta=Z^{\frac{1}{2}}\ell^{-\frac{1}{2}}$ or $\zeta=\ell^{-2}$.

\begin{enumerate}[label=(\roman*), wide, labelindent=0pt]
\item\label{sect-26-4-4-1-1-i}
Consider first  $B\le Z^{\frac{4}{3}}$. Then we want these errors not to exceed respectively
$CZ^{\frac{2}{3}}$ and $CZ^{\frac{5}{3}}$. Obviously, if $r\ge Z^{-\frac{1}{3}}$ the first condition is more restrictive. In this case plugging  $\zeta=r^{-2}$  and we set $CB^2 r^5=Z^{\frac{2}{3}}$ i.e.
$r= B^{-\frac{2}{5}}Z^{\frac{2}{15}}$. Then $r\ge Z^{-\frac{1}{3}}$ as long as
$B\le Z^{\frac{7}{6}}$.

Then $\mu = Br^3= B^{-\frac{1}{5}}Z^{\frac{6}{15}}\ge Z^{\frac{1}{6}}$ and
$h =B^{-\frac{2}{5}}Z^{\frac{2}{15}}\ge Z^{-\frac{1}{3}}$; and one can see easily that $\mu \gtrsim h ^{-\frac{1}{2}}$ provided $\ell(x)\ge r$.

\item\label{sect-26-4-4-1-1-ii}
Consider next $Z^{\frac{7}{6}}\le B\le Z^{\frac{4}{3}}$.  Then for
$r\le Z^{-\frac{1}{3}}$ we have $\zeta=Z^{\frac{1}{2}}r^{-\frac{1}{2}}$. In this case the second requirement is more restrictive and we set
$B^4 Z^{-\frac{3}{2}}r^{\frac{9}{2}}= Z^{\frac{5}{3}}$, i.e.
$r= B^{-\frac{8}{9}}Z^{\frac{19}{27}}$. Then
$\mu = B^{-\frac{1}{3}}Z^{\frac{5}{9}}$ and
$h = Z^{-\frac{1}{2}}r^{-\frac{1}{2}}= B^{\frac{4}{9}}Z^{-\frac{23}{27}}$
and $\mu \ge h ^{-\frac{3}{7}}$; this is better than
$h ^{-\frac{1}{3}}$.

However, we can do better than this: observe that $\mu h \le \eta $ if and only if $r\ge B^2Z^{-3}$ i.e. $B\le Z^{\frac{50}{39}}$, which is greater than $Z^{\frac{7}{6}}$  but  less than $Z^{\frac{4}{3}}$, so we test
$\mu $ and $h $ in this case: $\mu = Z^{\frac{5}{39}}$ and
$h = Z^{\frac{11}{39}}$ and $\mu \ge h ^{-\frac{5}{11}}$ provided
$\ell(x)\ge r$.

If
$B\ge Z^{\frac{5}{11}}$ we will use another estimate for $\D$-term: namely it does not exceed $(\mu h )^3 h ^{-6}r^{-1}= B^3r^5$  and we want it not to exceed  $Z^{\frac{5}{3}}$, so $r=Z^{\frac{1}{3}}B^{-\frac{3}{5}}$  (which is still less than $Z^{-\frac{1}{3}}$) and
$\mu = B^{\frac{1}{10}}$ and $h = B^{\frac{3}{10}}Z^{-\frac{2}{3}}$ and we test it as $B=Z^{\frac{4}{3}}$ when $\mu =B^{\frac{1}{10}}$ and
$h =B^{-\frac{1}{5}}$, so exponent $-\frac{5}{11}$ fits again.

\item\label{sect-26-4-4-1-1-iii}
Finally, if $Z^{\frac{4}{3}}\le B\le Z^3$ then the error $\D$-term does not exceed $B^3r^5$ and we want it not to exceed $B^{\frac{4}{5}}Z^{\frac{3}{5}}$.
So, we pick up $r= B^{-\frac{11}{25}}Z^{\frac{3}{25}}$,
$\mu = B^{\frac{17}{50}}Z^{-\frac{8}{25}}$,
$h = B^{\frac{11}{50}}Z^{-\frac{14}{25}}$ and exponent $-\frac{5}{11}$ fits again.
\end{enumerate}

\subsubsection{When we can use the same method for $\T$-term?}
\label{sect-26-4-4-1-2}
As far as semiclassical $\T$-expression is concerned an approximation error of such approach in the localized and scaled settings is
$C(\mu h )^{\frac{5}{2}}h ^{-3}$\,\footnote{\label{foot-26-26} If instead of $P_B(W)$ we use $P(W)  +\frac{1}{2}P''(W)B^2$ rather than $P(W)$. This modification does not affect our previous arguments.}  which is $O(h ^{-1})$ only if $\mu \ge h ^{-\frac{1}{5}}$. One can extend it to
$\mu \ge h ^{-\frac{1}{5}-\delta}$ using the same trick as in Remark~\ref{rem-26-4-8} but we need to do better than this.

On the other hand, observe that in fact an approximation error does not
exceed\footref{foot-26-26}  $C(\mu h )^3 \eta^{-\frac{1}{2}}h ^{-3}\asymp
CB^3 \ell^2 \zeta^{-\frac{7}{2}} $ in the localized scaled settings. The simple proof is left to the reader. This is translated into
$CB^3 \ell^2 \zeta^{-\frac{3}{2}}$ into unscaled settings. Summation with respect to $\ell\le r$ returns its value as $\ell=r$.

So we get $CB^3 r^5$ as $B\le Z^{\frac{4}{3}}$ and $r\ge Z^{-\frac{1}{3}}$. Consider first  $B\le Z$. In this case we want $CB^3 r^5\le CZ^{\frac{5}{3}}$ and we pick up $r= B^{-\frac{3}{5}}Z^{\frac{1}{3}}$ which is greater than $Z^{-\frac{1}{3}}$ provided $B\le Z^{\frac{10}{9}}$. Then
$\mu = Br^3 = B^{-\frac{4}{5}}Z\ge Z^{\frac{1}{5}}$ and
$h = r=B^{-\frac{3}{5}}Z^{\frac{1}{3}}\ge Z^{-\frac{4}{15}}$ and
$\mu \ge h ^{-\frac{3}{4}}$.

If $Z\le B\le Z^{\frac{4}{3}}$ but still $r\ge Z^{-\frac{1}{3}}$  we want
$CB^3 r^5\le CZ^{\frac{4}{3}}B^{\frac{1}{3}}$\,\footnote{\label{foot-26-27} Because the semiclassical remainder estimate is not better than this. Actually, due to Remark~\ref{rem-26-3-6} we can do marginally better than this, but we leave this analysis to the reader.} and we pick up
$r= B^{-\frac{8}{15}}Z^{\frac{4}{15}}$ and we want it to be greater than
$Z^{-\frac{1}{3}}$ i.e. $B\le Z^{\frac{9}{8}}$. Then
$\mu = Br^3 = B^{-\frac{3}{5}}Z^{\frac{4}{5}}\ge Z^{\frac{1}{8}}$ and
$h =B^{-\frac{8}{15}}Z^{\frac{4}{15}}\ge B^{-\frac{1}{3}}$ and
$\mu \ge h ^{-\frac{3}{8}}$. It is not as good as $\mu \ge h ^{-\frac{3}{7}}$.

Then we use the smooth canonical form. In the operator perturbation terms have factors $\mu^{-2}$, $\mu^{-4}$ etc and we can use the standard approach to get rid off $\mu^{-4}\le \mu h $, so we need to consider only $\mu^{-2}$.

However let before scaling the second derivative of $W$ be of magnitude $\theta$; then after scaling it becomes of magnitude $\theta'=\gamma_0^2\gamma^2_1 \theta$ and then the perturbation is of magnitude $\theta \mu^{-2}$ but contribution of the error will be (after we compare the true Riemann sum and the corresponding integral and their difference $Ch ^{-3}\nu \mu^{-2}(\mu h )^2 (\theta')^{-\frac{1}{2}}\times \alpha_1^3\le
C \theta^{\frac{1}{2}}h ^{-1} \gamma_0^{-1}\gamma_1^{-1} \alpha_1^3\le
Ch ^{-1}\gamma_1^{-\frac{1}{2}}\alpha_1^3$ where we used that
$\theta \le C\gamma_0^2\gamma_1$.
Then summation over $\alpha_1$-partition of $\alpha_0$ element returns
$Ch ^{-1} \alpha_0^3$ and the summation over $\gamma_0$-partition returns
$Ch ^{-1}$ as desired. Therefore we covered zone $\cX_1$ for $\T$-term.

\subsection{Semiclassical $\T$-term: zone \texorpdfstring{$\cX_2$}{X\texttwoinferior}}
\label{sect-26-4-4-2}

\subsubsection{Tauberian estimate.}
\label{sect-26-4-4-2-1}

Tauberian estimate for cut-off expression is rather simple:
\begin{equation*}
C\mu h ^{-1} \gamma_0^{-1}\gamma_1^{-1}\gamma_2^{-1}\times h \gamma_0^{-3}\gamma_1^{-\frac{5}{2}}\gamma_1^{-2}\times \gamma_0^4 \gamma_1^3 \gamma_2^2 \alpha_2^3\asymp
C\mu \gamma_1^{-\frac{1}{2}}\gamma_2^{-1}\alpha_2^3
\end{equation*}
which nicely sums to $C\mu $ without logarithm due to the same positive eigenvalue arguments as before; for $\theta$-cut-off with $\theta \ge \mu h $ we get the same albeit with $\gamma_j$ defined by the same formula albeit with $(w_j - 2j\mu h s_j^{-1})$ replaced by $\theta s_j^{-1}$ where $s_j^{-1} $ means the scale; and this should be multiplied by $\theta/(\mu h )$. The result nicely sums to $Ch ^{-1} $. This is what was required.

\subsubsection{Magnetic Weyl expression.}
\label{sect-26-4-4-2-2}
Now we will get the same answer albeit $C\mu^{-4}$ term will be supplemented by $C\mu^{-\frac{3}{2}}h $ which in cut-off sum adds $C\mu^{-\frac{3}{2}}h \times \mu h ^{-2}\le C h ^{-1}$.

We can use the standard approach, with an error
$C\mu^{-\frac{3}{2}}h \times \theta/(\mu h ) \times \mu h ^{-2}\asymp C\theta \mu^{-\frac{3}{2}}h ^{-2}$ which means that we can take
$\theta = \mu ^{\frac{3}{2}}h $ which is sufficient to deal with with
$\theta \ge C\mu ^{\frac{3}{2}}h $; in particular, for
$\mu \ge h ^{\frac{2}{3}}$ we are done. But for
$\theta \ge \mu h ^{1-\delta}$ we can apply the weak magnetic field approach, which is sufficient. So we arrive to inequality
\begin{equation}
|\int^\tau_{-\infty} \int \phi(x)
\Bigl( e_\varphi (x,x,\tau) -
P'_{\beta,\varphi} \bigl(w(x)+\tau\bigr)\Bigr)\, dx\,d\tau |\le Ch ^{-1}
\label{26-4-31}
\end{equation}
and therefore we arrive to

\begin{proposition}\label{prop-26-4-12}
\begin{enumerate}[label=(\roman*), wide, labelindent=0pt]
\item\label{prop-26-4-12-i}
If $B \le Z^2$ the contribution of zone $\cX_2$ to the expression
\begin{equation}
\int^\tau_{-\infty} \int \phi(x)
\Bigl( e_\varphi (x,x,\tau) -
P'_{B,\varphi} \bigl(W(x)+\tau\bigr)\Bigr)\, dx\,d\tau
\label{26-4-32}
\end{equation}
does not exceed
$C\max\bigl((Z+B)^{\frac{1}{3}}Z^{\frac{4}{3}}, Z^{\frac{3}{5}}B^{\frac{4}{5}}\bigr)$.

\item\label{prop-26-4-12-ii}
If $B \le Z$ the contribution of zone $\cX_2$ to expression \textup{(\ref{26-4-32})} does not exceed $CZ^{\frac{5}{3}-\delta}$.
\end{enumerate}
\end{proposition}

\subsubsection{Mollification errors.}
\label{sect-26-4-4-2-3}

Further, we need to estimate
\begin{gather}
\int \phi(x) \Bigl(P'_B(W(x)+\tau))- P'_B(W^\TF_B(x)+\tau))\Bigr)\,dx ,
\label{26-4-33}\\
\int \phi(x) \Bigl(P_B(W(x)+\tau))- P_B(W^\TF_B(x)+\tau))\Bigr)\,dx,
\label{26-4-34}
\end{gather}
\vspace{-15pt}
\begin{multline}
\D \Bigl( \phi (x) \bigl(P'_B(W(x)+\tau))- P'_B(W^\TF_B(x)+\tau))\bigr),\\
\phi (x) \bigl(P'_B(W(x)+\tau))- P'_B(W^\TF_B(x)+\tau))\bigr)\Bigr)
\label{26-4-35}
\end{multline}
and
\begin{equation}
\| \phi(x) \nabla \bigl(W(x)-W^\TF_B(x)\bigr)\|^2.
\label{26-4-36}
\end{equation}
We start from local versions (so in fact we dealing with $w$ and $w^\TF_\beta$).

Obviously after all rescalings
$\hbar = h \gamma_0^{-3}\gamma_1^{-\frac{5}{2}}\gamma_2^{-2}$ and therefore $\varepsilon =\hbar ^{\frac{3}{2}}= h ^{\frac{3}{2}} (\gamma_0^{-3}\gamma_1^{-\frac{5}{2}}\gamma_2^{-2})^{-\frac{3}{2}}$ where we set $\delta=0$ but we will show that we have a reserve to set it as $\delta>0$ if we want to estimate (\ref{26-4-33}) by $h $ and (\ref{26-4-34})--(\ref{26-4-36}) by $h ^2$.

We claim that
\begin{gather}
|w -w^\TF_\beta|\le
C\varsigma \coloneqq
C\beta \eta
\bigl(\gamma_0^4\gamma_1^3\gamma_2^2 \varepsilon\bigr)^{\frac{5}{2}}.
\label{26-4-37}\\
\shortintertext{and}
|\nabla(w -w^\TF_\beta)|\le
C\varsigma_1 \coloneqq
C\beta \eta
\bigl(\gamma_0^4\gamma_1^3\gamma_2^2 \varepsilon\bigr)^{\frac{3}{2}}.
\label{26-4-38}
\end{gather}
Indeed, it follows from equation (\ref{26-4-2}).

Then the contribution of $\alpha_2$-element to (\ref{26-4-34}) does not exceed
$C\varsigma \varepsilon \alpha_2^3$ as measure of zone of $\alpha_2$-element where $w\ne w^\TF_\beta$ is $O(\varepsilon \alpha_2^3)$. One can see easily that $\varsigma \varepsilon= O(h ^{\frac{7}{3}})$ and therefore
$C\varsigma \varepsilon \alpha_2^3=O(h ^{\frac{7}{3}}\alpha_2^3)$ and the summation over $\alpha_2$-partition of $1$-element returns
$O(h ^{\frac{7}{3}})$.

Modulo above calculations the contribution of $\alpha_2$-element to (\ref{26-4-33}) does not exceed
$C\beta \varsigma^{\frac{1}{2}}\varepsilon \alpha_3^2$. One can check easily that $\varsigma \varepsilon=O(h ^{\frac{3}{2}}\gamma_2^{-\frac{1}{2}})$ and therefore $C\varsigma^{\frac{1}{2}} \varepsilon \alpha_2^3 =O(h ^{\frac{7}{3}}\alpha_1^3\gamma_2^{\frac{5}{2}})$ and the summation over $\alpha_2$-partition of $\alpha_1$-element returns $O(h ^{\frac{3}{2}}\alpha_1^3)$ and then the summation over $\alpha_1$-partition of $1$-element returns $O(h ^{\frac{3}{2}})$.

Similarly, expression (\ref{26-4-35}) with $\phi=\phi_{\alpha_2}$ does not exceed
$C\varsigma\varepsilon_2 \alpha_2^5\le Ch ^3 \alpha_2^4$ and the summation over $\alpha_2$-partition of $1$-element returns $O(h ^3)$. However we need to consider disjoint pairs of $\alpha_2$-elements belonging to given $\alpha_1$-element and their contribution does not exceed
\begin{equation*}
Ch ^3 \int \gamma_{2x}^{-\frac{1}{2}}\gamma_{2y}^{-\frac{1}{2}}|x-y|^{-1}\,dxdy
\le Ch ^2\alpha_1^5
\end{equation*}
and then summation over $\alpha_1$-partition of $1$-element returns
$O(h ^{\frac{3}{2}})$. We need also to consider disjoint pairs of $\alpha_1$-elements belonging to given $1$-element and their contribution does not exceed $Ch ^3 \int |x-y|^{-1}\,dxdy=O(h ^3)$.

Finally, contribution of $\alpha_2$-element to (\ref{26-4-36}) does not exceed
$C\varsigma_1^2\varepsilon \alpha_2^3$ and one can check easily that this does not exceed $C\alpha_1^3\gamma_2^{\frac{8}{3}}h ^{\frac{8}{3}}$ and the summation over $\alpha_2$-partition of $\alpha_1$-element returns $C\alpha_1^3h ^{\frac{8}{3}}$; then summation over $1$-partition of $1$-element returns
$O(h ^{\frac{8}{3}})$.

So, the scaled versions of (\ref{26-4-33}) and (\ref{26-4-34})--(\ref{26-4-36}) do not exceed $Ch $ and $Ch ^2$ respectively. Then the original versions of (\ref{26-4-33}), (\ref{26-4-34}), (\ref{26-4-34}), and (\ref{26-4-36}) do not exceed respectively
$C\zeta^3\ell^3 \times (\zeta\ell)^{-1}= C\zeta^2\ell^2$,
$C\zeta^5\ell^3 \times (\zeta\ell)^{-2}= C\zeta^3\ell$,
$C\zeta^6\ell^5 \times (\zeta\ell)^{-2}= C\zeta^4\ell^3$, and
$C\zeta^4\ell \times (\zeta\ell)^{-2}= C\zeta^2\ell^{-1}\le C\zeta^4\ell^3$.

Leaving the easy details to the reader we arrive to

\begin{proposition}\label{prop-26-4-13}
\begin{enumerate}[label=(\roman*), wide, labelindent=0pt]
\item\label{prop-26-4-13-i}
Contribution of zone $\cX_2$ to the mollification error \textup{(\ref{26-4-33})} does not exceed $CZ^{\frac{2}{3}}$.

\item\label{prop-26-4-13-ii}
Contribution of zone $\cX_2$ to the mollification error \textup{(\ref{26-4-34})} does not exceed
$CZ^{\frac{5}{3}} +o(Z^{\frac{4}{3}}B^{\frac{1}{3}})$.

\item\label{prop-26-4-13-iii}
Contributions of zone $\cX_2$ to the mollification errors \textup{(\ref{26-4-35})} and \textup{(\ref{26-4-36})} do not exceed $CZ^{\frac{5}{3}}$.
\end{enumerate}
\end{proposition}
and

\begin{proposition}\label{prop-26-4-14}
Let $B\le Z$. Then
\begin{enumerate}[label=(\roman*), wide, labelindent=0pt]
\item\label{prop-26-4-14-i}
Contribution of zone $\cX_2$ to the mollification error \textup{(\ref{26-4-33})} does not exceed $CB^{\delta}Z^{\frac{2}{3}-\delta}$.

\item\label{prop-26-4-14-ii}
Contributions of zone $\cX_2$ to the mollification errors \textup{(\ref{26-4-35})}--\textup{(\ref{26-4-35})} do not exceed $CB^{\delta}Z^{\frac{5}{3}-\delta}$.
\end{enumerate}
\end{proposition}

\begin{remark}\label{rem-26-4-15}
Consider the mollification parameter in ``absolute'' scale (i.e. $\ell$-scale): $\varepsilon= \gamma_0\gamma_1\gamma_2\gamma
(h /\gamma_0^3\gamma_1^{\frac{5}{2}}\gamma_2^2)^{\frac{2}{3}-\delta}$. One can see easily that $\varepsilon \ge
h ^{\frac{2}{3}-\delta}\ge (\mu^{-1}h )^{\frac{1}{2}-\delta_1}$ which makes reduction possible.
\end{remark}

\begin{remark}\label{rem-26-4-16}
All statements of Propositions~\ref{prop-26-4-13} and~\ref{prop-26-4-14} are valid for semiclassical errors as well except statements, concerning $T$-term; tose should include also terms $Ca^{-\frac{1}{2}}Z^{\frac{3}{2}}$ for
$a\le Z^{-\frac{1}{3}}$ and $Ca^{-\delta}Z^{\frac{5}{3}+\frac{1}{3}\delta}$ for
$a\ge Z^{-\frac{1}{3}}$.
\end{remark}

\section{Zone \texorpdfstring{$\cX_3$}{X\textthreeinferior}}
\label{sect-26-4-5}

Zone $\cX_3$ defined by $\mu h \ge C_0$, $h \le 1$,
$\{x\colon \ell(x)\le \epsilon_0 \bar{r}\}$ appears only as
$Z^{\frac{4}{3}}\le B\le Z^3$.  In this zone $W^\TF_B$ is smooth and no mollification is necessary. Further, in this zone the canonical form contains only one number $j=0$ and
$|D^\alpha W|\le C_\alpha \zeta^2\ell^{-|\alpha|}$ and $W\asymp \zeta^2$.

Therefore we have non-degeneracy condition fulfilled and applying the standard theory we conclude that in the scaled version contribution of $B(0,1)$ to the semiclassical errors in $\N$- and $\T$-terms and into $\D$-term are
$C\mu h ^{-1}$, $C\mu$ and $C\mu^2h ^{-2}$ respectively.

In the unscaled version they become $CB\ell^2$,
$CB\ell\zeta\le CBZ^{\frac{1}{2}}\ell^{\frac{1}{2}}$ and $CB^2\ell^3$ and after summation (where for $\D$-term we need to consider mixed contribution of different layers) we arrive to the same expressions calculated as $\ell=\bar{r}=B^{-\frac{2}{5}}Z^{\frac{1}{5}}$ i.e.
$CB^{\frac{1}{5}}Z^{\frac{2}{5}}$,
$CB^{\frac{4}{5}}Z^{\frac{3}{5}}$ and $CB^{\frac{4}{5}}Z^{\frac{3}{5}}$ respectively. Thus we have proven

\begin{proposition}\label{prop-26-4-17}
Let $Z^{\frac{4}{3}}\le B\le Z^3$. Then
\begin{enumerate}[label=(\roman*), wide, labelindent=0pt]
\item\label{prop-26-4-17-i}
Contribution of zone $\cX_3$ to the $\N$-error does not exceed $CZ^{\frac{2}{5}}B^{\frac{1}{5}}$.

\item\label{prop-26-4-17-ii}
Contributions of zone $\cX_3$ to the $\T$-error and $\D$-term do not exceed $CZ^{\frac{3}{5}}B^{\frac{4}{5}}$.
\end{enumerate}
\end{proposition}

\chapter{Semiclassical analysis in the boundary strip for \texorpdfstring{$M\ge 2$}{M\textge 2}}
\label{sect-26-5}
To finish our analysis we need to get the same estimates as before in the \emph{boundary strip\/}
\begin{gather}
\cY\coloneqq \{x\colon W(x)+\nu \le \epsilon G,\,
\epsilon \bar{r}\le \ell (x)\le c\bar{r}\}
\label{26-5-1}
\shortintertext{with}
G\coloneqq \left\{\begin{aligned}
&(Z-N)_+^{\frac{4}{3}}
&&\text{for\ \ } B\le (Z-N)_+^{\frac{4}{3}},\\
&B\qquad
&&\text{for\ \ } (Z-N)_+^{\frac{4}{3}}\le B\le Z^{\frac{4}{3}},\\
&Z^{\frac{4}{5}}B^{\frac{2}{5}} &&\text{for\ \ } B\ge Z^{\frac{4}{3}}.
\end{aligned}\right.
\label{26-5-2}
\end{gather}
which coincides with (\ref{26-2-41}) as $B\ge (Z-N) ^{\frac{4}{3}}$. Recall that
$\bar{r}= (Z-N)_+^{-\frac{1}{3}}$, $\bar{r}= B^{-\frac{1}{4}}$ and
$\bar{r}= B^{-\frac{2}{5}}Z^{\frac{1}{5}}$ in these three cases respectively.
Analysis of the \emph{external zone\/}\index{zone!external}
$\cX_4\coloneqq \{x\colon \ell(x)\ge C_1\bar{r}\}$ will be trivial and \emph{inner zone\/} $\{x\colon W(x)+\nu \ge \epsilon G\}$ has been covered already.
\enlargethispage{\baselineskip}

\section{Properties of \texorpdfstring{$W_B ^\TF$}{W\_B\^{TF}} if $N=Z$}
\label{sect-26-5-1}
Let us explore properties of $W_B ^\TF$ in $\cY$ if $N=Z$\,\footnote{\label{foot-26-28} I.e. $\nu=0$ and $G= B\bar{r}^{-4}$.}
Let us rescale $x\mapsto x'=x\bar{r} ^{-1}$, $W\mapsto w=G^{-1}W$ and define $h = G^{-\frac{1}{2}}\bar{r}^{-1}$, $\mu= G^{-\frac{1}{2}}B\bar{r}$. Then

\begin{enumerate}[label=(\alph*), wide, labelindent=0pt]
\item\label{sect-26-5-1-a}
In the case $ B\le Z ^{\frac{4}{3}}$ we need to rescale
$w(x')=B ^{-1}W_B ^\TF(x'\bar{r})$ and take $h =B ^{-\frac{1}{4}}\le 1$,
$\mu =B ^{\frac{1}{4}}\ge 1$, $\mu h =1$.

\item\label{sect-26-5-1-b}
On the other hand, for $B\ge Z ^{\frac{4}{3}}$ one should set
$w(x')=\bar{r}Z^{-1}W_B ^\TF(x'\bar{r})$ and
$h =(Z\bar{r}) ^{-\frac{1}{2}}=(B Z^{-3}) ^{\frac{1}{5}}\le 1$,
$\mu =B Z^{-\frac{1}{2}}\bar{r} ^{\frac{3}{2}}=
B ^{\frac{2}{5}} Z^{-\frac{1}{5}}\ge 1$,
$\mu h = B ^{\frac{3}{5}}Z ^{-\frac{4}{5}}\ge 1$
($\mu h \asymp 1$ iff $B\lesssim Z ^{\frac{4}{3}}$, $h \asymp 1$ iff
$B\asymp Z ^3$).
\end{enumerate}

We will use now only rescaled coordinates unless the opposite is specified. Then in $\cY$ rescaled
\begin{equation}
\Delta w= \kappa w_+ ^{\frac{1}{2}},\qquad \kappa =12,\qquad
w\to \theta =\nu \bar{\zeta} ^{-2}\quad \text{as\ \ }|x|\to \infty,
\label{26-5-3}
\end{equation}
with $\bar{\zeta}\coloneqq G^{\frac{1}{2}}$ where one can always get
$\kappa =12$ after rescaling $w\mapsto 144\kappa ^{-2}w$.

\begin{proposition}\label{prop-26-5-1}
Let $Z= N$. Then in $\cY$ after rescaling
\begin{gather}
|D ^\alpha w|\le C_\alpha w\gamma ^{-|\alpha |}
\qquad \forall \alpha
\label{26-5-4}\\
\intertext{with the scaling function $\gamma =w ^{\frac{1}{4}}$ and}
|\nabla w ^{\frac{1}{4}}|\le 1+Cw ^t
\label{26-5-5}
\end{gather}
with some constant $C$ and exponent $t>0$.
\end{proposition}

\begin{proof}
\begin{enumerate}[label=(\alph*), wide, labelindent=0pt]
\item\label{pf-26-5-1-a}
Rescaling $x\mapsto x\bar{r} ^{-1}$ we get an equations
$\textup{(\ref{26-5-3})}_0=\textup{(\ref{26-5-3})}$ with $\theta =0$. We know that $W=0$ for $\ell (x)\ge c\bar{r}$; so after rescaling $w=0$ for $\ell (x)\ge c$. On the other hand, $w\asymp 1$ as $\ell (x)\le \epsilon $ (uniformly with
respect to all the parameters).

Let us consider solution of the equation
\begin{equation}
\Delta w_s= 12 w_+ ^s
\label{26-5-6}
\end{equation}
in $\Omega =\{w\le \epsilon ,\ell \le c\}$ with the boundary condition
$w_s=w$ at $\partial \Omega $; $s>\frac{1}{2}$.

Note first that $w_s\ge 0$. Really, $w_s$ is the solution of the variational problem to minimize
\begin{equation}
\|\nabla w\| ^2+24(s+1) ^{-1} \int w_+ ^{s+1}\,dx
\label{26-5-7}
\end{equation}
 and one makes this functional only less replacing $w$ by $w_+$.

Further, the standard maximum principle arguments show that $w_s\searrow$ as $s\searrow$\,\footnote{\label{foot-26-29} If $\Delta w_i=f_i(w_i)$ in $\Omega $, $f_i(w)\nearrow$ as $w\nearrow$ and $f_1(w)\ge f_2(w)$ then $\Delta (w_1-w_2)>0$ as $w_1>w_2$ and then $w_1-w_2$ does not reach maximum inside $\Omega $.}. Obviously $w_s\searrow w$ and
$w_s\to w$ in $\sC ^\infty $ in $\{x\colon w(x)>0\}$ as $s\searrow \frac{1}{2}$.

We claim that
\begin{claim}\label{26-5-8}
$w_s\in \sC ^{4s+2}$.
\end{claim}

To prove (\ref{26-5-8}) note first that $w\in \sC ^{2-\delta '}$ uniformly with respect to all the parameters for any $\delta '>0$. Then
$w _s^s\in \sC ^{s-\delta }$ and then
(\ref{26-5-6}) yields that $w_s\in \sC ^{2+s-\delta }$ as soon as
$s-\delta \notin \bZ$. Then since $w_s\ge 0$ we get
$|\nabla w_s|\le cw_s ^{\frac{1}{2}}$ and so $w_s ^s\in \sC ^{s-\frac{1}{2}}$.
Then equation (\ref{26-5-6}) again yields that $w_s\in \sC ^{s+\frac{3}{2}}$.

\pagebreak
Now we need more subtle arguments. First, for $|y|=1$
\begin{equation}
w_s(x+ty)=w_s(x)+t(\nabla w_s)_x\cdot y +
\frac{1}{2} (\nabla ^2w_s)_x(y)t ^2+ O\bigl(t ^3\bigr).
\label{26-5-9}
\end{equation}
Then the lowest eigenvalue $\varsigma $ of $\nabla ^2w_s$ at $x$ should be greater than $-Cw_s ^{\frac{1}{3}}$. Indeed, otherwise we can take $y$ as
the corresponding eigenvector and $t$ with $|t|=\epsilon \varsigma $ and
with a sign making second term non-positive and get $w_s(x+ty)<0$.

This lower estimate for eigenvalues of $\nabla ^2w_s$ and equation (\ref{26-5-6}) yield that $|\nabla ^2w_s|\le Cw_s ^{\frac{1}{3}}$. But then
$|\nabla w_s|\le Cw _s^{\frac{2}{3}}$. Really, otherwise picking
$y=|\nabla w_s| ^{-1}\nabla w_s$ with $|t|=\epsilon |\nabla w_s| ^{\frac{1}{2}}$ and an appropriate sign we would get $w_s(x+ty)<0$.

These estimates yield that $w_s(x')\asymp w_s(x)$ in
$B\bigl(x,\gamma (x)\bigr)$ with $\gamma (x)=\epsilon w _s^{\frac{1}{3}}$.
Then $w_s ^s\in \sC ^{\frac{3}{2}}$. In fact, let us consider
$f=w_s ^{s-1}\nabla w$ and $|f(x)-f(x')|$. Let us consider first
$|x-x'|\ge {\frac{1}{3}}\bigl(\gamma (x)+\gamma (x')\bigr)$; since
$|f(x)|\le \gamma (x) ^{\frac{1}{2}}$ at each point we get that
$|f(x)-f(x')\le |x-x'| ^{\frac{1}{2}}$.

On the other hand, for $|x-x'|\le \frac{1}{3}\bigl(\gamma (x)+\gamma (x')\bigr)$ we conclude that $\gamma (x)\asymp \gamma (x')$ and
$|f(x)-f(x')|\le |\nabla f|\cdot |x-x'|\le |x-x'| ^s$ due to inequality
$|\nabla f|\le
|\nabla ^2w_s|w_s ^{s-1}+|\nabla w_s| ^2w_s ^{s-2}\le C\gamma ^{s-1}$.

Therefore $w_s ^s\in \sC ^{s+1}$ and equation (\ref{26-5-6}) yields that
$w_s\in \sC ^{3+s}$.

\item\label{pf-26-5-1-b}
In the next round we assume that $w_s\in \sC ^{4+s-\delta }$ with \emph{some\/} $\delta \in (0,1)$. Then
\begin{multline}
w_s(x+ty)\le \\
w_s(x)+t(\nabla w_s)_x\cdot y+\frac{1}{2}(\nabla ^2w_s)_x(y)t ^2+
\frac{1}{6}(\nabla ^3w_s)_x(y)t ^3 +C|t|^p
\label{26-5-10}
\end{multline}
with $p=\min(4,4+s-\delta )$.

We claim now that the lowest possible eigenvalue $\varsigma $ of
$\bigl(\nabla ^2w_s\bigr)_x$ is greater than $-C w_s ^{(p-2)/ p }$. Really, otherwise let us pick up $y$ as the corresponding eigenvector, $t$ with
$|t|=\epsilon |\varsigma | ^{1/(p-2) }$ and with a sign making expression
\begin{equation*}
t(\nabla w_s)_x\cdot y+ \frac{1}{6} t ^3 (\nabla ^3w_s)_x(y)
\end{equation*}
non-positive and get $w_s(x+ty)<0$ again. Now equation (\ref{26-5-6}) yields
that inequality\begin{phantomequation}\label{26-5-11}\end{phantomequation}
\pagebreak
\begin{equation}
|\nabla ^kw_s|\le Cw_s ^{(p-k) / p }
\tag*{$\textup{(\ref*{26-5-11})}_k$}\label{26-5-11-k}
\end{equation}
holds with $k=2$.

Further, we claim that this inequality holds with $k=1,3$. Indeed, if one or both of these inequalities are violated then let us take corresponding $y$ and $t$ with
\begin{equation*}
|t|=\epsilon \Bigl(|\nabla w_s| ^{1/(p-1)}+|\nabla ^3w_s(y)|^{1/(p-3)}\Bigr)
\end{equation*}
(calculated on $y$); replacing $\epsilon $ by $2\epsilon $ if necessary we get
\begin{equation*}
|t(\nabla w_s)_x\cdot y +
\frac{1}{6}t ^3 (\nabla ^3w_s)_x(y)\bigr|\ge
\epsilon _0 |t(\nabla w_s)_x\cdot y| +
|\frac{1}{6}t ^3 (\nabla ^3w_s)_x(y)|
\end{equation*}
and choosing an appropriate sign of $t$ we get $w(x+ty)<0$.

\emph{Therefore inequalities $\textup{(\ref{26-5-11})}_{1-3}$ hold\/}. The same arguments as above with $\gamma =w_s^{1/p}$ lead us to $w ^s\in \sC ^{ps}$ and then equation (\ref{26-5-6}) yields that $w_s\in \sC ^{ps+2}$. So, now we came back with $\delta $ replaced by $\delta '=2+s-ps$ and one can see easily that if
$\delta >s$ then $\delta ' =s+(2-4s)+(\delta -s)s$ and after few repeats
$\delta <s$. Then we get (\ref{26-5-8}). Unfortunately, constants depend on $s$
due to the fact that $\Delta w\in \sC ^2$ fails to yield $w\in \sC ^4$.

\item\label{pf-26-5-1-c}
Now we are going to finish the proof of (\ref{26-5-4}). Let us consider $w_s$ again and let $\gamma =\gamma _{s,\delta }=w_s ^{1/(4-\delta )}$. Due to the previous inequalities $\gamma \in \sC ^1$. We claim that $|\nabla \gamma |$ is bounded uniformly with respect to $s,\delta $. Note first that
$\Delta \gamma ^{4-\delta }=\gamma ^{(4-\delta )s}$ implies that
\begin{equation}
a|\nabla \gamma | ^2+b\gamma \Delta \gamma =\gamma ^\sigma
\label{26-5-12}
\end{equation}
with $a=\frac{1}{12}(4-\delta )(3-\delta )$, $b=\frac{1}{12}(4-\delta )$, and $\sigma =4s-2+(1-s)\delta $. Let $\psi =|\nabla \gamma | ^2$; obviously $\psi $
is uniformly bounded at $\partial \Omega $. Let us consider maximum of $\psi $ reached inside $\Omega $. At the point of maximum
\begin{equation}
\sum_i \gamma _{x_ix_j}\gamma _{x_i}=0
\label{26-5-13}\\
\end{equation}
and
\begin{multline*}
\frac{1}{2}\Delta \psi =\sum_{i,j}\gamma _{x_ix_j} ^2+
\sum_i\gamma _{x_i}\bigl(\Delta \gamma \bigr)_{x_i}=\\
\sum_{i,j}\gamma _{x_ix_j} ^2+b ^{-1}\sum_i\gamma _{x_i}
\Bigl(\gamma ^{-1}\bigl(\gamma ^\sigma -a|\nabla \gamma | ^2\bigr)\Bigr) _{x_i}.
\end{multline*}
Due to (\ref{26-5-12}) and due to (\ref{26-5-13}) this expression is equal to
\begin{equation*}
\sum_{i,j}\gamma _{x_ix_j} ^2-
b ^{-1}\gamma ^{-2}|\nabla \gamma |^2
\bigl(\gamma ^\sigma -a|\nabla \gamma | ^2\bigr)
+b ^{-1}\sigma \gamma ^{\sigma -2}|\nabla \gamma | ^2
\end{equation*}
and therefore at an inner point of minimum
$a|\nabla \gamma| ^2\le \gamma ^\sigma $. So, $|\nabla \gamma |\le C$ is proven and for $s\searrow {\frac{1}{2}}$,
$\delta \searrow 0$ we get that $|\nabla w ^{\frac{1}{4}}|\le C$.

Let us pick $\gamma (x)=\epsilon' w ^{\frac{1}{4}}(x)$; then
$|\nabla w|\le {\frac{1}{2}}$ and $w(x)\asymp w(\x)$ in
$B\bigl(x,\gamma (\x)\bigr)$. This and equation (\ref{26-5-6}) easily yield (\ref{26-5-4}).

To prove inequality (\ref{26-5-5}) let us consider $w_s$ again and let us take now $\psi =|\nabla \gamma | ^2-F\gamma ^{2t}$ with $t>0$; obviously $\psi $ is
non-positive at $\partial \Omega $ for sufficiently large $F$. Let us
consider maximum of $\psi $ reached inside $\Omega $. At the point of
maximum
\begin{equation}
\sum_i \gamma _{x_ix_j}\gamma _{x_i}-Ft\gamma ^{2t-2}\gamma _{x_j}=0
\tag*{$\textup{(\ref*{26-5-13})}'$}\label{26-5-13-'}
\end{equation}
and the same arguments as before (plus inequality $|\nabla \gamma |\le C_0$)
show that at an inner point of maximum
$a|\nabla \gamma | ^2\le \gamma ^\sigma +CtF\gamma ^{2t}$ where $C$ does not
depend on $F$ and small $t>0$. Then at this point $\psi \le 1$ for
small enough $t>0$ and as $s\to \frac{1}{2}$ and $\delta \to 0$ we get (\ref{26-5-5}). \end{enumerate}\end{proof}

The following statement heavily uses estimate (\ref{26-5-5}):

\begin{proposition}\label{prop-26-5-2}
The following estimate holds
\begin{equation}
\D(\gamma ^{-1+s},\gamma ^{-1+s})\le Cs ^{-2}
\label{26-5-14}
\end{equation}
with some constant $C$ which does not depend on $s\in (0,1)$ where we set
$\gamma^{-1+s}\coloneqq w_+^{\frac{1}{4}(-1+s)}$ (i.e. it is $0$ as $w\le 0$).
\end{proposition}

\begin{proof}
As in the notations of the proof of Proposition~\ref{prop-26-5-1} $\delta=0$ and $s=\frac{1}{2}$ we have (\ref{26-5-12}) with $a=1$, $b=3$ and $\sigma=0$:
\begin{gather}
\frac{1}{3}\gamma \Delta \gamma +|\nabla \gamma | ^2=1.
\label{26-5-15}\\
\shortintertext{Then}
\gamma ^{-1+s}=\gamma ^{-1+s}|\nabla \gamma | ^2+
\frac{1}{3}\gamma ^s\Delta \gamma =
(1-\frac{s}{3})\gamma ^{-1+s}|\nabla \gamma | ^2+
\frac{1}{3(1+s)}\Delta \gamma ^{1+s}
\notag
\end{gather}
and
\begin{multline*}
\D(\gamma ^{-1+s},\gamma ^{-1+s})\le
(1-\frac{s}{3})\D(\gamma ^{-1+s}|\nabla \gamma | ^2,\gamma ^{-1+s})+C\le\\
(1-\frac{s}{3})\D(\gamma ^{-1+s},\gamma ^{-1+s})+
C\D(\gamma ^{-1+t+s},\gamma ^{-1+s})+C
\end{multline*}
due to (\ref{26-5-5}) and this yields
\begin{equation*}
\D(\gamma ^{-1+s},\gamma ^{-1+s})\le
Cs ^{-2}\D(\gamma ^{-1+t+s},\gamma ^{-1+t+s})+ Cs ^{-1}.
\end{equation*}
Substituting $s+mt$ instead of $s$, $0\le m\le Ct ^{-1}$ we recover
(\ref{26-5-14}). \end{proof}

\section{Analysis in the boundary strip $\cY$ for \texorpdfstring{$N\ge Z$}{N\textge Z}}
\label{sect-26-5-2}

We consider now the case of  if $N\ge Z$ (i.e. $\nu=0$ and $G= B\bar{r}^{-4}$).

It is really easy to construct the proper potential in this case: we just take
\begin{equation}
w_\varepsilon =w\phi _\varepsilon ,\qquad
\phi _\varepsilon =f (w\varepsilon ^{-4})
\label{26-5-16}
\end{equation}
with $f\in \sC  ^\infty ((\frac{1}{2},\infty ))$, $\supp (f)\subset (\frac{1}{2},\infty )$, $0\le f\le 1$, $f(t)=1$ for
$t>1$. Note that due to (\ref{26-5-3})
\begin{align*}
&\D(\gamma ^{-1}\phi _\varepsilon ,
\gamma ^{-1}\varphi _\varepsilon )\,\le
C\varepsilon ^{-2s}\D(\gamma ^{s-1},\gamma ^{s-1})\,\le
Cs ^{-2}\varepsilon ^{-2s},\\[3pt]
&\D(1-\phi _\varepsilon , 1-\phi _\varepsilon )\le
C\varepsilon ^{2-2s}\D(\gamma ^{s-1},\gamma ^{s-1})\le
Cs ^{-2}\varepsilon ^{2-2s};
\end{align*}
then minimizing with respect to $s$ ($=|\log \varepsilon|^{-1}$) the right-hand expression we conclude that
\begin{gather}
\D\bigl(\gamma ^{-1}\phi _\varepsilon ,
\gamma ^{-1}\varphi _\varepsilon \bigr)+
\varepsilon ^{-2}\D\bigl(1-\phi _\varepsilon , 1-\phi _\varepsilon )\le
C\bigl(1+|\log \varepsilon |\bigr)^2
\label{26-5-17}\\
\intertext {and therefore}
\int \gamma ^{-1}\phi _\varepsilon \, dx + \varepsilon ^{-1}\int (1-\phi _\varepsilon )\, dx
\le C\bigl(1+|\log \varepsilon |\bigr).
\label{26-5-18}
\end{gather}

\begin{remark}\label{rem-26-5-3}
\begin{enumerate}[label=(\roman*), wide, labelindent=0pt]
\item\label{rem-26-5-3-i}
Recall that all these integrals are taken over domain $\{x\colon w(x) >0\}$. To avoid possible troubles we pick $\varepsilon =h ^{\frac{1}{3}}$ and set in the zone $\{x\colon w (x)\le C_0 h ^{\frac{4}{3}}\}$
\begin{gather}
\gamma (x)=\dist (x, \{w\ge 2C_0 h ^{\frac{4}{3}}\}),\notag\\[3pt]
w_\varepsilon =\left\{\begin{aligned}
- \gamma ^4\phi '_\varepsilon \quad
&\text{for\ \ }\gamma \le \varepsilon ,\\
-\varepsilon ^4 \quad &\text{for\ \ }\gamma \ge \varepsilon
\end{aligned}\right.
\tag*{$\textup{(\ref*{26-5-16})}'$}\label{26-5-16-'}
\end{gather}
with $\phi '_\varepsilon =f(\gamma \varepsilon ^{-1})$ and then in the complemental domain $\{x\colon w(x)\le -\varsigma ^2\}$ our assumptions are fulfilled with $\varsigma =\varepsilon ^2$ and
$\varsigma \gamma =\gamma ^3\ge h $.

\item\label{rem-26-5-3-ii}
Further, for $\varepsilon =h ^{\frac{1}{3}-\delta }$ with sufficiently small exponent $\delta >0$ it does not break estimate for mollification error in $\T$-term.

\item\label{rem-26-5-3-iii}
Furthermore, for $t>\varepsilon $
\begin{gather*}
\mes (\{x\colon \gamma(x) \le t\})\le
Ct ^3\varepsilon ^{-3}\mes (\{x\colon \gamma (x)\le \epsilon \varepsilon \})
\intertext{and therefore}
h ^s\int \gamma ^{-1-s}\varsigma ^{-s}\, dx\le
C\varepsilon ^{-1}\mes (\{x\colon \gamma (x)\le \epsilon \varepsilon \})\le CL\coloneqq
C(1+|\log h |)
\end{gather*}
for sufficiently large $s$.
\end{enumerate}
\end{remark}

Using these estimates and the last remark we can prove easily

\begin{proposition}\label{prop-26-5-4}
Let $N\ge Z$. Then
\begin{enumerate}[label=(\roman*), wide, labelindent=0pt]
\item\label{prop-26-5-4-i}
Contribution of $\cY\cup \cX_4$ with \emph{external zone\/}\index{zone!external} $\cX_4\coloneqq \{x\colon w(x)=0\}$ to mol\-li\-fication and semiclassical  errors in $\N$-term do not exceed $CT_0\varepsilon^3 (1+|\log \varepsilon|)$ and $R_0(1+|\log \varepsilon|)$ respectively with
\begin{phantomequation}\label{26-5-19}\end{phantomequation}
\begin{multline}
T_0=B ^{\frac{3}{4}},\qquad R_0=B ^{\frac{1}{2}}, \qquad
T=B ^{\frac{7}{4}},\qquad R=B ^{\frac{5}{4}}\\
\text{for\ \ }
B\le Z ^{\frac{4}{3}}
\tag*{$\textup{(\ref*{26-5-19})}_1$}\label{26-5-19-1}
\end{multline}
and
\begin{multline}
T_0=Z,\qquad R_0=B ^{\frac{1}{5}} Z ^{\frac{2}{5}} , \qquad
T=Z ^{\frac{9}{5}}B ^{\frac{2}{5}},\qquad R=Z ^{\frac{3}{5}}B ^{\frac{4}{5}}\\
\text{for\ \ }
Z ^{\frac{4}{3}}\le B\le Z^3.
\tag*{$\textup{(\ref*{26-5-19})}_2$}\label{26-5-19-2}
\end{multline}
\item\label{prop-26-5-4-ii}
Contribution of $\cY\cup \cX_4$ to mollification and semiclassical $\D$-terms do not exceed $CT \varepsilon^6 (1+|\log \varepsilon|)^2$ and
$R(1+|\log \varepsilon|)^2$ respectively.
\enlargethispage{\baselineskip}

\item\label{prop-26-5-4-iii}
Contribution of $\cY\cup \cX_4$ to both mollification and semiclassical errors in $\T$-term do not exceed $CT \varepsilon^7 (1+|\log \varepsilon|)$ and $CR$ respectively.
\end{enumerate}
\end{proposition}

\begin{proof}
Really, estimates for mollification errors and terms immediately follow from the inequality
\begin{equation}
\mes (\{x\colon w(x)\le \varepsilon^4\}) \le C\varepsilon (1+|\log \varepsilon|)
\label{26-5-20}
\end{equation}
which is due to (\ref{26-5-18}).\pagebreak

Let us consider semiclassical errors and terms.
\begin{enumerate}[label=(\roman*), wide, labelindent=0pt]
\item\label{pf-26-5-4-i}
Let us consider $\N$-term first. Let us consider all possible balls and
their contributions: the contribution of each ball
$B\bigl(x,\gamma (x)\bigr)$ to the semiclassical error does not exceed
$C\mu h ^{-1} \gamma ^2\asymp CB\bar{r}^2 \gamma^2$
and the total contribution does not exceed
\begin{equation}
CR_0\int \gamma (x)^{-1}\, dx\le CR_0\big(1+|\log \varepsilon |\bigr)
\label{26-5-21}
\end{equation}
where $R_0=B\bar{r}^2$; recall that $\gamma(x)\ge \varepsilon$.

\item\label{pf-26-5-4-ii}
Consider semiclassical $\D$-term. Let us consider all possible balls and
their contributions: the similar arguments with the analysis of disjoint balls of different types and with analysis of the intersecting balls (of the same type) lead us to the proper estimate of the contribution of ${\cY}_4\cup \cX_4$ to semiclassical $\D$-term: namely, it does not exceed
$CR_0 ^2\bar{r}^{-1}\big(1+|\log \varepsilon |\bigr) ^2$ (i.e. expression (\ref{26-5-21}) squared and multipled by $C\bar{r}^{-1}$) where
$R_0 ^2\bar{r} ^{-1}\asymp R$.
\enlargethispage{\baselineskip}

\item\label{pf-26-5-4-iii}
Consider $\T$-term. Let us consider all possible balls and
their contributions. Contribution of each ball $B\bigl(x,\gamma (x)\bigr)$ to the semiclassical error does not exceed
$C\bar{\zeta}^2 \mu \varsigma ^2\gamma \asymp
CB\bar{\zeta}^2 \bar{r}\varsigma \gamma^2 $ and the total
contribution does not exceed
\begin{equation}
CR\int \varsigma (x)\gamma (x)^{-2}\, dx\asymp CR
\label{26-5-22}
\end{equation}
where $R=B\bar{\zeta}^2\bar{r}$ and $\varsigma(x)\asymp \gamma(x)^2$.
\end{enumerate}
\end{proof}

Then picking appropriate $\varepsilon =h ^{\frac{1}{3}}$ we arrive to

\begin{corollary}\label{cor-26-5-5}
Let $N\ge Z$. Then
\begin{enumerate}[label=(\roman*), wide, labelindent=0pt]
\item\label{cor-26-5-5-i}
Contributions of $\cY\cup \cX_4$ to all errors in $\N$-terms do not exceed $CR_0L$ with $L=(1+|\log BZ^{-3}|)$.

\item\label{cor-26-5-5-ii}
Contribution of $\cY\cup \cX_4$ to all $\D$-terms do not exceed $CRL^2$.

\item\label{cor-26-5-5-iii}
Contribution of $\cY\cup \cX_4$ to all errors in $\T$-terms do not exceed $CR$.
\end{enumerate}
\end{corollary}

We will sum contributions of all zones to errors in Propositions~\ref{prop-26-5-14} and~\ref{prop-26-5-17} below.

\begin{remark}\label{rem-26-5-6}
Could we get rid off the logarithmic factors i.e. make $L=1$ as it was in the case $M=1$?

\begin{enumerate}[label=(\roman*), wide, labelindent=0pt]
\item\label{rem-26-5-6-i}
With the mollification errors we need to replace (\ref{26-5-20}) by
\begin{equation}
\mes (\{x\colon w(x)\le \varepsilon^4\}) \le C\varepsilon;
\label{26-5-23}
\end{equation}
\item\label{rem-26-5-6-ii}
With the semiclassical terms our arguments here are insufficient even if we established (\ref{26-5-23}); we need extra propagation arguments in the direction of decaying $w$ along magnetic lines--exactly as in the case $M=1$. Surely there could be points where such arguments do not work; f.e. consider $M=2$ and nuclei so that $|\y_1-\y_2|$ is slightly less than $\bar{r}_1+\bar{r}_2$ where $\bar{r}_{1,2}$ are precise radii of support. Then $w$ reaches its minimum at $\cY$.

So, we need to prove that the measure of such points is sufficiently small (f.e. less than $C|\log BZ^{-3}|^{-1}$).
\end{enumerate}
\end{remark}

Unfortunately, we do not know how to make the above remark work and we suggest
\enlargethispage{\baselineskip}

\begin{Problem}\label{Problem-26-5-7}
Follow through the discussed plan. For $M=2$ it could be easier due to the rotational symmetry of the potential $W^\TF_B$.
\end{Problem}

\section{Analysis in the boundary strip $\cY$ for $N< Z$}
\label{sect-26-5-3}

Now let us consider the case of $N<Z$ (i.e. $\nu <0$).

\subsection{Case \texorpdfstring{$B\ge (Z-N)_+^{\frac{4}{3}}$}{B \textge (Z-n)\textplusinferior\^{4/3}}}
\label{sect-26-5-3-1}

We start from the case $B\ge (Z-N)_+^{\frac{4}{3}}$ when $\bar{r}=\min(B^{-\frac{1}{4}}, Z^{\frac{1}{5}}B^{-\frac{2}{5}})$ matching cases $B\lesssim Z^{\frac{4}{3}}$ and $Z^{\frac{4}{3}}\lesssim B\lesssim Z^3$.

\begin{remark}\label{rem-26-5-8}
\begin{enumerate}[label=(\roman*), wide, labelindent=0pt]
\item\label{rem-26-5-8-i}
The results of the previous Subsection~\ref{sect-26-4-2} remain true as long as
$|\nu|G^{-1}\le C_0h ^{\frac{4}{3}}$; in other words, as
$(Z-N)_+\le C_0 G \bar{r}h ^{\frac{4}{3}}$. Plugging $\bar{r}$, $G$ and $h $, we rewrite it as
\begin{equation}
(Z-N)_+\le C_0 \min\bigl( B^{\frac{5}{12}},\, Z^{\frac{1}{5}}B^{\frac{4}{15}}\bigr)
\label{26-5-24}
\end{equation}
matching cases $B\lesssim Z^{\frac{4}{3}}$ and
$Z^{\frac{4}{3}}\lesssim B\lesssim Z^3$.

\item\label{rem-26-5-8-ii}
Therefore in this Subsection we assume that condition (\ref{26-5-24}) fails. Let
$\theta =|\nu | G ^{-1}\asymp (Z-N)_+\cdot \max\bigl(B^{-\frac{3}{4}},\,Z^{-1}\bigr)$,
also matching cases
$B\lesssim Z^{\frac{4}{3}}$ and $Z^{\frac{4}{3}}\lesssim B\lesssim Z^3$.
\end{enumerate}
\end{remark}

\begin{proposition}\label{prop-26-5-9}
Consider dependence of $W^\TF_{B}=W^\TF_{B(\nu)} (x) $ on $\nu $. Then
\begin{enumerate}[label=(\roman*), wide, labelindent=0pt]
\item\label{prop-26-5-9-i}
$W^\TF_{B(\nu)} (x) +\nu$ is non-decreasing with respect to $\nu$ at each point $x$.

\item\label{prop-26-5-9-ii}
 $W^\TF_{B(\nu)} (x)$ is non-increasing with respect to $\nu$ at each point $x$.

\item\label{prop-26-5-9-iii}
In particular, $W^\TF_{B(\nu)} (x) +\nu \nearrow W^\TF_{B(0)} (x)$ and
$W^\TF_{B(\nu)} (x) \searrow W^\TF_{B(0)} (x)$ at each point $x$ as
$\nu \nearrow 0$.
\end{enumerate}
\end{proposition}

\begin{proof}
\begin{enumerate}[label=(\roman*), wide, labelindent=0pt]
\item\label{pf-26-5-9-i}
Consider $W_j \coloneqq W^\TF_{B(\nu_j)}+\nu_j$ with $0>\nu_1>\nu_2$. One can prove easily that $W^\TF_B-V$ is a continuous function and since
\begin{equation}
W_1 -W_2 \to \nu_1-\nu_2 \qquad \text{as\ \ } \ell (x)\to \infty
\label{26-5-25}
\end{equation}
with $\nu_1-\nu_2>0$ we conclude that $W_1\ge W_2$ at each point $x$ (which is exactly our Statement~\ref{prop-26-5-9-i}) unless $W_1-W_2$ achieves a negative minimum at some point~$x^*$:

\begin{enumerate}[label=(\alph*), wide, labelindent=0pt]
\item\label{pf-26-5-9-ia}
Let $x^*\ne \y_m$; then $\Delta (W_1-W_2) (x^*)=P'_B(W_1)-P'_B(W_2)\le 0$ because $W_1<W_2$ at $x^*$ and therefore $x^*$ cannot be such point.

\item\label{pf-26-5-9-ib}
Let $x^*=\y_m$. From Thomas-Fermi equations for $W_{1,2}$ one can prove easily that
\begin{multline*}
(W_1-W_2)(x)=(W_1-W_2)(\y_m) + L_m(x-\y_m) + \\
\kappa_m |x-y_m| ^{\frac{3}{2}} (W_1-W_2)(\y_m)+O(|x-\y_m|^2)
\end{multline*}
near $\y_m$ where $L_m(x)$ is a linear function and $\kappa_m>0$ and therefore if $(W_1-W_2)(\y_m)<0$, $\y_m$ cannot be a minimum point either.
\end{enumerate}

\item\label{pf-26-5-9-ii}
So, $W_1\ge W_2$ and therefore $\Delta (W_1-W_2) (x^*)=P'_B(W_1)-P'_B(W_2)\ge 0$
and  $W_1-W_2$ is a subharmonic function.  Then due to (\ref{26-5-25})  we conclude that $W_1-W_2\le \nu_1-\nu_2$ i.e. $W^\TF_{B,(\nu_1)} \le W^\TF_{B,(\nu_2)}$ at each point.

\item\label{pf-26-5-9-iii}
Statement~\ref{prop-26-5-9-iii} follows from Statements~\ref{prop-26-5-9-i} and~\ref{prop-26-5-9-ii}.
\end{enumerate}
\end{proof}

From Statements~\ref{prop-26-5-9-i} and~\ref{prop-26-5-9-iii} we conclude immediately that\enlargethispage{2\baselineskip}

\begin{corollary}\label{cor-26-5-10}
\begin{enumerate}[label=(\roman*), wide, labelindent=0pt]
\item\label{cor-26-5-10-i}
$\rho^\TF_{B(\nu)}(x)$ is non-decreasing with respect to $\nu$ at each point $x$.

\item\label{cor-26-5-10-ii}
$\rho^\TF_{B(\nu)}(x) \nearrow \rho^\TF_{B(0)} (x)$ at each point $x$ as
$\nu \nearrow 0$.
\end{enumerate}
\end{corollary}
\pagebreak

Therefore in the zone $\{x\in \cY\colon W^\TF_{B(\nu)}\ge (1+\epsilon)|\nu|\}$ we can apply the same $(\gamma,\varsigma)$ scaling with $\varsigma=\gamma^2$ defined for $\nu =0$. Indeed, we know that there
$W^\TF _{B(\nu)}+\nu \asymp W^\TF _{B(0)} \asymp \varsigma^2$ and $\varsigma=\gamma^2$.

Then Thomas-Fermi equation (\ref{26-2-3}) implies that
\begin{equation}
|\nabla^\alpha W^\TF_{B(\nu)}|\le C_\alpha \varsigma^2\gamma^{-|\alpha|}
\qquad \forall \alpha
\label{26-5-26}
\end{equation}
and then we arrive to the Statement~\ref{prop-26-5-11-i} in Proposition~\ref{prop-26-5-11} below.
On the other hand, in the zone
$\{x \colon W^\TF_{B(\nu)}\le (1-\epsilon)|\nu|\}$ we can apply the same arguments but this zone is classically forbidden and we arrive to  Statement~\ref{prop-26-5-11-ii} below. In both cases
$\varsigma \gamma \ge h $ (where in the latter case $\gamma$ is the distance from $x$ to $W^\TF_{B(\nu)}$ (scaled) and $\varsigma=|\theta|^{\frac{1}{2}}$ in virtue of Remark~\ref{rem-26-5-8}.

\begin{proposition}\label{prop-26-5-11}
Let \underline{either} $B\le Z^{\frac{4}{3}}$ and
$|\nu|^{\frac{3}{4}}\ge Z^{\frac{2}{3}}$
\underline{or} $ Z^{\frac{4}{3}}\le B\le Z^3$ and
$|\nu|^{\frac{3}{4}}\ge B^{\frac{1}{2}}$. Then

\begin{enumerate}[label=(\roman*), wide, labelindent=0pt]
\item\label{prop-26-5-11-i}
Contributions of zone $\{x\colon W^\TF _B(x)\ge (1+\epsilon_0) |\nu|\}$ to the semiclassical errors in $\N$- and $\T$-terms and into semiclassical $\D$-term do not exceed $CR_0L$, $CR$ and $CRL^2$ respectively.

\item\label{prop-26-5-11-ii}
Contributions of zone $\{x\colon W^\TF _B(x)\le (1-\epsilon_0) |\nu|\}$ to the semiclassical errors in $\N$- and $\T$-terms and into semiclassical $\D$-term do not exceed $CR_0L$, $CR$ and $CRL^2$ respectively.\enlargethispage{2\baselineskip}
\end{enumerate}
\end{proposition}

\begin{remark}\label{rem-26-5-12}
Here actually we can replace $L$ by $L_*= 1+|\log \theta|$ with
\begin{equation}
\theta=|\nu| G^{-1}\asymp
\left\{\begin{aligned}
&(Z-N)_+ B^{-\frac{3}{4}} \qquad
&&\text{if\ \ } (Z-N)_+^{\frac{4}{3}}\le B\le Z^{\frac{4}{3}},\\
&(Z-N)_+Z^{-1} \qquad
&&\text{if\ \ } Z^{\frac{4}{3}}\le B\le Z^3;
\end{aligned}\right.
\label{26-5-27}
\end{equation}
\end{remark}

Therefore we need to explore the following zone
\begin{equation*}
\cY^*\coloneqq \{x\colon (1-\epsilon_0) |\nu| \le W^\TF _B(x)\le (1+\epsilon_0) |\nu|\}
\end{equation*}
in the framework of Proposition~\ref{prop-26-5-11}. In virtue of Remark~\ref{rem-26-5-8} $\hbar \lesssim 1$ where
\begin{equation}
\hbar = h \theta^{-\frac{3}{4}}\asymp
\left\{\begin{aligned}
& (Z-N)_+^{-\frac{3}{4}}B^{\frac{5}{16}} \qquad
&&\text{if\ \ } B\le Z^{\frac{4}{3}},\\
& (Z-N)_+^{-\frac{3}{4}}Z^{\frac{3}{20}}B^{\frac{1}{5}} \qquad
&&\text{if\ \ } Z^{\frac{4}{3}}\le B\le Z^3.
\end{aligned}\right.
\label{26-5-28}
\end{equation}

Let us rescale the ball $B(.,\alpha)$ to $B(.,1)$ by $x\mapsto x\alpha^{-1}$ with $\alpha=\theta^{\frac{1}{4}}$ (after we already rescaled
$x\mapsto x\bar{r}^{-1}$). After this let us introduce scaling function $\gamma_0$ by (\ref{26-4-3}). Then let us introduce consequently scaling functions $\gamma_1$ by (\ref{26-4-7}), $\gamma_2$ by (\ref{26-4-13}) and $\gamma_3$ by (\ref{26-4-14})\,\footnote{\label{foot-26-30} With $j=\bar{\jmath}=0$ and corrected as in \ref{26-4-3-*} and \ref{26-4-7-*}.}.

Consider contributions of different balls in this hierarchy into semiclassical and approximation errors in $\N$- and $\T$-terms and into $\D$ semiclassical and approximation $\D$-terms.\enlargethispage{\baselineskip}

\begin{enumerate}[label=(\roman*), wide, labelindent=0pt]
\item\label{sect-26-5-3-1-i}
Consider first semiclassical error in $\N$-term. Due to Chapter~\ref{book_new-sect-18} the contribution of $\alpha_j$ element  does not exceed
$CB\ell_j^2=CB\bar{r}^2 \alpha_j^2$ for $j=3,2$, where
recall that $\alpha_j=\gamma_0\cdots \gamma_j$.

Then for $j=2$ we have
$CB\bar{r}^2 \alpha_2^2=CB\bar{r}^2 \alpha_1^2\gamma_1^2$ and therefore we estimate the contribution of $\alpha_1$ element by
$CB\bar{r}^2 \alpha_1^2\int \gamma_2^{-1}\,dx$\,\footnote{\label{foot-26-31} With the integral calculated in the scaled coordinates.}, which also results in $CB\bar{r}^2 \alpha_1^2$ but with the logarithmic factor. However we can get rid of this factor due to a simple observation:

\begin{claim}\label{26-5-29}
If $\gamma_2 \le \epsilon$ then $\Hess (w_1)$ has at least two eigenvalues of magnitude $1$ due to $|\Delta w_1|\le \epsilon_1$.
\end{claim}

Then the contribution of $\alpha_0$ element does not exceed
$CB\bar{r}^2 \alpha_0^2\int \gamma_1^{-1}\,dx$\,\footref{foot-26-31}; we claim that it is $CB\bar{r}^2 \alpha_0^2$. Indeed,  we need to consider only points with $\gamma_1\le \epsilon$ and there we use a similar observation:

\begin{claim}\label{26-5-30}
If $\gamma_1 \le \epsilon$ then $|\nabla^3 w'|\asymp 1$ and also
$|\nabla^3 w'-e \otimes e \otimes e|\ge c^{-1}$ for any $e\in \mathbb{R}^3$ due to $|\partial_j\Delta w_1|\le \epsilon_1$; here $\nabla^3 w'$ is a $3$-tensor of the third derivatives of $w'$.
\end{claim}

Further, the contribution of $\alpha$ element does not exceed
$CB\bar{r}^2 \alpha^2\int \gamma_0^{-1}\,dx$\,\footref{foot-26-31}; since $\gamma\ge \bar{\gamma}_0 =\hbar ^{\frac{1}{3}}$, we estimate it by
$CB\bar{r}^2 \alpha^2 \hbar^{-\frac{1}{3}}$.

Finally, since $\bar{r}^{-1}\cY^*$  is covered by no more than
$CL_* \alpha^{-2}$ such elements\footnote{\label{foot-26-32} Indeed, due to Subsection~\ref{sect-26-5-1} $\mes (\bar{r}^{-1}\cY^*)\le C\alpha L_*$.}, we conclude that

\begin{claim}\label{26-5-31}
The total contribution of $\cY^*$ into the semiclassical (and also approximation) errors in $\N$-term does not exceed
$CB\bar{r}^2 \hbar ^{-\frac{1}{3}}L_*$, where
$L_*\coloneqq (1+|\log \theta|)$.
\end{claim}

Plugging values of $\hbar $ and $\theta$, we arrive to expression (\ref{26-5-33}) in Proposition~\ref{prop-26-5-13}\ref{prop-26-5-13-i} below.\pagebreak

\item\label{sect-26-5-3-1-ii}
Similarly, in virtue of Subsubsection~\emph{\ref{sect-26-3-1-2}.2. \nameref{sect-26-3-1-2}\/} we know the that the contribution of the non-disjoint pair  of $\alpha_j$-elements to the semiclassical $\D$-term does not exceed $CB^2 \ell^3=CB^2\bar{r}^3\alpha_j^3$ for $j=3,2$.

Therefore the contribution of all non-disjoint pairs of $\alpha_2$ subelements to the same expression for $\alpha_1$ element does not exceed $CB^2\bar{r}^3\alpha_1^3$. Adding all disjoint pairs, we get\footref{foot-26-31}
\begin{equation}
CB^2\bar{r}^3\alpha_1^3\iint |x-y|^{-1}\gamma_2(x)^{-1}\gamma_2(y)^{-1}\,dxdy.
\label{26-5-32}
\end{equation}
Then using the results of Part~\ref{sect-26-5-3-1-i} together with observation (\ref{26-5-29}) we arrive to $CB^2\bar{r}^3\alpha_1^3$. So, contribution of the non-disjoint pair  of $\alpha_1$-elements to the semiclassical $\D$-term does not exceed $CB^2 \ell^3=CB^2\bar{r}^3\alpha_1^3$.

Further, continuing in the same manner, we estimate the contribution of the non-disjoint pair of $\alpha_0$-elements by $CB^2\bar{r}^3\alpha_0^3$.

Furthermore, in the same manner we estimate the contribution of the non-disjoint pair of $\alpha$-elements by expression (\ref{26-5-32}) with $\gamma_2$ replaced by $\gamma_0$, which does not exceed
$CB^2\bar{r}^3\alpha^3\hbar ^{-\frac{2}{3}}$.\enlargethispage{\baselineskip}

Finally, adding contribution of all non disjoint pairs and using results of Part~\ref{sect-26-5-3-1-i}, we conclude that the total contribution of $\cY^*\times \cY^*$ into the semiclassical (and also approximation) $\D$-terms does not exceed the final expression we recovered there, squared and multiplied by $\bar{r}^{-1}$, i.e.  $CB^2\bar{r}^3 \hbar ^{-\frac{2}{3}}L_*^2$.

Plugging values of $\hbar$ and $\theta$ we arrive to expression (\ref{26-5-34}) in Proposition~\ref{prop-26-5-13}\ref{prop-26-5-13-ii} below.

\item\label{sect-26-5-3-1-iii}
Due to Chapter~\ref{book_new-sect-18} the contribution of $\alpha_j$ element to the semiclassical error in $\T$-term does not exceed $CB\ell_* \zeta_*$ as $j=3,2$. Note that $\zeta=
G^{\frac{1}{2}}\gamma_0^2\gamma_1^{\frac{3}{2}}\gamma_2 \gamma_3^{\frac{1}{2}}$ and $\zeta= G^{\frac{1}{2}}\gamma_0^2\gamma_1^{\frac{3}{2}}\gamma_1$ for $j=3,2$. Here we took $\theta=\alpha=1$ thus covering the whole zone $\cY$.

Then the contribution of $\alpha_2$-element does not exceed
$CBG^{\frac{1}{2}}\bar{r}\gamma_0^3\gamma_1^{\frac{5}{2}}\gamma_2^2$.
Further, the contribution of $\alpha_1$-element does not exceed
$CBG^{\frac{1}{2}}\bar{r}\gamma_0^3\gamma_1^{\frac{5}{2}}
\int \gamma_2^{-1}\,dx$\,\footref{foot-26-31}, resulting in $CBG^{\frac{1}{2}}\bar{r}\gamma_0^3\gamma_1^{\frac{5}{2}}$ in virtue of the same observation (\ref{26-5-29}).

Further, the contribution of $\alpha_0$-element does not exceed $CBG^{\frac{1}{2}}\bar{r}\gamma_0^3\int \gamma_1^{-\frac{1}{2}}$ resulting in $CBG^{\frac{1}{2}}\bar{r}\gamma_0^3$ in virtue of the same observation (\ref{26-5-30}).

Finally, the total contribution of $\cY$ does not exceed
$CBG^{\frac{1}{2}}\bar{r}=CB^2\bar{r}^3=\max (B^{\frac{5}{4}}, Z^{\frac{3}{5}} B^{\frac{4}{5}})$.
\end{enumerate}

Therefore we arrive to

\begin{proposition}\label{prop-26-5-13}
In the framework of Proposition~\ref{prop-26-5-13} there exists potential $W_\varepsilon$ such that
\begin{enumerate}[label=(\roman*), wide, labelindent=0pt]
\item\label{prop-26-5-13-i}
Contributions of $\cY^*$ to both semiclassical and approximation errors for $\N$-term do not exceed
\begin{equation}
(Z-N)_+^{\frac{1}{4}}L\times
(B^{\frac{19}{48}};\, Z^{\frac{7}{20}}B^{\frac{2}{15}}),
\label{26-5-33}
\end{equation}
where here and below we list different values for
$(Z-N)_+^{\frac{4}{3}}\le B\le Z^{\frac{4}{3}}$ and for
$Z^{\frac{4}{3}}\le B\le Z^3$.

\item\label{prop-26-5-13-ii}
Contributions of $\cY^*\times \cY^*$ to both semiclassical and approximation $\D$-terms do not exceed
\begin{equation}
(B^{\frac{25}{24}}; \, Z^{\frac{1}{2}}B^{\frac{2}{3}}).
\label{26-5-34}
\end{equation}

\item\label{prop-26-5-13-iii}
Contributions of $\cY^*$ to both semiclassical and approximation errors for $\T$-term do not exceed
\begin{equation}
(B^{\frac{5}{4}};\, Z^{\frac{3}{5}} B^{\frac{4}{5}}).
\label{26-5-35}
\end{equation}
\end{enumerate}
\end{proposition}

\subsection{Case $B\le (Z-N)_+^{\frac{4}{3}}$}
\label{sect-26-5-3-2}

Now let us consider the case $B\le (Z-N)_+^{\frac{4}{3}}$. In this case the \emph{boundary strip\/}
\begin{gather}
\cY\coloneqq \{x\colon |W(x)+\nu|\le \epsilon |\nu|\}
\label{26-5-36}\\
\intertext{consists of two subzones}
\cY_1 \coloneqq \{x\colon \epsilon B\le |W(x)+\nu|\le \epsilon |\nu|\}
\label{26-5-37}\\
\shortintertext{and}
\cY^* \coloneqq \{x\colon |W(x)+\nu|\le \epsilon B \}.
\label{26-5-38}
\end{gather}

Applying arguments of Section~\ref{sect-26-4} (more precisely, analysis in zones $\cX_1$, $\cX_1$ extended and $\cX_2$) one can prove easily that

\begin{proposition}\label{prop-26-5-14}
Let $B\le (Z-N)_+^{\frac{4}{3}}$. Then
\begin{enumerate}[label=(\roman*), wide, labelindent=0pt]
\item\label{prop-26-5-14-i}
Contributions of $\cY_1$ into semiclassical and approximation errors in $\N$-term do not exceed $C|\nu|\bar{r}^2\asymp C(Z-N)_+^{\frac{2}{3}}$.

\item\label{prop-26-5-14-ii}
Contributions of $\cY_1\times \cY_1$ into semiclassical and approximation $\D$-terms do not exceed $C|\nu|^2\bar{r}^3\asymp C(Z-N)_+^{\frac{5}{3}}$.
\pagebreak
\item\label{prop-26-5-14-iii}
Contribution of $\cY_1$ into semiclassical and approximation errors in $\T$-term do not exceed
$C|\nu|^{\frac{3}{2}}\bar{r}\asymp C(Z-N)_+^{\frac{5}{3}}$.
\end{enumerate}
\end{proposition}

\begin{proof}
We leave easy details to the reader. \end{proof}

On the other hand, applying arguments of the previous Subsubsection~\emph{\ref{sect-26-5-3-1}.1. \nameref{sect-26-5-3-1}\/} with $\theta=1$,
$\hbar =|\nu|^{-\frac{1}{2}}\bar{r}^{-1}\asymp (Z-N)_+^{-\frac{1}{3}}$ one can prove easily the following

\begin{proposition}\label{prop-26-5-15}
Let $B\le (Z-N)_+^{\frac{4}{3}}$. Then
\begin{enumerate}[label=(\roman*), wide, labelindent=0pt]
\item\label{prop-26-5-15-i}
Contributions of $\cY_2$ into semiclassical and approximation errors in $\N$-term do not exceed $C(Z-N)_+^{-\frac{5}{9}}B$.

\item\label{prop-26-5-15-ii}
Contributions of $\cY_2\times \cY_2$ into semiclassical and approximation $\D$-terms do not exceed $C(Z-N)_+^{-\frac{7}{9}}B^2$.

\item\label{prop-26-5-15-iii}
Contribution of $\cY_2$ into semiclassical and approximation errors in $\T$-term do not exceed $C(Z-N)_+^{\frac{5}{3}}$.
\end{enumerate}
\end{proposition}

\begin{proof}
We leave easy details to the reader.\end{proof}

\section{Summary}
\label{sect-26-5-4}
Adding contributions of all other zones we arrive to
\begin{proposition}\label{prop-26-5-16}
Let $M\ge 2$. Then for the constructed potential $W$
\begin{enumerate}[label=(\roman*), wide, labelindent=0pt]
\item\label{prop-26-5-16-i}
Total semiclassical and approximation errors in $\N$-term do not exceed
\begin{equation}
C\left\{\begin{aligned}
&CZ^{\frac{2}{3}}+(Z-N)_+^{-\frac{5}{9}}B \quad
&&\text{if\ \ } B\le (Z-N)_+^{\frac{4}{3}},\\
&Z^{\frac{2}{3}}+B^{\frac{1}{2}}L + (Z-N)_+^{\frac{1}{4}}B^{\frac{19}{48}}L_* \quad
&&\text{if\ \ } (Z-N)_+^{\frac{4}{3}}\le B\le Z^{\frac{4}{3}},\\
&Z^{\frac{2}{5}}B^{\frac{1}{5}}L +
(Z-N)_+^{\frac{1}{4}}Z^{\frac{7}{20}}B^{\frac{2}{15}}L_* \quad
&&\text{if\ \ } Z^{\frac{4}{3}}\le B\le Z^3
\end{aligned}\right.
\label{26-5-39}
\end{equation}
where $L_*=(1+|\log \theta|)$ with
$\theta=|\nu| G^{-1}=(Z-N)_+\cdot \max(B^{-\frac{3}{4}},Z^{-1})$ and
$L= (1+|\log BZ^3|)$.

\item\label{prop-26-5-16-ii}
Both semiclassical and approximation $\D$-terms do not exceed
\begin{equation}
C\left\{\begin{aligned}
&Z^{\frac{5}{3}}+(Z-N)_+^{-\frac{7}{9}}B^2 \quad
&&\text{if\ \ } B\le (Z-N)_+^{\frac{4}{3}},\\
&Z^{\frac{5}{3}}+B^{\frac{5}{4}}L^2+ (Z-N)_+^{\frac{1}{2}}B^{\frac{25}{24}}L_*^2 \quad
&&\text{if\ \ } (Z-N)_+^{\frac{4}{3}}\le B\le Z^{\frac{4}{3}},\\
&Z^{\frac{3}{5}}B^{\frac{4}{5}}L^2 +
(Z-N)_+^{\frac{1}{2}}Z^{\frac{1}{2}}B^{\frac{2}{3}}L_*^2\quad
&&\text{if\ \ } Z^{\frac{4}{3}}\le B\le Z^3.
\end{aligned}\right.
\label{26-5-40}
\end{equation}

\item\label{prop-26-5-16-iii}
Total approximation error in $\T$-term does not exceed
\begin{equation}
CQ\coloneqq C\max\bigl(Z^{\frac{5}{3}},\, Z^{\frac{3}{5}}B^{\frac{4}{5}}\bigr) = C\left\{\begin{aligned}
&Z^{\frac{5}{3}}\quad
&&\text{if\ \ } B\le Z^{\frac{4}{3}},\\
&Z^{\frac{3}{5}}B^{\frac{4}{5}} \quad
&&\text{if\ \ } Z^{\frac{4}{3}}\le B\le Z^3.
\end{aligned}\right.
\label{26-5-41}
\end{equation}

\item\label{prop-26-5-16-iv}
Total semiclassical error in $\T$-term does not exceed
\begin{equation}
CQ + CZ^{\frac{4}{3}}B^{\frac{1}{3}} + CZ^{\frac{3}{2}}a^{-\frac{1}{2}}
\label{26-5-42}
\end{equation}
provided $a\ge Z^{-1}$; for $a\le Z^{-1}$ the last term should be replaced by $CZ^2$.
\end{enumerate}
\end{proposition}

Also we arrive to
\begin{proposition}\label{prop-26-5-17}
Let $M\ge 2$, $B\le Z$ and $a\ge Z^{-\frac{1}{3}}$. Then for the constructed potential $W$
\begin{enumerate}[label=(\roman*), wide, labelindent=0pt]
\item\label{prop-26-5-17-i}
Total semiclassical and approximation errors in $\N$-term do not exceed
$CZ^{\frac{2}{3}}\bigl( (BZ^{-1})^\delta + (aZ^{\frac{1}{3}})^{-\delta} + Z^{-\delta}\bigr)$.

\item\label{prop-26-5-17-ii}
Both semiclassical and approximation $\D$-terms and semiclassical and approximation errors in $\T$-term do not exceed
$CZ^{\frac{5}{3}}\bigl( (BZ^{-1})^\delta + (aZ^{\frac{1}{3}})^{-\delta} + Z^{-\delta}\bigr)$.
\end{enumerate}
\end{proposition}

\chapter{Ground state energy}
\label{sect-26-6}

\section{Lower estimates}
\label{sect-26-6-1}
Now the lower estimates for the ground state energy $\E_N$ are already proven: in virtue of the analysis given in Subsection~\ref{book_new-sect-25-2-1} we know that
\begin{equation}
\E_N \ge \Phi_*(W) + \Bigl(\Tr (H_{A,W}-\nu)^- +\int P_B(W +\nu) \,dx\Bigr)
-\nu N
\label{26-6-1}
\end{equation}
for \emph{arbitrary\/} potential $W$ and $\nu\le 0$; picking Thomas-Fermi potential $W=W^\TF_B$ and chemical potential $\nu$, we arrive to estimate
(\ref{26-6-2}) below with $W=W^\TF_B$ and $Q=0$.

However we use slightly different potential $W$ and arrive to estimate (\ref{26-6-2}) below where $CQ$, defined by (\ref{26-5-42}), estimates an approximation error; replacing $\T$-term by its semiclassical approximation and applying Proposition~\ref{prop-26-5-16}\ref{prop-26-5-16-iii} and~\ref{prop-26-5-17}\ref{prop-26-5-16-ii}, we arrive to estimates (\ref{26-6-3})--(\ref{26-6-6}) below:

\begin{proposition}\label{prop-26-6-1}
Let $B\le Z^3$. Then
\begin{enumerate}[label=(\roman*), wide, labelindent=0pt]
\item\label{prop-26-6-1-i}
The following estimate holds with an approximate potential $W$ we constructed:
\begin{equation}
E^\TF \ge
\cE^\TF + \Bigl(\Tr ((H_{A,W}-\nu)^-) +\int P_B(W +\nu) \,dx\Bigr) - CQ
\label{26-6-2}
\end{equation}
with $Q$ defined by \textup{(\ref{26-5-41})}; further, for $W=W^\TF_B$ this estimate holds with $Q=0$.

\item\label{prop-26-6-1-ii}
The following estimates hold for $M=1$ and $M\ge 2$ respectively
\begin{align}
&E^\TF \ge
\cE^\TF + \Scott - CQ-CZ^{\frac{4}{3}}B^{\frac{1}{3}}
\label{26-6-3}\\
\shortintertext{and}
&E^\TF \ge
\cE^\TF + \Scott - CQ - CZ^{\frac{4}{3}}B^{\frac{1}{3}} -
CZ^{\frac{3}{2}}a^{-\frac{1}{2}}
\label{26-6-4}
\end{align}
provided $a\ge Z^{-1}$\,\footnote{\label{foot-26-33} Recall that $a$ is the minimal distance between nuclei.} and $B\le Z^2$; on the other hand, if
$a\le Z^{-1}$, we can skip $\Scott$ and replace the last term in (\ref{26-4-2}) by $CZ^2$.

\item\label{prop-26-6-1-iii}
As $B\le Z$ the following estimates hold for $M=1$ and
$M\ge 2$, $a\ge Z^{-\frac{1}{3}}$ respectively
\begin{equation}
E^\TF \ge
\cE^\TF + \Scott + \Dirac +\Schwinger -
C Z^{\frac{5}{3}}\bigl(Z^{-\delta}+ (BZ^{-1})^\delta\bigr)
\label{26-6-5}
\end{equation}
and
\begin{multline}
E^\TF \ge
\cE^\TF + \Scott + \Dirac +\Schwinger -\\
C Z^{\frac{5}{3}}\bigl(Z^{-\delta}+ (BZ^{-1})^\delta+(aZ^{\frac{1}{3}})^{-\delta}\bigr).
\label{26-6-6}
\end{multline}
\end{enumerate}
\end{proposition}

\section{Upper estimate: general scheme}
\label{sect-26-6-2}

On the other hand, the upper estimate is more demanding. Recall that, according to Subsection~\ref{book_new-sect-25-2-2}, for the upper estimate in addition to the trace we need to estimate also $|\lambda_N-\nu|$ where $\lambda_N<0$ is $N$-th eigenvalue of $H_{A,W}$ and $\lambda_N=0$ if $H_{A,W}$ has less than $N$ negative eigenvalues, and the product
\begin{equation}
|\lambda_N-\nu|\cdot |\N(H_{A,W}) -N|
\label{26-6-7}
\end{equation}
and also three $\D$-terms: two of them are semiclassical: \begin{phantomequation}\label{26-6-8}\end{phantomequation}
\begin{gather}
\D\Bigl( e(x,x,\lambda) - P'_B(W(x)+\lambda), \,
e(x,x,\lambda) - P'_B(W(x)+\lambda)\Bigr)
\tag*{$\textup{(\ref*{26-6-8})}_{1,2}$}\label{26-6-8-*}\\
\intertext{with $\lambda=\nu$ and $\lambda=\lambda_N$ and also}
\D\Bigl( P'_B(W(x)+\lambda_N) - P'_B(W(x)+\nu), \,
P'_B(W(x)+\lambda_N) - P'_B(W(x)+\nu)\Bigr).
\label{26-6-9}
\end{gather}
For this purpose our tool will be semiclassical estimates for two semiclassical $\N$-terms
\begin{phantomequation}\label{26-6-10}\end{phantomequation}
\begin{gather}
\int \Bigl( e(x,x,\lambda) - P'_B(W(x)+\lambda)\Bigr)\,dx
\tag*{$\textup{(\ref*{26-6-10})}_{1,2}$}\label{26-6-10-*}\\
\intertext{also with $\lambda=\nu$ and $\lambda=\lambda_N$ and also estimate from below for the third $\N$-term}
|\int \Bigl( P'_B(W(x)+\lambda_N) - P'_B(W(x)+\nu)\Bigr)\, dx|.
\label{26-6-11}
\end{gather}

\section{Upper estimate as $M=1$}
\label{sect-26-6-3}

\subsection{Estimate for $|\lambda_N-\nu|$}
\label{sect-26-6-3-1}

We start from the easier case $M=1$. Exactly as in Subsection~\ref{book_new-sect-25-2-2} we have two cases: in the first case $|\nu|$ is small enough so we construct $W^\TF$ with $\nu=0$ and estimate $|\lambda_N|$, and in the second case  we prove that $\lambda_N\asymp \nu$ and estimate
$|\lambda_N-\nu|\le \epsilon |\nu|$.

\begin{proposition}\label{prop-26-6-2}
Let $M=1$, $B\le Z^3$.
\begin{enumerate}[label=(\roman*), wide, labelindent=0pt]
\item\label{prop-26-6-2-i}
Assume first that
\begin{gather}
(Z-N)_+ \le K\coloneqq C_0\max\bigl(Z^{\frac{2}{3}},\,Z^{\frac{2}{5}}B^{\frac{1}{5}}\bigr)
\label{26-6-12}\\
\intertext{and let us construct $W$ as if $\nu=0$ i.e. $N=Z$. Then}
|\lambda_N|\le
C_1\max\bigl(Z^{\frac{8}{9}},\, B^{\frac{2}{3}}\bigr).
\label{26-6-13}
\end{gather}

\item\label{prop-26-6-2-ii}
Assume now that
\begin{equation}
(Z-N)_+ \ge K= C_0\max\bigl(Z^{\frac{2}{3}},\,Z^{\frac{2}{5}}B^{\frac{1}{5}}\bigr)
\label{26-6-14}
\end{equation}
with sufficiently large $C_0$. Then $\lambda_N\asymp \nu$ and
\begin{equation}
|\lambda_N-\nu |\le C_1\max\bigl( Z^{\frac{2}{3}}, \, B^{\frac{1}{2}}\bigr)|\nu|^{\frac{1}{4}}.
\label{26-6-15}
\end{equation}
\end{enumerate}
\end{proposition}

\begin{proof}
\begin{enumerate}[label=(\roman*), wide, labelindent=0pt]
\item\label{pf-26-6-2-i}
In the framework of Statement~\ref{prop-26-6-2-i} assume first that
 $B\ge (Z-N)_+^{\frac{4}{3}}$.  One can see easily that then
\begin{claim}\label{26-6-16}
Expression (\ref{26-6-11}) is
\begin{equation*}
\asymp B |\lambda_N|^{\frac{1}{2}} \times
\underbracket{\Bigl(\frac{|\lambda_N|}{G}\Bigr)^{\frac{1}{4}} \bar{r}^3}
\asymp
|\lambda_N|^{\frac{3}{4}} \min \bigl(1,\, B^{-\frac{3}{10}}Z^{\frac{2}{5}}\bigr)
\end{equation*}
\end{claim}
\vspace{-10pt}
where $(|\lambda_N|/G)^{\frac{1}{4}}\bar{r}$ is a width of the zone where
$0<W\le -\lambda_N$ and the selected factor is the volume of this zone. Indeed,  $W\asymp (\bar{r}-|x|)_+^4 \bar{r}^{-4} G$ for $|x|\asymp \bar{r}$.

However this expression (\ref{26-6-11}) should be less than
$C\max \bigl(Z^{\frac{2}{3}},\, Z^{\frac{2}{5}}B^{\frac{1}{5}}\bigr)$ which is exactly an error estimate in the semiclassical expression for $N$. Thus
\begin{equation}
|\lambda_N|^{\frac{3}{4}} \min \bigl(1,\, Z^{\frac{2}{5}}B^{-\frac{3}{10}}\bigr) \le C\max \bigl(Z^{\frac{2}{3}},\, Z^{\frac{2}{5}}B^{\frac{1}{5}}\bigr)
\label{26-6-17}
\end{equation}
where everywhere the first and the second cases are as $B\le Z^{\frac{4}{3}}$ and $Z^{\frac{4}{3}}\le B \le Z^3$ respectively. The last inequality is equivalent to (\ref{26-6-13}).

On the other hand, if  $B\le (Z-N)_+^{\frac{4}{3}}$, inequality (\ref{26-6-17}) is replaced by
$ |\lambda_N|^{\frac{3}{4}}\le CZ^{\frac{2}{3}}$ which coincides with (\ref{26-6-17}) with $B$ reset to $(Z-N)_+^{\frac{4}{3}}$ and also with the same inequality derived for $B=0$ in Subsection~\ref{book_new-sect-25-4-2}; therefore (\ref{26-6-13}) holds in this case as well.

\item\label{pf-26-6-2-ii}
One can prove easily that
\begin{claim}\label{26-6-18}
If condition (\ref{26-6-14}) is fulfilled, and expression (\ref{26-6-11}) does not exceed the semiclassical error estimate
$C_0\max(Z^{\frac{2}{3}}, Z^{\frac{2}{5}}B^{\frac{1}{5}})$, then $\lambda_N\asymp \nu$ and, furthermore, expression (\ref{26-6-11}) is
\begin{gather}
\asymp B |\lambda_N-\nu | \int P''_B(W+\nu)\,dx
\asymp B |\lambda_N-\nu|\int (W+\nu)_+^{-\frac{1}{2}}\,dx,
\label{26-6-19}\\
\intertext{which for $B\ge (Z-N)_+^{\frac{4}{3}}$ is }
\asymp
B \bar{r}^3 |\lambda_N-\nu|\cdot|\nu|^{-\frac{1}{2}}
\Bigl(\frac{|\nu|}{G}\Bigr)^{\frac{1}{4}} \asymp
|\lambda_N-\nu|\cdot|\nu|^{-\frac{1}{4}}
\min \bigl(1,\, B^{-\frac{3}{10}}Z^{\frac{2}{5}}\bigr)
\label{26-6-20}
\end{gather}
\end{claim}
\vspace{-10pt}
and this should be less than
$C\max \bigl(Z^{\frac{2}{3}},\, Z^{\frac{2}{5}}B^{\frac{1}{5}}\bigr)$, and this implies (\ref{26-6-15}).

On the other hand, if $B\le (Z-N)_+^{\frac{4}{3}}$, then the right-hand expression of (\ref{26-6-19}) is $\asymp |\lambda_N-\nu|\bar{r}\asymp |\lambda_N-\nu|(Z-N)_+^{-\frac{1}{3}}$ and this should be less than $CZ^{\frac{2}{3}}$, and this implies (\ref{26-6-15}) in this case as well.
\end{enumerate}\end{proof}

Proposition~\ref{prop-26-6-2} immediately implies

\begin{corollary}\label{cor-26-6-3}
In the frameworks of Proposition~\ref{prop-26-6-2}\ref{prop-26-6-2-i}, \ref{prop-26-6-2-ii},
\begin{equation}
|\lambda_N-\nu| \cdot \N ([\lambda_N, \nu]) \le CQ,
\label{26-6-21}
\end{equation}
where $\N(\lambda_N,\nu)$ is the number of (non-zero) eigenvalues on interval $[\lambda_N,\nu]$ or $[\nu,\lambda_N]$\,\footnote{\label{foot-26-34} Recall, that the frameworks of Proposition~\ref{prop-26-6-2}\ref{prop-26-6-2-i} we pick up $\nu=0$.}.
\end{corollary}

\subsection{Estimate for $\D$-terms}
\label{sect-26-6-3-2}

\begin{proposition}\label{prop-26-6-4}
In the frameworks of Proposition~\ref{prop-26-6-2}\ref{prop-26-6-2-i},\ref{prop-26-6-2-ii} expressions
\begin{gather}
\D \bigl(e(x,x,\lambda)-P'_B(W+\lambda),\,
e(x,x,\lambda)-P'_B(W+\lambda)\bigr) \label{26-6-22}\\
\intertext{with $\lambda=\nu$\,\footref{foot-26-34} and with $\lambda=\lambda_N$ and}
\D \bigl(P'_B(W+\nu)-P'_B(W+\lambda),\,
P'_B(W+\nu)-P'_B(W+\lambda_N)\bigr)\label{26-6-23}
\end{gather}
with $\lambda=\lambda_N$ do not exceed
$C\max(Z^{\frac{5}{3}}, Z^{\frac{3}{5}}B^{\frac{4}{5}})$.
\end{proposition}

\begin{proof}
Recall that we already derived in Section~\ref{sect-26-3} this estimate for $\D$-term (\ref{26-6-22}) with $\lambda=\nu$. Further, the same estimate for this term with $\lambda=\lambda_N$ can be proven exactly in the same way; we leave easy details to the reader.

Furthermore, one can derive the same estimate for $\D$-term (\ref{26-6-23})  using Proposition~\ref{prop-26-6-2}; again we leave easy details to the reader.
\end{proof}

\begin{remark}\label{rem-26-6-5}
Let $B\le Z$. Then in (\ref{26-6-12})--(\ref{26-6-15}) and therefore also in (\ref{26-6-19}) and in Proposition~\ref{prop-26-6-4} one can replace $C_0$ and $C$ by $C_0\varepsilon$ and $C\varepsilon$ respectively with the small parameter $\varepsilon$: $\max \bigl(Z^{-\delta}, (BZ^{-1})^\delta\bigr)\le \varepsilon \le 1$.
\end{remark}

\subsection{Summary}
\label{sect-26-6-3-3}

Then following the scheme of Subsection~\ref{book_new-sect-25-4-4} we arrive to upper estimates in Theorem~\ref{thm-26-6-6} below (lower estimates have been proven in Proposition~\ref{prop-26-6-1}). Furthermore, based on estimates (\ref{26-6-2}) and (\ref{26-6-24}) and the fact, that the left-hand term in (\ref{26-6-28}) should fit into the ``gap'' between them (see Section~\ref{book_new-sect-25-2}), we also arrive to Theorem~\ref{thm-26-6-7} below:\enlargethispage{\baselineskip}

\begin{theorem}\label{thm-26-6-6}
Let $M=1$, $B\le Z^3$. Then
\begin{enumerate}[label=(\roman*), wide, labelindent=0pt]
\item\label{thm-26-6-6-i}
The following estimate holds:
\begin{equation}
E^\TF \le
\cE^\TF + \Bigl(\Tr ((H_{A,W}-\nu)^-) +\int P_B(W^\TF(x) +\nu) \,dx\Bigr) + CQ
\label{26-6-24}
\end{equation}
with $Q= \max \bigl(Z^{\frac{5}{3}}, Z^{\frac{3}{5}}B^{\frac{4}{5}}\bigr)$.

\item\label{thm-26-6-6-ii}
The following estimate holds:
\begin{equation}
E^\TF \le
\cE^\TF + \Scott + CQ+CZ^{\frac{4}{3}}B^{\frac{1}{3}}.
\label{26-6-25}
\end{equation}
Here for $Z^2\le B\le Z^3$ one can skip $\Scott$.

\item\label{thm-26-6-6-iii}
If $B\le Z$, then
\begin{equation}
E^\TF \ge
\cE^\TF + \Scott + \Dirac +\Schwinger +
C Z^{\frac{5}{3}}\bigl(Z^{-\delta}+ (BZ^{-1})^\delta\bigr).
\label{26-6-26}
\end{equation}
\end{enumerate}
\end{theorem}

\begin{theorem}\label{thm-26-6-7}
Let $M=1$, $B\le Z^3$. Then
\begin{enumerate}[label=(\roman*), wide, labelindent=0pt]
\item\label{thm-26-6-7-i}
The following estimate holds:
\begin{equation}
\D( \rho_\Psi-\rho_B^\TF, \rho_\Psi-\rho_B^\TF) \le CQ.
\label{26-6-27}
\end{equation}
\item\label{thm-26-6-7-ii}
If $B\le Z$, then
\begin{equation}
\D( \rho_\Psi-\rho_B^\TF, \rho_\Psi-\rho_B^\TF) \le
C Z^{\frac{5}{3}}\bigl(Z^{-\delta}+ (BZ^{-1})^\delta\bigr).
\label{26-6-28}
\end{equation}
\end{enumerate}
\end{theorem}

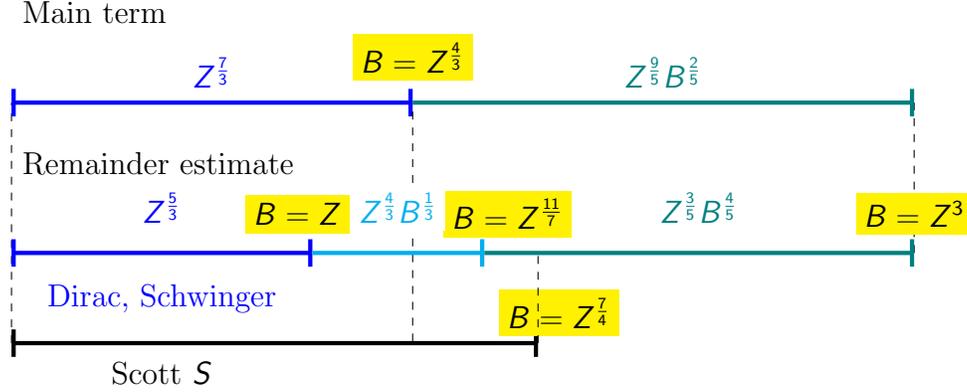
\begin{figure}[h]
\centering
\begin{tikzpicture}[scale=4]
\node[right] at (0,.8) {Main term};
\draw[dashed] (0,-.3)--(0,.5);
\draw[dashed] (3,0)--(3,.5);

\draw [ultra thick, blue, |-|](0,.5)--(4/3,.5);
\draw[dashed] (4/3,.47)--(4/3,-.3);
\node at (4/3,.65) {\colorbox{yellow}{$B=Z^{\frac{4}{3}}$}};
\node at (2/3,.6) {\color{blue}$Z^{\frac{7}{3}}$};
\draw [ultra thick, teal, -|](4/3,.5)--(3,.5);
\node at (13/6,.6) {\color{teal}$Z^{\frac{9}{5}}B^{\frac{2}{5}}$};
\node[right] at (0,.3) {Remainder estimate};
\draw [ultra thick, teal, -|](11/7,0)--(3,0);
\node at (3,.13) {\colorbox{yellow}{$B=Z^3$}};
\node at (16/7,.15) {\color{teal}$Z^{\frac{3}{5}}B^{\frac{4}{5}}$};
\draw [ultra thick,blue, |-|](0,0)--(1,0);
\node at (.95,.13) {\colorbox{yellow}{$B=Z$}};
\node at (.5,.15) {\color{blue}$Z^{\frac{5}{3}}$};
\node at (.5,-.15) {\color{blue}Dirac, Schwinger};
\draw [ultra thick,cyan, -|](1,0)--(11/7,0);
\node at (1.65,.13) {\colorbox{yellow}{$B=Z^{\frac{11}{7}}$}};
\node at (9/7,.15) {\colorbox{white}{\color{cyan}$Z^{\frac{4}{3}}B^{\frac{1}{3}}$}};
\draw [ultra thick, |-|](0,-.3)--(7/4,-.3);
\node at (1.82,-.2) {\colorbox{yellow}{$B=Z^{\frac{7}{4}}$}};
\draw [dashed] (7/4,-.3)--(7/4,0);
\node at (.5,-.4) {Scott $S$};
\end{tikzpicture}
\caption{\label{fig-26-3} This figure illustrates the remainder estimate for $\E_N$. Thresholds $B=Z^\star$  are shown in the yellow boxes.}
\end{figure}

\section{Upper estimate as $M\ge 2$}
\label{sect-26-6-4}

\subsection{Estimate for $|\lambda_N-\nu|$}
\label{sect-26-6-4-1}

Again we need to consider two cases: \emph{almost neutral molecules\/} (systems) when $(Z-N)_+\le C _0K$ with $K$ slightly redefined below and we can set $\nu=0$ in the definition of Thomas-Fermi potential and establish estimate for $|\lambda_N|$ (and for optimal $\nu$ we have the same estimate for both $|\nu|$ and $\lambda_N$) and \emph{not almost neutral molecules\/} (systems) when $(Z-N)_+\ge C_0K$ and we can prove that $|\lambda_N|\asymp |\nu|$ and estimate $|\lambda_N-\nu|$.

\begin{proposition-foot}\label{prop-26-6-8}\footnotetext{\label{foot-26-35}
Cf. Proposition~\ref{prop-26-6-2}.}
Let $M\ge 1$, $B\le Z^3$ and condition \textup{(\ref{26-2-28})} be fulfilled.

\begin{enumerate}[label=(\roman*), wide, labelindent=0pt]
\item\label{prop-26-6-8-i}
Assume first that
\begin{align}
(Z-N)_+ \le K\coloneqq C_0&\left\{\begin{aligned}
&Z^{\frac{2}{3}}+B^{\frac{1}{2}}L\qquad
&&\text{if\ \ }B\le Z^{\frac{4}{3}},\\
&Z^{\frac{1}{5}}B^{\frac{4}{15}} L\qquad
&&\text{if\ \ } Z^{\frac{4}{3}}\le B\le Z^3
\end{aligned}\right.
\label{26-6-29}\\
\intertext{and let us construct $W$ as if $\nu=0$ i.e. $N=Z$. Then}
|\lambda_N|\le C_1&\left\{\begin{aligned}
&Z^{\frac{8}{9}}+B^{\frac{2}{3}}L^{\frac{4}{3}}\qquad
&&\text{if\ \ }B\le Z^{\frac{4}{3}},\\
&Z^{\frac{1}{5}}B^{\frac{4}{15}} L^{\frac{4}{3}}\qquad
&&\text{if\ \ } Z^{\frac{4}{3}}\le B\le Z^3;
\end{aligned}\right.
\label{26-6-30}
\end{align}
recall that $L=|\log BZ^{-3}|$.

\item\label{prop-26-6-8-ii}
Assume now that
\begin{equation}
(Z-N)_+ \ge K
\label{26-6-31}
\end{equation}
with sufficiently large $C_0$ in the definition of $K$. Then
$\lambda_N\asymp \nu$ and moreover
\begin{gather}
|\lambda_N-\nu|\le C\max\bigl(Z^{\frac{2}{3}},B^{\frac{1}{2}}L_1\bigr)|\nu|^{\frac{1}{4}},
\label{26-6-32}\\
\shortintertext{where}
L_1=\left\{\begin{aligned}
&1\qquad
&&\text{if\ \ } B\le (Z-N)_+^{\frac{4}{3}},\\
&|\log ((Z-N)_+/B^{\frac{3}{4}}|\qquad
&&\text{if\ \ } (Z-N)_+^{\frac{4}{3}}\le B\le Z^{\frac{4}{3}},\\
&|\log (Z-N)_+/B^{\frac{4}{5}}Z^{\frac{3}{5}}|) \qquad
&&\text{if\ \ } Z^{\frac{4}{3}}\le B\le Z^3.
\end{aligned}\right.
\label{26-6-33}
\end{gather}

\item\label{prop-26-6-8-iii}
For $M=1$ one can take $L=L_1=1$.
\end{enumerate}
\end{proposition-foot}

\begin{proof}
We will apply arguments slightly more sophisticated than the obvious ones, used in the proof of Proposition~\ref{prop-26-6-2}. These better arguments will allow us to derive slightly better estimates for $|\lambda_N-\nu|$ as $(Z-N)_+\ge CK$, and for threshold $K$ itself.

Recall that estimates for $|\lambda_N-\nu|$ are derived by comparison of expression (\ref{26-6-11}) and the semiclassical errors for the number of eigenvalues below $\lambda=\nu$ and $\lambda=\lambda_N$: expression (\ref{26-6-11}) should be less than the sum of these semiclassical errors.

Consider contribution of each ball
\begin{equation}
B(x,\ell(x))\subset \cY= \{x\colon \min_m |x-\y_m|\ge \epsilon \bar{r}\}
\label{26-6-34}
\end{equation}
to semiclassical errors as $\lambda=\nu$ and $\lambda=\lambda_N$ and compare it with its contribution to (\ref{26-6-11}):

\begin{enumerate}[label=(\alph*), wide, labelindent=0pt]
\item\label{pf-26-6-8-a}
Each ball contributes no more than $CB\ell^2$ to the first error (with $\lambda=\nu$) where due to our choice $\zeta \ell \ge 1$.

\item\label{pf-26-6-8-b}
Further, each ball with $\zeta\ge C_1|\lambda_N-\nu|^{\frac{1}{2}}$ contributes no more than $CB\ell^2$. On the other hand, each ball with
$\zeta\le C_1|\lambda_N-\nu|^{\frac{1}{2}}$ contributes no more than
$CB\ell^{3-\sigma}|\lambda_N-\nu|^{\sigma/2}$ to the second error (with $\lambda=\lambda_N$); here $\sigma=\frac{1}{3}$ is due to rescaling.

\item\label{pf-26-6-8-c}
Meanwhile,  each ball with $\zeta\ge C_1|\lambda_N-\nu|^{\frac{1}{2}}$ contributes no less than $\epsilon_0 B|\lambda_N-\nu| \zeta^{-1}\ell^3$, and each ball with $\zeta\ge C_1|\lambda_N-\nu|^{\frac{1}{2}}$ contributes no less than $\epsilon_0 |\lambda_N-\nu|^{\frac{1}{2}}\ell^3$ to expression (\ref{26-6-11}) and it is larger than the contributions of this ball to each of semiclassical errors (multiplied by $C$) as long as \begin{phantomequation}\label{26-6-35}\end{phantomequation}
\begin{equation}
\zeta^2 \ge |\lambda_N-\nu|\ge C_2\zeta\ell^{-1},\qquad
|\lambda_N-\nu|\ge C_2\ell^{-2}.
\tag*{$\textup{(\ref*{26-6-35})}_{1,2}$}\label{26-6-35-*}
\end{equation}
\end{enumerate}

Obviously in Statements~\ref{prop-26-6-8-i}, \ref{prop-26-6-8-ii} we can assume that
\begin{claim}\label{26-6-36}
Inequalities (\ref{26-6-30}) and (\ref{26-6-32}) respectively (with $C$ replaced by arbitrarily large $C_3$) are violated.
\end{claim}

\begin{enumerate}[label=(\roman*), wide, labelindent=0pt]
\item[(i)(a)]\label{pf-26-6-35-ia}
Assume first that $(Z-N)_+^{\frac{4}{3}}\le B \le Z^3$. Then in the framework of assumption $\zeta = B^2\ell^4$ with minimal
$\ell= B^{-\frac{1}{3}}$ and therefore \ref{26-6-35-*}  are fulfilled for
$\ell \le C_2 B^{-1}|\lambda_N|$. Therefore we need to account for the semiclassical errors contributed by an inner shell (not exceeding $C\max(Z^{\frac{2}{3}}, B^{\frac{1}{2}})$) and by zone
$\cY\cap \{\ell \ge C_2B^{-1}|\lambda_N|\}$; there
$\zeta \ge C_1|\lambda_N|^{\frac{1}{2}}$ and therefore its contribution does not exceed $CB\int \ell(x)^{-1}\,dx $ with integral over this zone and it does not exceed $CB\bar{r}^2 L$.

So, these truncated semiclassical errors do not exceed
$C\max(Z^{\frac{2}{3}} , B\bar{r}^2 L)$. Meanwhile, expression (\ref{26-6-11}) is no less than $CB ^{\frac{1}{2}}\bar{r}^2|\lambda_N|^{\frac{3}{4}}$. Therefore comparing these two expressions as $B\le Z^{\frac{4}{3}}$ and as $Z^{\frac{4}{3}}\le B\le Z^3$ we arrive to (\ref{26-6-30}).

\item[(b)]\label{pf-26-6-35-ib}
Consider the remaining case $B\le (Z-N)_+^{\frac{4}{3}}$. Semiclassical arguments remain valid while estimate of (\ref{26-6-11}) from below by $\epsilon_0|\lambda_N|^{\frac{3}{4}}$ also could be proven easily.

\item[(ii)(a)]\label{pf-26-6-35-iia}
Again, assume first that $(Z-N)_+^{\frac{4}{3}}\le B\le Z^3$. Again, in the calculation of the truncated semiclassical errors we integrate over zone
$\{\ell \ge C_2 B^{-1}|\lambda_N-\nu|\}$ where
$\zeta \ge C_1|\lambda_N|^{\frac{1}{2}}$ and therefore its contribution does not exceed $CB\int \ell(x)^{-1}\,dx $ with integral over this zone and it does not exceed $CB\bar{r}^2 L_1$\,\footnote{\label{foot-26-36} In Statement~\ref{prop-26-6-8-i} this leads only to insignificant improvement.}.

Again, expression (\ref{26-6-21}) is larger than the expressions afterwards and comparing with the semiclassical error estimate we arrive to (\ref{26-6-32}).

\item[(b)]\label{pf-26-6-35-iib} Consider the remaining case
$B\le (Z-N)_+^{\frac{4}{3}}$. Semiclassical arguments remain valid while estimate of (\ref{26-6-11}) from below by $\epsilon_0|\lambda_N-\nu|\cdot |\lambda_N|^{-\frac{1}{4}}$ also could be proven easily.

\item[(iii)] Recall that for $M=1$ the semiclassical error estimate hold with $L=L_1=1$.
\end{enumerate}
\end{proof}

Then we arrive immediately to

\begin{corollary}\label{cor-26-6-9}
In the framework of Proposition~\ref{prop-26-6-8}
$|\lambda_N-\nu|\cdot \N ([\lambda_N,\nu])$ does not exceed expression \textup{(\ref{26-5-40})}.
\end{corollary}

\subsection{Estimate for $\D$-terms for almost neutral systems}
\label{sect-26-6-4-2}

We need to estimate the semiclassical error $\D$-term (\ref{26-6-22}) with $\lambda=\lambda_N$ because for $\lambda=\nu$ we already estimated it, and also we need to estimate another $\D$-term (\ref{26-6-23}). We start from the latter one.
Recall that under assumption (\ref{26-6-29}) we take $\nu=0$. The trivial estimate is based on
\begin{gather}
|P'_B(W)-P'_B(W+\lambda)|\le CW^{\frac{1}{2}}|\lambda|+ CBW^{-\frac{1}{4}}|\lambda|^{\frac{3}{4}},
\label{26-6-37}\\
\shortintertext{leading to}
J\le C \D(W^{\frac{1}{2}},\,W^{\frac{1}{2}}) |\lambda|^2 +
CB^2 |\lambda|^{\frac{3}{2}} \D(W^{-\frac{1}{4}}\theta,\,W^{-\frac{1}{4}}\theta)
\label{26-6-38}
\end{gather}
where here and below $J$ is expression (\ref{26-6-23}), $\theta$ is a characteristic function of the domain
$\{x\colon \gamma (x)\ge h ^{\frac{1}{3}}\}$ and we can ignore the contribution of the zone $\{x\colon \gamma (x)\le h ^{\frac{1}{3}}\}$. Really, the contribution of this zone does not exceed a semiclassical error estimate
$R\coloneqq C\max (Z^{\frac{5}{3}},\, Z^{\frac{3}{5}}B^{\frac{4}{5}}L^2)$.

Note that even without assumption (\ref{26-6-29})
\begin{equation}
\D(W^{\frac{1}{2}},\, W^{\frac{1}{2}})\asymp
(B^{-\frac{1}{4}};\, B^{-\frac{8}{5}}Z^{\frac{9}{5}})\qquad\text{for\ \ }
B\le Z^{\frac{4}{3}}, \, Z^{\frac{4}{3}} \le B\le Z^3
\label{26-6-39}
\end{equation}
respectively and (\ref{26-6-30}) implies that the first term in the right-hand expression of (\ref{26-6-38}) is much less than $R$.

Meanwhile, under assumption (\ref{26-6-29})
\begin{equation}
\D(W^{-\frac{1}{4}}\theta,W^{-\frac{1}{4}}\theta)\asymp
B^{-1}\D(\ell^{-1}\theta,\ell^{-1}\theta) \asymp B^{-1}\bar{r}^3 \D(\gamma^{-1}\theta,\gamma^{-1}\theta)
\label{26-6-40}
\end{equation}
where in the right-hand expression $\D$ and $\gamma, \theta$ are in the scale
$x\mapsto x\bar{r}^{-1}$ and then
$\D(\gamma^{-1}\theta,\gamma^{-1}\theta)\asymp L^2$ so the second term in (\ref{26-6-38}) does not exceed $CB\bar{r}^3 |\lambda_N|^{\frac{3}{2}}L^2$ which due to (\ref{26-6-30}) does not exceed
\begin{equation}
R\coloneqq C\max \bigl(Z^{\frac{5}{3}},\, Z^{\frac{3}{5}}B^{\frac{4}{5}}L^4\bigr).
\label{26-6-41}
\end{equation}

Consider now term (\ref{26-6-22}) with $\lambda=\lambda_N$. Let us consider zones $\Omega_1\coloneqq \{x\colon |\lambda-\nu|\lesssim \zeta \ell^{-1}\}$
and $\Omega_2\coloneqq \{x\colon |\lambda-\nu|\gtrsim \zeta \ell^{-1}\}$.

Note that the contribution to the term in question of each pair of balls contained in $\Omega_1 \times \Omega_1$ does not exceed estimate for the same term with $\lambda=\nu$; really, after rescaling $x\mapsto x/\ell$ and
$\tau\mapsto \tau/\zeta^2$ we conclude that the difference between energy levels does not exceed local semiclassical parameter $C/(\zeta\ell)$.

Therefore the total contribution of $\Omega_1\times \Omega_1$ to this term does not exceed
$C\max \bigl(Z^{\frac{5}{3}},\, Z^{\frac{3}{5}}B^{\frac{4}{5}}L^2\bigr)$.

On the other hand, the contribution to the term in question of each pair of balls contained in $\Omega_2 \times \Omega_2$ does not exceed its contribution to (\ref{26-6-25}) and therefore the total contribution of
$\Omega_2\times \Omega_2$ to this term does not exceed expression (\ref{26-6-41}). Thus, term (\ref{26-6-22}) with $\lambda=\lambda_N$ does not exceed (\ref{26-6-41}).

Therefore we arrive immediately to

\begin{theorem}\label{thm-26-6-10}
Let $M\ge 2$, $B\le Z^3$ and condition \textup{(\ref{26-2-28})} be fulfilled. Then under assumption \textup{(\ref{26-6-29})}
\begin{enumerate}[label=(\roman*), wide, labelindent=0pt]

\item\label{thm-26-6-10-i}
The following estimate holds:
\begin{multline}
E^\TF \le
\cE^\TF + \Bigl(\Tr ((H_{A,W}-\nu)^-) +\int P_B(W^\TF +\nu) \,dx\Bigr) + \\
C\max \bigl(Z^{\frac{5}{3}},\, Z^{\frac{3}{5}}B^{\frac{4}{5}}L^4\bigr).
\label{26-6-42}
\end{multline}
\item\label{thm-26-6-10-ii}
If $a\ge Z^{-1}$ then the following estimate holds:
\begin{equation}\\[2pt]
E^\TF \le
\cE^\TF + \Scott + C\max \bigl(Z^{\frac{5}{3}},\, Z^{\frac{3}{5}}B^{\frac{4}{5}}L^4\bigr)+
CZ^{\frac{4}{3}}B^{\frac{1}{3}}+
Ca^{-\frac{1}{2}}Z^{\frac{3}{2}};
\label{26-6-43}\\[2pt]
\end{equation}
if $a\le Z^{-1}$ one should replace the last term in the right-hand expression by $CZ^2$ and skip $\Scott$.

\item\label{thm-26-6-10-iii}
If $B\le Z$ and $a\ge Z^{-\frac{1}{3}}$
\begin{multline}
E^\TF \le
\cE^\TF + \Scott + \Dirac +\Schwinger +\\
C Z^{\frac{5}{3}}
\bigl(Z^{-\delta}+ (BZ^{-1})^\delta + (aZ^{\frac{1}{3}})^{-\delta}\bigr).
\label{26-6-44}
\end{multline}
\end{enumerate}
\end{theorem}

Here proof of Statement~\ref{thm-26-6-10-iii} is due to the same arguments as in the case $B=0$. Combining with the estimate from below we also conclude that

\begin{theorem}\label{thm-26-6-11}
\begin{enumerate}[label=(\roman*), wide, labelindent=0pt]
\item\label{thm-26-6-11-i}
In the framework of Theorem~\ref{thm-26-6-10} the following estimate holds:
\begin{equation}
\D\bigl(\rho_\Psi -\rho^\TF,\,\rho_\Psi -\rho^\TF\bigr)\le
C\max \bigl(Z^{\frac{5}{3}};\, Z^{\frac{3}{5}}B^{\frac{4}{5}}L^4\bigr).
\label{26-6-45}
\end{equation}
\item\label{thm-26-6-11-ii}
In the framework of Theorem~\ref{thm-26-6-10}\ref{thm-26-6-10-iii} the following estimate holds:
\begin{equation}
\D\bigl(\rho_\Psi -\rho^\TF,\,\rho_\Psi -\rho^\TF\bigr)\le
C Z^{\frac{5}{3}}
\bigl(Z^{-\delta}+ (BZ^{-1})^\delta+ (aZ^{\frac{1}{3}})^{-\delta}\bigr).
\label{26-6-46}
\end{equation}
\end{enumerate}
\end{theorem}

\subsection{Estimate for $\D$-terms for positively charged systems}
\label{sect-26-6-4-3}

Let assumption (\ref{26-6-31}) be fulfilled. Let $W=W_\nu$ and $\ell=\ell_\nu$ be a potential and a scaling function (used to derive semiclassical remainder estimates) for this $\nu<0$ (and $N<Z$) while $W_0$ and $\ell_0$ be a potential and a scaling function for $\nu=0$ (and $N=Z$).

Let us start from rather trivial arguments. Note that
\begin{equation}
|P'_B(W+\lambda)-P'_B(W+\nu)|\le CW^{\frac{1}{2}}|\lambda-\nu|\theta_1
+ CB|\lambda-\nu|^{\frac{1}{2}}\theta_2,
\label{26-6-47}
\end{equation}
where $\theta_1$ and $\theta_2$ are characteristic functions of
$\cY_1=\{x\colon W(x)+\nu \ge C_0|\nu|\}$ and
$\cY_2= \{x\colon 0< W(x)+\nu \le C_0|\nu|\}$ respectively. Let
\begin{gather*}
J_k\coloneqq
\D\bigl( [P'_B(W+\lambda_N)-P'_B(W+\nu)]\theta_k,\, [P'_B(W+\lambda_N)-P'_B(W+\nu)]\theta_k\bigr).\\
\intertext{Then in virtue of (\ref{26-6-47})}
J_1\le C\D(W^{\frac{1}{2}} \theta_1, W^{\frac{1}{2}} \theta_1)
|\lambda_N-\nu|^2.\\
\shortintertext{Note that\footnotemark}
\D(W^{\frac{1}{2}} \theta_1,\, W^{\frac{1}{2}} \theta_1)
\asymp \bigl((Z-N)_+^{-\frac{1}{3}};\, B^{-\frac{1}{4}};\, Z^{\frac{9}{5}}B^{-\frac{8}{5}}\bigr).
\end{gather*}
\footnotetext{\label{foot-26Z-28} In our three cases
$B\le (Z-N)_+^{\frac{4}{3}}$, $(Z-N)_+^{\frac{4}{3}}\le B\le Z^{\frac{4}{3}}$, and $Z^{\frac{4}{3}}\le B\le Z^3$ respectively.} Then, using inequality (\ref{26-6-32}) one can prove easily that $J_1\le CZ^{\frac{5}{3}}$.

However estimate for a contribution of zone $\cY_2$ is much worse:
\begin{equation}
J_2\le CB^2\D(\theta_2, \theta_2)|\lambda_N-\nu|\le
CB^2\bar{r}^3 (|\nu|/B^2)^{\frac{1}{2}}L_1^2|\lambda_N-\nu|,\label{26-6-48}
\end{equation}
where for $B\le (Z-N)_+^{\frac{1}{3}}$ we should replace
$(|\nu|/B^2)^{\frac{1}{2}}L_1^2$ by $\bar{r}^2$.
Then, using (\ref{26-6-32}), we conclude that for $(Z-N)_+^{\frac{4}{3}}\lesssim B\lesssim Z^3$
\pagebreak
\begin{equation*}
J_2\le
CB\bar{r}^3 |\nu|^{\frac{3}{4}}L_1^2\max(Z^{\frac{2}{3}}, B^{\frac{1}{2}}) \asymp CB(Z-N)_+^{\frac{3}{4}} \bar{r}^{\frac{9}{4}} \max(Z^{\frac{2}{3}}, B^{\frac{1}{2}}L_1)L_1^2 ,
\end{equation*}
and therefore we arrive to the last two cases below; the first case is proven similarly:
\begin{equation}
J_2 \le
C\left\{\begin{aligned}
&(Z-N)_+^{-\frac{4}{3}}Z^{\frac{2}{3}}B^2 ,\\
&(Z-N)_+^{\frac{3}{4}} B^{\frac{7}{16}}
\max (Z^{\frac{2}{3}}, B^{\frac{1}{2}}L_1)L_1^2 ,\\
&(Z-N)_+^{\frac{3}{4}}Z^{\frac{9}{20}} B^{\frac{3}{5}}L_1^3
\end{aligned}\right.
\label{26-6-49}
\end{equation}
in our three cases.

This is really shabby estimate. To improve it let us observe that
\begin{claim}\label{26-6-50}
If estimate $|\lambda_N-\nu|\le C\max(B^{\frac{2}{3}},\, (Z-N)_+^{\frac{8}{9}})$ holds, then $J_2$ does not exceed (\ref{26-5-40})
\end{claim}
and therefore we can assume that
\begin{equation}
|\lambda_N-\nu|\ge C\max((Z-N)_+^{\frac{8}{9}};\,B^{\frac{2}{3}}).
\label{26-6-51}
\end{equation}

Let us estimate the truncated semiclassical error\footnote{\label{foot-26-38} I.e. contribution to such error of the zone, where it exceeds the contribution to the principal part.}.

\begin{proposition}\label{prop-26-6-12}
\begin{enumerate}[label=(\roman*), wide, labelindent=0pt]
\item\label{prop-26-6-12-i}
Let $(Z-N)_+^{\frac{4}{3}}\le B\le Z^3$ and
\begin{equation}
C_0B^{\frac{2}{3}} \le |\lambda_N-\nu| \le
C_1 B^{\frac{1}{2}}|\nu|^{\frac{1}{4}}.
\label{26-6-52}
\end{equation}
Then the truncated semiclassical error  in $\N$-term does not exceed
\begin{gather}
F\coloneqq CZ^{\frac{2}{3}}+
C B \bar{r}^2 (|\nu|/B^2)^{\frac{1}{4}}L \times (B^{-1}|\lambda_N-\nu|)^{-1}.
\label{26-6-53}
\end{gather}

\item\label{prop-26-6-12-ii}
Let $(Z-N)_+^{\frac{4}{3}}\le B\le Z^3$ and
\begin{equation}
C_1 B^{\frac{1}{2}}|\nu|^{\frac{1}{4}}\le |\lambda_N-\nu| \le
C_1 B^{\frac{1}{2}}|\nu|^{\frac{1}{4}}L.
\label{26-6-54}
\end{equation}
Then the truncated semiclassical error does not exceed
\begin{equation}
F\coloneqq C Z^{\frac{2}{3}} + C B \bar{r}^2 L.
\label{26-6-55}
\end{equation}

\item\label{prop-26-6-12-iii}
Let $B\le (Z-N)_+^{\frac{4}{3}}$ and
\begin{equation}
(Z-N)_+^{\frac{8}{9}} \le |\lambda_N-\nu| \le C_1 (Z-N)_+.
\label{26-6-56}
\end{equation}
Then the truncated semiclassical error does not exceed
\begin{equation}
F\coloneqq C Z^{\frac{2}{3}}+ C (Z-N)_+^{\frac{5}{3}} |\lambda_N-\nu|^{-1}.
\label{26-6-57}
\end{equation}

\item\label{prop-26-6-12-iv}
Let $B\le (Z-N)_+^{\frac{4}{3}}$ and
\begin{equation}
C_0(Z-N)_+ \le |\lambda_N-\nu| \le C_1 Z^{\frac{2}{3}}(Z-N)_+^{\frac{1}{3}}.
\label{26-6-58}
\end{equation}
Then the truncated semiclassical error does not exceed
$F\coloneqq C Z^{\frac{2}{3}}$.
\end{enumerate}
\end{proposition}

\begin{proof}
The easy proof, which  uses arguments of the proof of Proposition~\ref{prop-26-6-8}, is left to the reader. \end{proof}

\begin{proposition}\label{prop-26-6-13}
In the framework of Proposition~\ref{prop-26-6-12}\ref{prop-26-6-12-i}--\ref{prop-26-6-12-iv} term \textup{(\ref{26-6-25})} does not exceed
\begin{equation*}
CF^{\frac{5}{3}} (B|\lambda_N-\nu|^{\frac{1}{2}})^{\frac{1}{3}} +\textup{(\ref{26-5-40})}
\end{equation*}
 with $F$ defined in the corresponding cases in Proposition~\ref{prop-26-6-12}.
\end{proposition}

\begin{proof}
Using Proposition~\ref{prop-26-6-8} one can prove easily that

\begin{claim}\label{26-6-59}
Contribution of $\{x\colon \ell(x)\coloneqq \min_m |x-\y_m|\le \epsilon\bar{r}\}$ to (\ref{26-6-25}) does not exceed $CZ^{\frac{5}{3}}$.
\end{claim}

Now we need to estimate the excess of expression (\ref{26-6-25}) over semiclassical $\D$-term (with $\lambda=\nu$), which has been estimated by (\ref{26-5-40}). To do so we need to estimate
\begin{equation}
\D\bigl(
[P_B(W+\nu)-P_B(W+\lambda)]\theta,\, [P_B(W+\nu)-P_B(W+\lambda)]\theta\bigr)
\label{26-6-60}
\end{equation}
which is the contribution of the domain
$\Omega' \coloneqq \{x\colon \ell(x)\le CB^{-1}|\lambda-\nu|\}$ where $\theta$ is the characteristic function of $\Omega'$. Recall that in the complimentary domain $|P_B(W+\nu)-P_B(W+\lambda)|\le C\ell^{-1}$. Let us consider
\begin{gather}
\D\bigl( [P_B(W+\nu)-P_B(W+\lambda)]\theta_0 ,\, [P_B(W+\nu)-P_B(W+\lambda)]\theta_0\bigr)
\label{26-6-61}
\shortintertext{and}
\D\bigl( [P_B(W+\nu)-P_B(W+\lambda)]\theta_t,\, [P_B(W+\nu)-P_B(W+\lambda)]\theta_{t'}\bigr),
\label{26-6-62}
\end{gather}
where $\theta_0$ is a characteristic function of
\begin{equation*}
\Omega'_0\coloneqq \{x\colon \ell(x) \le t_0\coloneqq (|\lambda_N-\nu|B^{-2})^{\frac{1}{4}}\}
\end{equation*}
and $\theta_t $ is a characteristic function of
$\Omega'_t\coloneqq \{x\colon t\le \ell(x) \le 2t\}$ with $t\ge t'\ge t_0$.

Observe that, when calculating expression (\ref{26-6-11}), the contribution of $\Omega'_0$ is $\asymp B|\lambda_N-\nu|^{\frac{1}{2}} \mes(\Omega'_0)$, and therefore due to Proposition~\ref{prop-26-6-12}
\begin{gather}
\mes(\Omega'_0)\le CF \bigl(B|\lambda_N-\nu|^{\frac{1}{2}}\bigr)^{-1},
\label{26-6-63}\\
\intertext{while term (\ref{26-6-61}) is}
\asymp B^2 |\lambda_N-\nu|\D(\theta_0,\theta_0) \le
C B^2 |\lambda_N-\nu| \bigl(\mes (\Omega_0)\bigr)^{\frac{5}{3}}\le
CF^{\frac{5}{3}} \bigl(B |\lambda_N-\nu|^{\frac{1}{2}}\bigr)^{\frac{1}{3}},
\notag\\
\intertext{where the middle inequality}
\D(\chi_G,\chi_G)\le C\bigl(\mes (G)\bigr)^{\frac{5}{3}}
\label{26-6-64}
\end{gather}
is well known\footnote{\label{foot-26-39} Really, among uniform solids of equal mass and density the ball has the least potential energy; then $C=\frac{1}{5}(12\pi)^{\frac{1}{3}}$.} and the last one is due to (\ref{26-6-63}); $\chi_G$ denotes characteristic function of $G$.
\enlargethispage{\baselineskip}

Similarly, when calculating expression (\ref{26-6-11}), one can see easily that the contribution of $\Omega'_t$ is
$\asymp B|\lambda_N-\nu| (B^2t^4)^{-1} \mes(\Omega'_t)$ and therefore
\begin{equation}
\mes(\Omega'_t)\le CF |\lambda_N-\nu|^{-1} t^2,
\label{26-6-65}
\end{equation}
while term (\ref{26-6-62}) is
$\asymp |\lambda_N-\nu|^2 t^{-2} t^{\prime\, -2}\D(\theta_t,\theta_{t'})$, which does not exceed
\begin{gather}
C |\lambda_N-\nu|^2 t^{-2} t^{\prime\,-2}
\mes (\Omega_t) \mes (\Omega_{t'})
\bigl[ \max\bigl(\mes (\Omega_t),\, \mes (\Omega_{t'})\bigr)\bigr]^{-\frac{1}{3}} \label{26-6-66} \\
\shortintertext{due to inequality}
\D(\chi_G,\chi_{G'})\le C \mes (G) \mes (G')
\bigl[ \max\bigl(\mes (G),\, \mes (G')\bigr)\bigr]^{-\frac{1}{3}},
\label{26-6-67}
\end{gather}
which trivially follows from the obvious inequality
$\D(\chi_G,\updelta_z )\le C(\mes (G))^{\frac{2}{3}}$, where $\updelta_z(x)=\updelta(x-z)$.

Due to (\ref{26-6-65}) expression (\ref{26-6-66}) does not exceed
$CF^{\frac{5}{3}}t^{-\frac{2}{3}}$; recall that $t\ge t'$. Since summation with respect to $t\ge t'$ and then with respect to $t'\ge t_0$ returns $CF^{\frac{5}{3}}t_0^{-\frac{2}{3}}$, we conclude that term (\ref{26-6-61}) with $\theta_0$ replaced by $\theta''$ (the characteristic function of
$\{x\colon \ell(x)\ge t_0\}$) also does not exceed
$CF^{\frac{5}{3}} (B|\lambda_N-\nu|^{\frac{1}{2}})^{\frac{1}{3}}$. \end{proof}

So, we have now two estimates for an excess of expression (\ref{26-6-25}) over (\ref{26-5-40}): one estimate is
\begin{equation}
CF^{\frac{5}{3}} (B|\lambda_N-\nu|^{\frac{1}{2}})^{\frac{1}{3}}
\label{26-6-68}
\end{equation}
with $F=F(|\lambda_N-\nu|)$ derived in Proposition~\ref{prop-26-6-12} and another one is due to (\ref{26-6-48}). Let us consider the best of them. Note that estimate (\ref{26-6-68}) consists of two terms each due to the corresponding term in the definition of $F$. The second term in the framework of Proposition~\ref{prop-26-6-12}\ref{prop-26-6-12-i} is
\begin{equation*}
C\bigl(B^{\frac{3}{2}}\bar{r}^2 |\nu|^{\frac{1}{4}}L |\lambda_N-\nu|^{-1} \bigr)^{\frac{5}{3}} \bigl(B|\lambda_N-\nu|^{\frac{1}{2}}\bigr)^{\frac{1}{3}}
\asymp
B^{\frac{17}{6}}\bar{r}^{\frac{10}{3}}|\nu|^{\frac{5}{12}}L^{\frac{5}{3}}
|\lambda_N-\nu|^{-\frac{3}{2}}.
\end{equation*}
Then, taking minimum of this expression and
$CB\bar{r}^3 |\nu|^{\frac{1}{2}}L^2 |\lambda_N-\nu|$, we see that this minimum does not exceed
\begin{multline*}
C\bigl(B^{\frac{17}{6}}\bar{r}^{\frac{10}{3}}|\nu|^{\frac{5}{12}}L^{\frac{5}{3}}
\bigr)^{\frac{2}{5}}\bigl(B\bar{r}^3 |\nu|^{\frac{1}{2}}L^2\bigr)^{\frac{3}{5}}
\asymp CB^{\frac{5}{3}}\bar{r}^{\frac{8}{3}}(Z-N)_+^{\frac{7}{15}}L^{\frac{28}{15}}
\asymp\\
C\left\{\begin{aligned}
&(Z-N)_+^{\frac{7}{15}}BL^{\frac{28}{15}} \qquad
&&\text{if\ \ } (Z-N)_+^{\frac{4}{3}}\le B\le Z^{\frac{4}{3}},\\
&(Z-N)_+^{\frac{7}{15}}Z^{\frac{8}{15}}B^{\frac{3}{5}}L^{\frac{28}{15}}\qquad &&\text{if\ \ } Z^{\frac{4}{3}}\le B\le Z^3,
\end{aligned}\right.
\end{multline*}
which is achieved for $|\lambda_N-\nu| \asymp B^{\frac{11}{15}}\bar{r}^{\frac{2}{15}}|\nu|^{-\frac{1}{30}}L^{-\frac{2}{15}}$. One can see easily that this expression does not exceed (\ref{26-5-40}).

Therefore in the framework of Proposition~\ref{prop-26-6-12}\ref{prop-26-6-12-i}\ref{prop-26-6-12-ii} we can select $F= (Z^{\frac{2}{3}}+B\bar{r}^2L)$ according to (\ref{26-6-55}), arriving to
\begin{equation*}
C(Z^{\frac{2}{3}}+B\bar{r}^2L)^{\frac{5}{3}}B ^{\frac{1}{3}}
|\lambda_N-\nu|^{\frac{1}{6}} \le
C(Z^{\frac{2}{3}}+B\bar{r}^2L)^{\frac{5}{3}}B ^{\frac{1}{3}}
(Z^{\frac{2}{3}}+B^{\frac{1}{2}}L_1)^{\frac{1}{6}}|\nu|^{\frac{1}{24}},
\end{equation*}
which we can rewrite (slightly increasing powers of logarithms) as two last cases in expression
\begin{equation}
C\left\{\begin{aligned}
& (Z-N)_+^{\frac{1}{18}}Z^{\frac{11}{9}}B^{\frac{1}{3}}\qquad
&&\text{if\ \ } B\le (Z-N)_+^{\frac{4}{3}},\\
& (Z-N)_+^{\frac{1}{24}} (Z^{\frac{11}{9}}+ B^{\frac{11}{12}}L^4) B^{\frac{11}{32}}\qquad
&&\text{if\ \ } (Z-N)_+^{\frac{4}{3}}\le B\le Z^{\frac{4}{3}},\\
& (Z-N)_+^{\frac{1}{24}} Z^{\frac{79}{120}}B^{\frac{23}{30}}\qquad
&&\text{if\ \ } Z^{\frac{4}{3}}\le B\le Z^3.
\end{aligned}\right.
\label{26-6-69}
\end{equation}

In the framework of Proposition~\ref{prop-26-6-12}\ref{prop-26-6-12-iii} one should replace $(\nu/B^2)^{\frac{1}{4}}$ by $\bar{r}$ and $L$ by $1$, so $B^{\frac{3}{2}}\bar{r}^2 |\nu|^{\frac{1}{4}}L |\lambda_N-\nu|^{-1} \mapsto
B^2\bar{r}^3 |\lambda_N-\nu|^{-1}$; further, one should preserve
$B|\lambda_N-\nu|^{\frac{1}{2}}$ and therefore the second term becomes
\begin{equation*}
\bigl(B^2\bar{r}^3 |\lambda_N-\nu|^{-1}\bigr)^{\frac{5}{3}}
\bigl(B|\lambda_N-\nu|^{\frac{1}{2}}\bigr)^{\frac{1}{3}}\asymp
B^{\frac{11}{3}}\bar{r}^5|\lambda_N-\nu|^{-\frac{3}{2}}
\end{equation*}
and taking minimum of it and (\ref{26-6-48}) we again get a term lesser than (\ref{26-5-40}).

Meanwhile, the first term becomes $Z^{\frac{10}{9}}B^{\frac{1}{3}}|\lambda_N-\nu|^{\frac{1}{6}}\le
(Z-N)_+^{\frac{1}{18}}Z^{\frac{11}{9}}B^{\frac{1}{3}}$ occupying the first line in (\ref{26-6-69}).

Therefore we have proven

\begin{proposition}\label{prop-26-6-14}
If $M\ge 2$, $B\le Z^3$ all three $\D$-terms do not exceed $\textup{(\ref{26-5-40})}+\textup{(\ref{26-6-69})}$.
\end{proposition}

\subsection{Summary}
\label{sect-26-6-4-4}

Therefore all error terms in the upper estimate do not exceed (\ref{26-5-40}) and we arrive to

\begin{theorem}\label{thm-26-6-15}
Let $M\ge 2$, $B\le Z^3$. Then
\begin{enumerate}[label=(\roman*), wide, labelindent=0pt]
\item\label{thm-26-6-15-i}
The following estimate holds:
\begin{multline}
E^\TF \le
\cE^\TF + \Bigl(\Tr ((H_{A,W}-\nu)^-) +\int P_B(W^\TF +\nu) \,dx\Bigr) + \\
\textup{(\ref{26-5-40})}+\textup{(\ref{26-6-69})}.
\label{26-6-70}
\end{multline}

\item\label{thm-26-6-15-ii}
The following estimate holds for $a\ge Z^{-1}$:
\begin{equation}
E^\TF \le
\cE^\TF + \underbracket{\Scott + CZ^{\frac{4}{3}}B^{\frac{1}{3}}+ a^{-\frac{1}{2}}Z^{\frac{3}{2}}}+
\textup{(\ref{26-5-40})}+\textup{(\ref{26-6-69})};
\label{26-6-71}
\end{equation}
for $a\le Z^{-1}$ one should replace selected terms by $CZ^2$.

\item\label{thm-26-6-15-iii}
If $B\le Z$ and $a\ge Z^{-\frac{1}{3}}$
\begin{multline}
E^\TF \ge
\cE^\TF + \Scott + \Dirac +\Schwinger +\\
C Z^{\frac{5}{3}}\bigl(Z^{-\delta}+ (BZ^{-1})^\delta+(aZ^{\frac{1}{3}})^{-\delta}\bigr).
\label{26-6-72}
\end{multline}
\end{enumerate}
\end{theorem}

We also arrtive to

\begin{theorem}\label{thm-26-6-16}
\begin{enumerate}[label=(\roman*), wide, labelindent=0pt]
\item\label{thm-26-6-16-i}
In the framework of Theorem~\ref{thm-26-6-15}\ref{thm-26-6-15-i} the following estimate holds:
\begin{equation}
\D\bigl(\rho _\psi -\rho ^\TF,\rho _\psi -\rho ^\TF\bigr)\le
\textup{(\ref{26-5-40})}+\textup{(\ref{26-6-69})}.
\label{26-6-73}
\end{equation}

\item\label{thm-26-6-16-ii}
In the framework of Theorem~\ref{thm-26-6-15}\ref{thm-26-6-15-iii} (albeit without assumption $a\ge Z^{-\frac{1}{3}}$) the following estimate holds:
\begin{equation}
\D\bigl(\rho _\psi -\rho ^\TF,\rho _\psi -\rho ^\TF\bigr)\le CQ\coloneqq
C Z^{\frac{5}{3}}\bigl(Z^{-\delta}+ (BZ^{-1})^\delta).
\label{26-6-74}
\end{equation}
\end{enumerate}
\end{theorem}

\begin{remark}\label{rem-26-6-17}
In virtue of Remark~\ref{rem-26-3-6} we can replace term $CZ^{\frac{4}{3}}B^{\frac{1}{3}}$ to $o(Z^{\frac{4}{3}}B^{\frac{1}{3}})$. This is also true in the case of the better estimates $M=1$.
\end{remark}

We leave to the reader the following easy problem:

\begin{problem}\label{problem-26-6-17}
Investigate conditions to $(Z-N)_+$ so that terms (\ref{26-5-40}) and (\ref{26-6-69}) do not spoil the upper estimate for $\E_N$ or
$\D(\rho_\Psi-\rho^\TF_B,\,\rho_\Psi-\rho^\TF_B)$.
\end{problem}\enlargethispage{\baselineskip}

\chapter{Negatively charged systems}
\label{sect-26-7}

In this section we following Section~\ref{book_new-sect-25-5} consider the case $N\ge Z$ and provide upper estimates for the excessive negative charge $(N-Z)$ if $\I_N>0$ and for the ionization energy $\I_N$.

\section{Estimates of the correlation function}
\label{sect-26-7-1}
First of all we provide some estimates which will be used for both negatively and positively charged systems. Let us consider the ground-state function
$\Psi (x_1,\varsigma_1;\ldots;x_N,\varsigma_N)$ and the corresponding density $\rho _\Psi (x)$. Again the crucial role play estimates\footnote{\label{foot-26-40} Namely, estimate (\ref{26-6-27})  of Theorem~\ref{thm-26-6-7} if $M=1$, and similar estimates
(\ref{26-6-45}) of Theorem~\ref{thm-26-6-11} and (\ref{26-6-73}) of Theorem~\ref{thm-26-6-16} if $M\ge 2$. For $B\le Z$ and $a\ge Z^{-\frac{1}{3}}$, we use estimate (\ref{26-6-74}) in all cases.}
\begin{equation}
\D\bigl(\rho _\psi -\rho ^\TF,\rho _\psi -\rho ^\TF\bigr)\le \bar{Q}
\label{26-7-1}
\end{equation}
where $\bar{Q}\ge Q$ is just the right-hand expression of the corresponding estimate; as $B\le Z$ we can slightly decrease $\bar{Q}=Q$.

\pagebreak
Recall that the same estimate holds also for difference between upper and lower bounds for $\E_N$ (with $\Tr ((H_W-\nu)^-)+\nu N$ not replaced by its semiclassical approximation).

\begin{remark}\label{rem-26-7-1}
All arguments and conclusions of Subsection~\ref{book_new-sect-25-5-1} up to but excluding estimate (\ref{book_new-25-5-31}) are not related to the Schr\"odinger operator and remain true.
\end{remark}

So we need to calculate both the semiclassical errors and the principal parts. Note that \emph{all semiclassical errors for $W_\varepsilon$ do not exceed those obtained for $W$ we selected\/}. Consider approximations errors in the principal part, namely
\begin{gather}
\D\bigl(P'(W_\varepsilon +\nu)-P'(W +\nu),
P'(W_\varepsilon +\nu)-P'(W +\nu)\bigr)\label{26-7-2}\\
\shortintertext{and}
\D(\rho_\varepsilon -\rho, \rho_\varepsilon -\rho)\label{26-7-3}
\end{gather}
since we already estimated terms
$\D\bigl(P'(W +\nu)-\rho^\TF_B, P'(W +\nu)-\rho^\TF_B)\bigr)$ and
$\D(\rho -\rho_B^\TF, \rho -\rho_B^\TF)$ by $\bar{Q}$.

Note that
\begin{equation}
|W-W_\varepsilon|\le C (1+\ell\varepsilon^{-1})^{-2}\zeta^2
\label{26-7-4}
\end{equation}
and
\begin{multline}
|P'(W_\varepsilon +\nu)-P'(W +\nu)|\le \\
C (1+\ell\varepsilon^{-1})^{-2}\zeta^3+
C(1+\ell\varepsilon^{-1})^{-1} \zeta B
\label{26-7-5}
\end{multline}
and therefore expression (\ref{26-7-2}) does not exceed
$C\bigl(Z^3\varepsilon ^2 + ZB^2 \varepsilon^2\bar{r}^2\bigr)$ and it does not exceed $C\max(Z^{\frac{5}{3}},B^{\frac{4}{5}}Z^{\frac{3}{5}})$ for
$\varepsilon = \min (Z^{-\frac{2}{3}}, Z^{\frac{2}{5}}B^{-\frac{4}{5}})$ and this does not exceed $C\bar{Q}$.

Further, consider expression (\ref{26-7-3}); it is equal to
$4\pi |(W_\varepsilon-W, \rho_\varepsilon -\rho)|$ and one can prove easily the same estimate for it.

Furthermore, under this restriction an error in the principal part of asymptotics of $\int e(x,x,\lambda)\,dx$, namely
$|\int \bigl(P'(W_\varepsilon +\nu)-P'(W +\nu)\bigr)\,dx|$,  does not exceed $C\bigl(Z^{\frac{3}{2}}\varepsilon ^{\frac{3}{2}}+ Z^{\frac{1}{2}}B\varepsilon \bar{r}^{\frac{3}{2}}\bigr)$, which is less than the semiclassical error. Then $S\le C\bar{Q}$ with $S$ defined by (\ref{book_new-25-5-22}).

So, the following proposition is proven:

\begin{proposition-foot}\label{prop-26-7-2}\footnotetext{\label{foot-26-41} Cf. Proposition~\ref{book_new-prop-25-5-1}.}
If $\theta ,\chi $ are as in Subsection~\ref{book_new-sect-25-5-2}, then estimate \textup{(\ref{book_new-25-5-33})} holds, namely,
\begin{multline}
\cJ=|\int \Bigl(\rho _\Psi ^{(2)}(x,y)-
\rho(y)\rho _\Psi (x)\Bigr)\theta (x)\chi (x,y)\,dxdy |\le \\[3pt]
C\sup_x \|\nabla _y\chi _x\|_{\sL ^2(\bR ^3)}
\Bigl((\bar{Q}+\varepsilon ^{-1}N+T)^{\frac{1}{2}}\Theta + P^{\frac{1}{2}}\Theta^{\frac{1}{2}}\Bigr) +
C \varepsilon N\|\nabla _y\chi \|_{\sL ^\infty }\Theta
\label{26-7-6}
\end{multline}
with $\Theta=\Theta_\Psi$ defined by \textup{(\ref{book_new-25-5-15})} and $T,P$ defined by \textup{(\ref{book_new-25-5-23})}, \textup{(\ref{book_new-25-5-25})} and arbitrary
$\varepsilon \le \min(Z^{-\frac{2}{3}},Z^{\frac{2}{5}}b^{-\frac{4}{5}})$.
\end{proposition-foot}

Recall that  $\rho ^{(2)}_\Psi (x,y)$ defined by (\ref{book_new-25-5-13}) is the \emph{quantum correlation function\/}.\enlargethispage{\baselineskip}

\section{Excessive negative charge}
\label{sect-26-7-2}
Let us select $\theta=\theta_b$ according to (\ref{book_new-25-5-34}):
\begin{equation*}
\supp (\theta) \subset \{x\colon \ell (x)\ge b\}.
\tag{\ref*{book_new-25-5-34}}
\end{equation*}
Note that $\mathbf{H}_N\Psi =\E_N\Psi $ yields identity (\ref{book_new-25-5-35})
and isolating the contribution of $j$-th electron in $j$-th term we get inequality (\ref{book_new-25-5-36}):
\begin{multline}
-\I_N\int \rho _\Psi (x)\ell (x)\theta \,dx\ge
\\[3pt]
\sum_j\blangle \Psi ,\ell (x_j)\theta (x_j)
\Bigl(-V(x_j)+ \sum_{k: k\ne j}|x_j-x_k| ^{-1}\Bigr)\Psi \brangle -
\sum_j
\|\nabla \bigl(\theta^{\frac{1}{2}} (x_j) \ell (x_j)^{\frac{1}{2}}\bigr)\Psi\|^2
\tag{\ref{book_new-25-5-36}}
\end{multline}
due to the non-negativity of operator $\bigl((D_x-A(x))\cdot\boldupsigma\bigr)^2$.

Now let us select $b$ to be able to calculate the magnitude of $\Theta$. Note that inequality (\ref{book_new-25-5-37}) holds. Also (\ref{book_new-25-5-38}) holds
as long as
\begin{equation}
Z^{-\frac{1}{3}}\le b\le \epsilon \min \bigl((Z-N)_+^{-\frac{1}{3}}, B^{-\frac{1}{4}}\bigr)
\label{26-7-7}
\end{equation}
Using inequalities
\begin{gather*}
|\nabla (\theta _b(x) ^{\frac{1}{2}}\ell^{\frac{1}{2}})|\le
cb ^{-1}\theta _{(1-\epsilon )b}(x)\\
\shortintertext{and}
\int \rho _\Psi (x)\ell (x)\theta _b(x)\,dx \ge b\Theta _b
\end{gather*}
\pagebreak
(i.e. (\ref{book_new-25-5-43})) we conclude that
\begin{multline}
b\I_N\Theta _b\le \int \theta_b (x)V(x)\ell (x)\rho _\Psi (x)\,dx\\
\shoveright{-\int\rho _\Psi ^{(2)}(x,y)\ell (x)|x-y| ^{-1}\theta_b (x)\,dxdy +Cb ^{-1}\Theta_{b(1-\epsilon)}=\quad}\\
\begin{aligned}=&\int \theta_b (x)V(x)\ell (x)\rho _\Psi (x)\,dx\\
-&\int \rho _\Psi ^{(2)}(x,y)\ell (x)|x-y| ^{-1}
\bigl(1-\theta_b (y)\bigr)\theta_b (x)\,dxdy \\
-&\int \rho _\Psi ^{(2)}(x,y)\ell (x)|x-y| ^{-1}\theta_b (y)\theta_b (x)\,dxdy + Cb ^{-1}\Theta _{b(1-\epsilon )}
\end{aligned}
\label{26-7-8}
\end{multline}
(cf. (\ref{book_new-25-5-44})).
Denote by $\cI_1$, $\cI_2$, and $\cI_3$ the first, second and third terms in the right-hand expression of (\ref{26-7-8}) respectively. Symmetrizing $\cI_3$ with respect to $x$ and $y$
\begin{equation*}
\cI_3=-\frac{1}{2}
\int \rho _\Psi ^{(2)}(x,y)\bigl(\ell(x)+\ell(y)\bigr)
|x-y| ^{-1}\theta (y)\theta (x)\,dxdy
\end{equation*}
and using inequality
$\ell (x)+\ell (y)\ge \min_j(|x-\y_j|+|y-\y_j|)\ge |x-y|$ we conclude that this term does not exceed
\begin{multline}
-\frac{1}{2}\int \rho _\Psi ^{(2)}(x,y)\theta_b (y)\theta_b (x)\,dxdy=\\[5pt]
-\frac{1}{2}(N-1) \int \rho _\Psi (x)\theta_b (x)\,dx +
\frac{1}{2}
\int \rho _\Psi ^{(2)}(x,y)\bigl(1-\theta_b (y)\bigr) \theta_b (x)\, dxdy
\label{26-7-9}
\end{multline}
(cf. (\ref{book_new-25-5-45})).

Here the first term is exactly $-\frac{1}{2}(N-1)\Theta_b$; replacing
$\rho _\Psi ^{(2)}(x,y)$ by $\rho (y)\rho_\Psi(x)$ we get
\begin{gather}
\frac{1}{2}\int \bigl(1-\theta_b (y)\bigr)\rho(y)\,dy \times \Theta_b
\label{26-7-10}\\
\shortintertext{with an error}
\frac{1}{2}
\int \bigl(\rho _\Psi ^{(2)}(x,y)-\rho(y)\rho_\Psi (x)\bigr)
\bigl(1-\theta_b (y)\bigr) \theta_b (x)\, dxdy
\label{26-7-11}
\end{gather}
(cf. (\ref{book_new-25-5-46}), (\ref{book_new-25-5-47})). We estimate this expression using Proposition~\ref{prop-26-7-2} with $\chi(x,y) = 1-\theta_b (y)$. Then
$\|\nabla_y \chi_x\|_{\sL^2}\asymp b^{\frac{1}{2}}$,
$\|\nabla_y \chi\|_{\sL^\infty}\asymp b^{-1}$ and
$P\asymp b^{-1}\Theta_b$\,\footnote{\label{foot-26-42} Recall that
$P=\int |\nabla \theta^{\frac{1}{2}}|^2 \rho_\Psi \,dx$ and
$T= \sup _{\supp (\theta)} W$.}, while $T\lesssim b^{-4}$ as long as $B\le Z^{\frac{4}{3}}$ and
$b\le B^{-\frac{1}{4}}$.

To estimate the excessive negative charge we assume that $(N-Z)>0$ with
$\I_N> 0$. In this case the left-hand expression in (\ref{26-7-8}) should be positive.\enlargethispage{\baselineskip}

\begin{remark}\label{rem-26-7-3}
Recall that in Subsection~\ref{book_new-sect-25-5-2} we picked $b=Z^{-\frac{5}{21}}$ and it makes sense here as well as long as $b\le \bar{r}=B^{-\frac{1}{4}}$ i.e. as $B\le Z^{\frac{20}{21}}$. However for $B\ge Z^{\frac{20}{21}}$ we just pick up $b=C_0\bar{r}$ and then $T=0$ in our framework.
\end{remark}

Estimating (\ref{26-7-11}) we conclude that
\begin{gather}
\cI_3\le -\frac{1}{2} \Bigl(N-1-
\int \bigl(1-\theta _b(y)\bigr) \rho(y)\,dy\Bigr) \Theta_b + \cI _0 \label{26-7-12}\\
\shortintertext{with}
\cI _0=Cb ^{\frac{1}{2}}
\Bigl({\cS}\Theta _b+ Nb ^{-2}\Bigr) ^{\frac{1}{2}}\Theta _b ^{\frac{1}{2}} +
C\varepsilon Nb ^{-1}\Theta _b
\label{26-7-13}
\end{gather}
(cf. (\ref{book_new-25-5-48})).

On the other hand,
\begin{equation}
\cI_2\le - \int \rho _\Psi ^{(2)}(x,y)\ell (x)|x-y| ^{-1}
\bigl(1-\theta _{b(1-\epsilon)}(y) \bigr)\theta _b(x)\,dxdy
\label{26-7-14}
\end{equation}
and replacing $\rho _\Psi ^{(2)}(x,y)$ by $\rho (y)\rho_\Psi(x)$ and estimating an error due to Proposition~\ref{prop-26-7-2} again, we get
\begin{multline}
\cI_2\le - \int \rho (y)\rho _\Psi (x)\ell (x)|x-y| ^{-1}
\bigl(1-\theta _{b(1-\epsilon )}(y)\bigr)\theta _b(x)\,dxdy+\\
\shoveright{Cb ^{-\frac{1}{2}}
\bigl( \cS \Theta _b+Nb ^{-2}\bigr) ^{\frac{1}{2}}\Theta _b ^{\frac{1}{2}} +C\varepsilon Nb ^{-1}=}\\
\begin{aligned}
-&\int (V- W )(x)\ell (x)\theta _b(x)\,dx+\\
&\int \rho (y)\rho _\Psi (x)\ell (x)|x-y| ^{-1}
\theta _{b(1-\epsilon )}(y))\theta _b(x)\,dxdy+ \cI _0.
\end{aligned}
\label{26-7-15}
\end{multline}
So, we picked up
\begin{gather}
b=C\min ( Z ^{-\frac{5}{21}},\bar{r})=
\left\{\begin{aligned}
& Z ^{-\frac{5}{21}} &&\text{if\ \ } B\le Z^{\frac{20}{21}},\\
& B^{-\frac{1}{4}} &&\text{if\ \ } Z^{\frac{20}{21}}\le B\le Z^{\frac{4}{3}},\\
& B^{-\frac{2}{5}}Z^{\frac{1}{5}} &&\text{if\ \ } Z^{\frac{4}{3}}\le B\le Z^3
\end{aligned}\right.
\label{26-7-16}\\
\shortintertext{and}
\varepsilon = \min (Z^{-\frac{2}{3}}, B^{-\frac{4}{5}}Z^{\frac{2}{5}}).
\label{26-7-17}
\end{gather}
Then, preserving all the estimates one can take $W=\rho =0$ at
$\supp (\theta _{\frac{b}{2}})$\,\footnote{\label{foot-26-43} For $B\ge Z^{\frac{20}{21}}$ this is fulfilled automatically.} and then
\begin{multline}
\cI_1+\cI_2= \int \theta _b(x)W(x)\ell (x)\rho_\Psi (x)\,dx-\\
\int \Bigl(\rho _\Psi ^{(2)}(x,y)-\rho _\Psi (x)\rho (y)\Bigr)
\ell (x)|x-y| ^{-1}\bigl(1-\theta _b(y)\bigr)\theta _b(x)\,dxdy\le \cI_0.
\label{26-7-18}
\end{multline}
Further, since $\int \bigl(1-\theta _b(y)\bigr) \rho (y)\,dy \le Z$\,\footnote{\label{foot-26-44} Actually for $B\ge Z^{\frac{20}{21}}$ this is an equality.} we get from (\ref{26-7-8}) and estimate (\ref{26-7-12}) for $\cI_3$ that
\begin{equation}
(N-Z)\le Cb ^{\frac{1}{2}}\cS ^{\frac{1}{2}}+
C\Theta _b ^{-\frac{1}{2}}N^{\frac{1}{2}}b ^{-1}+
Cb ^{-1}\Theta _{b(1-\epsilon )}\Theta _b ^{-1}
\label{26-7-19}
\end{equation}
because then $\varepsilon N ^{\frac{1}{2}}b ^{-1}$ does not exceed
$Cb ^{\frac{1}{2}}\bar{Q} ^{\frac{1}{2}}$.

Let us assume that estimate (\ref{26-7-20}) below does not hold. Then
$\Theta _b=N-\int \bigl(1-\theta _b(y)\bigr)\rho _\Psi (y)\,dy$ and due to Theorem~\ref{thm-26-6-16}
\begin{equation*}
|\Theta _b-N-Z|\le Cb ^{\frac{1}{2}}\bar{Q} ^{\frac{1}{2}}\le \frac{1}{2}(N-Z)
\end{equation*}
and the same is true for $\Theta _{b(1-\epsilon )}$. Then (\ref{26-7-19}) yields (\ref{26-7-20}). So, (\ref{26-7-20}) has been proven.

Thus we proved the following theorem:\enlargethispage{\baselineskip}

\begin{theorem}\label{thm-26-7-4}
Let condition \textup{(\ref{26-2-28})} be fulfilled. In the fixed nuclei model let $\I_N>0$.
\begin{enumerate}[label=(\roman*), wide, labelindent=0pt]
\item\label{thm-26-7-4-i}
Then
\begin{equation}
(N-Z)_+ \le C\left\{\begin{aligned}
&Z ^{\frac{5}{7}}\qquad \qquad
&&\text{if\ \ } B\le Z ^{\frac{20}{21}},\\[3pt]
&Z ^{\frac{5}{6}}B ^{-\frac{1}{8}}+B ^{\frac{1}{2}}L
&&\text{if\ \ } Z ^{\frac{20}{21}}\le B\le Z ^{\frac{4}{3}}L \\[3pt]
&Z ^{\frac{2}{5}}B ^{\frac{1}{5}}L \qquad \qquad
&&\text{if\ \ } Z ^{\frac{4}{3}}\le B\le Z ^3
\end{aligned}\right.
\label{26-7-20}
\end{equation}
where $L=|\log (Z^{-3}B)|$.
\item\label{thm-26-7-4-ii}
For $M=1$ the same estimate holds with $L=1$:
\begin{equation}
(N-Z)_+ \le C\left\{\begin{aligned}
&Z ^{\frac{5}{7}}\qquad \qquad
&&\text{if\ \ } B\le Z ^{\frac{20}{21}},\\[3pt]
&Z ^{\frac{5}{6}}B ^{-\frac{1}{8}}
&&\text{if\ \ } Z ^{\frac{20}{21}}\le B\le Z ^{\frac{4}{3}}L \\[3pt]
&Z ^{\frac{2}{5}}B ^{\frac{1}{5}} \qquad \qquad
&&\text{if\ \ } Z ^{\frac{4}{3}}\le B\le Z ^3.
\end{aligned}\right.
\label{26-7-21}
\end{equation}
\end{enumerate}
\end{theorem}

Furthermore, for $B\le Z$ one can use a slightly sharper estimate for $\bar{Q}$:

\begin{theorem}\label{thm-26-7-5}
Let condition \textup{(\ref{26-2-28})} be fulfilled. In the fixed nuclei model let $\I_N>0$. Then for a single atom and for molecule with
$B\le Z$ and $a\ge Z^{-\frac{1}{3}+\delta_1}$
\begin{equation}
(N-Z)_+ \le C\left\{\begin{aligned}
&Z ^{\frac{5}{7}-\delta}\qquad \qquad
&&\text{if\ \ } B\le Z ^{\frac{20}{21}},\\[3pt]
&Z ^{\frac{5}{6}-\delta}B ^{-\frac{1}{8}+\delta}
&&\text{if\ \ } Z ^{\frac{20}{21}}\le B\le Z
\end{aligned}\right.
\label{26-7-22}
\end{equation}
\end{theorem}

Results for a free nuclei model follow from the above results and an estimate of $a$ from below (see Subsubsection~\emph{\ref{sect-26-8-4-4}.4. \nameref{sect-26-8-4-4}\/}).

\begin{theorem}\label{thm-26-7-6}
Let condition \textup{(\ref{26-2-28})} be fulfilled. In the free nuclei model let $\hat{\I}_N>0$. Then
\begin{enumerate}[label=(\roman*), wide, labelindent=0pt]
\item\label{thm-26-7-6-i}
Estimate \textup{(\ref{26-7-20})} holds.

\item\label{thm-26-7-6-ii}
For $B\le Z$ estimate \textup{(\ref{26-7-22})} holds.
\end{enumerate}
\end{theorem}

\section{Estimate for ionization energy}
\label{sect-26-7-3}
Finally, let us estimate the ionization energy, assuming that

\begin{claim}\label{26-7-23}
$(Z-N)_+$ does not exceed the right-hand expression of (\ref{26-7-20})\footnote{\label{foot-26-45} Or (\ref{26-7-21}), or (\ref{26-7-22}) in the framework of the corresponding theorem.}.
\end{claim}

Few cases are possible:\enlargethispage{2\baselineskip}

\begin{enumerate}[label=(\roman*), wide, labelindent=0pt]
\item\label{sect-26-7-3-i}
$B\le Z^{\frac{20}{21}}$ and $(Z-N)_+\le C_0 Z^{\frac{5}{7}}$. In this case we act exactly as in Subsection~\ref{book_new-sect-25-5-2}: we pick up
$b=\epsilon Z ^{-\frac{5}{21}}$ with a small enough constant
$\epsilon '>0$; then
\begin{gather}
|\int \theta_b (x)\bigl(\rho _\Psi -\rho\bigr)\,dx|\le
Cb^{\frac{1}{2}}Q^{\frac{1}{2}},
\label{26-7-24}\\
\shortintertext{while}
\int \theta_b (x)\rho\,dx \asymp b^{-3}
\label{26-7-25}
\shortintertext{and therefore}
\Theta \coloneqq \int \theta_b (x)\rho_\Psi \,dx \asymp b^{-3}
\label{26-7-26}
\shortintertext{and}
|\int \theta (x)\bigl(\rho _\Psi -\rho\bigr)\,dx|\le
\epsilon ''\Theta.
\label{26-7-27}
\end{gather}
Then (\ref{26-7-8}), (\ref{26-7-12}), (\ref{26-7-15}) yield that
$\I_N\le CZ ^{\frac{20}{21}}$; so estimate (\ref{26-7-37}) below in this case is recovered.

\smallskip
In all other cases one needs to replace $\theta _b$ by a function which is not $b$-admissible.

\item\label{sect-26-7-3-ii}
Let $Z ^{\frac{20}{21}}\le B\le Z^3$ and $M=1$. Let here $\bar{r}$ be the exact radius of $\supp (\rho)$, $\rho=\rho^\TF_B$ and $W=W^\TF_B$, which were obtained in the Thomas-Fermi theory with $\nu=0$. Recall that
$\bar{r}\asymp \max (B^{-\frac{1}{4}};\,B^{-\frac{2}{5}}Z^{\frac{1}{5}})$ and
$\bar{Q}\asymp \max (Z^{\frac{5}{3}};\, B^{\frac{4}{5}}Z^{\frac{3}{5}})$. Also recall that $W\asymp G t^4$ and $\rho \asymp B G^{\frac{1}{2}}$ for $r=(1-t)\bar{r}$ with $1-\epsilon \le t\le 1$, where  $G\coloneqq \min (B;\, B^{\frac{2}{5}}Z^{\frac{4}{5}})$.

We take in this case $\bar{r}t$-admissible function $\theta$, equal $0$ for
$|x-\y|\le \bar{r}(1-t)$ and equal $1$ for $|x-\y|\ge \bar{r}(1-\frac{1}{2}t)$.

\begin{claim}\label{26-7-28}
In all the above estimates one needs to replace
$Cb^{-1}\Theta _{b(1-\epsilon )}$ by $C\bar{r} ^{-1}t ^{-1}\Theta '$ with $\Theta '$ defined by $\theta '$ which is also $\bar{r}t$-admissible and equal $1$ in $\epsilon \bar{r}t$-vicinity of
$\supp (\theta )$.
\end{claim}

Then (\ref{26-7-24})--(\ref{26-7-27}) are replaced by
\begin{gather}
|\int \theta (x)\bigl(\rho _\Psi -\rho\bigr)\,dx|\le CQ^{\frac{1}{2}}\times \|\nabla \theta\| \asymp
Ct^{-\frac{1}{2}}\bar{r}^{\frac{1}{2}}Q^{\frac{1}{2}}
\label{26-7-29}\\
\shortintertext{while}
\int \theta (x)\rho\,dx \asymp B G^{\frac{1}{2}}\bar{r}^3 t^3
\label{26-7-30}\\
\shortintertext{and therefore}
\Theta \coloneqq \int \theta (x)\rho_\Psi \,dx \asymp
B G^{\frac{1}{2}}\bar{r}^3 t^3.
\label{26-7-31}
\end{gather}
Then (\ref{26-7-27}) holds provided the right-hand expression of (\ref{26-7-29}) does not exceed the right-hand expression of (\ref{26-7-31}), multiplied by $\epsilon$:
\begin{equation}
t= t_*\coloneqq C_0 B^{-\frac{2}{7}}G^{-\frac{1}{7}}\bar{r}^{-\frac{5}{7}}Q^{\frac{1}{7}}=
C_1 \max (B^{-\frac{1}{4}}Z^{\frac{5}{21}};\, B^{\frac{2}{35}}Z^{-\frac{6}{35}})
\label{26-7-32}
\end{equation}
where we picked up the smallest possible value of $t$. Note that
\begin{claim}\label{26-7-33}
$t\asymp 1$ as either $B\asymp Z^{\frac{20}{21}}$ or $B\asymp Z^3$.
\end{claim}

\pagebreak
Further, let us estimate from above
\begin{multline}
\cI'=
-\int \Bigl(\rho _\Psi ^{(2)}(x,y)-\rho_\Psi (x)\rho (y)\Bigr)
\ell (x)|x-y| ^{-1}\theta (x)\,dxdy\le \\
-\int \Bigl(\rho _\Psi ^{(2)}(x,y)-\rho _\Psi (x) \rho (y)\Bigr)
\bigl(1-\omega_\tau (x,y)\bigr)\ell (x)|x-y| ^{-1}\theta (x)\,dxdy +\\
\int \rho _\Psi (x) \rho (y)\omega _\tau (x,y) \ell (x)|x-y| ^{-1}
\theta (x)\,dxdy
\label{26-7-34}
\end{multline}
with $\omega =0$ as $|x-y|\ge 2\tau \bar{r}$ and $\omega =1$ as
$|x-y|\le \tau \bar{r}$, with $\tau \in ( t,1)$.

Then due to Proposition~\ref{prop-26-7-2} with
$\chi(x,y)= \bigl(1-\omega_\tau (x,y)\bigr)|\x-y|^{-1}$ the first term in the right-hand expression does not exceed
$C\bar{r}^{\frac{1}{2}}\tau^{-\frac{1}{2}} Q^{\frac{1}{2}}\Theta$ since
$\|\nabla_y\chi_x\|_{\sL^2(\bR^3)}\asymp (\bar{r}\tau)^{-\frac{1}{2}}$ and also one can prove easily that all other terms in
$\Bigl((Q+\varepsilon ^{-1}N+T)^{\frac{1}{2}}\Theta + P^{\frac{1}{2}}\Theta^{\frac{1}{2}}\Bigr)$ do not exceed $CQ \Theta$.

Meanwhile, the second term in in the right-hand expression of (\ref{26-7-34}) does not exceed
$CB G ^{\frac{1}{2}}\tau ^2\times \bar{r} ^3\tau ^2\times \Theta $ because
$ \rho(y) \le CBG ^{\frac{1}{2}}\tau ^2$ if $|x-y|\le 2\tau \bar{r}$,
$x\in \supp (\theta )$ and therefore
$\int \rho(y)\omega_\tau(x,y)\,dy \le CBG ^{\frac{1}{2}}\tau ^4$.

Minimizing their sum
\begin{gather*}
C\Bigl(\bar{r}^{\frac{1}{2}}\tau^{-\frac{1}{2}} Q^{\frac{1}{2}}+
B G ^{\frac{1}{2}}\bar{r} ^3\tau ^4 \Bigr)\Theta
\intertext{with respect to $\tau \ge t$\,\footnotemark, we arrive to estimate}
\cI '\le C\bar{r} ^{\frac{7}{9}}Q ^{\frac{4}{9}}B ^{\frac{1}{9}}
G ^{\frac{1}{18}}\Theta .
\end{gather*}
\footnotetext{\label{foot-26-46} One can see easily that minimum is achieved as $\tau\asymp t^{\frac{7}{9}}$.}
Then exactly as in the proof of Theorem~\ref{book_new-thm-25-5-3} we have inequality
\begin{equation}
\bar{r}\,\I_N \le C(Z-N)_+
+ C\bar{r} ^{\frac{7}{9}}Q ^{\frac{4}{9}}B ^{\frac{1}{9}} G ^{\frac{1}{18}},
\label{26-7-35}
\end{equation}
and therefore for
$(Z-N)_+\le
C\bar{r} ^{\frac{7}{9}}Q ^{\frac{4}{9}}B ^{\frac{1}{9}} G ^{\frac{1}{18}}$
we arrive to the estimate
$\I_N\le
C\bar{r} ^{-\frac{2}{9}}Q ^{\frac{4}{9}}B ^{\frac{1}{9}} G ^{\frac{1}{18}}$.
\end{enumerate}

Thus we have proven estimate~(\ref{26-7-37}) of Theorem~\ref{thm-26-7-7} below, at least as $N\ge Z$. Further, estimate~(\ref{26-7-39}) under the same assumption
$N\ge Z$ is due to the fact that for $B\le Z$ one can use
$\bar{Q}= Z^{\frac{5}{3}}(B^{\delta}Z^{-\delta}+Z^{-\delta})$ instead of $Q$.

\begin{theorem}\label{thm-26-7-7}
Let $M=1$.
\begin{enumerate}[label=(\roman*), wide, labelindent=0pt]
\item\label{thm-26-7-7-i} Then for $B\le Z^{3}$ and
\begin{align}
(Z-N)_+\le C_0&\left\{\begin{aligned}
&Z^{\frac{5}{7}}
&&\text{if\ \ } B\le Z^{\frac{20}{21}},\\
& B^{-\frac{1}{8}} Z^{\frac{5}{6}}
&&\text{if\ \ } Z^{\frac{20}{21}}\le B\le Z^{\frac{4}{3}},\\
&B^{\frac{1}{5}}Z^{\frac{2}{5}}\qquad
&&\text{if\ \ } Z ^{\frac{4}{3}}\le B\le Z ^3
\end{aligned}\right.\label{26-7-36}\\
\intertext{the following estimate holds}
\I_N\le C&\left\{\begin{aligned}
&Z^{\frac{20}{21}}
&&\text{if\ \ } B\le Z^{\frac{20}{21}},\\
&B^{\frac{2}{9}}Z ^{\frac{20}{27}}\qquad
&&\text{if\ \ } Z ^{\frac{20}{21}}\le B\le Z ^{\frac{4}{3}}\\
&B ^{\frac{26}{45}}Z ^{\frac{4}{15}}\qquad
&&\text{if\ \ } Z ^{\frac{4}{3}}\le B\le Z ^3.
\end{aligned}\right.
\label{26-7-37}
\end{align}
\item\label{thm-26-7-7-ii}
Furthermore for $B\le Z$ and
\begin{align}
(Z-N)_+\le C_0&\left\{\begin{aligned}
&Z^{\frac{5}{7}-\delta}
&&\text{if\ \ } B\le Z^{\frac{20}{21}},\\
& B^{-\frac{1}{8}+\delta} Z^{\frac{5}{6}-\delta}
&&\text{if\ \ } Z^{\frac{20}{21}}\le B\le Z\\
\end{aligned}\right.\label{26-7-38}\\
\intertext{the following estimate holds}
\I_N\le C&\left\{\begin{aligned}
&Z^{\frac{20}{21}-\delta'}
&&\text{if\ \ } B\le Z^{\frac{20}{21}},\\
&B^{\frac{2}{9}+\delta'}Z ^{\frac{20}{27}-\delta'}\qquad
&&\text{if\ \ } Z ^{\frac{20}{21}}\le B\le Z.
\end{aligned}\right.
\label{26-7-39}
\end{align}
\end{enumerate}
\end{theorem}

\begin{proof}[Proof in the general settings] To prove estimates (\ref{26-7-37}) and (\ref{26-7-39}) in the general settings (i.e. without assumption $N\ge Z$) observe that for $N<Z$
\begin{multline}
\D (\rho ^\TF _N- \rho^\TF_Z,\, \rho ^\TF _N - \rho^\TF_Z)\le
C(Z-N)^2 \bar{r}^{-1}\asymp \\
C\max
\bigl((Z-N)^{\frac{7}{3}};\, C(Z-N)^2 B^{\frac{1}{4}};\, C(Z-N)^2 B^{\frac{2}{5}} Z^{-\frac{1}{5}}\bigr)
\label{26-7-40}
\end{multline}
(where subscript here denotes the number of electrons rather than the intensity of the magnetic field) because the same estimate holds for $\cE^\TF_N-\cE^\TF_Z$:
\begin{gather}
0\le \cE^\TF_N-\cE^\TF_Z\le C(Z-N)^2 \bar{r}^{-1},
\label{26-7-41}\\
\intertext{which itself follows from }
\frac{\partial \cE^\TF}{\partial N}=\nu \asymp (Z-N)\bar{r}^{-1}.
\label{26-7-42}
\end{gather}

Therefore to preserve our estimates we need to assume that the right-hand expression of (\ref{26-7-40}) does not exceed $Q$; this assumption is equivalent to
$(Z-N)_+\le \min ( Z^{\frac{5}{7}};\,Z^{\frac{5}{6}}B^{-\frac{1}{8}})$ for
$B\le Z^{\frac{4}{3}}$ which is exactly the first and the second cases in (\ref{26-7-36}) (and these cases in (\ref{26-7-38}) appear in the same way), and to $(Z-N)_+\le CB^{\frac{1}{5}}Z^{\frac{2}{5}}$ for
$Z^{\frac{4}{3}}\le B\le Z^3$, which is exactly the third case in (\ref{26-7-40}).

Also there is a term $C(Z-N)_+\bar{r}^{-1}$ in the estimate of $\I_N$. However, under assumption (\ref{26-7-40}) this term does not exceed the right hand expression of (\ref{26-7-40}) or (\ref{26-7-42}), in fact coincides with it only in the first case. \end{proof}

Consider now $M\ge 2$. Assume that $B\ge Z^{\frac{20}{21}}$ since the opposite case has been analyzed already.\enlargethispage{\baselineskip}

Let us pick up $\bar{r}t$-admissible function $\theta$ such that $\theta =1$ if $W\le C_0 G t ^4$ and $\theta =0$ if $W\ge 2C_0G t ^4$. In this case ($M\ge 2$) we can claim only that
$\|\nabla \theta\|\le Ct^{-\frac{1}{2}}\bar{r}^{\frac{1}{2}}
|\log t|^{\frac{1}{2}}$ and
therefore
\begin{gather}
|\int \theta (x)\bigl(\rho _\Psi -\rho\bigr)\,dx|\le
Ct^{-\frac{1}{2}}|\log t|^{\frac{1}{2}}\bar{r}^{\frac{1}{2}}\bar{Q}^{\frac{1}{2}}
\tag*{$\textup{(\ref*{26-7-29})}'$}\label{26-7-29-'},\\
\shortintertext{while}
B G^{\frac{1}{2}}\bar{r}^3 t^3 \lesssim
\int \theta (x)\rho\,dx \lesssim B G^{\frac{1}{2}}|\log t|\bar{r}^3 t^3
\tag*{$\textup{(\ref*{26-7-30})}'$}\label{26-7-30-'}\\
\shortintertext{and therefore}
\Theta \coloneqq \int \theta (x)\rho_\Psi \,dx \gtrsim
B G^{\frac{1}{2}}\bar{r}^3 t^3
\tag*{$\textup{(\ref*{26-7-31})}'$}\label{26-7-31-'}
\shortintertext{for}
t\ge t_*\coloneqq C_0B^{-\frac{2}{7}}G^{-\frac{1}{7}}\bar{r}^{-\frac{5}{7}}\bar{Q}^{\frac{1}{7}}
|\log t|^{\frac{2}{7}}.
\tag*{$\textup{(\ref*{26-7-32})}'$}\label{26-7-32-'}
\end{gather}

Now we need to look more carefully at $\bar{Q}$, especially because while it may contain ``rogue'' factor $L$ or $L^2$, it can also be large as $(Z-N)_+$ is large. Fortunately, this is not the case in the current framework:

\begin{proposition}\label{prop-26-7-8}
\begin{enumerate}[label=(\roman*), wide, labelindent=0pt]
\item\label{prop-26-7-8-i}
Under condition \textup{(\ref{26-7-46})} below $\bar{Q}$ is as in the case $N=Z$ i.e.
\begin{equation}
\bar{Q}=\left\{\begin{aligned}
&Z^{\frac{5}{3}}+ B^{\frac{5}{4}}L^2 \qquad
&&\text{if\ \ }B\le Z^{\frac{4}{3}},\\
&B^{\frac{4}{5}}Z^{\frac{3}{5}}L^2 \qquad
&&\text{if\ \ } Z^{\frac{4}{3}}\le B\le Z^3.
\end{aligned}\right.
\label{26-7-43}
\end{equation}
\item\label{prop-26-7-8-ii}
Furthermore, if $B\le Z$ and $a\ge Z^{-\frac{1}{3}}$ under condition \textup{(\ref{26-7-48})} below $\bar{Q}$ is exactly as in the case $N=Z$, i.e.
\begin{equation}
\bar{Q}= Z^{\frac{5}{3}}
\bigl( Z^{-\delta}+ (aZ^{\frac{1}{3}})^{-\delta}+(BZ^{-1})^\delta\bigr).
\label{26-7-44}
\end{equation}
\end{enumerate}
\end{proposition}

\begin{proof}
One can either derive it from the existing estimates or just repeat estimates with $\nu=0$ adding $(Z-N)_+^2\bar{r}^{-1}$ to $\bar{Q}$. We leave easy details to the reader.\end{proof}

Therefore all the above arguments could be repeated with this new expression $\bar{Q}$ which also acquires factor $|\log t|$ (due to this factor in the estimate of
$\|\nabla \theta\|$ and this factor boils to $L_1^{\frac{1}{2}}$ with
\begin{equation}
L_1=\left\{\begin{aligned}
&|\log B Z^{-\frac{20}{21}}|+1 && Z^{\frac{20}{21}}\le B\le Z^{\frac{4}{3}},\\
&|\log B Z^{-3}|+1 && Z^{\frac{4}{3}}\le B\le Z^3.
\end{aligned}\right.
\label{26-7-45}
\end{equation}

Therefore we arrive to

\begin{theorem}\label{thm-26-7-9}
Let $M\ge 2$. Then
\begin{enumerate}[label=(\roman*), wide, labelindent=0pt]
\item\label{thm-26-7-9-i}
For
\begin{align}
(Z-N)_+\le C_0&\left\{\begin{aligned}
& Z^{\frac{5}{7}}
\qquad&&\text{if\ \ } B\le Z^{\frac{20}{21}},\\
& Z^{\frac{5}{6}}B^{-\frac{1}{8}}+B^{\frac{1}{2}}L
\qquad&&\text{if\ \ } Z^{\frac{20}{21}}\le B\le Z^{\frac{4}{3}},\\
&B^{\frac{1}{5}}Z^{\frac{2}{5}}L
\qquad&&\text{if\ \ }Z^{\frac{4}{3}}\le B\le Z^3
\end{aligned}\right.
\label{26-7-46}\\
\intertext{the following estimate holds}
\I_N\le CL_1^{\frac{2}{9}}&\left\{\begin{aligned}
&Z^{\frac{20}{21}}
&&\text{if\ \ } B\le Z^{\frac{20}{21}},\\
&Z ^{\frac{20}{27}}B^{\frac{2}{9}}+B ^{\frac{7}{9}}L ^{\frac{8}{9}}\qquad
&&\text{if\ \ } Z ^{\frac{20}{21}}\le B\le Z ^{\frac{4}{3}}\\
&Z ^{\frac{4}{15}}B ^{\frac{26}{45}}L ^{\frac{8}{9}}\qquad
&&\text{if\ \ } Z ^{\frac{4}{3}}\le B\le Z ^3.
\end{aligned}\right.
\label{26-7-47}
\end{align}
\item\label{thm-26-7-9-ii}
Furthermore, for $B\le Z$, $a\ge Z^{-\frac{1}{3}}$ and
\begin{align}
(Z-N)_+\le C_0\varsigma^\delta&\left\{\begin{aligned}
& Z^{\frac{5}{7}}
\qquad&&\text{if\ \ } B\le Z^{\frac{20}{21}},\\
& Z^{\frac{5}{6}}B^{-\frac{1}{8}}
\qquad&&\text{if\ \ } Z^{\frac{20}{21}}\le B\le Z ,\\
\end{aligned}\right.
\label{26-7-48}\\
\shortintertext{with}
&\varsigma = Z^{-1} + BZ^{-1} +a ^{-1}Z^{\frac{1}{3}}
\label{26-7-49}\\
\intertext{the following estimate holds}
\I_N\le CL_1^{\frac{2}{9}}\varsigma^{\delta'}
&\left\{\begin{aligned}
&Z^{\frac{20}{21}}
&&\text{if\ \ } B\le Z^{\frac{20}{21}},\\
&Z ^{\frac{20}{27}}B^{\frac{2}{9}}\qquad
&&\text{if\ \ } Z ^{\frac{20}{21}}\le B\le Z.
\end{aligned}\right.
\label{26-7-50}
\end{align}
\end{enumerate}
\end{theorem}

\chapter{Positively charged systems}
\label{sect-26-8}

Now let us estimate from above and below the ionization energy in the case when $N<Z$ and condition (\ref{26-7-36}) (if $M=1$) or (\ref{26-7-46}) (if $M\ge 2$) fails. We also estimate excessive the positive charge in the case of $M\ge 2$ and free nuclei model. We will follow arguments of the corresponding three subsections of Section~\ref{book_new-sect-25-6}.

\section{Upper estimate for ionization energy: $M=1$}
\label{sect-26-8-1}

Consider first the case of $M=1$. Then for $B=0$ arguments are well-known (see Section~\ref{book_new-sect-25-6}) but we repeat them for $B>0$: we pick up $\beta$-admissible function $\theta$ such that $\theta=1$ if
$|x-\y_1|\ge \bar{r}-\beta$ and $\theta=0$ if $|x-\y_1|\le \bar{r}-2\beta$ where $\bar{r}$ is an exact radius of support of $\rho^\TF$ (see the very beginning of Subsection~\ref{book_new-sect-25-6-1}) and $\beta \ll \bar{r}$. Recall that
\begin{equation}
\bar{r}\asymp \left\{\begin{aligned}
&(Z-N)^{-\frac{1}{3}} &&\text{if\ \ } B\le Z^{\frac{20}{21}},\\
&\min\bigl((Z-N)^{-\frac{1}{3}}, B^{-\frac{1}{4}}\bigr)
&&\text{if\ \ } Z^{\frac{20}{21}}\le B\le Z^{\frac{4}{3}},\\
&B^{-\frac{2}{5}}Z^{\frac{1}{5}}
&&\text{if\ \ }Z^{\frac{4}{3}}\le B\le Z^3,
\end{aligned}\right.
\label{26-8-1}
\end{equation}
where in the first case we used that $Z-N\ge Z^{\frac{5}{7}}$ while
in the second case both subcases $(Z-N)^{-\frac{1}{3}}\gtrless B^{-\frac{1}{4}}$ are possible.

We can assume without any loss of the generality that $\y_1=0$. Now in the spirit of Subsection~\ref{book_new-sect-25-6-1} we need to select as we did in Subsection~\ref{sect-26-7-3} the smallest $\beta$ such that
\begin{gather}
\Theta^\TF \coloneqq \int \theta (x)\rho^\TF(x)\, dx \ge C\beta^{-\frac{1}{2}}\bar{r} \bar{Q}^{\frac{1}{2}}
\label{26-8-2}
\intertext{implying that}
\Theta_\Psi \coloneqq \int \theta (x)\rho_\Psi(x)\, dx \asymp \Theta^\TF,
\label{26-8-3}
\end{gather}
where the right-hand expression of (\ref{26-8-2}) estimates
$|\int \theta (x)(\rho^\TF-\rho_\Psi)\,dx|$ (recall that it does not exceed
$\|\nabla \theta\|\cdot \D(\rho^\TF-\rho_\Psi,\rho^\TF-\rho_\Psi)^{\frac{1}{2}}$). Again as in Subsection~\ref{book_new-sect-25-6-1} $\rho^\TF=\rho^\TF_N$ is calculated for the actual value of $N<Z$.

Then, following Subsubsection~\ref{book_new-sect-25-6-1}, eventually we arrive to estimate (\ref{book_new-25-6-8}), namely:
\pagebreak
\begin{multline}
\I_N \int \ell(x) \rho_\Psi (x)\theta (x)\,dx \le
\int \theta (x)V(x)\ell (x)\rho _\Psi (x)\,dx \\
\begin{aligned}
-&\int \Bigl(\rho _\Psi ^{(2)}(x,y)-\rho _\Psi (x)\rho (y)\Bigr)
\ell (x)|x-y| ^{-1}\theta (x)\, dxdy \\
-&\int \rho _\Psi (x)\rho (y)\ell (x)|x-y| ^{-1}\theta (x)\, dxdy + C\beta^{-2}\bar{r}\Theta ,
\end{aligned}
\label{26-8-4}
\end{multline}
and then estimate from above the second term in the right-hand expression
\begin{multline}
-\int \Bigl(\rho _\Psi ^{(2)}(x,y)-\rho _\Psi(x)\rho(y)\Bigr)
\ell (x)|x-y| ^{-1}\theta (x)\, dxdy\le \\
-\int \Bigl(\rho _\Psi ^{(2)}(x,y)-\rho _\Psi (x)\rho (y)\Bigr)
\bigl(1-\omega (x,y)\bigr)\ell (x)|x-y| ^{-1}\theta (x)\,dxdy \\
+\int \rho_\Psi (x)\rho(y)\omega (x,y) \ell(x) |x-y|^{-1}\theta (x)\,dxdy
\label{26-8-5}
\end{multline}
with $\omega=\omega_\gamma$: $\omega =0$ if $|x-y|\ge 2\gamma$ and $\omega =1$ if $|x-y|\le \gamma$, $\gamma \ge \beta$ (see (\ref{book_new-25-6-9})).

To estimate the first term in the right-hand expression of (\ref{26-8-5}) one can apply Proposition~\ref{book_new-prop-25-5-1}. In this case
$\|\nabla_y\chi \|_{\sL^2} \asymp C\bar{r}\gamma^{-\frac{1}{2}}$,
$\|\nabla_y\chi \|_{\sL^\infty} \asymp \bar{r} \gamma^{-2}$
and plugging $P=\beta^{-2}\Theta$ and $T=|\nu|$, $\varepsilon =Z^{-\frac{2}{3}}$ we conclude that this term does not exceed (\ref{book_new-25-6-10})
\begin{equation}
C\bar{r}\bigl(\gamma^{-\frac{1}{2}}Q^{\frac{1}{2}} +
Z^{\frac{1}{3}} \gamma^{-2}\bigr) \Theta
\label{26-8-6}
\end{equation}
(if $Q\ge Z^{\frac{5}{3}}$; otherwise here we should reset here
$Q\coloneqq Z^{\frac{5}{3}}$).

Note that if $0\le \bar{r}-|x|\asymp \beta$
\begin{gather}
W+\nu \asymp \upsilon\coloneqq
\max \Bigl\{ \bigl(\frac{|\nu| \beta }{\bar{r}} \bigr);\,
G \bigl(\frac{\beta}{\bar{r}}\bigr)^4\Bigr\},
\label{26-8-7}\\
\intertext{with $G$ defined by (\ref{26-2-41}) and therefore}
\rho \asymp
\max \Bigl\{ \bigl(\frac{|\nu| \beta }{\bar{r}} \bigr)^{\frac{3}{2}};\,
B \bigl(\frac{|\nu |\beta }{\bar{r}} \bigr)^{\frac{1}{2}};\,
B G^{\frac{1}{2}} \bigl(\frac{\beta}{\bar{r}}\bigr)^2\Bigr\}
\label{26-8-8}
\end{gather}
where the first and the second clauses are forks of the first clause in (\ref{26-8-7}) since in the second clause automatically
$W+\nu \le B$ for $0\le \bar{r}-|x|\lesssim \beta$; therefore
\begin{gather}
\int \rho (x)\theta(x)\,dx \asymp
\max \Bigl\{ \bigl(\frac{|\nu| \beta }{\bar{r}} \bigr)^{\frac{3}{2}};\,
\ B \bigl(\frac{|\nu |\beta }{\bar{r}} \bigr)^{\frac{1}{2}};\,
B G^{\frac{1}{2}} \bigl(\frac{\beta}{\bar{r}}\bigr)^2\Bigr\} \beta\bar{r}^2,
\label{26-8-9}\\
\intertext{and therefore (\ref{26-8-2}) holds if and only if}
\max \Bigl\{ \bigl(\frac{|\nu| }{\bar{r}} \bigr)^{\frac{3}{2}}\beta ^3;\,
\ B \bigl(\frac{|\nu | }{\bar{r}} \bigr)^{\frac{1}{2}}\beta^2;\,
B G^{\frac{1}{2}} \bigl(\frac{1}{\bar{r}}\bigr)^2\beta^{\frac{7}{2}}\Bigr\}
\bar{r}\ge CQ^{\frac{1}{2}};
\label{26-8-10}\\
\shortintertext{then}
\beta=\min \Bigl\{ Q^{\frac{1}{6}}|\nu|^{-\frac{1}{2}}\bar{r}^{\frac{1}{6}};\,
B^{-\frac{1}{2}} Q^{\frac{1}{4}}|\nu|^{-\frac{1}{4}}\bar{r}^{-\frac{1}{4}};\,
B ^{-\frac{2}{7}} G^{-\frac{1}{7}} Q^{\frac{1}{7}}\bar{r}^{\frac{2}{7}}\Bigr\}
\label{26-8-11}\\
\intertext{and in the corresponding cases}
\upsilon=
\Bigl\{ Q^{\frac{1}{6}}|\nu|^{\frac{1}{2}}\bar{r}^{-\frac{5}{6}};\,
B^{-\frac{1}{2}} Q^{\frac{1}{4}}|\nu|^{\frac{3}{4}}\bar{r}^{-\frac{5}{4}};\,
B ^{-\frac{8}{7}}G^{\frac{3}{7}} Q^{\frac{4}{7}}\bar{r}^{-\frac{20}{7}}
\Bigr\}.
\label{26-8-12}
\end{gather}

Observe, however, that for $B\lesssim Q^{\frac{4}{7}}$ and
$|\nu|\lesssim Q^{\frac{4}{7}}$ we do not need these arguments; simpler arguments of Subsection~\ref{book_new-sect-25-5-3} show that in this case
$|\I_N|\le CQ^{\frac{4}{7}}$.

On the other hand, for $B\lesssim Q^{\frac{4}{7}}$ but
$|\nu|\gtrsim Q^{\frac{4}{7}}$, we pick
$\gamma= Q^{\frac{1}{8}}|\nu|^{-\frac{15}{32}}$, like in Subsection~\ref{book_new-sect-25-6-1}, and observe that $|\nu|\bar{r}^{-1}\gamma \gtrsim B$ and therefore we conclude that
$\I_N+\nu \le C Q^{\frac{1}{6}}|\nu|^{\frac{17}{24}}$, exactly like in that subsection. Therefore we arrive to

\begin{proposition}\label{prop-26-8-1}
Let $B\le C_0Z^{\frac{20}{21}}$. Then
\begin{enumerate}[label=(\roman*), wide, labelindent=0pt]
\item\label{prop-26-8-1-i}
If $|\nu|\le C_0Z^{\frac{20}{21}}$, then  estimate \ $\I_N \le C Z^{\frac{20}{21}}$ holds like in the case $B=0$.

\item\label{prop-26-8-1-ii}
If $|\nu|\ge C_0 Z^{\frac{20}{21}}$, then  estimate \
$\I_N +\nu \le C Z^{\frac{5}{18}}|\nu|^{\frac{17}{24}}$ holds like in the case $B=0$.
\end{enumerate}
\end{proposition}
\enlargethispage{2\baselineskip}

Therefore in what follows we assume that $B\ge Q^{\frac{4}{7}}$. One can see easily that then $\beta \le \bar{r}$.

Meanwhile, the same arguments imply that the second term in the right-hand expression of (\ref{26-8-5}) is of magnitude
\begin{gather*}
\max \Bigl\{ \bigl(\frac{|\nu| \gamma }{\bar{r}} \bigr)^{\frac{3}{2}};\,
\ B \bigl(\frac{|\nu |\gamma }{\bar{r}} \bigr)^{\frac{1}{2}};\,
B G^{\frac{1}{2}} \bigl(\frac{\gamma}{\bar{r}}\bigr)^2\Bigr\} \gamma^2
\intertext{and we need to minimize}
\gamma^{-\frac{1}{2}}Q^{\frac{1}{2}}+
\max \Bigl\{ \bigl(\frac{|\nu| \gamma }{\bar{r}} \bigr)^{\frac{3}{2}},\
\ B \bigl(\frac{|\nu |\gamma }{\bar{r}} \bigr)^{\frac{1}{2}};\,
B G^{\frac{1}{2}} \bigl(\frac{\gamma}{\bar{r}}\bigr)^2\Bigr\} \gamma^2,
\intertext{which is achieved when}
\gamma^{-\frac{1}{2}}Q^{\frac{1}{2}}\asymp
\max \Bigl\{ \bigl(\frac{|\nu| \gamma }{\bar{r}} \bigr)^{\frac{3}{2}};\,
\ B \bigl(\frac{|\nu |\gamma }{\bar{r}} \bigr)^{\frac{1}{2}};\,
B G^{\frac{1}{2}} \bigl(\frac{\gamma}{\bar{r}}\bigr)^2\Bigr\} \gamma^2.
\end{gather*}
Let us compare this equation with equation to $\beta$. It is the same albeit with factor $\bar{r}^2$ rather than $\gamma^2$. Therefore if $\gamma \ge \bar{r}$ then $\gamma \le \beta\le \bar{r}$ which is a contradiction. Thus $\gamma\le \bar{r}$ but then $\gamma\ge \beta$.\pagebreak

Therefore we conclude that this term does not exceed
\begin{equation}
\varsigma\coloneqq \max\Bigl\{ Q^{\frac{7}{16}} \bigl(\frac{|\nu|}{\bar{r}}\bigr)^{\frac{3}{16}};\,
Q^{\frac{5}{12}}B^{\frac{1}{6}}\bigl(\frac{|\nu|}{\bar{r}}\bigr)^{\frac{1}{12}};\,
Q^{\frac{4}{9}}B^{\frac{1}{9}}G^{\frac{1}{18}}\bar{r}^{-\frac{2}{9}} \Bigr\},
\label{26-8-13}
\end{equation}
and to estimate $\I_N+\nu$ we need just to compute its sum with $\upsilon$ defined by (\ref{26-8-12}).

Therefore we conclude that
\begin{equation}
\I_N+\nu \le C(\upsilon + \varsigma).
\label{26-8-14}
\end{equation}

\begin{remark}\label{rem-26-8-2}
Observe that
\begin{align}
&\upsilon (Z, B,|\nu|)= Z^{\frac{20}{21}}
\upsilon (1, Z^{-\frac{20}{21}}B, |\nu|Z^{-\frac{20}{21}}|\nu|) &&\text{if\ \ } Z^{\frac{20}{21}}\le B\le Z^{\frac{4}{3}}
\label{26-8-15}\\
\shortintertext{and}
&\upsilon (Z, BZ^{-3},|\nu|)= Z^2 \upsilon (1, Z^{-3}B, |\nu|Z^{-2}|\nu|) &&\text{if\ \ } Z^{\frac{4}{3}}\le B\le Z^3,
\label{26-8-16}
\end{align}
and $\varsigma$ has the same scaling properties.
\end{remark}

Therefore we can make all calculations with $Z=1$ and then scale. Leaving easy calculations to the reader, we arrive to

\begin{proposition}\label{prop-26-8-3}
\begin{enumerate}[label=(\roman*),wide, labelindent=0pt]
\item\label{prop-26-8-3-i}
For $Z^{\frac{20}{21}}\le B\le Z^{\frac{4}{3}}$
\begin{equation}
I_N+\nu \le C\left\{\begin{aligned}
& Z^{\frac{5}{18}} |\nu|^{\frac{17}{24}} \qquad
&&\text{if\ \ } |\nu|\ge Z^{-\frac{20}{51}} B^{\frac{24}{17}},\\
&Z^{\frac{5}{12}} B^{-\frac{1}{2}}|\nu|^{\frac{17}{16}}\qquad
&&\text{if\ \ } B\le |\nu|\le Z^{-\frac{20}{51}} B^{\frac{24}{17}},\\
&Z^{\frac{5}{48}} B^{-\frac{3}{16}} |\nu|^{\frac{3}{4}} \qquad
&&\text{if\ \ } Z^{\frac{5}{12}} B^{\frac{9}{16}}\le |\nu|\le B,\\
&Z^{\frac{25}{36}} B^{\frac{3}{16}}|\nu|^{\frac{1}{12}}\qquad
&&\text{if\ \ } Z^{\frac{5}{9}} B^{\frac{5}{12}}\le |\nu|\le
Z^{\frac{5}{9}} B^{\frac{9}{16}},\\
&Z^{\frac{20}{27}} B^{\frac{2}{9}}
&&\text{if\ \ } |\nu|\le Z^{\frac{5}{9}} B^{\frac{5}{12}}.
\end{aligned}\right.
\label{26-8-17}
\end{equation}
\item\label{prop-26-8-3-ii}
In particular,
\begin{equation}
\qquad I_N\le CZ^{\frac{20}{27}} B^{\frac{2}{9}} \qquad \text{if\ \ }
|\nu|\le Z^{\frac{20}{27}} B^{\frac{2}{9}}.
\label{26-8-18}
\end{equation}
\item\label{prop-26-8-3-iii}
For $Z^{\frac{4}{3}}\le B\le Z^3$
\begin{equation}
I_N+\nu \le C\left\{\begin{aligned}
&Z^{\frac{7}{30}} B^{\frac{8}{15}}|\nu|^{\frac{1}{12}} \qquad
&&\text{if\ \ } |\nu|\ge Z^{\frac{2}{5}} B^{\frac{8}{15}},\\
& Z^{\frac{4}{15}} B^{\frac{26}{45}} \qquad
&&\text{if\ \ } |\nu|\le Z^{\frac{2}{5}} B^{\frac{8}{15}}.
\end{aligned}\right.
\label{26-8-19}
\end{equation}
\item\label{prop-26-8-3-iv}
In particular,
\begin{equation}
\qquad I_N\le CZ^{\frac{4}{15}} B^{\frac{26}{45}}\qquad \text{if\ \ }
|\nu|\le Z^{\frac{4}{15}} B^{\frac{26}{45}}.
\label{26-8-20}
\end{equation}
\end{enumerate}
\end{proposition}

\begin{remark}\label{rem-26-8-4}
Recall that $Q=Z^{\frac{5}{3}}\bigl(B^{\delta} + 1\bigr)Z^{-\delta}$
if $B\le Z$; therefore we can add factor
$\bigl(B^{\delta'} + 1\bigr)Z^{-\delta'}$ in all estimates of Propositions~\ref{prop-26-8-1} and~\ref{prop-26-8-3}.
\end{remark}

\section{Lower estimate for ionization energy: $M=1$}
\label{sect-26-8-2}

Now let us derive an estimate $\I_N +\nu$ from below. Let
$\Psi =\Psi _N(x_1,\ldots, x_N)$ be the ground state for $N$ electrons, $\|\Psi\|=1$; consider an antisymmetric \emph{test function\/}
\begin{multline}
\tilde{\Psi}=\tilde{\Psi}(x_1,\ldots, x_{N+1})=
\Psi (x_1,\ldots, x_N)u(x_{n+1})-\\
\sum_{1\le j\le N}
\Psi (x_1,\ldots, x_{j-1},x_{N+1},x_{j+1},\ldots, x_N)u(x_j)
\label{26-8-21}
\end{multline}
Then exactly as in Subsection~\ref{book_new-sect-25-6-2}
\begin{multline*}
\E_{N+1}\|\tilde{\Psi}\| ^2\le
\blangle \mathbf{H}_{N+1}\tilde{\Psi},\tilde{\Psi} \brangle =
N\blangle \mathbf{H}_{N+1}\Psi u,\tilde{\Psi}\brangle =\\[5pt]
N\blangle \mathbf{H}_N\Psi u,\tilde{\Psi}\brangle +
N\blangle H_{V,x_{N+1}}\Psi u,\tilde{\Psi}\brangle +
N\blangle \sum_{1\le i\le N}|x_i-x_{N+1}| ^{-1}\Psi u, \tilde{\Psi}\brangle =\\
\shoveleft{\quad (\E_N-\nu' )\|\tilde{\Psi}\| ^2 +
N\blangle H_{W+\nu' ,x_{N+1}}\Psi u,\tilde{\Psi}\brangle}\\[5pt]
+N\blangle
\bigl(\sum_{1\le i\le N}|x_i-x_{N+1}| ^{-1}- (V-W)(x_{N+1})\bigr) \Psi u, \tilde{\Psi}\brangle
\end{multline*}
and therefore
\begin{multline}
N ^{-1}(\I_{N+1}+\nu')\|\tilde{\Psi}\| ^2\ge
-\blangle H_{W+\nu',x_{N+1}}\Psi u,\tilde{\Psi}\brangle\\
-\blangle \bigl(\sum_{1\le i\le N}|x_i-x_{N+1}| ^{-1} - (V-W)(x_{N+1})\bigr)
\Psi u, \tilde{\Psi}\brangle
\label{26-8-22}
\end{multline}
and
\begin{multline}
N ^{-1}\|\tilde{\Psi}\| ^2=\|\Psi \| ^2\cdot \|u\| ^2-\\
N\int \Psi (x_1,\ldots,x_{N-1},x)\Psi ^\dag (x_1,\ldots,x_{N-1},y)
u(y)u^\dag(x) \, dx_1\cdots dx_{N-1}\, dxdy
\label{26-8-23}
\end{multline}
as in (\ref{book_new-25-6-14}) and (\ref{book_new-25-6-15}) respectively where $ ^\dag$ means a complex or Hermitian conjugation and $\nu'\ge \nu $ to be chosen later.

Note that every term in the right-hand expression in (\ref{26-8-22}) is the sum of two terms: one with $\tilde{\Psi}$ replaced by
$\Psi (x_1,\ldots,x_N)u(x_{N+1})$ and another with $\tilde{\Psi}$ replaced by $-N\Psi (x_1,\ldots,x_{N-1},x_{N+1})u(x_N)$. We call these terms, as in Subsection~\ref{book_new-sect-25-6-2}, \emph{direct\/} and \emph{indirect\/} respectively.

Obviously, in the direct and indirect terms $u$ appears as
$|u(x)| ^2\,dx$ and as $u(x) u ^\dag (y)\,dxdy$ respectively multiplied by some kernels.

Recall that $u$ is an arbitrary function. Let us take
$u(x)=\theta^{\frac{1}{2}} (x)\phi_j(x)$ where $\phi _j$ are orthonormal eigenfunctions of $H_{W+\nu }$ and $\theta (x)$ is $\beta$-admissible function which is supported in $\{x\colon - \upsilon \ge W(x)+\nu \ge \frac{2}{3} \nu \}$ and equal $1$ in $\{x\colon -2\upsilon \ge W(x)+\nu \ge \frac{1}{2}\nu \}$, satisfying (\ref{book_new-25-5-11}), and $\upsilon$ is related to $\beta$ as in the previous Section~\ref{sect-26-7}:
\begin{equation}
\upsilon = C\max (\nu \bar{r}^{-1}\beta;\, G\bar{r}^{-4}\beta^4).
\label{26-8-24}
\end{equation}

Let us substitute it into (\ref{26-8-22}), multiply by
$\varphi (\lambda _jL ^{-1})$ and take the sum with respect to $j$; then we get the same expressions with $|u(x)| ^2\,dx$ and $u(x)u ^\dag(y)\,dxdy$ replaced by $F(x,x)\,dx$ and $F(x,y)\,dxdy$ respectively with
\begin{equation}
F(x,y)=\int \varphi (\lambda L ^{-1})\,d_\lambda e(x,y,\lambda ).
\label{26-8-25}
\end{equation}
Here $\varphi (\tau )$ is a fixed $\sC ^\infty $ non-negative function equal to $1$ for $\tau \le \frac{1}{2}$ and equal to $0$ for $\tau \ge 1$ and
$L =\nu'-\nu =6\upsilon$.

Under described construction and procedures the direct term generated by
$N ^{-1}\|\tilde{\Psi}\| ^2$ is
\begin{gather}
\int \theta (x) \varphi (\lambda L ^{-1})\,d_\lambda e(x,x,\lambda )\,dx.
\label{26-8-26}\\
\intertext{Then, applying semiclassical approximation, we get}
\Theta_\Psi
\coloneqq \int \varphi (\lambda L ^{-1})\,d_\lambda P'_B(W+\nu -\lambda )\,dx.
\label{26-8-27}
\end{gather}
Consider the remainder estimate. Assume that $M=1$ (case $M\ge 2$ will be considered later). Then since $L= C_1\upsilon $ the remainder does not exceed
\begin{gather}
Ch ^{s} (\mu h +1) \beta^{-2}\bar{r}^2,
\label{26-8-28}\\
\shortintertext{where}
h = 1/(\upsilon ^{\frac{1}{2}}\beta)\label{26-8-29}\\
\shortintertext{and}
\mu =B\beta \upsilon^{-\frac{1}{2}};\label{26-8-30}
\end{gather}
one can prove it easily by partition of unity on $\supp (\theta)$ and applying semiclassical asymptotics with effective semiclassical parameter $h $ and magnetic parameter $\mu$.

On the other hand, the indirect term generated by $N ^{-1}\|\tilde{\Psi}\| ^2$ is
\begin{multline}
-N\int \theta^{\frac{1}{2}} (x)\theta^{\frac{1}{2}} (y)
\Psi (x_1,\ldots,x_{N-1},x) \Psi ^\dag (x_1,\ldots,x_{N-1},y) \times \\
F(x,y)\, dxdydx_1\cdots dx_{N-1},
\label{26-8-31}
\end{multline}
and since the operator norm of $F(.,.,.)$ is $1$, the absolute value of this term does not exceed
\begin{multline}
N\int\theta(x) |\Psi (x_1,\ldots,x_{N-1},x)| ^2\,dx =
\int \theta(x) \rho _\Psi (x)\,dx\le\\
\int \theta(x) \rho^\TF (x)\,dx + CQ^{\frac{1}{2}} \|\nabla \theta^{\frac{1}{2}}\|
\label{26-8-32}
\end{multline}
where $\rho^\TF=0$ on $\supp (\theta)$ and
$\|\nabla \theta^{\frac{1}{2}} \|\asymp \beta^{-\frac{1}{2}}\bar{r}$.

Recall that $P' (W ^\TF+\nu )=\rho ^\TF$. We will take $\nu'=\nu + L$ to keep $\Theta_\Psi$ larger than all the remainders including those due to replacement $W$ by $W ^\TF$ and $\rho$ by $\rho ^\TF$ in the expression above. One can observe easily that then $\beta$ should satisfy (\ref{26-8-10}); let us define $\beta$ and then $\upsilon$ by (\ref{26-8-11}) and (\ref{26-8-12}) respectively. Then
\begin{equation}
\Theta_\Psi \asymp \bigl(\upsilon^{\frac{3}{2}} + B\upsilon^{\frac{1}{2}}\bigr) \beta \bar{r}^2.
\label{26-8-33}
\end{equation}
Therefore

\begin{claim}\label{26-8-34}
Let $h \le \epsilon_0$ (i.e. $\upsilon^{\frac{1}{2}} \beta\ge C_0$), and $\beta,\upsilon$ be defined by (\ref{26-8-11}) and (\ref{26-8-12}) respectively. Then expression (\ref{26-8-33}) is larger than
$C_0 \beta^{-\frac{1}{2}} Q^{\frac{1}{2}}$ and
the total expression generated by $N ^{-1}\|\tilde{\Psi}\| ^2$ is greater than $\epsilon \Theta $ with $\Theta=\Theta_\Psi$ defined by (\ref{26-8-33}).
\end{claim}

Now let us consider the direct terms in the right-hand expression of (\ref{26-8-22}). The first of them is like in (\ref{book_new-25-6-23})
\begin{multline}
-\int \theta^{\frac{1}{2}} (x)\varphi (\lambda L ^{-1})\,
d_\lambda \bigl(H_{W+\nu',x}\theta^{\frac{1}{2}}(x)
e(x,y,\lambda )\bigr)_{y=x}\,dx=\\
\shoveleft{-\int \theta (x)\varphi (\lambda L ^{-1})\,d_\lambda
\bigl(H_{W+\nu',x}e(x,y,\lambda )\bigr)_{y=x}\,dx}\\
\shoveright{-\frac{1}{2}\int \varphi (\lambda L ^{-1})
[[H_W,\theta^{\frac{1}{2}} ],\theta ^{\frac{1}{2}} ] \,
d_\lambda e(x,x,\lambda )\ge }\\
\int \theta (x) (\nu'-\nu -\lambda ) \varphi (\lambda L ^{-1})
\,d_\lambda e(x,x,\lambda )\,dx
-C\int |\nabla \theta^{\frac{1}{2}} |^2 e(x,x,\nu')dx.
\label{26-8-35}
\end{multline}
Observe that the absolute value of last term in the right-hand expression of (\ref{26-8-35}) does not exceed
$C\beta^{-1}\bar{r}^2 \bigl(\upsilon^{\frac{3}{2}} + B\upsilon^{\frac{1}{2}}\bigr)\asymp \beta^{-2}\Theta $.

The second direct term in the right-hand expression of (\ref{26-8-22}) is like in (\ref{book_new-25-6-24})
\begin{multline}
-\int \theta (x)\Bigl(\rho _\Psi * |x| ^{-1}-(V-W)(x)\Bigr)F(x,x)\,dx=\\ -\D\bigl(\rho _\Psi - \bar{\rho}, \theta (x)F(x,x)\bigr) \ge \\
-C\D\bigl(\rho _\Psi - \rho,\rho _\Psi -\rho\bigr)^{\frac{1}{2}} \cdot \D\Bigl(\theta^{\frac{1}{2}} F(x,x ),
\theta^{\frac{1}{2}} F(x,x ))\Bigr) ^{\frac{1}{2}}\ge
-C Q ^{\frac{1}{2}}\bar{r}^{-\frac{1}{2}}\Theta ,
\label{26-8-36}
\end{multline}
provided $V-W=|x| ^{-1}*\rho$ with
$\D(\rho-\rho ^\TF,\rho-\rho ^\TF)\le C Q$.

Further, the first indirect term in the right-hand expression of (\ref{26-8-22})~is like in (\ref{book_new-25-6-25})
\begin{multline}
-N\int \theta^{\frac{1}{2}} (y)\Psi (x_1,\ldots,x_{N-1},x) \Psi ^\dag (x_1,\ldots,x_{N-1},y) \times \\
\shoveright{\varphi (\lambda L ^{-1})\,d_\lambda
\bigl(H_{W+\nu',x}\theta^{\frac{1}{2}} (x)e(x,y,\lambda )\bigr)\,
dxdydx_1\cdots dx_{N-1}=}\\
\shoveleft{-N\int \theta^{\frac{1}{2}}(y)\theta^{\frac{1}{2}}(x)
\Psi (x_1,\ldots,x_{N-1},x)
\Psi ^\dag (x_1,\ldots,x_{N-1},y) \times }\\
\shoveright{\varphi (\lambda L ^{-1})(\nu'-\nu -\lambda )\,d_\lambda e(x,y,\lambda )\, dxdydx_1\cdots dx_{N-1}}\\
\shoveleft{-N\int \theta^{\frac{1}{2}}(y) \Psi (x_1,\ldots,x_{N-1},x)
\Psi ^\dag (x_1,\ldots,x_{N-1},y) \times }\\
\varphi (\lambda L ^{-1})[H_{W,x},\theta^{\frac{1}{2}}(x)]\,
d_\lambda e(x,y,\lambda )\,dxdydx_1\cdots dx_{N-1}.
\label{26-8-37}
\end{multline}
Observe that one can rewrite the sum of the first terms in the right-hand expressions in (\ref{26-8-35}) and (\ref{26-8-37}) as
$\sum_j \varphi (\lambda _jL ^{-1}) (\nu '-\nu -\lambda_j)\|\hat{\Psi}_j\| ^2$\linebreak
with
\begin{equation*}
\hat{\Psi}_j(x_1,\ldots,x_{N-1})\coloneqq
\int \Psi(x_1,\ldots,x_{N-1},x) \theta^{\frac{1}{2}}(x)\phi_j(x)\,dx
\end{equation*}
and therefore this sum is non-negative.
\enlargethispage{2\baselineskip}

One can see easily that the absolute value of the second term in the right-hand expression of (\ref{26-8-37}) does not exceed
\begin{multline*}
\int\rho _\Psi (y)\theta^{\frac{1}{2}} (y)\,dy \times \beta^{-1} \int \theta_1 (x) e(x,x,\nu')\,dx\asymp
C\Theta \times C \bigl(\upsilon^{\frac{3}{2}} + B\upsilon\bigr) \bar{r}^2\asymp \\
C \beta^{-\frac{3}{2}}\bar{r}Q^{\frac{1}{2}}\Theta
\end{multline*}
due the choice of $\beta$. This is larger than the absolute value of the right-hand expression in (\ref{26-8-36}).
Therefore (cf. \ref{book_new-25-6-26}) we conclude that
\begin{claim}\label{26-8-38}
The sum of the first direct and indirect terms in the right-hand expression of (\ref{26-8-22}) is greater than
$-C \beta^{-\frac{3}{2}}\bar{r}Q^{\frac{1}{2}}\Theta$.
\end{claim}

Finally, we need to consider the second indirect term generated by the right-hand expression of (\ref{26-8-22}):
\begin{multline}
-\int \Bigl(\sum_{1\le i\le N}|y-x_i| ^{-1}-(V-W)(y)\Bigr)\times\\[5pt]
\shoveright{\Psi (x_1,\ldots, x_N)\Psi ^\dag(x_1,\ldots,x_{N-1},y)
\theta^{\frac{1}{2}} (x_N)\theta^{\frac{1}{2}} (y)F(x_N,y)
\,dx_1\cdots dx_Ndy=}\\[5pt]
\shoveleft{ -\int \Bigl(|y| ^{-1}*\varrho _{\underline{x}}(y) -(V-W)(y)\Bigr)
\Psi (x_1,\ldots, x_N)\Psi ^\dag(x_1,\ldots,x_{N-1},y) \times}\\
\shoveright{\theta^{\frac{1}{2}} (x_N)\theta^{\frac{1}{2}} (y) F(x_N,y)\,dx_1\cdots dx_Ndy}\\[5pt]
\shoveleft{-\int \Bigl(\sum_{1\le i\le N}|y-x_i| ^{-1}-
|y| ^{-1}*\varrho _{\underline{x}}(y)\Bigr)
\Psi (x_1,\ldots, x_N)\Psi ^\dag(x_1,\ldots,x_{N-1},y) \times}\\
\theta^{\frac{1}{2}} (x_N)\theta^{\frac{1}{2}} (y)F(x_N,y)\,dx_1\cdots\, dx_Ndy;
\label{26-8-39}
\end{multline}
recall that $\varrho _{\underline{x}}$ is a smeared density, $\underline{x}=(x_1,\ldots, x_N)$.

Since $|y| ^{-1}*\varrho _{\underline{x}}(y) -(V-W)(y)=
|y|^{-1}*(\varrho _{\underline{x}} -\rho)$, the first term in the right-hand expression is equal to
\begin{multline}
\int \theta^{\frac{1}{2}} (x_N)\Psi (x_1,\ldots, x_N)\times \\
\D_y\Bigl( \varrho _{\underline{x}} (y) -\rho (y) , F(x_N,y,\lambda )
\theta^{\frac{1}{2}} (y) \Psi (x_1,\ldots,x_{N-1},y) \Bigr)\,dx_1\cdots dx_N
\label{26-8-40}
\end{multline}
and its absolute value does not exceed
\begin{multline}
\biggl(N\int \D\bigl(\varrho _{\underline{x}} (\cdot)-\rho(\cdot),
\varrho _{\underline{x}}(\cdot)- \rho(\cdot)\bigr) |\Psi (x_1,\ldots, x_N)|^2
\theta (x_N)\,dx_1\cdots dx_N\biggr) ^{\frac{1}{2}}\times \\
\shoveleft{N^{-\frac{1}{2}}
\biggl(\D_y\Bigl(F(x_N,y,\lambda )\theta^{\frac{1}{2}} (y)\Psi(x_1,\ldots,x_{N-1},y),} \\
F(x_N,y,\lambda ) \theta^{\frac{1}{2}} (y)\Psi (x_1,\ldots,x_{N-1},y) \Bigr)
\,dx_1\cdots dx_N \biggr) ^{\frac{1}{2}}.
\label{26-8-41}
\end{multline}

Recall that the first factor is equivalently defined by (\ref{book_new-25-5-4}) and therefore due to estimate (\ref{book_new-25-5-24}) it does not exceed
$\bigl((Q+T+\varepsilon^{-1}N)\Theta +P\bigr)^{\frac{1}{2}}$,
where we assume that $\varepsilon \le Z^{-\frac{2}{3}}$ and
$\Theta\asymp \beta \bigl( \upsilon^{\frac{3}{2}}+B\upsilon^{\frac{1}{2}}\bigr) \bar{r}^2\beta \asymp \beta^{-\frac{1}{2}}\bar{r}Q^{\frac{1}{2}}$ is now an upper estimate for $\int \theta (y)\rho_\Psi (y)\,dy$-like expressions.

Then, according to (\ref{book_new-25-5-25}), $P\asymp C\beta ^{-2}\Theta\ll Q\Theta $
and, according to (\ref{book_new-25-5-23}), $T\ll Q$ and therefore in all such inequalities we may skip $P$ and $T$ terms; so we get
$C(Q+\varepsilon^{-1}N)^{\frac{1}{2}}\Theta^{\frac{1}{2}}$.

Meanwhile, the second factor in (\ref{26-8-41}) (without square root) is equal to
\begin{multline*}
N^{-1}\int L^{-2}\varphi'(\lambda L^{-1})\varphi'(\lambda' L^{-1})
|y-z|^{-1} \underbracket{e(x_N,y,\lambda)}\theta^{\frac{1}{2}} (y)\Psi(x_1,\ldots,x_{N-1},y)\times\\
\underbracket{e(x_N,z,\lambda')}\theta^{\frac{1}{2}} (z)\Psi^\dag (x_1,\ldots,x_{N-1},z)
\,dy dz\,dx_1\cdots dx_{N-1} \underbracket{dx_N}\, d\lambda d\lambda';
\end{multline*}
after integration with respect to $x_N$ we get instead of the marked terms
$e(y,z,\lambda)$ (recall that $e(.,.,.)$ is the Schwartz kernel of the projector and we keep $\lambda<\lambda'$) and then, integrating with respect to $\lambda'$ we arrive to
\begin{multline*}
N^{-1}\int
|y-z|^{-1} F(y,z) \theta^{\frac{1}{2}} (y)\Psi(x_1,\ldots,x_{N-1},y)\times \\
\theta^{\frac{1}{2}} (z)\Psi^\dag (x_1,\ldots,x_{N-1},z)
\,dy dz\,dx_1\cdots dx_{N-1},
\end{multline*}
where now $F$ is defined by (\ref{26-8-25}) albeit with $\varphi^2$ instead of $\varphi$. This latter expression does not exceed
\begin{multline}
N^{-1}\iint
|y-z|^{-1} |F(y,z)| \theta^{\frac{1}{2}} (y)|\Psi(x_1,\ldots,x_{N-1},y)|^2 \times\\
\,dy dz\,dx_1\cdots dx_{N-1}.
\label{26-8-42}
\end{multline}
Then due to Proposition~\ref{prop-26-A-6} expression
$\int |y-z|^{-1}|F(y,z)|\,dz$ does not exceed $C\beta^{-1}(h ^{-1}+\mu)\asymp \upsilon^{\frac{1}{2}}+B\upsilon^{-\frac{1}{2}}$, and thus expression (\ref{26-8-42}) does not exceed
$CZ^{-2} \bigl(\upsilon^{\frac{1}{2}}+B\upsilon^{-\frac{1}{2}}\bigr) \Theta$. Therefore the second factor in (\ref{26-8-41}) does not exceed  $CN^{-1} \bigl(\upsilon^{\frac{1}{4}}+B^{\frac{1}{2}}\upsilon^{-\frac{1}{4}}\bigr) \Theta^{\frac{1}{2}}$ and the whole expression
(\ref{26-8-41}) does not exceed
\begin{multline*}
C(Q+\varepsilon^{-1}N)^{\frac{1}{2}}\Theta^{\frac{1}{2}} \times
N^{-1}\bigl(\upsilon^{\frac{1}{4}}+B^{\frac{1}{2}}\upsilon^{-\frac{1}{4}}\bigr)
\Theta^{\frac{1}{2}}=\\
CN^{-1}(Q+\varepsilon^{-1}N)^{\frac{1}{2}}
\bigl(\upsilon^{\frac{1}{4}}+B^{\frac{1}{2}}\upsilon^{-\frac{1}{4}}\bigr)\Theta.
\end{multline*}
Finally we arrive to

\begin{proposition-foot}\label{prop-26-8-5}
\footnotetext{\label{foot-26-47} Cf. claim (\ref{book_new-25-6-31}).}
Let
\begin{gather}
\upsilon \ge \max
\bigl( Z^{-\frac{4}{3}}Q^{\frac{2}{3}};\,
Z^{-\frac{4}{5}}Q^{\frac{2}{5}} B^{\frac{2}{5}}\bigr)
\label{26-8-43}\\
\shortintertext{and}
\varepsilon \ge Z^{-1} \max (\upsilon^{-\frac{3}{2}}, B\upsilon^{-\frac{5}{2}}\bigr).
\label{26-8-44}
\end{gather}
Then the first term in the right-hand expression of \textup{(\ref{26-8-39})} does not exceed $C\upsilon \Theta$.
\end{proposition-foot}

Further, we need to estimate the second term in the right-hand expression of (\ref{26-8-39}). It can be rewritten in the form
\begin{multline}
\sum_{1\le i\le N} \int U(x_i,y)
\Psi (x_1,\ldots, x_N)\Psi ^\dag(x_1,\ldots,x_{N-1},y)
\theta^{\frac{1}{2}} (x_N)\theta^{\frac{1}{2}} (y)\times\\
F(x_N,y)\,dx_1\cdots dx_N dy ,
\label{26-8-45}
\end{multline}
where $U(x_i,y)$ is the difference between two potentials, one generated by the charge $\updelta (x-x_i)$ and another by the same charge smeared; note that $U(x_i,y)$ is supported in $\{(x_i,y)\colon |x_i-y|\le \varepsilon \}$. Let us estimate the $i$-th term in this sum with $i<N$ first. Multiplied by $N(N-1)$, it does not exceed
\begin{multline}
N\biggl(\int |U(x_i,y)| ^2 |\Psi (x_1,\ldots, x_N)| ^2
\theta^{\frac{1}{2}} (x_N)\theta^{\frac{1}{2}} (y) |F(x_N,y)|\,
dx_1\cdots dx_Ndy\biggr) ^{\frac{1}{2}}\times \\
N\biggl(\int \omega (x_i,y) |\Psi (x_1,\ldots, x_{N-1},y)| ^2
\theta^{\frac{1}{2}} (x_N)\theta^{\frac{1}{2}} (y)|F(x_N,y)|
\,dx_1\cdots dx_Ndy\biggr) ^{\frac{1}{2}}
\label{26-8-46}
\end{multline}
here $\omega $ is $\varepsilon $-admissible and supported in
$\{(x_i,y)\colon |x_i-y|\le 2\varepsilon \}$ function.\pagebreak

Due to Proposition~\ref{prop-26-A-6} in the second factor
\begin{equation*}
\int \theta^{\frac{1}{2}} (x_N)|F(x_N,y)|\,dx_N\le C (1+ \mu h ) \asymp
C(1+B \upsilon^{-1})
\end{equation*}
and therefore the whole second factor does not exceed
\begin{equation}
C\Bigl(\underbracket{\int\theta^{\frac{1}{2}} (x)\omega (x,y)
\varrho ^{(2)}_\Psi (x,y)\,dx dy} \Bigr) ^{\frac{1}{2}}
(1+B \upsilon^{-1})^{\frac{1}{2}},
\label{26-8-47}
\end{equation}
where we replaced $x_i$ by $x$. According to Proposition~\ref{book_new-prop-25-5-1} in the selected expression one can replace $\rho^{(2)}_\Psi (x,y)$ by $\rho_\Psi(x)\rho(y) $, with an error which does not exceed
\begin{equation*}
C\Bigl(\sup_x \|\nabla _y\chi _x\|_{\sL^2(\bR ^3)}
\bigl( Q +\varepsilon ^{-1}N \bigr)^{\frac{1}{2}}
+C\varepsilon N\|\nabla _y\chi \|_{\sL ^\infty }\Bigr)\Theta.
\end{equation*}
When we plug
$\sup_x \|\nabla _y\chi _x\|_{\sL^2(\bR ^3)}\asymp \varepsilon^{\frac{1}{2}}$, $\|\nabla _y\chi \|_{\sL ^\infty }\asymp \varepsilon^{-1}$ this expression becomes $CN\Theta$.

Meanwhile, consider
\begin{equation}
\int |U(x_i,y)|^2 \theta ^{\frac{1}{2}}(y) |F(x_N,y)|\,dy.
\label{26-8-48}
\end{equation}
Again, due to Proposition~\ref{prop-26-A-6}, it does not exceed
\begin{equation*}
C\bigl(\upsilon^{\frac{3}{2}}+B\upsilon^{\frac{1}{2}}\bigr)
\int |U(x_i,y)|^2 \theta ^{\frac{1}{2}}(y)
\bigl(|x_N-y|\upsilon^{\frac{1}{2}}+1\bigr)^{-s}\,dy
\end{equation*}
and this integral should be taken over $B(x_i,\varepsilon)$, with
$|U(x_i,y)|\le |x_i-y|^{-1}$, so (\ref{26-8-48}) does not exceed
\begin{equation*}
C\varepsilon \bigl(\upsilon^{\frac{3}{2}}+B\upsilon^{\frac{1}{2}}\bigr)\omega' (x_i,x_N)
\end{equation*}
with $\omega '(x,y)=\bigl(1+\upsilon^{\frac{1}{2}}|x-y|\bigr) ^{-s}$ (provided $\varepsilon \le \upsilon^{-\frac{1}{2}}$ which will be the case).
Therefore the first factor in (\ref{26-8-46}) does not exceed
\begin{equation}
C \varepsilon^{\frac{1}{2}} \bigl(\upsilon^{\frac{3}{4}}+B^{\frac{1}{2}}\upsilon^{\frac{1}{4}}\bigr)
\Bigl(\underbracket{\int\theta^{\frac{1}{2}} (x)\omega '(x,y)
\rho ^{(2)}_\Psi (x,y)\,dx dy }\Bigr) ^{\frac{1}{2}}.
\label{26-8-49}
\end{equation}
Therefore in the selected expression one can replace $\rho^{(2)}_\Psi (x,y)$ by $\rho_\Psi(x)\rho(y) $ with an error which does not exceed what we got before but with $\varepsilon$ replaced by $\upsilon^{-\frac{1}{2}}$, i.e. also $CN\Theta$.

However, in both selected expressions, (\ref{26-8-47}) and (\ref{26-8-49}), replacing $\rho^{(2)}_\Psi (x,y)$ by $\rho_\Psi(x)\rho(y) $ we get just $0$. Therefore expression (\ref{26-8-46}) does not exceed
$C\varepsilon ^{\frac{1}{2}} \bigl(\upsilon^{\frac{3}{4}}+B^{\frac{1}{2}}\upsilon^{\frac{1}{4}}\bigr)Z\Theta$, which, in turn, does not exceed $C\upsilon \Theta$ provided
$\varepsilon \le C\upsilon^{\frac{1}{2}}
\bigl(1+B\upsilon^{-1}\bigr)^{-1} Z^{-2}$.

So, we have two restriction to $\varepsilon$ from above: the last one and $\varepsilon\le Z^{-\frac{2}{3}}$ and one can see easily that both of them are compatible with with restriction to $\varepsilon$ in (\ref{26-8-43}); also we can see easily that condition (\ref{26-8-43}) is weaker than
$\upsilon \ge \bigl\{ Z^{\frac{20}{21}}\colon Z^{\frac{20}{27}}B^{\frac{2}{9}}\colon
Z^{\frac{4}{15}}B^{\frac{26}{45}}\bigr\}$ if
$\bigl\{ B\le Z^{\frac{20}{21}};\,
Z^{\frac{20}{21}} \le B \le Z^{\frac{4}{3}};\,
Z^{\frac{4}{3}}\le B\le Z^3\bigr\}$ respectively.

Finally, consider term in (\ref{26-8-45}) with $i=N$ (multiplied by $N$):
\begin{equation}
N \int U(x_N,y) |\Psi (x_1,\ldots, x_N)|^2
\theta^{\frac{1}{2}} (x_N)\theta^{\frac{1}{2}} (y)
F(x_N,y)\,dx_1\cdots dx_Ndy
\label{26-8-50}
\end{equation}
due to Cauchy inequality it does not exceed
\begin{multline}
N \Bigl(\int |x_N-y |^{-2}|\Psi (x_1,\ldots, x_N)|^2
\theta^{\frac{1}{2}} (x_N)\theta^{\frac{1}{2}} (y)\,\,dx_1\cdots dx_Ndy\Bigr)^{\frac{1}{2}}\times \\
N \Bigl(\int |F(x_N,y)|^2|\Psi (x_1,\ldots, x_N)|^2
\theta^{\frac{1}{2}} (x_N)\theta^{\frac{1}{2}} (y)\,\,dx_1\cdots dx_Ndy\Bigr)^{\frac{1}{2}}
\label{26-8-51}
\end{multline}
where both integrals are taken over $\{|x_N-y|\le \varepsilon\}$. Integrating with respect to $y$ there we get that it that it does not exceed
\begin{equation*}
C\varepsilon ^{\frac{1}{2}}\Theta^{\frac{1}{2}} \times \bigl(\upsilon^{\frac{3}{4}}+B\upsilon^{\frac{1}{4}}\bigr)
\varepsilon^{\frac{3}{2}}\Theta^{\frac{1}{2}}=
C\bigl(\upsilon^{\frac{3}{4}}+B\upsilon^{\frac{1}{4}}\bigr) \varepsilon^2 \Theta \ll \upsilon \Theta.
\end{equation*}
Therefore the right-hand expression in (\ref{26-8-22}) is
$\ge -C \upsilon \Theta$ and recalling that $\nu'-\nu = O(\upsilon)$ we recover a lower estimate $\I_N +\nu \ge -C\upsilon$ in Theorem~\ref{thm-26-8-6} below. Here $\upsilon$ must be found from (\ref{26-8-11})--(\ref{26-8-12}) and must satisfy $\upsilon \le |\nu|$.

Combining this estimate with the estimate from the above, derived in Proposition~\ref{prop-26-8-3} we arrive to

\begin{theorem}\label{thm-26-8-6}
Let $M=1$. Let condition \textup{(\ref{26-2-28})} be fulfilled. Then
\begin{enumerate}[label=(\roman*),wide, labelindent=0pt]
\item\label{thm-26-8-6-i}
For $B\le Z^{\frac{20}{21}}$ and $|\nu|\ge Z^{\frac{20}{21}}$
\begin{equation}
|\I_N+\nu|\le CZ^{\frac{5}{18}} |\nu|^{\frac{17}{24}}.
\label{26-8-52}
\end{equation}
\item\label{thm-26-8-6-ii}
For $Z^{\frac{20}{21}}\le B\le Z^{\frac{4}{3}}$ and
$|\nu|\ge Z^{\frac{20}{27}}B^{\frac{2}{9}}$  estimate \textup{(\ref{26-8-17})} from above and estimate
\pagebreak
\begin{equation}
\I_N+\nu \ge -C\left\{\begin{aligned}
&Z^{\frac{5}{18}} |\nu|^{\frac{17}{24}} \qquad
&&\text{if\ \ } B \le Z^{\frac{5}{18}}|\nu|^{\frac{17}{24}},\\
&Z^{\frac{5}{12}}B^{-\frac{1}{2}}|\nu|^{\frac{17}{16}} \qquad
&&\text{if\ \ } Z^{\frac{5}{18}} |\nu|^{\frac{17}{24}} \le B \le |\nu|,\\
&Z^{\frac{5}{12}}B^{-\frac{3}{16}}|\nu|^{\frac{3}{4}} \qquad
&&\text{if\ \ } |\nu| \le B \le Z^{-\frac{20}{7}}|\nu|^{4},\\
& Z^{\frac{20}{21}}
&&\text{if\ \ } Z^{-\frac{20}{7}}|\nu|^{4}\le B
\end{aligned}\right.
\label{26-8-53}
\end{equation}
from below hold.

\item\label{thm-26-8-6-iii}
For $Z^{\frac{20}{21}}\le B\le Z^3$ and
$|\nu|\ge Z^{\frac{4}{15}}B^{\frac{26}{45}}$ estimate \textup{(\ref{26-8-19})} from above and estimate
\begin{equation}
\upsilon=\left\{\begin{aligned}
&Z^{-\frac{1}{10}} B^{\frac{1}{5}} |\nu|^{\frac{3}{4}} \qquad
&&\text{if\ \ } B \le Z^{-\frac{1}{2}}|\nu|^{\frac{1}{4}}\\
&Z^{\frac{4}{35}}B^{\frac{22}{35}} \qquad
&&\text{if\ \ } Z^{-\frac{1}{2}}|\nu|^{\frac{1}{4}}\le B
\end{aligned}\right.
\label{26-8-54}
\end{equation}
from below hold.
\end{enumerate}
\end{theorem}

\begin{remark}\label{rem-26-8-7}
Recall that $Q=Z^{\frac{5}{3}}\bigl(B^{\delta} + 1\bigr)Z^{-\delta}$
as $B\le Z$; therefore we can add factor
$\bigl(B^{\delta'} + 1\bigr)Z^{-\delta'}$ in all estimates of Theorem~\ref{thm-26-8-6}.
\end{remark}

\section{Estimates for ionization energy: $M\ge 2$}
\label{sect-26-8-3}

Recall that for $M\ge 2$ we have only estimate (\ref{26-6-73}):
\begin{equation*}
\D(\rho_\Psi-\rho,\, \rho_\Psi-\rho \le\bar{Q})\coloneqq \textup{(\ref{26-5-40})}+\textup{(\ref{26-6-69})}.
\end{equation*}
Then exactly the same arguments lead us to the following (we leave all details to the reader):

\begin{theorem}\label{thm-26-8-8}
Let $M\ge 2$. Then
\begin{enumerate}[label=(\roman*), wide, labelindent=0pt]
\item\label{thm-26-8-8-i}
Estimate $\I_N+\nu \le C(\upsilon+\varsigma)$ holds with $\upsilon$ and $\varsigma$ defined by \textup{(\ref{26-8-11})}--\textup{(\ref{26-8-12})} and
\textup{(\ref{26-8-13})} albeit with $Q$ replaced by $\bar{Q}$.

\item\label{thm-26-8-8-ii}
Estimate $\I_N +\nu \ge -C\upsilon$ holds with $\upsilon$ defined by \textup{(\ref{26-8-11})}--\textup{(\ref{26-8-12})} albeit with $Q$ replaced by $\bar{Q}$.
\end{enumerate}
\end{theorem}

Therefore the case when $\bar{Q}\le Q$ is not affected. One can see easily that it happens for sure as $B\le Z^{\frac{20}{17}}L^{-\kappa}$ where $\kappa>0$ is some exponent.

We leave to the reader

\begin{Problem}\label{Problem-26-8-9}
\begin{enumerate}[label=(\roman*), wide, labelindent=0pt]
\item\label{Problem-26-8-9-i}
Find explicit formula for $\upsilon+\varsigma$ and $\upsilon$.

\item\label{Problem-26-8-9-ii}
Find $\nu^*=\nu^* (Z,B)$ and $\nu_*=\nu_*(Z,B)$ such that
$|\nu |\lesssim \upsilon+\varsigma$ iff and only if $\nu\lesssim \nu^*$ and
$|\nu |\lesssim \upsilon$ iff and only if $\nu\lesssim \nu_*$.
\end{enumerate}
\end{Problem}

\section{Free nuclei model}
\label{sect-26-8-4}

In this subsection we consider two extra problems appearing in the free nuclei model--estimate the minimal distance between nuclei and the maximal excessive positive charge when system does not break apart. We also slightly improve estimates for the maximal negative charge and for the ionization energy.

\subsection{Preliminary arguments}
\label{sect-26-8-4-1}

Recall that we assume that
\begin{gather}
\Q\coloneqq \hat{\E} -\sum_{1\le m\le M} \E_m<0
\label{26-8-55}\\
\shortintertext{where}
\hat{\E}=\E+\sum_{1\le m<m'\le M} \frac{Z_mZ_{m'}}{|\y_m-\y_{m'}|}.
\label{26-8-56}
\end{gather}
We apply estimate from below for $\hat{\E}$ delivered by Proposition~\ref{prop-26-6-1}\ref{prop-26-6-1-ii}, and estimates from above for $\E_m$, delivered by Theorem~\ref{thm-26-6-6}\ref{thm-26-6-6-ii}; then
\begin{gather}
\hat{\cE}^\TF + \Scott
-\sum_{1\le m\le M}\Bigl(\cE^\TF_m - \Scott_m\Bigr)\le CQ+CZ^{\frac{4}{3}}B^{\frac{1}{3}}+C a^{-\frac{1}{2}}Z^{\frac{3}{2}} \notag\\
\intertext{or, equivalently, due to equality $\Scott=\sum_{1\le m\le M}\Scott_m$ and non-binding theorem}
0\le \cQ \coloneqq \hat{\cE}^\TF -\sum_{1\le m\le M} \cE^\TF_m \le
CQ+CZ^{\frac{4}{3}}B^{\frac{1}{3}}
\color{gray}{+C a^{-\frac{1}{2}}Z^{\frac{3}{2}}}.
\label{26-8-57}
\end{gather}
Assume that assumption (\ref{26-2-28}) is fulfilled. Then $\cQ\asymp \cE^\TF$ for $a\le \epsilon r^*$ with
$r^* =\min(Z^{-\frac{1}{3}};\, B^{-\frac{2}{5}}Z^{\frac{1}{5}})$ and therefore \begin{claim}
In the free nuclei model $a\ge \epsilon r^*$.
\label{26-8-58}
\end{claim}
Then the last term in (\ref{26-8-57}) is not needed.

\begin{remark}\label{rem-26-8-10}
\begin{enumerate}[label=(\roman*), wide, labelindent=0pt]
\item\label{rem-26-8-10-i}
Obviously, the second term $CZ^{\frac{4}{3}}B^{\frac{1}{3}}$ in the right-hand expression of (\ref{26-8-57}) matters only if $Z\le B\le Z^{\frac{11}{7}}$; however we will show that it could be skipped even in this case.

\item\label{rem-26-8-10-ii}
If $B\le Z$ we can replace the right-hand expression of (\ref{26-8-57}) by
$Z^{\frac{5}{3}-\delta}(1+B^\delta)$.

\item\label{rem-26-8-10-iii}
All these estimates hold also for
$\D(\rho_\Psi-\rho^\TF,\, \rho_\Psi-\rho^\TF)$ because this term is present in the estimate from below.
\end{enumerate}
\end{remark}

\subsection{Minimal distance}
\label{sect-26-8-4-2}

We are going to improve (\ref{26-8-58}). Consider the case $B\le Z^{\frac{4}{3}}$ first. Then since $\cQ\ge \epsilon_0 a^{-7}$ for
$\epsilon r^* \le a\le \epsilon \bar{r}$ (where in this case $r^*=Z^{-\frac{1}{3}}\le \bar{r}=B^{-\frac{1}{4}}$), we conclude that
$a\gtrsim Z^{-\frac{5}{21}}$ provided $B\le Z^{\frac{20}{21}}$.

Furthermore, then we can apply improved remainder estimate $O(Z^{\frac{5}{3}-\delta})$, since the difference between Dirac--Schwinger terms for a molecule and the sum of these terms for the atoms is also $O(Z^{\frac{5}{3}-\delta})$ as long as $a\ge Z^{-\frac{1}{3}+\delta_1}$, which is the case. Then we conclude that $a\ge Z^{-\frac{5}{21}-\delta'}$ as long as it is less than $\epsilon \bar{r}$ and we arrive to Statement~\ref{prop-26-8-11-i} of Proposition~\ref{prop-26-8-11}:

\begin{proposition}\label{prop-26-8-11}
Let condition \textup{(\ref{26-2-28})} be fulfilled. Then in the free nuclei model
\begin{enumerate}[label=(\roman*), wide, labelindent=0pt]
\item\label{prop-26-8-11-i}
For $B\le Z^{\frac{20}{21}}$ the minimal distance satisfies
\begin{equation}
a\ge \min(Z^{-\frac{5}{21}-\delta},\epsilon B^{-\frac{1}{4}}).
\label{26-8-59}
\end{equation}
\item\label{prop-26-8-11-ii}
For $Z^{\frac{20}{21}}\le B\le Z^3$ the distances satisfy
\begin{equation}
|\y_m-\y_{m'}|\ge \bar{r}_m + \bar{r}_{m'} - \epsilon \bar{r}\qquad
\forall m\ne m'
\label{26-8-60}
\end{equation}
with arbitrarily small constant $\epsilon>0$ where $\bar{r}_m$ denote the exact radii of $\supp (\rho_m^\TF)$.
\end{enumerate}
\end{proposition}

\begin{proof}
We need to prove Statement~\ref{prop-26-8-11-ii}. Observe that it also follows from the arguments above in the case $Z^{\frac{20}{21}}\le B\le Z$.

For $Z\le B\le C Z^{\frac{4}{3}}$ the remainder estimate is $O(Z^{\frac{4}{3}}B^{\frac{1}{3}})$ and the same arguments imply that
$|\y_m-\y_{m'}| \ge \epsilon Z^{-\frac{4}{21}}B^{-\frac{1}{21}}$ unless
$a \ge \epsilon \bar{r}$ and since the latter is weaker, it must be satisfied. Therefore if (\ref{26-8-60}) fails, then in virtue of Theorem~\ref{thm-26-2-17}
$\cQ\ge \epsilon_1 B^{\frac{7}{4}}$, which is larger than the remainder estimate $CZ^{\frac{4}{3}}B^{\frac{1}{3}}$.

Finally, case $C_\epsilon Z^{\frac{4}{3}}\le B\le Z^3$ follows from the fact that if (\ref{26-8-60}) fails then in virtue of Theorem~\ref{thm-26-2-17}
$\cQ\ge \epsilon_1 Z^{\frac{9}{5}}B^{\frac{2}{5}}$. \end{proof}

\begin{proposition}\label{prop-26-8-12}
Let condition \textup{(\ref{26-2-28})} be fulfilled. Then in the free nuclei model
\begin{equation}
\cQ + \D(\rho_\Psi -\rho^\TF,\, \rho_\Psi -\rho^\TF)\le CQ.
\label{26-8-61}
\end{equation}
\end{proposition}

\begin{proof}
We need to cover only case $Z\le B\le Z^{\frac{11}{7}}$, since only in this case term $CZ^{\frac{4}{3}}B^{\frac{1}{3}}$ matters.

We apply now estimate from below for $\hat{\E}$ delivered by Proposition~\ref{prop-26-6-1}\ref{prop-26-6-1-i}, and estimates from above for $\E_m$, delivered by Theorem~\ref{thm-26-6-6}\ref{thm-26-6-6-i}; then we do not have term $CZ^{\frac{4}{3}}B^{\frac{1}{3}}$ but instead of equal to $0$ difference of the Scott correction terms, we get
\begin{multline}
\Bigl(\Tr ((H_{A,W}-\nu)^-) +\int P_B(W^\TF +\nu) \,dx\Bigr)-\\
\sum_{1\le m\le M}
\Bigl(\Tr ((H_{A,W_m}-\nu')^-) +\int P_B(W^\TF +\nu') \,dx\Bigr),
\label{26-8-62}
\end{multline}
where we know that $\nu'=\nu_1=\ldots=\nu_M$.

Let us use partition of unity $\phi_0+\phi_1+\ldots+\phi_M=1$ where
$\phi_m=1$ in $B(\y_m, \epsilon \bar{r}_m)$ and is supported in
$B(\y_m,2\epsilon\bar{r}_m)$. Then our standard methods imply that the absolute values of
\begin{gather}
 \Tr \bigl((H_{A,W}-\nu)^-\phi _0 \bigr) +
\int P_B(W^\TF +\nu)\phi_0(x)\,dx ,
\label{26-8-63}\\
\shortintertext{and}
\Tr \bigl((H_{A,W_m}-\nu')^-\phi_{m'} \bigr) +
\int P_B(W_m^\TF +\nu')\phi_{m'}\,dx
\label{26-8-64}
\end{gather}
with $m=1,\ldots, M$, $m'=0,1,\ldots, M$, $m'\ne m$ do not exceed $CQ$\,\footnote{\label{foot-26-48} Recall that $W$ and $W_m$ are approximations to $W^\TF$ and $W^\TF_m$.}. Therefore we need to estimate an absolute value of
\begin{multline}
\Tr \bigl(\bigl[(H_{A,W}-\nu)^- - (H_{A,W_m}-\nu')^-\bigr]\phi _m \bigr)+\\
\int \bigl(P_B(W^\TF +\nu)-P_B(W_m^\TF +\nu'))\phi_m\,dx.
\label{26-8-65}
\end{multline}
Due to Proposition~\ref{prop-26-8-11} $B(\y_m ,3\epsilon\bar{r})$ does not intersect $B(\y_{m'} ,\bar{r}_{m'})$ and then in $B(\y_m ,3\epsilon\bar{r})$
$W_{m'}\le C(Z-N)\bar{r}^{-1}$. Using this inequality and
\begin{gather}
\D(\rho-\rho_1-\ldots-\rho_M,\, \rho-\rho_1-\ldots-\rho_M)\le C\cQ,
\label{26-8-66}\\
\intertext{one can prove easily that there also}
|W-W_m|\le CT\coloneqq C\cQ^{\frac{1}{2}}\bar{r}^{-\frac{1}{2}}+C(Z-N)\bar{r}^{-1}
\label{26-8-67}\\
\intertext{and, moreover,}
|\nabla(W-W_m)|\le CT\bar{r}^{-1}=
C\cQ^{\frac{1}{2}}\bar{r}^{-\frac{3}{2}} + C(Z-N)\bar{r}^{-2},
\label{26-8-68}\\
|\nabla^2 (W-W_m)|\le CT\bar{r}^{-2}=
C\cQ^{\frac{1}{2}}\bar{r}^{-\frac{5}{2}} + C(Z-N)\bar{r}^{-3}.
\label{26-8-69}
\end{gather}
Then using our standard methods one can prove easily that an absolute value of expression (\ref{26-8-65}) with $\phi_m$ replaced by $\ell$-admissible function $\psi_m$ does not exceed
\begin{gather}
CTh ^{-2}(1+\mu h )
\label{26-8-70}\\
\intertext{with our standard}
h = Z^{-\frac{1}{2}}r^{-\frac{1}{2}}, \qquad
\mu = BZ^{-\frac{1}{2}}r^{\frac{3}{2}}
\label{26-8-71}\\
\intertext{if \underline{either} $B\le Z^{\frac{4}{3}}$,
$r\le r^* Z^{-\frac{1}{3}}$ \underline{or} $Z^{\frac{4}{3}}\le B\le Z^3$,
$r\le \bar{r}$ and}
h = r, \qquad \mu = Br^3
\label{26-8-72}
\end{gather}
if $B\le Z^{\frac{4}{3}}$.
Plugging (\ref{26-8-71}) and (\ref{26-8-72}) and summing over partition we arrive to $CT Z^{\frac{2}{3}}$ as
$Z^{\frac{20}{21}}\le B\le Z^{\frac{4}{3}}$ and $CTZ^{\frac{2}{5}}B^{\frac{1}{5}}$ as $Z^{\frac{4}{3}}\le B\le Z^3$.

Plugging $T=(Z-N)\bar{r}^{-1}$ we get expressions which are much smaller than
$\epsilon (Z-N)^2\bar{r}^{-1}$ due to (\ref{26-8-61}); plugging $T=\cQ^{\frac{1}{2}}\bar{r}^{-\frac{1}{2}}$ we get terms smaller than
$\epsilon'\cQ + C(\epsilon')B^{\frac{1}{4}}Z^{\frac{4}{3}}$ if
$B\le Z^{\frac{4}{3}}$ and
$\epsilon'\cQ + C(\epsilon')Z^{\frac{3}{5}}B^{\frac{4}{5}}$
if $Z^{\frac{4}{3}}\le B\le Z^3$; here $\epsilon'>0$ is arbitrarily small and thus term (\ref{26-8-65}) does not make any difference.
\end{proof}

Since $\cQ \ge \epsilon a^{-1}(Z-N)^2$ we arrive to

\begin{corollary}\label{cor-26-8-13}
Let condition \textup{(\ref{26-2-28})} be fulfilled. Then
\begin{enumerate}[label=(\roman*), wide, labelindent=0pt]
\item\label{cor-26-8-13-i}
If $(Z-N)\ge C (Qa)^{\frac{1}{2}}$ where $Q$ is our remainder estimate in the ground state energy, then in free nuclei model minimal distance between nuclei must be at least $a$.

\item\label{cor-26-8-13-ii}
In particular, if $(Z-N)\ge C_1 (Q\bar{r})^{\frac{1}{2}}$ then in free nuclei model minimal distance between nuclei must be at least $C_0\bar{r}$ and molecule consists of separate atoms.
\end{enumerate}
\end{corollary}
\vglue-5pt
We leave to the reader

\begin{Problem}\label{Problem-26-8-14}
Using Theorem~\ref{thm-26-2-17} and the arguments used in the proof of Proposition~\ref{prop-26-8-11}, estimate overlapping of balls $B(\y_m,\bar{r}_m)$ if $Z^{-\frac{5}{21}-\delta}\ge \epsilon B^{-\frac{1}{4}}$ in the free nuclei model with $N=Z$ and prove that
\begin{multline}
(\bar{r}_m+\bar{r}_{m'}-|\y_m-\y_{m'}|)\le C\bar{r}(K^{-2}\bar{r}^{-1}Q)^{\frac{1}{2}}=\\
C\left\{\begin{aligned}
& B^{-\frac{1}{4}} \bigl(B^{-\frac{7}{4}}Z^{\frac{5}{3}}+B^{-\frac{1}{2}}L\bigr)^{\frac{1}{12}}
\qquad &&\text{if\ \ } Z^{\frac{20}{21}}\le B\le Z^{\frac{4}{3}},\\
&B^{-\frac{7}{15}}Z^{\frac{1}{10}}L^{\frac{1}{6}} \qquad
&&\text{if\ \ } Z^{\frac{4}{3}}\le B\le Z^3.
\end{aligned}\right.
\label{26-8-73}
\end{multline}
\end{Problem}

\subsection{Estimate of excessive positive charge}
\label{sect-26-8-4-3}

To estimate excessive positive charge when molecules can still exist in free nuclei model we apply arguments of section 5 of B.~Ruskai and J.~P.~Solovej~\cite{ruskai:solovej}. In view of Corollary~\ref{cor-26-8-13} for $(Z-N)$ violating (\ref{26-8-76}) below it is sufficient to assume that (\ref{book_new-25-6-41}) is satisfied:
\begin{equation}
a=\min_{j<k} |\y_j-\y_k|\ge C_0\bar{r}
\label{26-8-74}
\end{equation}
i.e. in Thomas-Fermi theory $\rho^\TF$ is supported in the separate ``atoms''. Really, it is the case if $C_0 Z^{\frac{20}{21}}\le B\le Z^3$ but also it is so if $B\le C_0Z^{\frac{20}{21}}$ and $(Z-N)_+\ge C_1 Z^{\frac{5}{7}}$ since then $\bar{r}\asymp (Z-N)_+^{-\frac{1}{3}}$.

Like in Subsection~\ref{book_new-sect-25-6-3} consider $a$-admissible functions
$\theta _m(x)$, supported in $B(\y_m, \frac{1}{3}a)$ as $m=1,\ldots , M$ and in
$\{ |x-\y_{m'}|\ge \frac{1}{4} a\ \ \forall m'=1,\ldots , M\}$ as $m=0$, such that
\begin{equation}
\theta _0 ^2+\ldots +\theta _M ^2=1.
\label{26-8-75}
\end{equation}
Then for the ground state $\Psi $ equality (\ref{book_new-25-6-43}) holds with \index{Hamiltonian!intercluster}\emph{cluster Hamiltonians\/} $\mathbf{H}_{\alpha_m}$ defined by (\ref{book_new-25-6-44}) and satisfying (\ref{book_new-25-6-45}) and with the
\index{Hamiltonian!intercluster}\emph{intercluster Hamiltonian\/} $J_\alpha$ defined by (\ref{book_new-25-6-46}) and satisfying (\ref{book_new-25-6-47}) with $J_{ml}$ defined by (\ref{book_new-25-6-48})--(\ref{book_new-25-6-49}). Furthermore, equality (\ref{book_new-25-6-50}) holds.

Applying Proposition~\ref{book_new-prop-25-5-1} and estimate (\ref{book_new-25-4-55}) (replacing first $\theta _k$ with $k=1,\ldots, M$ by $\tilde{\theta}_k$, supported in $B(\y_k, c\bar{r})$, and estimating the resulting error), we conclude that (\ref{book_new-25-6-51})--(\ref{book_new-25-6-54}) hold with $Y=Q^{\frac{1}{2}}\bar{r}^{\frac{1}{2}}$ since
$\D (\rho_\Psi -\rho^\TF,\,\rho_\Psi -\rho^\TF)\le CQ$.

The last term in (\ref{book_new-25-6-51}) is estimated by Proposition~\ref{book_new-prop-25-5-1} and estimate (\ref{26-8-61}) instead of (\ref{book_new-25-4-55}) and the same replacement trick; so we arrive to (\ref{book_new-25-6-55}) and repeating the same trick we get that it is larger than (\ref{book_new-25-6-56}).

Again let us note that the absolute value of the last term in the right-hand expression of (\ref{book_new-25-6-43}) does not exceed $Ca ^{-2}Y$ due to (\ref{book_new-25-6-52}). Now stability condition yields that (\ref{26-6-61}) must be fulfilled.

Then we conclude that (\ref{book_new-25-6-57}) and (\ref{book_new-25-6-59}) hold with $J_{ml}$ defined by (\ref{book_new-25-6-58}) provided (\ref{book_new-25-6-60}) is fulfilled as $|x-\y_k|\ge C\bar{r}$.

This inequality, (\ref{26-8-74}) and Proposition~\ref{book_new-prop-25-6-6} (which is the special case of Theorem~\ref{26-2-13}) yield that
$Z-N\le CY = C \bar{r}^{\frac{1}{2}} Q^{\frac{1}{2}}$. Now we need to consider two cases:

\begin{enumerate}[label=(\alph*), wide, labelindent=0pt]

\item\label{sect-26-8-4-3-a}
$B\le (Z-N)^{\frac{4}{3}}$; then $\bar{r}\asymp (Z-N)^{-\frac{1}{3}}$ and we conclude that $(Z-N)\le C Q^{\frac{3}{5}}$ exactly like in Subsubsection~\ref{book_new-sect-25-6-3}.

\item\label{sect-26-8-4-3-b}
$(Z-N)^{\frac{4}{3}} \le B\le Z^3$; then plugging $\bar{r}$ and $Q$ we arrive to two other cases of (\ref{26-8-76}).
\end{enumerate}

Then we arrive to Statement~\ref{thm-26-8-15-i} below; Statement~\ref{thm-26-8-15-ii} follows from Remark~\ref{rem-26-8-10}\ref{rem-26-8-10-ii}.

\begin{theorem-foot}\label{thm-26-8-15}\footnotetext{\label{foot-26-49} Cf. Theorem~\ref{book_new-thm-25-6-4}.}
Let condition \textup{(\ref{26-2-28})} be fulfilled.

\begin{enumerate}[label=(\roman*), wide, labelindent=0pt]
\item\label{thm-26-8-15-i}
Then in the framework of the free nuclei model with $M\ge 2$ the stable molecule does not exist unless
\begin{equation}
(Z-N)_+\le C_1\left\{\begin{aligned}
&Z^{\frac{20}{21}} \qquad
&&\text{if\ \ } B\le Z^{\frac{20}{21}},\\
&Z^{\frac{5}{6}}B^{-\frac{1}{8}}\qquad
&&\text{if\ \ }Z^{\frac{20}{21}}\le B\le Z^{\frac{4}{3}},\\
&Z^{\frac{2}{5}}B^{\frac{1}{5}} \qquad
&&\text{if\ \ }Z^{\frac{4}{3}}\le B\le Z^3.
\label{26-8-76}
\end{aligned}\right.
\end{equation}
\item\label{thm-26-8-15-ii}
Furthermore, for $B\le Z$ in the framework of the free nuclei model with
$M\ge 2$ the stable molecule does not exist unless
\begin{equation}
(Z-N)_+\le C_1\left\{\begin{aligned}
&Z^{\frac{20}{21}-\delta} \qquad
&&\text{if\ \ } B\le Z^{\frac{20}{21}},\\
&Z^{\frac{5}{6}-\delta} B^{-\frac{1}{8}+\delta}\qquad
&&\text{if\ \ }Z^{\frac{20}{21}}\le B\le Z.
\label{26-8-77}
\end{aligned}\right.
\end{equation}
\end{enumerate}
\end{theorem-foot}

\subsection{Estimate of excessive negative charge and ionization energy}
\label{sect-26-8-4-4}

Estimate (\ref{26-8-61}) and Remark~\ref{rem-26-8-10} immediately imply
\pagebreak
\begin{theorem}\label{thm-26-8-16}
Let condition \textup{(\ref{26-2-28})} be fulfilled.
\begin{enumerate}[label=(\roman*), wide, labelindent=0pt]
\item\label{thm-26-8-16-i}
Then in the framework of the free nuclei model with $M\ge 2$ estimates \textup{(\ref{26-7-21})} for the excessive negative charge and \textup{(\ref{26-7-37})} for the ionization energy
$\hat{\I}_N=-\hat{\E}_N+\hat{\E}_{N-1}$ hold.

\item\label{thm-26-8-16-ii}
Furthermore, if $B\le Z$ estimates \textup{(\ref{26-7-22})} for the excessive negative charge and \textup{(\ref{26-7-39})} for the ionization energy
$\hat{\I}_N$ hold.
\end{enumerate}
\end{theorem}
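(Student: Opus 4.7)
The plan is to observe that the gap between Theorems~\ref{thm-25-7-4}, \ref{thm-25-7-9} and their atomic analogues comes entirely from the logarithmic factors $L,L_1$ in the molecular $\D$-term bound~\eqref{25-6-73}, and that these factors disappear in the free nuclei model. Indeed, Proposition~\ref{prop-25-8-12} supplies the sharp estimate
\begin{equation*}
\cQ + \D(\rho_\Psi - \rho^\TF,\,\rho_\Psi - \rho^\TF)\le CQ
\end{equation*}
with $Q=\max(Z^{5/3},\,Z^{3/5}B^{4/5})$ and no logarithmic correction. Moreover, Proposition~\ref{prop-25-8-11} guarantees that in the free nuclei model either $a\ge Z^{-5/21-\delta}$ (when $B\le Z^{20/21}$) or the atomic supports are essentially disjoint (when $B\ge Z^{20/21}$). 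Together these two facts should allow us to rerun the arguments of Sections~\ref{sect-25-7-2} and~\ref{sect-25-7-3} with the atomic quantity $Q$ everywhere in place of the molecular $\bar Q$.

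For Part~(i), first I would redo the proof of Theorem~\ref{thm-25-7-4} using the same cut-off $\theta_b$ from~\eqref{25-7-16} and the same inequality~\eqref{25-7-8}. The correlation error~\eqref{25-7-11} is controlled via Proposition~\ref{prop-25-7-2}, which depends on $\D(\rho_\Psi-\rho^\TF,\,\rho_\Psi-\rho^\TF)$ only through an upper bound; inserting~\eqref{25-8-60} reproduces the chain of inequalities from the atomic derivation and yields~\eqref{25-7-21}. For the ionization bound~\eqref{25-7-37} I would reconstruct Proposition~\ref{prop-25-8-3} and Theorem~\ref{thm-25-7-7}, feeding~\eqref{25-8-60} into~\eqref{25-8-4} and into the correlation estimate of Section~\ref{sect-25-8-1}. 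The logarithmic factors $L,L_1$ of Theorem~\ref{thm-25-7-9} enter only through $\bar Q$ (removed by~\eqref{25-8-60}) and through the boundary-strip bound~\ref{25-7-30-'} on $\int\theta\rho\,dx$, which I claim collapses to the atomic bound~\eqref{25-7-30} under the free-nuclei hypothesis. The lower estimate $\I_N+\nu\ge -C\upsilon$ is then obtained by replaying the argument of Section~\ref{sect-25-8-2} verbatim with $\bar Q$ replaced by $Q$.

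For Part~(ii), Remark~\ref{rem-25-8-10}\,\emph{(ii)} provides the sharpened bound $Q=Z^{5/3-\delta}(1+B^\delta)$ when $B\le Z$; the supplementary hypothesis $a\ge Z^{-1/3+\delta_1}$ required by Theorem~\ref{thm-25-7-5} is automatic from Proposition~\ref{prop-25-8-11}. The remaining steps are identical to Part~(i), with the $\delta$-improved $Q$ propagating through the arithmetic to give~\eqref{25-7-22} and~\eqref{25-7-39}.

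The main obstacle is to verify that~\ref{25-7-30-'} collapses to~\eqref{25-7-30}, i.e.\ that in the free nuclei model one can choose an $\bar r\,t$-admissible cut-off $\theta$ with $\|\nabla\theta\|\asymp t^{-1/2}\bar r^{1/2}$ (no $|\log t|^{1/2}$ factor). This reduces to showing that near the outer boundary of $\supp\rho^\TF$ the level sets $\{W^\TF=cGt^4\}$ are small perturbations of the disjoint union of atomic level sets $\{|x-\y_m|\approx \bar r_m(1-t)\}$. The quantitative nuclear separation of Proposition~\ref{prop-25-8-11} combined with the repulsion estimate of Theorem~\ref{thm-25-2-15} provides exactly the decoupling needed; once this is established the cut-off factors as $\theta=\sum_m \theta_m$ through spherically symmetric atomic cut-offs, each of which supports the desired atomic bound, and the argument closes.
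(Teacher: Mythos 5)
Your proposal is essentially the paper's own argument: the paper proves this theorem in one line, by combining the free-nuclei bound $\cQ+\D(\rho_\Psi-\rho^\TF,\rho_\Psi-\rho^\TF)\le CQ$ of Proposition~\ref{prop-25-8-12} (estimate (\ref{25-8-60})) with Remark~\ref{rem-25-8-10}(ii), and then rerunning the arguments of Sections~\ref{sect-25-7-2} and~\ref{sect-25-7-3} with the atomic $Q$ in place of $\bar Q$ — exactly your plan for both parts.

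The one place you go beyond the paper is the final paragraph, where you insist on removing the geometric $|\log t|^{1/2}$ in $\|\nabla\theta\|$ coming from the level-set cut-off for $M\ge 2$; the paper's terse proof does not address this explicitly, and your instinct that it must be handled is sound. Two cautions on your sketch of the fix: Theorem~\ref{thm-25-2-15} concerns positively charged systems and is vacuous in the regime $N\ge Z$ relevant here, so the decoupling should instead be drawn from Proposition~\ref{prop-25-8-11}(ii) together with Proposition~\ref{prop-25-2-11} and Theorem~\ref{thm-25-2-17}; and the quantitative control of the residual overlap (of size up to $\epsilon\bar r$, to be compared with the strip width $t_*\bar r$) that your ``small perturbation of disjoint atomic level sets'' claim requires is precisely what the paper itself only formulates as Problem~\ref{Problem-25-8-14}, so that step would need to be carried out rather than merely invoked.
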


\begin{subappendices}
\chapter{Appendices}
\section{Electrostatic inequalities}
\label{sect-26-A-1}

There are two kinds of electrostatic inequalities: those which hold for any fermionic state $\Psi$ and those which hold only for the ground-state (or near ground state) $\Psi$. Inequalities of the first kind do not depend on the quantum Hamiltonian and they are (\ref{book_new-25-2-1}) repeated here:
\begin{multline}
\sum_{1\le j<k\le N}
\int |x_j-x_k| ^{-1}|\Psi (x_1,\dots ,x_N)| ^2\, dx_1 \cdots dx_N\ge\\
\frac{1}{2}\D(\rho_\Psi ,\rho_\Psi )-C\int \rho_\Psi ^{\frac{4}{3}}(x)\, dx
\label{26-A-1}
\end{multline}
and (\ref{26-A-5}) below.

Inequalities of the second kind are for $B=0$:
\begin{multline}
\sum_{1\le j<k\le N}
\int |x_j-x_k| ^{-1}|\Psi (x_1,\dots ,x_N)| ^2\, dx_1 \cdots dx_N\ge\\
\frac{1}{2}\D(\rho_\Psi ,\rho_\Psi )-CZ^{\frac{5}{3}},
\label{26-A-2}
\end{multline}
and more precise one (\ref{26-A-26}) below.

For $\vec{B}=\const$ there is an inequality established in E.~Lieb, J.~P.~Solovej and J.~Yngvarsson~\cite{LSY2} (p. 122):

\begin{theorem}\label{thm-26-A-1}
Let $\vec{B}=\const $. Then for the ground state $\Psi $
\begin{gather}
\int \rho _\Psi ^{\frac{4}{3}}\, dx \le
CZ ^{\frac{5}{6}} N ^{\frac{1}{2}} (Z +N )^{\frac{1}{3}}
\bigl(1+ B Z ^{-\frac{4}{3}} \bigr)^{\frac{2}{5}};
\label{26-A-3}
\end{gather}

In particular, for $c ^{-1}N\le Z\le cN$ the right-hand expression does not exceed\pagebreak
\begin{equation}
CZ ^{\frac{5}{3}} \bigl(1+B Z ^{-\frac{4}{3}}\bigr)^{\frac{2}{5}}\asymp
C\left\{\begin{aligned}
&Z ^{\frac{5}{3}} \qquad&&\text{if\ \ } B\le Z^{\frac{4}{3}},\\
&Z ^{\frac{17}{15} } B^{\frac{2}{5}} \qquad&&\text{if\ \ } B\ge Z^{\frac{4}{3}}.
\end{aligned}\right.
\label{26-A-4}
\end{equation}
\end{theorem}

We want to establish inequality, similar to (\ref{book_new-25-A-2}), but in the magnetic case. We will use for this the following

\begin{theorem-foot}\label{thm-26-A-2}\footnotetext{\label{foot-26-50} Lemma 6 of G.~Graf and J.~P.~Solovej~\cite{graf:solovej}.}
Fix $0<\delta\le 1/6$. Then for any density matrix $F$ and any
density $\rho_0(x)\ge 0$ the following inequality holds
\begin{multline}
\sum_{1\le j<k\le N}
\int |x_j-x_k| ^{-1}|\Psi (x_1,\dots ,x_N)| ^2\, dx_1 \cdots dx_N\ge\\[2pt] \D(\rho_0,\rho_\gamma)-\frac{1}{2}\D(\rho_0,\rho_0)
-\frac{1}{2} \sum_{\varsigma,\varsigma'}
\iint |F(x,\varsigma;y,\varsigma')|^2|x-y| ^{-1}\,dxdy\\
-C \|\rho\|_{5/3}^{5/6}\cdot \|\rho\|_1^{1/6+\delta} ,\upsilon(\gamma,F)^{1/3-\delta}
\label{26-A-5}
\end{multline}
where $\rho=\rho_0+\rho_F+\rho_\Psi$,
$\upsilon(\gamma ,F)\coloneqq \Tr (\gamma (I-F))$ and
\begin{equation}
\gamma =\gamma _\Psi(x,y)=N\int \Psi (x,x_2,\ldots,x_N)
\Psi^\dag (y,x_2,\ldots,x_N)\,dx_2\cdots x_N
\label{26-A-6}
\end{equation}
is two-point one particle density.
\end{theorem-foot}

Recall that $\|.\|_p$ denotes $\sL^p$-norm.

There is a connection between (\ref{26-A-1}) and (\ref{26-A-5}): if we set $F=0$, we get $\varsigma= \|\rho\|_1$ and the last term in (\ref{26-A-5}) becomes $\|\rho\|_{5/3}^{5/6}\cdot \|\rho\|_1^{1/2}$. On the other hand,
$\|\rho\|^{4/3}_{4/3}\le \|\rho\|_{5/3}^{5/6}\cdot \|\rho\|_1^{1/2}$, so (\ref{26-A-5}) is slightly deteriorated (\ref{26-A-1}) with $F=0$ but with ``free'' $\rho_0$.

Let us follow G.~Graf and J.~P.~Solovej~\cite{graf:solovej} further albeit in the case of magnetic field. Let us estimate first $\|\rho\|_{5/3}$.

If $\rho=\rho^\TF$ direct calculations show that for $N\asymp Z$
\begin{gather}
\int \rho^\TF \,dx =\min (Z,N),\label{26-A-7}\\
\int (\rho^\TF)^{\frac{4}{3}} \,dx \asymp C\rho^{*\,\frac{4}{3}}r^{*\,3}=
CZ^{\frac{5}{3}}\bigl(1+BZ^{-\frac{4}{3}}\bigr)^{\frac{2}{5}},\label{26-A-8}\\
\int (\rho^\TF)^{\frac{5}{3}} \,dx \asymp C\rho^{*\,\frac{5}{3}}r^{*\,3}=
CZ^{\frac{7}{3}}\bigl(1+BZ^{-\frac{4}{3}}\bigr)^{\frac{4}{5}}\label{26-A-9}
\shortintertext{with}
r^*= \min (Z^{-\frac{1}{3}}, B^{-\frac{2}{5}}Z^{\frac{1}{5}})\asymp
Z^{-\frac{1}{3}}\bigl(1+BZ^{-\frac{4}{3}}\bigr)^{-\frac{2}{5}},
\label{26-A-10}\\[3pt]
\rho^{*}= \min (N,Z) r^{*\,-3}\label{26-A-11}
\end{gather}
and we use
$\|\rho \|_{5/3}^{5/6}\cdot \|\rho\|_1^{1/2} \asymp \|\rho\|_{4/3}^{4/3}$ for $\rho=\rho^\TF$.

If $\rho=\rho_\Psi$ we use magnetic Lieb--Thirring inequality (see f.e. Theorem~2.2 in L.~Erd\"os~\cite{erdos:magnetic})
\begin{equation}
\Tr (H_{A,W}^-) \ge -C \int P_B (W)\,dx
\label{26-A-12}
\end{equation}
and therefore
\begin{multline}
\blangle\mathbf{H} \Psi,\Psi\brangle\ge \Tr((H_{A,W})^-) + \int W\rho_\Psi \,dx\\
-\int V\rho_\Psi \,dx + \frac{1}{2}\D(\rho_\Psi,\rho_\Psi)
- C\|\rho_\Psi\|_{4/3}^{4/3},
\label{26-A-13}
\end{multline}
which due to (\ref{26-A-12}) is greater than
\begin{multline}
\int \bigl(-CP_B(W)+ W\rho_\Psi \bigr)\,dx -\\
\shoveright{\int V\rho_\Psi \,dx +
\frac{1}{2}\D(\rho_\Psi,\rho_\Psi)- C\|\rho_\Psi\|_{4/3}^{4/3}\ge}\\
3\epsilon_0 \int \tau_B (\rho_\Psi)\,dx -\int V\rho_\Psi \,dx +
\frac{1}{2}\D(\rho_\Psi,\rho_\Psi)- C\|\rho_\Psi\|_{4/3}^{4/3}
\label{26-A-14}
\end{multline}
where we picked up $W: C P_B'(W)=\rho_\Psi$.

The first two terms in the right-hand expression are estimated from below by
\begin{equation*}
2\epsilon_0 \int \tau_B (\rho_\Psi)\,dx - C\int P_B (V)\phi \,dx - C\int V\rho_\Psi(1-\phi)\,dx,
\end{equation*}
where $\supp(\phi)\subset \{x\colon \ell (x)\le 2r^*\}$ and
$\supp (1-\phi) \subset \{x\colon \ell (x)\ge r^*\}$.

One can see easily that the absolute value of the second term is
$\asymp Z^{\frac{7}{3}}\bigl(1+BZ^{-\frac{4}{3}}\bigr)^{\frac{2}{5}}$, while the absolute value of the third term does not exceed $CZ\int V(1-\phi)\,dx \asymp CZ^2r^{*\,-1}$ which does not exceed the same expression
$Z^{\frac{7}{3}}\bigl(1+BZ^{-\frac{4}{3}}\bigr)^{\frac{2}{5}}$. Therefore
\pagebreak
\begin{multline}
\blangle\mathbf{H} \Psi,\Psi\brangle+ C_1Z^{\frac{7}{3}}\bigl(1+BZ^{-\frac{4}{3}}\bigr)^{\frac{2}{5}}\ge\\
2\epsilon_0 \int \tau_B(\rho_\Psi)\,dx + \frac{1}{2}\D(\rho_\Psi,\rho_\Psi)
- C\|\rho_\Psi\|_{4/3}^{4/3} .
\label{26-A-15}
\end{multline}
Note that $\|\rho_\Psi\|^{4/3}_{4/3}$, calculated over domain
$\{x\colon \rho_\Psi(x)\ge B^{\frac{4}{3}}\}$, does not exceed
$C\|\rho_\Psi \|_{5/3}^{5/6}\cdot \|\rho\|_1^{\frac{1}{2}}$ with norms, calculated over the same domain, which does not exceed
$CT^{\frac{1}{2}}Z^{\frac{1}{2}}$ with $T=\int \tau_B (\rho_\Psi)\,dx$.

Meanwhile, $\|\rho_\Psi\|^{4/3}_{4/3}$, calculated over domain
$\{x\colon \rho_\Psi(x)\le B^{\frac{4}{3}}\}$, does not exceed
$C\|\rho\|_3^{\frac{1}{2}}\cdot \|\rho\|_1^{\frac{5}{6}}$ with norms, calculated over the same domain, which does not exceed
$CZ^{\frac{1}{2}} B^{\frac{1}{3}}T^{\frac{1}{6}}$.

Therefore
\begin{gather}
\|\rho_\Psi\|_{4/3}^{4/3}\le CT^{\frac{1}{2}}Z^{\frac{1}{2}} +
CZ^{\frac{5}{6}} B^{\frac{1}{3}}T^{\frac{1}{6}}
\label{26-A-16}\\
\intertext{and therefore (\ref{26-A-15}) implies that if}
\blangle\mathbf{H} \Psi,\Psi\brangle \le C_1Z^{\frac{7}{3}}\bigl(1+BZ^{-\frac{4}{3}}\bigr)^{\frac{2}{5}},
\label{26-A-17}\\
\shortintertext{then}
T=\int \tau_B (\rho_\Psi)\,dx\le C_2Z^{\frac{7}{3}}\bigl(1+BZ^{-\frac{4}{3}}\bigr)^{\frac{2}{5}}
\label{26-A-18}\\
\shortintertext{and}
\|\rho_\Psi\|_{4/3}^{4/3} \le
CZ^{5/3}\bigl(1+BZ^{-\frac{4}{3}}\bigr)^{\frac{2}{5}};
\label{26-A-19}
\end{gather}
taking $F=0$ we arrive to (\ref{26-A-3}) if $N\asymp Z$.

However, on our preparatory step we need to estimate also $\|\rho_\Psi\|_{5/3}^{5/3}$ and due to (\ref{26-A-18}) we need to consider only norms over $\{x\colon\rho_\Psi(x)\ge B^{\frac{4}{3}}\}$. Then
$\|\rho_\Psi\|_{5/3}^{5/3}\le C\|\rho_\Psi\|_{4/3}^{16/15}\cdot \|\rho_\Psi\|_{3}^{3/5}$ and plugging the same estimates (\ref{26-A-19}), (\ref{26-A-19}) we conclude that
\begin{equation}
\|\rho_\Psi\|_{5/3}^{5/3} \le
CZ^{7/3}\bigl(1+BZ^{-\frac{4}{3}}\bigr)^{\frac{4}{5}}.
\label{26-A-20}
\end{equation}

Now we assume that $B\le Z^3$, take $F=e(x,y,\nu)$, where $e(x,y,\nu)$ is the Schwartz kernel of spectral projector for potential $W$, approximating $W^\TF$ and $\nu\le 0$ is a chemical potential. One can prove easily that $\|\rho_F\|_{5/3}^{5/3}$ satisfies the same estimate and we need to estimate
$\Tr \bigl(\gamma _\Psi(I-E(\mu))\bigr)$.

Consider
\begin{multline}
N\blangle H_{A,W (x_1)} \Psi,\Psi\brangle - \Tr (H E(\nu) )-
\alpha \Tr \bigl(\gamma_\Psi (I-E(\nu))\bigr)\\
\begin{aligned}
\ge & \int_{\beta<0} \beta \,d_\beta \Tr E(\beta)-\int_{\beta\le \nu} (\beta -\nu+\alpha)\,d_\beta \Tr E(\beta)\\
=-&\int _{\nu-\alpha<\beta < \nu} (\beta-\nu+\alpha) d_\beta E(\beta)\\
=-& \alpha E(\nu) + \int _{\nu-\alpha<\beta < \nu} E(\beta)\,d\beta.
\end{aligned}
\label{26-A-21}
\end{multline}
We can replace $E(\beta)$ by $\int P'(W+\beta)\,dx$ with a resulting error $O(Z\alpha \hslash^\delta)$, $\hslash\coloneqq BZ^{-3}$. Then the right-hand expression becomes
\begin{multline}
-L(\alpha)\coloneqq \int \Bigl(-\alpha P_B'(W+\nu) + \int _0^\alpha P_B'(W+\nu-\beta)\,d\beta\Bigr) \,dx=\\
-\int _0^\alpha (\alpha-\beta)\Bigl(\int P_B ''(W+\nu-\beta)\,dx\Bigr)\,d\beta.
\label{26-A-22}
\end{multline}
Therefore
\begin{multline}
\alpha \Bigl(\Tr \bigl(\gamma _\Psi(I-E(\mu))\bigr)-CZ\hslash^\delta\Bigr)\le \\ \underbracket{N \blangle H_{A,W (x_1)} \Psi,\Psi\brangle - \Tr (H E(\nu) )} +L(\alpha).
\label{26-A-23}
\end{multline}
Note that adding to the selected terms $-\frac{1}{2}\D(\rho^\TF,\rho^\TF)$  we obtain exactly the snippet, occurring in the lower estimate of $\E_N$, but in virtue of the upper estimate it should not exceed
$Q=CZ^{\frac{5}{3}}\bigl(1+BZ^{-\frac{4}{3}})^{\frac{2}{5}}\le
CZ^{\frac{5}{3}}+ Ch ^2 Z^{\frac{7}{3}}$, and therefore, plugging
$\alpha= Z^{\frac{4}{3}}\hslash ^\delta$, we conclude that
\begin{gather}
\Tr \bigl(\gamma _\Psi(I-E(\mu)) \le Z\hslash ^\delta
\label{26-A-24}\\
\intertext{provided we prove that}
L(\alpha)\le Q\qquad \text{for \ } \alpha= Z^{\frac{4}{3}}\hslash ^\delta .
\label{26-A-25}
\end{gather}
Therefore modulo proof of (\ref{26-A-25}) we arrive to the estimate (\ref{26-A-26}) below:

\begin{theorem}\label{thm-26-A-3}
Let $N\asymp Z$ and $B\le Z$. Then for the ground state energy
\begin{multline}
\sum_{1\le j<k\le N}
\int |x_j-x_k| ^{-1}|\Psi (x_1,\dots ,x_N)| ^2\, dx_1 \cdots dx_N\ge\\
\frac{1}{2}\D(\rho^\TF,\rho^\TF) +\Dirac -
CZ^{\frac{5}{3}-\delta}(1 + B^\delta)
\label{26-A-26}
\end{multline}
\end{theorem}

To prove (\ref{26-A-25}) we note that $0\le P''_B(w) \asymp
w^{\frac{1}{2}}+ Bw^{-\frac{1}{2}}$. One can prove easily then that
$L(\alpha)\le C\alpha^{\frac{7}{4}}+ CB^{\frac{1}{4}}\alpha ^{\frac{3}{2}}$,
which obviously implies (\ref{26-A-25}).

\section{Very strong magnetic field case}
\label{sect-26-A-2}
Let us consider now case $Z^2\le B\le Z^3$.

\begin{proposition}\label{prop-26-A-4}
Consider the Schr\"odinger operator $H_{A,W}$ with a constant magnetic field of intensity $B$ and potential $W$: $W\le Z|x|^{-1}$. Let $\phi(x)\coloneqq \phi _r(x)$ be $r$-admissible function. Then if $Z^2\lesssim B\lesssim Z^3$ and
$r\asymp Z^{-1}$
\begin{equation}
|e(x,y,0)|\le CZ B \qquad \text{in\ \ } B(0,r)
\label{26-A-27}
\end{equation}
and
\begin{claim}\label{26-A-28}
All eigenvalues are $\ge -CZ^2$.
\end{claim}
\end{proposition}

\begin{proof}
Without any loss of the generality one can assume that
\begin{equation}
H_{A,W} = D_3^2 + D_2^2 + (D_1-Bx_2)^2 -W.
\label{26-A-29}
\end{equation}
Consider $f\in \sL^2$; then $\| E (\lambda) f\| \le \|f\|$ and then one can prove easily (\ref{26-A-28}) and inequality
\begin{equation}
\| H_{A,0} E (\lambda) f \|\le (CZ^2+\lambda_+) \|f\|.
\label{26-A-30}
\end{equation}
Indeed, $\frac{1}{2} D_3^2 + C Z^2 \ge W$ in the operator sense.

Then (\ref{26-A-30}) implies that in $B(0,r) \times B(y',r')$ with $r'=B^{-\frac{1}{2}}$
\begin{equation*}
|P^\alpha E(\lambda)f |\le C Z^{\alpha_3} B^{\frac{1}{2}|\alpha'|} \qquad \forall \alpha:|\alpha|\le 2\ \forall \lambda\le Z^2
\end{equation*}
with $P=(D_1-Bx_2, D_2, D_3)$ and therefore
$\|E(\lambda)f\|_{\sC} \le CZ^{\frac{1}{2}}B^{\frac{1}{2}}\|f\|$. Then
$\|E(x,.,\lambda)\|_{\sL^2_y}\|\le CZ^{\frac{1}{2}}B^{\frac{1}{2}}$.

Repeating the same arguments with respect to $y$ we arrive to estimate (\ref{26-A-27}).
\end{proof}

The following corollary follows immediately:
\pagebreak

\begin{corollary}\label{cor-26-A-5}
In the framework of Proposition~\ref{prop-26-A-4} with
$\phi\in \sL^\infty (B(0,r))$, $\|\phi\|_{\sL^\infty}\le 1$
\begin{gather}
|\int \phi (x)e(x,x,0)\,dx|\le CZ^{-2}B,\label{26-A-31}\\[3pt]
\D \bigl( \phi (x)e(x,x,0), \phi (x)e(x,x,0)\bigr)\le C Z^{-3}B^2
\label{26-A-32}\\
\shortintertext{and}
|\int^0_{-\infty} \int \phi(x)e(x,x,\tau)\,d\tau dx|\le CB.
\label{26-A-33}
\end{gather}
\end{corollary}

\section{Riemann sums and integrals}
\label{sect-26-A-3}
If $f\in \sC^\infty (\bR^+)$ and fast decays at $+\infty$, then
\begin{align}
f(0)h + \sum_{n\ge 1} 2f(2nh) h & \sim
\int_0^\infty f(t)\,dt+\sum_{m\ge 1} \kappa_{m}f^{(2m-1)}(0)h^{2m},
\label{26-A-34}\\[3pt]
\sum_{n\ge 0} 2f((2n+1)h) h & \sim
\int_0^\infty f(t)\,dt+\sum_{m\ge 1} \kappa'_{m}f^{(2m-1)}(0)h^{2m}
\label{26-A-35}
\end{align}
as $h\to +0$. The proofs of both formulae follow from the Taylor's decomposition and observation that the odd powers of $h$ should disappear. Taking $f(t)=e^{-tz/h}$ with $\Re z>0$ we arrive to
\begin{align}
1-\frac{\cosh(z)}{\sinh(z)}z\sim \sum_{m\ge 1}\kappa_m z^{2m},
\label{26-A-36}\\[3pt]
1-\frac{1}{\sinh(z)}z\sim \sum_{m\ge 1}\kappa'_m z^{2m}
\label{26-A-37}
\end{align}
for $|z|\ll 1$. In particular, $\kappa_1=\frac{1}{3}$ and $\kappa'_1=-\frac{1}{6}$.

\section{Some spectral function estimates}
\label{sect-26-A-4}

\begin{proposition}\label{prop-26-A-6}
For the Schr\"odinger operator with $A, W\in \sC ^\infty $ and for
$\phi \in \sC_0 ^\infty ([-1,1])$ the following estimate holds for any $s$:
\begin{gather}
|F(x,y)|\le C (\mu h+1) h ^{-3}\bigl(1+h ^{-1}|x-y|\bigr) ^{-s}\label{26-A-38}\\
\shortintertext{where}
F(x,y)\coloneqq \int \phi (\lambda )\,d_\lambda e(x,y,\lambda ).
\label{26-A-39}
\end{gather}
\end{proposition}

\begin{proof} Let
$u(x,y,t)=\int e ^{-ih ^{-1}t\lambda }\,d_\lambda e(x,y,\lambda )$ be the
Schwartz's kernel of $e ^{-ih ^{-1}Ht}$.

Let us fix $y$. Note first that $\sL ^2$-norm\footnote{\label{foot-26-51} With respect to $x,t$ here and below.} of $\phi (hD_t)\chi (t)\omega (x)u(x,y,t)$ is less than $Ch ^s$ for $\chi \in \sC_0 ^\infty ([-\epsilon ,\epsilon ])$ and
$\omega \in \sC ^\infty $
supported in $\{x\colon |x-y|\ge \epsilon _1\}$ (with $\epsilon _1=C\epsilon $) due to the finite speed of propagation of singularities.

We conclude then that $\sL ^2$-norm of $\phi (hD_t)\chi (t)\omega (x)u(x,y,t)$ does not exceed $C(\mu h+1) h ^s$ for $\omega \in \sC ^\infty $ supported in
$\{x\colon |x-y|\ge C\}$.

Then $\sL ^2$-norm of
$\partial _t^l \nabla ^\alpha \phi (hD_t)\chi (t)\omega (x)u$ does not exceed
$C(\mu h+1) h ^s$. Therefore due to imbedding inequality $\sL ^\infty $-norm of
$\phi (hD_t)\chi (t)\omega (x)u$ also does not exceed $C (\mu h+1) h ^s$. Setting $t=0$ and using this inequality and estimate
$ |F(x,y)|\le C(\mu h+1)h ^{-3}$ (due to Chapter~\ref{book_new-sect-7}), we conclude that $ |F(x,y)|\le C(\mu h+1)h^s$ for $|x-y| \ge \epsilon _0$.

Now let us consider general $x$ with $|x-y|=r\ge Ch$. Then rescaling
$(x-y) \mapsto (x-y)r ^{-1}$ we need also to rescale $h\mapsto hr ^{-1}$,
$\mu\mapsto \mu r $ and rescaling the above inequality and keeping in mind that $F(x,y)$ is a density with respect to $x$, we conclude that
$|F(x,y)|\le C h ^sr ^{-3-s}$ which is equivalent to (\ref{26-A-38})--(\ref{26-A-39}). \end{proof}

\section{Zhislin's theorem for constant magnetic field}
\label{sect-26-A-5}

We provide just a scheme to prove Zhislin's theorem in the case of the constant magnetic field. In this analysis $\underline{Z}$, $\underline{y}$, $N$ and $B$ are constant.

\begin{proposition}\label{prop-26-A-7}
Let $\Psi=\Psi_N$ be the ground state with the energy $\E_N<\E_{N-1}$. Then

\begin{enumerate}[label=(\roman*), wide, labelindent=0pt]
\item\label{prop-26-A-7-i}
$\Psi\in \sC^1$ and $\Psi= O(e^{-\epsilon |\underline{x}|})$ as $|\underline{x}|\to \infty$.

\item\label{prop-26-A-7-ii}
Let $N<Z$. Then $V_\Psi -V\in \sC^2$ and $V_\Psi = (Z-N)|x|^{-1} +O(|x|^{-2})$,
$\nabla{V}_\Psi = (Z-N)|x|^{-2} +O(|x|^{-3})$ as $|x|\to \infty$.
\end{enumerate}
\end{proposition}
\begin{proof}
Obvious proof is left to the reader.
\end{proof}

\begin{theorem}[Zhislin's theorem]\label{thm-26-A-8}
$\E_{N+1}< \E_N$ for $N<Z$.
\end{theorem}

\begin{proof}
We can assume that $\E_N<0$ and the ground state energy exists. Really, it is true for some $N<Z$ and if we prove that then automatically $\E_{N+1}<\E_N$,  then it would be true for $(N+1)$ as well, so we may go by induction.
\pagebreak

Consider $\Psi=\Psi_N(x_1,\ldots, x_N)$ and also $\tilde{\Psi}_{N+1}$, which is an antisymmetrized $\Psi_N(x_1,\ldots, x_N)u(x_{N+1})$ (cf. (\ref{26-8-21})):
\begin{multline}
\tilde{\Psi}=\tilde{\Psi}(x_1,\ldots, x_{N+1})=
\Psi (x_1,\ldots, x_N)u(x_{n+1})-\\
\sum_{1\le j\le N}
\Psi (x_1,\ldots, x_{j-1},x_{N+1},x_{j+1},\ldots, x_N)u(x_j).
\label{26-A-40}
\end{multline}
Then like in the estimate of the ionization energy
(cf. (\ref{26-8-22})--(\ref{26-8-23})):
\begin{equation}
N ^{-1} \I_{N+1} \|\tilde{\Psi}\| ^2\ge
-\blangle H_{V,x_{N+1}}\Psi u,\tilde{\Psi}\brangle
-\blangle \sum_{1\le i\le N}|x_i-x_{N+1}| ^{-1}
\Psi u, \tilde{\Psi}\brangle
\label{26-A-41}
\end{equation}
and
\begin{multline}
N ^{-1}\|\tilde{\Psi}\| ^2=\|\Psi \| ^2\cdot \|u\| ^2-\\
N\int \Psi (x_1,\ldots,x_{N-1},x)\Psi ^\dag (x_1,\ldots,x_{N-1},y)
u(y)u^\dag(x) \, dx_1\cdots dx_{N-1}\, dxdy .
\label{26-A-42}
\end{multline}
Now let us consider $u$ supported in $\{x\colon \frac{1}{2}a\le |x|\le 3a\}$ with $a$ to be chosen later. Then in virtue of Proposition~\ref{prop-26-A-7}\ref{prop-26-A-7-i} modulo $O(e^{-\epsilon_1a})$ we can replace in the right-hand expressions
$\tilde{\Psi}$ by $\Psi_N(x_1,\ldots, x_N)u(x_{N+1})$ resulting in
$-\blangle H_{W}u,u\brangle$ and $\|u\|^2$ respectively with $W=V_\Psi$ defined in Proposition~\ref{prop-26-A-7}\ref{prop-26-A-7-ii}.

Therefore all we need  to prove this theorem is to be able to select $u$ with $\|u\|\asymp 1$, supported in $\{x\colon  a\le |x|\le 3a\}$  and with
$\blangle H_{W}u,u\brangle \le - \epsilon_0 a^{-1}$.

In virtue of Proposition~\ref{prop-26-A-7}\ref{prop-26-A-7-ii} $
V_\Psi \ge \epsilon_0 a^{-1}$ in $\{x\colon a\le |x|\le 3a\}$ and therefore we can replace $W$ by $\epsilon_0 a^{-1}$. Without any loss of the generality one can assume that $A= (Bx_2, 0, 0)$. Recall that for the linear vector-potential $\vec{A}$ operator $H_0=((i\nabla -A)\cdot\boldupsigma)^2$ is a direct sum of
$H_0^+=(i\nabla -A)^2+B$ and $H_0^-=(i\nabla -A)^2-B$; so we can consider only the latter. Note that $H_0^-=(i\partial_1-Bx_2)^2 -\partial_2^2 -\partial_3^2 $ and $H_0^- v=0$ with $v=\exp (-\frac{1}{2}B(x_2-a)^2+iBax_1)$.

Then $u= v(x) \chi (r^{-1}(x-\bar{x}))$ with $\chi\in \sC^\infty (B(0,1))$, $\chi=1$ in $B(0,\frac{1}{2})$, $\bar{x}=(0,2a,0)$, $r=\frac{1}{3}a$ is a required function. \end{proof}

\end{subappendices}

\chapter*{Comments}

We already mentioned papers E.~H.~Lieb, J.~P.~Solovej and J.~Yngvarsson \cite{LSY2, LSY1} where asymptotics of the ground state energy were derived in the cases $B\ll Z^3$ and $B\gg Z^3$ respectively. Intermediate case $B\sim Z^3$ was covered also in \cite{LSY1}. Even without remainder estimates certain results concerning ionization energy and maximal possible positive and negative charges were also derived.

Remainder estimates in the case $B\ll Z^3$ were derived by V.~Ivrii in \cite{ivrii:MQT1, ivrii:MQT2}. Unfortunately there are gaps in the proofs of the second paper in the case of $M\ge 2$ and large $Z-N>0$ which I was unable to fill.; so our results in this case are not as sharp as they supposed to be.



\begin{thebibliography}{99}
\bibliographystyle{book}

\providecommand{\bysame}{\leavevmode\hbox to5em{\hrulefill}\thinspace}


\bibitem[A]{arnold:classical}
V. I. Arnold:
\emph{Mathematical Methods of Classical Mechanics}.
Springer-Verlag (1990).

\bibitem[Ba]{bach}
V. Bach: \emph{Error bound for the Hartree-Fock energy of
atoms and molecules,}
Commun. Math. Phys.  147:527--548 (1992).


\bibitem[Be]{benguria}
R. Benguria:
\emph{Dependence of the Thomas-Fermi  energy on the nuclear coordinates,}
Commun. Math. Phys., 81:419--428 (1981).


\bibitem[BeL]{benguria:lieb}
R. Benguria and E. H. Lieb:
\emph{The positivity of the pressure in Thomas-Fermi theory,}
Commun. Math. Phys., 63:193--218 (1978).


\bibitem[BrL]{brezis:lieb}
H. Brezis and E. H. Lieb:
\emph{Long range potentials in Thomas-Fermi theory,}  Commun. Math. Phys.
65, 231-246 (1979).


\bibitem[E]{erdos:magnetic} L. Erd{\H o}s,
Magnetic Lieb-Thirring inequalities.
\newblock {Commun. Math. Phys.}, 170:629--668 (1995).

\bibitem[ES3]{ES3}  L. Erd{\H o}s, J.P. Solovej,
\emph{Ground state energy of large atoms in a self-generated
magnetic field.} Commun. Math. Phys. \textbf{294}, No. 1, 229-249 (2009)
\href{http://arxiv.org/abs/0903.1816}{arXiv:0903.1816}

\bibitem[EFS1]{EFS1}  L. Erd{\H o}s, S. Fournais, J.P. Solovej:
\emph{Stability and semiclassics in self-generated
fields.}
\href{http://arxiv.org/abs/1105.0506}{arXiv:1105.0506}


\bibitem[EFS2]{EFS2}  L. Erd{\H o}s, S. Fournais,
J.P. Solovej: \emph{Second order semiclassics with self-generated
magnetic fields.}
\href{http://arxiv.org/abs/1105.0512}{arXiv:http://arxiv.org/abs/1105.0512}


\bibitem[EFS3]{EFS3}  L. Erd{\H o}s, S. Fournais, J.P. Solovej:
\emph{Scott correction for large atoms and molecules in a self-generated magnetic field} \href{http://arxiv.org/abs/1105.0521}{arXiv:1105.0521}



\bibitem[FISS]{feffer:coulomb}
C.~L. Fefferman, V. Ivrii, L. A. Seco,  and
I.~M. Sigal. \href{http://weyl.math.toronto.edu/victor2/preprints/fefferman_ivrii_seco_sigal_LNP_403.djvu}{The energy asymptotics of large {C}oulomb systems.}
\newblock In {Proc. Conf. ``N-body Quantum Mechanics''}, Aarhus, Denmark
79--99 (1991).



\bibitem[FS]{feffer:seco} C.~Fefferman and L.A.~Seco: \emph{On the energy of a
    large atom}, Bull.~AMS \textbf{23}, 2, 525--530 (1990).

\bibitem[FSW1]{FSW1} R. L. Frank, H. Siedentop, S. Warzel:
\emph{The ground state energy of heavy atoms:
relativistic lowering of the leading energy correction.}
Commun. Math. Phys.  \textbf{278}
   no. 2, 549--566 (2008)




\bibitem[FSW2]{FSW2} R. L. Frank, H. Siedentop, S. Warzel:
\emph{The energy of heavy atoms according to Brown and Ravenhall: the Scott correction.}
Doc. Math. \textbf{14}, 463--516 (2009).







\bibitem[FLL]{FLL}
J. Fr\"ohlich, E. H. Lieb, and M. Loss:
\emph{Stability of Coulomb systems with magnetic fields.
I. The one-electron atom. }
Commun.\ Math.\ Phys.\ \textbf{104} 251--270 (1986)



\bibitem[GS]{graf:solovej}
G. M. Graf  and J. P. Solovej:
\emph{A correlation estimate with applications to quantum systems with Coulomb interactions,}
Rev.~Math.~Phys., 6(5a):977--997 (1994).
Reprinted in  The state of matter a volume dedicated to
E.~H.~Lieb, Advanced series in mathematical physics, 20,
M.~Aizenman and H.~Araki (Eds.), 142--166, World Scientific 1994.



\bibitem[H]{hughes} W.~Hughes: \emph{An atomic energy bound that gives
    Scott's correction}, Adv. in Math., 79(2):213--270 (1990).





\bibitem[Ivr1]{ivrii:MQT1} V. Ivrii,
\href{http://weyl.math.toronto.edu:8888/victor2/preprints/MQT1.pdf}
{Asymptotics of the ground state energy of heavy molecules in
the strong magnetic field. I}.  Russian Journal
of Mathematical Physics, 4(1):29--74 (1996).

\bibitem[Ivr2]{ivrii:MQT2}
\href{http://weyl.math.toronto.edu:8888/victor2/preprints/MQT2.pdf}
{Asymptotics of the ground state energy of heavy molecules in
the strong magnetic field. II}.  Russian Journal  of Mathematical Physics,
5(3):321--354 (1997).

\bibitem[Ivr3]{ivrii:MQT-Vishik}  V. Ivrii,
\href{http://weyl.math.toronto.edu:8888/victor2/preprints/MQT-Vishik.pdf}{Heavy molecules in the strong magnetic field.}
 Russian Journal of Math. Phys., 4(1):29--74 (1996).



\bibitem[Ivr4]{ivrii:MQT3}  V. Ivrii,
 \href{http://weyl.math.toronto.edu:8888/victor2/preprints/MQT3.pdf}{
Heavy molecules in the strong magnetic field. Estimates for ionization energy and excessive charge} 6(1):56--85 (1999).


\bibitem[Ivr5]{ivrii:IRO2}
\href{http://weyl.math.toronto.edu:8888/victor2/preprints/IRO2.pdf}{Sharp spectral asymptotics for operators with irregular coefficients.
Pushing the limits. II}. Comm. Part. Diff. Equats.,  28 (1\&2):125--156, (2003).


\bibitem[Ivr6]{ivrii:IRO3} V. Ivrii
\href{http://weyl.math.toronto.edu:8888/victor2/preprints/IRO3.pdf}{
Sharp spectral asymptotics for operators with irregular coefficients.
III. Schr\"odinger operator with a strong magnetic field}, arXiv:math/0510326
(Aug. 27, 2003), 81pp.

\bibitem[I7]{futurebook} V. Ivrii
 \emph{Microlocal Analysis and Sharp Spectral Asymptotics},
 in progress: available online at \newline
\href{http://www.math.toronto.edu/ivrii/futurebook.pdf}{http://www.math.toronto.edu/ivrii/futurebook.pdf}


\nopagebreak
\bibitem[IS]{ivrii:ground} Ivrii, V. and Sigal, I.~M
\href{http://weyl.math.toronto.edu:8888/victor2/preprints/Scott.pdf}{Asymptotics of the ground state energies of large {C}oulomb systems.}
\newblock {Ann. of Math.}, 138:243--335 (1993).


\bibitem[L]{L} E. H. Lieb: \emph{Thomas-Fermi and related
theories of atoms and molecules}, Rev. Mod. Phys. \textbf{65}. No. 4, 603-641
(1981)



\bibitem[L2]{L2} E. H. Lieb: \emph{Variational principle for
many-fermion systems}, Phys. Rev. Lett. \textbf{46}, 457--459 (1981)
and \textbf{47} 69(E) (1981)

\bibitem[L2]{lieb:selecta} \emph{The stability of matter: from atoms to
stars} (Selecta). Springer-Verlag (1991).

\bibitem[LLS]{LLS} E. H. Lieb, M. Loss and J. P. Solovej: {\em
Stability of Matter in Magnetic Fields}, Phys. Rev. Lett. \textbf{75},
 985--989 (1995)


\bibitem[LO]{LO} E. H. Lieb and S. Oxford: \emph{Improved Lower Bound
    on the Indirect Coulomb Energy}, Int. J. Quant. Chem. \textbf{19},
  427--439, (1981)




\bibitem[LS]{lieb:simon} E. H. Lieb and B. Simon: \emph{The Thomas-Fermi theory
of atoms, molecules and solids}, Adv. Math. \textbf{23}, 22-116 (1977)


\bibitem[LSY1]{LSY1}
E.~H. Lieb, ~J.~P. Solovej  and J. Yngvarsson:
\emph{Asymptotics of heavy atoms in high magnetic fields: I. Lowest Landau band
regions,} Comm. Pure Appl. Math. 47:513--591 (1994).

\bibitem[LSY2]{LSY2}
E.~H. Lieb, ~J.~P. Solovej  and J. Yngvarsson:.
\emph{Asymptotics of heavy atoms in high magnetic fields: II. Semiclassical regions,}
Comm. Math. Phys., 161: 77--124 (1994).


\bibitem[RS]{ruskai:solovej}
M. B. Ruskai, M. B. and J. P. Solovej:
\emph{Asymptotic neutrality of
polyatomic molecules}. In  Schr\"odinger Operators,  Springer Lecture
Notes in Physics 403, E. Balslev (Ed.), 153--174, Springer Verlag
(1992).

\bibitem[SW1]{sied:wei:1} H.~Siedentop and R.~Weikard: \emph{On the leading
    energy correction for the statistical model of an atom:
    interacting case}, Commun.~Math.~Phys.~ \textbf{112}, 471--490
  (1987)

\bibitem[SW2]{sied:wei:2} H.~Siedentop and R.~Weikard: \emph{On the leading
    correction of the Thomas-Fermi model: lower bound}, Invent.~Math.
  \textbf{97}, 159--193 (1990)

\bibitem[SW3]{sied:wei:3} H.~Siedentop and R.~Weikard:
 \emph{A new phase space localization technique with
    application to the sum of negative eigenvalues of {S}chr\"odinger
    operators}, Ann.~Sci.~\'Ecole Norm. Sup. (4), \textbf{24}, no.~2,
  215--225 (1991).

 \bibitem[Sob1]{Sob1} A.~V.~Sobolev: \emph{Quasi-classical asymptotics
of local Riesz means for the Schr\"odinger operator in a moderate
magnetic field.} Ann. Inst. H. Poincar\'e, \textbf{62}  no. 4, 325-360,  (1995)

 \bibitem[Sob]{Sob} A.~V.~Sobolev: \emph{Discrete spectrum
asymptotics for the Schr\"{o}dinger operator with a
 singular potential and a magnetic field},
 Rev.~Math.~Phys \textbf{8} (1996) no.~6, 861--903.



\bibitem[Sol]{solovej:neutrality}
J. P. Solovej. Asymptotic neutrality of diatomic molecules. \newblock Commun. Math. Phys., 130:185--204 (1990).


\bibitem[SS]{SS}  J. P. Solovej, W. Spitzer:  \emph{A new
coherent states approach to semiclassics which gives
Scott's correction.} Comm. Math. Phys.  \textbf{241}  (2003),  no. 2-3, 383--420.

\bibitem[SSS]{SSS}  J. P. Solovej, T.\O.  S\o rensen, W. Spitzer:  \emph{
Relativistic Scott correction for atoms and molecules.}
Comm. Pure Appl. Math. Vol. LXIII. 39-118 (2010).



\bibitem[Zh]{zhislin:spectrum}
G. Zhislin,
\emph{Discussion of the spectrum of Schrodinger operator for systems of many particles}.
\newblock {Tr. Mosk. Mat. Obs.}, 9, 81--128 (1960).


 \end{thebibliography}
\end{document}